\documentclass[a4paper, 11pt]{article}
\usepackage{amsmath}
\usepackage{amssymb}
\usepackage{amsthm}
\usepackage{bm}
\usepackage{graphicx}
\usepackage{caption}
\captionsetup{width=\linewidth}
\usepackage{subcaption}
\usepackage{enumitem}
\usepackage[colorlinks=true,linkcolor=blue,citecolor=blue]{hyperref}
\usepackage[dvipsnames]{xcolor}
\usepackage{epstopdf}
\usepackage[top=0.75in, bottom=1in, left=0.75in, right=0.75in]{geometry}
\usepackage{titling}
\usepackage{comment}
\usepackage{mathrsfs}
\usepackage{mathtools}
\usepackage{tikz}
\usepackage{footmisc}
\usepackage{algorithm}
\usepackage{algorithmicx}
\usepackage{algpseudocode}
\usepackage{authblk}
\usepackage{booktabs}
\usepackage{multirow}
\usepackage{xspace}
\graphicspath{}         

\usepackage[backend=bibtex8,giveninits=true,sorting=none,doi=false,isbn=false,url=false,maxbibnames=5]{biblatex}
\renewbibmacro{in:}{}
\addbibresource{references.bib}

\setlength{\droptitle}{-10pt}
\setlength{\parskip}{.5em}

\theoremstyle{plain}
\newtheorem{theorem}{Theorem}
\newtheorem{corollary}[theorem]{Corollary}

\newtheorem{proposition}[theorem]{Proposition}

\theoremstyle{definition}
\newtheorem{definition}[theorem]{Definition}
 
\theoremstyle{remark}

\newcommand\norm[1]{\left\lVert#1\right\rVert}
\newcommand\ip[1]{\left\langle#1\right\rangle}

\newcommand\tr{\textnormal{Tr}}
\newcommand\id{\mathbb{I}}
\newcommand\lb{\textnormal{(}}
\newcommand\rb{\textnormal{)}}
\newcommand\strat{\mathfrak{S}}      
\newcommand\corr{\mathcal{C}}        
\newcommand\winvec{\mathfrak{W}}     
\newcommand\dout{d_{\textnormal{o}}} 
\DeclareMathOperator*{\argmax}{\textnormal{argmax}}
\DeclareMathOperator*{\argmin}{\textnormal{argmin}}
\DeclareMathOperator*{\conv}{\textnormal{conv}}
\DeclareMathOperator{\AM}{\textnormal{AM}}
\DeclareMathOperator{\HM}{\textnormal{HM}}
\newcommand{\BC}{{\mathcal B}}
\newcommand{\NC}{{\mathcal N}}
\newcommand{\ZC}{{\mathcal Z}}
\newcommand{\ngmac}{\textnormal{NG-MAC}\xspace}

\begin{document}
\title{On the separation of correlation-assisted\\ sum capacities of multiple access channels}
\author[1,3]{Akshay Seshadri}
\author[2]{Felix Leditzky}
\author[3]{Vikesh Siddhu \thanks{Present address: IBM Quantum}}
\author[1,3,4]{Graeme Smith \thanks{This paper was presented at the 2022 IEEE International Symposium on Information Theory (ISIT)~\cite{isit}.}}
\affil[1]{Department of Physics, University of Colorado, Boulder, CO 80309, USA}
\affil[2]{Department of Mathematics \& IQUIST, University of Illinois at Urbana-Champaign, Urbana, IL 61801, USA}
\affil[3]{JILA, University of Colorado/NIST, Boulder, CO 80309, USA}
\affil[4]{Center for Theory of Quantum Matter, University of Colorado, Boulder, CO 80309, USA}
\date{\today}
\maketitle

\begin{abstract}
    The capacity of a channel characterizes the maximum rate at which information can be transmitted through the channel asymptotically faithfully.
	For a channel with multiple senders and a single receiver, computing its sum capacity is possible in theory, but challenging in practice because of the nonconvex optimization involved.
    To address this challenge, we investigate three topics in our study.
    In the first part, we study the sum capacity of a family of multiple access channels (MACs) obtained from nonlocal games.
    For any MAC in this family, we obtain an upper bound on the sum rate that depends only on the properties of the game when allowing assistance from an arbitrary set of correlations between the senders.
	This approach can be used to prove separations between sum capacities when the senders are allowed to share different sets of correlations, such as classical, quantum or no-signalling correlations.
    We also construct a specific nonlocal game to show that the approach of bounding the sum capacity by relaxing the nonconvex optimization can give arbitrarily loose bounds.
    Owing to this result, in the second part, we study algorithms for non-convex optimization of a class of functions we call Lipschitz-like functions.
    This class includes entropic quantities, and hence these results may be of independent interest in information theory.
    Subsequently, in the third part, we show that one can use these techniques to compute the sum capacity of an arbitrary two-sender MACs to a fixed additive precision in quasi-polynomial time.
    We showcase our method by efficiently computing the sum capacity of a family of two-sender MACs for which one of the input alphabets has size two.
    Furthermore, we demonstrate with an example that our algorithm may compute the sum capacity to a higher precision than using the convex relaxation.
\end{abstract}

\section{Introduction}
Studying information transmission over a noisy channel is of fundamental importance in communication theory. In his landmark paper, Shannon studied the rate of transmission over a channel having a single sender and a single receiver \cite{shannon1948mathematical}. The maximum number of bits per use of channel that can be transmitted through the channel with asymptotically vanishing error is known as the capacity of the channel~\cite{thomas2006elements}. Shannon showed that the channel capacity can be calculated by maximizing the mutual information between the input and the output of the channel over all possible input probability distributions~\cite{shannon1948mathematical, thomas2006elements}. This is a convex optimization problem that can be solved using standard optimization techniques \cite{chiang2004geometric} or more specialized methods like the Arimoto-Blahut algorithm~\cite{arimoto1972algorithm, blahut1972computation}.

The importance of Shannon's work was soon recognized, and consequently, Shannon's ideas were generalized in various ways. In this study, we focus on one such generalization, namely a channel that has multiple senders and a single receiver, commonly known as a multiple access channel (MAC). For such a channel, a tuple of rates is called achievable if each sender can send information through their respective input at their respective rate such that the total error of transmission vanishes asymptotically. The set of all such achievable rate tuples is called the capacity region, and Ahlswede~\cite{ahlswede1974capacity} and Liao~\cite{liao1972multiple} were the first to give an entropic characterization of it. The total rate at which asymptotically error-free transmission through a MAC is possible is called the sum capacity of the MAC.

While there has been much research on MACs since the work of Ahlswede \& Liao, there are no efficient (polynomial time) algorithms to date that compute the sum capacity of a MAC. This difficulty stems from the fact that, unlike the computation of the capacity of a point-to-point channel, the optimization involved in computing the sum capacity is nonconvex~\cite{calvo2010computation, buhler2011note}. Proposals to solve this nonconvex problem efficiently were found to be unsuitable. On the contrary, it was recently shown that computing the sum capacity to a precision that scales inversely with the cube of the input alphabet size is an NP-hard problem~\cite{leditzky2020playing}. Therefore, one should not expect efficient general-purpose algorithms for computing the capacity region or the sum capacity of a MAC.

Since computing the sum capacity of a MAC is a hard nonconvex problem, a common approach that is adopted to circumvent this optimization is to relax the optimization to obtain a convex problem. The relaxation gives an upper bound on the sum capacity which can be efficiently computed. Such an approach was adopted, for example, by Calvo \textit{et al.}~\cite{calvo2010computation}.
Since we generally expect that there are no efficient algorithms to compute the sum capacity, quantifying the performance of such an upper bound becomes important and essential.

We undertake this task of elucidating the performance of the upper bound on sum capacity obtained through a convex relaxation. For this purpose, we consider MACs that are constructed from nonlocal games. Such MACs were introduced by Leditzky \textit{et al.}~\cite{leditzky2020playing} based on previous work of Quek \& Shor~\cite{quek2017quantum}, and subsequently generalized in \cite{notzel2020entanglement}. We present an analytical upper bound on the sum capacity of such MACs. Our bound extends to cases where the senders of these MACs can share arbitrary correlations, such as classical, quantum, and no-signalling correlations. Our upper bound depends only on the number of question tuples in the nonlocal game and the maximum winning probability of the game when the questions are drawn uniformly and answers are obtained using the strategies allowed by the shared correlations. Using these bounds, we obtain separations between the sum rate obtained from using different sets of correlations. These separations help distinguish the ability of these correlations to assist communication in MAC coding scenarios.
In particular, our bound gives a separation between the sum capacity and the entanglement-assisted sum rate for the MAC obtained from the magic square game. The separation found here is roughly $5$ times larger than the previously reported $1\%$ separation in~\cite{leditzky2020playing}.
Furthermore, using our bounds, we show how prior bounds on the sum capacity obtained using convex relaxation~\cite{calvo2010computation} can be arbitrarily loose.

Our result highlights the need to find better techniques to bound the sum capacity from above. We take a step in this direction by showing that computing the sum capacity is equivalent to optimizing a Lipschitz-like function. Thereupon, we present some algorithms for optimization of Lipschitz-like functions, which may be of independent interest. We show that these algorithms can compute the sum capacity of two-sender MACs to a given additive precision in quasi-polynomial time. Instead of a fixed precision, one of our algorithms can also accept a fixed number of iterations as an input and output an upper bound on the sum capacity. In particular, for a specific family of two-sender MACs that includes binary MACs, the number of iterations required for convergence grows at most polynomially with the dimensions and inverse precision. Thus, while it might not be possible to efficiently compute the sum capacity for an arbitrary MAC in practice, we can nevertheless efficiently compute the sum capacity for a large family of MACs.

\subsection{Organization of the paper}
We organize the paper in three parts, each one presenting different contributions of this work that may be of independent interest.
Focus of the first part (Sec.~\ref{secn:NG_MAC_relaxed_sum_capacity}) is MACs obtained from non-local games. We investigate the advantages of sharing correlations between various senders of a nonlocal games MAC. In Sec.~\ref{secn:NG_MAC_sum_capacity_upper_bound}, we obtain an upper bound on the sum capacity of MACs constructed from nonlocal games, and show separations between different sets of correlations. In Sec.~\ref{secn:relaxedALcapacity_sumcapacity_separation} we accomplish another primary goal of our study, showing that the convex relaxation of the sum capacity can be arbitrarily loose.

Motivated by this result, in the second part (Sec.~\ref{secn:Lipschitzlike_optimization}) we develop methods to solve non-convex problems like sum capacity computation. To this end, we define and study optimization of Lipschitz-like functions in Sec.~\ref{secn:lipschitz-like-functions}.
We then give an overview of the optimization algorithms for Lipschitz-like functions featured in this work in Sec.~\ref{secn:overview-optimization-algorithms}, and discuss them in greater detail in the following sections.
That is, in Sec.~\ref{secn:Lipschitzlike_optimization_1D} and Sec.~\ref{secn:Lipschitzlike_optimization_simplex} we develop algorithms for global optimization of such functions over a closed interval and over the standard simplex respectively. Results on optimization of Lipschitz-like functions over arbitrary compact and convex domains can be found in App.~\ref{app:Lipschitzlike_optimization_compact_convex_domain}. We also present relevant convergence guarantees and complexity analysis for our algorithms. In this way, we generalize certain prior algorithms for optimizing Lipschitz continuous functions. Our generalized algorithm may be of independent interest for non-convex optimization in information theory. For instance, we show that some entropic quantities that may be derived from Shannon and von Neumann entropies are Lipschitz-like functions.

The third part (Sec.~\ref{secn:sum_capacity_computation_algorithm}) applies algorithms developed in the second part for computing the sum capacity of an arbitrary two-sender MAC. In Sec.~\ref{secn:sum_capacity_computation_Lipschitz_like}, we prove that the sum capacity computation can be viewed as an optimization of a Lipschitz-like function. In Sec.~\ref{secn:sum_capacity_computation_algorithm_inputsize2}, we develop an efficient algorithm to compute the sum capacity of a large family of two-sender MACs, where one of the input alphabets is of size $2$. Subsequently, in Sec.~\ref{secn:sum_capacity_computation_algorithm_2senderMAC}, we present algorithms that can compute or upper-bound the sum capacity of an arbitrary two-sender MAC, along with a detailed complexity analysis. Finally, in Sec.~\ref{secn:sum_capacity_relaxed_sum_capacity_comparison}, we construct examples which demonstrate that our algorithm performs provably better than convex relaxation in computing the sum capacity.

We have tried to minimize the overlap between these three parts of our study, so that any one of these parts may be read independently without loss of continuity. In the next section, we provide a brief overview of notations, quantum states and measurements, nonlocal games, and multiple access channels. We recommend readers mainly interested in parts I and III to read Sec.~\ref{secn:preliminaries_MAC} on multiple access channels as it introduces definitions used in these parts.

\section{Preliminaries\label{secn:preliminaries}}
We briefly review some concepts that are used later. A short summary of the notation used throughout is given below.

We denote the set of natural numbers as $\mathbb{N} = \{0, 1, \dotsc\}$ and the set of positive integers as $\mathbb{N}_+ = \{1, 2, \dotsc\}$. For any integer $N \geq 1$, let $[N] := \{1, \dotsc, N\}$. We denote the $(n - 1)$-dimensional standard simplex by $\Delta_n = \{x \in \mathbb{R}^n \mid x \geq 0,\ \sum_{i = 1}^n x_i = 1\}$. When $x \in \mathbb{R}^n$ is a vector, we interpret the inequality $x \geq 0$ component-wise, i.e., $x_i \geq 0$ for all $i \in [n]$. We denote the non-negative orthant in $n$-dimensional Euclidean space by $\mathbb{R}^n_+ = \{x \in \mathbb{R}^n \mid x \geq 0\}$. The Euclidean inner product between two vectors $x, y \in \mathbb{R}^n$ is denoted by $\ip{x, y}$. The space of $(m\times n)$-matrices with entries in $\mathbb{C}$ is denoted by $\mathbb{C}^{m\times n}$. The Kronecker product of two matrices $A$ and $B$ is denoted by $A \otimes B$.

For a random variable $X$ taking values in a finite alphabet $\mathcal{X}$, we denote by $p_X(x)$ the probability of $X$ taking the value $x\in \mathcal{X}$. The Shannon entropy of the random variable $X$ is denoted by $H(X) = -\sum_{x \in \mathcal{X}} p_X(x) \log(p_X(x))$. The unit of entropy is referred to as bits when the base of the logarithm in $H(X)$ is $2$, whereas the unit is referred to as nats when the base of the logarithm is $e$. The mutual information between two random variables $X$ and $Y$ is defined as $I(X; Y) = H(X) + H(Y) - H(X, Y)$. A single-input single-output channel is a triple $(\mathcal{X}, \mathcal{Z}, \mathcal{N})$ consisting of an input alphabet $\mathcal{X}$, an output alphabet $\mathcal{Z}$ and a probability transition matrix $\mathcal{N}(z | x)$ giving the probability of transmitting $z \in \mathcal{Z}$ given the input $x \in \mathcal{X}$. The capacity of the channel $\mathcal{N}$ is given by the single-letter formula $C(\mathcal{N}) = \max_{p_X} I(X; Z)$, where $X$ and $Z$ are the random variables describing the input and output of $\mathcal{N}$, respectively~\cite{thomas2006elements}.

\subsection{Quantum states and measurements}
A quantum state (or density matrix) $\rho \in \mathbb{C}^{d \times d}$ is a self-adjoint, positive semi-definite (PSD) matrix with unit trace~\cite{nielsen2002quantum}. A measurement of this quantum state can be described by a positive operator-valued measure (POVM), a collection of PSD matrices $\{F_1, \dotsc, F_M\}$ of size $d \times d$ satisfying $\sum_{m = 1}^M F_m = \mathbb{I}$, where $\mathbb{I}$ is the identity matrix on the Hilbert space $\mathbb{C}^d$. Each POVM element $F_m$ is associated to a measurement outcome $m \in [M]$, which is obtained with probability $\text{Prob}(m) = \tr(\rho F_m)$~\cite{nielsen2002quantum}.

In quantum mechanics, a Hermitian operator $\mathcal{O} \in \mathbb{C}^{d \times d}$ is often called an \textit{observable}. It can be measured in the following sense. First, we write the spectral decomposition of $\mathcal{O}$ as $\mathcal{O} = \sum_{m = 1}^M \lambda_m \mathbb{P}_m$, where $\mathbb{P}_m \in \mathbb{C}^{d \times d}$ is the projector onto the eigenspace of $\mathcal{O}$ corresponding to the eigenvalue $\lambda_m \in \mathbb{R}$ for $m = 1, \dotsc, M$. Then, the POVM associated with measuring the observable $\mathcal{O}$ is given by $\{\mathbb{P}_m \mid m \in [M]\}$~\cite{nielsen2002quantum}.
For qubits, i.e., two-dimensional quantum systems described by quantum states $\rho\in\mathbb{C}^{2\times 2}$, the Pauli matrices
\begin{equation}
    \sigma_x = \begin{pmatrix} 0 & 1 \\
                               1 & 0 \end{pmatrix},
    \quad
    \sigma_y = \begin{pmatrix} 0 & -i \\
                               i & 0 \end{pmatrix},
    \quad
    \text{and} \quad
    \sigma_z = \begin{pmatrix} 1 & 0 \\
                               0 & -1 \end{pmatrix}
    \label{eq:PauliMat}
\end{equation}
represent three commonly used observables associated to the spin of an electron along different axes.

Let two parties, Alice and Bob, share a quantum state $\rho$. If they perform local measurements with POVMs $\{A_1, \dotsc, A_I\}$ and $\{B_1, \dotsc, B_J\}$, respectively, then the probability that Alice observes the outcome $i$ and Bob observes the outcome $j$ is given by $\text{Prob}(i, j) = \tr[\rho (A_i \otimes B_j)]$. In other words, the overall POVM for the measurement is given by $\{A_i \otimes B_j \mid i \in [I], j \in [J]\}$.

\subsection{Nonlocal games}
A nonlocal game is played between two players who each receive a question from a referee that they need to answer.
The players are not allowed to communicate with each other during the game. Prior to starting the game, one fixes the set of questions and answers and a winning criterion, which are known to everyone. A referee then randomly draws questions and hands them out to the players. If the answers of the players satisfy the winning condition, they win the game.

Formally, an $N$-player promise-free nonlocal game $G$ is a tuple $G = (\mathcal{X}_1, \dotsc, \mathcal{X}_N, \mathcal{Y}_1, \dotsc, \mathcal{Y}_N, \mathcal{W})$, where $\mathcal{X}_i$ and $\mathcal{Y}_i$ are the question and answer set for the $i$th player, respectively, and the winning condition $\mathcal{W} \subseteq (\mathcal{X}_1 \times \dotsm \times \mathcal{X}_N) \times (\mathcal{Y}_1 \times \dotsm \times \mathcal{Y}_N)$ determines the tuples of questions and answers that win the game~\cite{brassard2005quantum}. Throughout this study, we restrict our attention to the case when $\mathcal{X}_i$ and $\mathcal{Y}_i$ are finite sets for all $i \in [N]$. Unless stated otherwise, we will always refer to promise-free nonlocal games as nonlocal games.\footnote{One may consider nonlocal games for which the possible question tuples are restricted to a subset $P\subseteq \mathcal{X}_1 \times\dots \times \mathcal{X}_N$ called a \emph{promise}. A promise-free nonlocal game as defined above is one with $P = \mathcal{X}_1 \times\dots \times \mathcal{X}_N$. Any nonlocal game $G = (\mathcal{X}_1, \dotsc, \mathcal{X}_N, \mathcal{Y}_1, \dotsc, \mathcal{Y}_N, \mathcal{W})$ with promise $P$ can be turned into a promise-free game by defining a new winning condition $\mathcal{W}' = \mathcal{W} \cup [(\mathcal{X}_1 \times\dots \times \mathcal{X}_N)\setminus P] \times (\mathcal{Y}_1 \times \dotsm \times \mathcal{Y}_N)$, i.e., the players automatically win on question tuples not contained in the promise.}

For convenience, we denote the question set as $\mathcal{X} = \mathcal{X}_1 \times \dotsm \times \mathcal{X}_N$ and the answer set as $\mathcal{Y} = \mathcal{Y}_1 \times \dotsm \times \mathcal{Y}_N$. Given any question tuple $\bm{x} = (x_1,\dots,x_N) \in \mathcal{X}$, the $i$th question is given by $x_i \in \mathcal{X}_i$, and a similar notation is used for answers.

A strategy for the game $G=(\mathcal{X},\mathcal{Y},\mathcal{W})$ is any conditional probability distribution $p_{\bm{Y} | \bm{X}}(\bm{y} | \bm{x})$ on the answer tuples $\bm{y} \in \mathcal{Y}$ given a question tuple $\bm{x} \in \mathcal{X}$. There are different types of strategies that one can consider, depending on whether the players are allowed to use shared correlations after the game starts. We list a few strategies that are central to our study.

\begin{enumerate}
    \item \textit{Classical strategy}: This is the typical setup of a nonlocal game. Given a question $x_i \in \mathcal{X}_i$, the $i$th player decides an answer as per the probability distribution $p_{Y_i | X_i}(y_i | x_i)$, $i \in [N]$. This gives rise to a classical strategy
        \begin{equation}
            p_{\bm{Y} | \bm{X}}(y_1, \dotsc, y_N | x_1, \dotsc, x_N) = p_{Y_1 | X_1}(y_1 | x_1) \dotsm p_{Y_N | X_N}(y_N | x_N) \label{eqn:classical_strategy}
        \end{equation}
        In the general setting the players are allowed to choose such a strategy on the basis of the outcome of a random variable shared by all players, which is called a probabilistic strategy. Therefore, the set of classical strategies $\strat_{\text{cl}}$ corresponds to the convex hull of conditional probability distributions of the form given in Eq.~\eqref{eqn:classical_strategy}~\cite{palazuelos2016survey}. Operationally, convex combinations of product distributions correspond to shared randomness, and hence the set of classical correlations we use is similar to those defined by local hidden variable theories~\cite{palazuelos2016survey, mermin1993hidden}. However, when the questions are drawn uniformly at random, the classical strategy maximizing the probability of winning the game is of the form given in Eq.~\eqref{eqn:classical_strategy}, with $p_{Y_i | X_i}(y_i | x_i) = \delta_{y_i, f_i(x_i)}$ for $i \in [N]$ \cite{brassard2005quantum}. Here, $f_i\colon \mathcal{X}_i \to \mathcal{Y}_i$ is a function that outputs an answer given a question. Such a strategy is called a deterministic strategy.

    \item \textit{Quantum strategy}: The players have access to a shared quantum state $\rho$, but cannot communicate otherwise. Given a question $x_i \in \mathcal{X}_i$, the $i$th player performs a local measurement with some POVM $\{E^{(x_i)}_y \mid y \in \mathcal{Y}_i\}$. Subsequently, a quantum strategy is described by the probability distribution
        \begin{equation}
            p_{\bm{Y} | \bm{X}}(y_1, \dotsc, y_N | x_1, \dotsc, x_N) = \tr\left[\rho \left(E^{(x_1)}_{y_1} \otimes \dotsm \otimes E^{(x_N)}_{y_N}\right)\right] \label{eqn:quantum_strategy}
        \end{equation}
        We denote the set of quantum strategies by $\strat_{\text{Q}}$. For any given classical strategy, one can construct a quantum state and POVMs so that the quantum strategy reduces to the given classical strategy, and therefore, $\strat_{\text{cl}} \subseteq \strat_{\text{Q}}$~\cite{palazuelos2016survey}.

    \item \textit{No-signalling strategy}: A strategy $p_{\bm{Y} | \bm{X}}(y_1, \dotsc, y_N | x_1, \dotsc, x_N)$ is said to be no-signalling if
        \begin{equation}
            p_{Y_i | \bm{X}}(y_i | x_1, \dotsc, x_N) = p_{Y_i | X_i}(y_i | x_i), \quad i \in [N] \label{eqn:no_signalling_strategy}
        \end{equation}
        for all $x_1, \dotsc, x_N$~\cite{russo2017extended}. Informally, this means that players must respect locality, i.e., no information can be transmitted between players ``faster than light". As a consequence, the strategy used by each player cannot depend on the questions received by the other players. We denote the set of no-signalling strategies by $\strat_{\text{NS}}$. Both classical and quantum strategies are no-signalling, and therefore, $\strat_{\text{cl}} \subseteq \strat_{\text{Q}}\subseteq \strat_{\text{NS}}$. 

    \item\textit{Full communication}: When we impose no restriction on the distribution $p_{\bm{Y} | \bm{X}}(\bm{y} | \bm{x})$, we are implicitly allowing for communication between the players \textit{after} they have received the questions. The set of all possible conditional probability distributions $p_{\bm{Y} | \bm{X}}(\bm{y} | \bm{x})$ is denoted by $\strat_{\text{all}}$, which corresponds to allowing full communication between the players. This contains all classical, quantum and no-signalling strategies. Communication between the players is usually not allowed in nonlocal games, but we introduce this setting here for later use nevertheless.
\end{enumerate}

In summary, we have the following hierarchy of correlations: $\strat_{\text{cl}} \subseteq \strat_{\text{Q}} \subseteq \strat_{\text{NS}} \subseteq \strat_{\text{all}}$.

Suppose that the questions are drawn randomly from the set $\mathcal{X}$ as per the probability distribution $\pi(\bm{x})$. Say the players obtain their answers using strategies in some set $\strat$. Then the maximum winning probability of the game $G$ is given by
\begin{equation}
    \omega_\pi^\strat(G) = \sup_{p_{\bm{Y} | \bm{X}} \in \strat} \sum_{(\bm{x}, \bm{y}) \in \mathcal{W}} \pi(\bm{x}) p_{\bm{Y} | \bm{X}}(\bm{y} | \bm{x}). \label{eqn:max_winning_prob_game}
\end{equation}
Notice in particular that, if $\strat \subseteq \strat'$, then $\omega_\pi^\strat(G) \leq \omega_\pi^{\strat'}(G)$. Hence, as we go from classical to quantum to no-signalling strategies, the winning probability never decreases.

We will mainly be concerned with the scenario where the questions are drawn uniformly at random, and so $\pi = \pi_U$ is the uniform distribution on $\mathcal{X}$ with $\pi_U(\bm{x}) = 1/|\mathcal{X}|$ for all $\bm{x}\in\mathcal{X}$. In this case, we use the notation $\omega_{\pi_U}^\strat(G) = \omega^\strat(G)$. The maximum winning probability when the questions are drawn uniformly and answers are generated using classical strategies is written as $\omega^{\text{cl}}(G)$.
Similarly, when using quantum strategies, we write $\omega^{\text{Q}}(G)$, and when using no-signalling strategies, we write $\omega^{\text{NS}}(G)$.

\subsection{Multiple Access Channels}
\label{secn:preliminaries_MAC}
A discrete memoryless multiple access channel without feedback, simply referred
to as multiple access channel in this study, is a tuple $(\mathcal{B}_1,
\dotsc, \mathcal{B}_N, \mathcal{Z}, \mathcal{N}(z | b_1, \dotsc, b_N))$
consisting of input alphabets $\mathcal{B}_1, \dotsc, \mathcal{B}_N$, an output
alphabet $\mathcal{Z}$, and a probability transition matrix $\mathcal{N}(z |
b_1, \dotsc, b_N)$ that gives the probability 
that the channel output is $z \in \ZC$ when the input is 
$b_1 \in \BC_1, \dotsc, b_N \in \BC_N$~\cite{thomas2006elements}.
For simplicity of notation, we will
denote a MAC by the probability transition matrix $\mathcal{N}$ when the input
and output alphabets are understood.

Since transmission over the channel is error-prone, one usually encodes the messages and transmits them over multiple uses of the channel, and subsequently, the transmitted symbols are decoded to reconstruct the original message. Formally, an $((M_1, \dotsc, M_N), n)$-code for a multiple access channel consists of message sets $\mathcal{M}_i = \{1, \dotsc, M_i\}$ for $i \in [N]$, encoding functions $g^{(e)}_i\colon \mathcal{M}_i \to \mathcal{B}_i^{\times n}$ for $i \in [N]$, and a decoding function $g^{(d)}\colon \mathcal{Z}^n \to \mathcal{M}_1 \times \dotsm \times \mathcal{M}_N \cup \{\text{err}\}$, where $\text{err}$ is an error symbol~\cite{thomas2006elements}. For convenience, we denote the message set as $\mathcal{M} = \mathcal{M}_1 \times \dotsm \times \mathcal{M}_N$. The performance of the code can be quantified by the average probability of error in reconstructing the message,
\begin{equation*}
    P^{(n)}_e = \frac{1}{|\mathcal{M}|} \sum_{m \in \mathcal{M}} \text{Prob}\{g^{(d)}(Z^n) \neq m \mid \text{message } m \text{ was sent}\}.
\end{equation*}
Here we assume that the messages are chosen uniformly at random and transmitted through the channel. Note that the above code uses the channel $n$ times. We say that a rate tuple $(R_1, \dotsc, R_N)$ is \textit{achievable} if there exists a sequence of codes $((\lceil 2^{n R_1} \rceil, \dotsc, \lceil 2^{n R_N} \rceil), n)$ such that $P^{(n)}_e \to 0$ as $n\to\infty$~\cite{thomas2006elements}.

The \textit{capacity region} of a multiple access channel is defined as the closure of the set of achievable rate tuples $(R_1, \dotsc, R_N)$.
For $N = 2$ one obtains a two-sender MAC.
Ahlswede~\cite{ahlswede1974capacity} and Liao~\cite{liao1972multiple} were the
first to give a so-called single-letter characterization of the capacity region
of a two-sender MAC. The capacity region for an $N$-sender MAC $\mathcal{N}$
can be written as
\begin{multline}
    \text{Cap}(\mathcal{N}) = \conv \Bigg\{(R_1, \dotsc, R_N) \mid 0 \leq \sum_{j \in J} R_i \leq I(B_J; Z | B_{J^c})\ \forall \varnothing \neq J \subseteq [N],\\
    p(b_1, \dotsc, b_N) = p(b_1) \dotsm p(b_N)\Bigg\} \label{eqn:MAC_capacity_region}
\end{multline}
where $p(b_1, \dotsc, b_N)$ is a probability distribution
over the joint random variable $B_1, \dotsc, B_N$ describing the channel input,
$p(b_i)$ is a probability distribution for the random variable $B_i$
corresponding to the $i$-th sender's input, random variable $Z$ describes the
channel output, and for any set $J \subseteq [N]$, $B_J = \{B_j \mid j \in
J\}$~\cite{calvo2010computation}.
Crucially, the optimization in Eq.~\eqref{eqn:MAC_capacity_region} restricts the joint
input random variable to have a product distribution, i.e., $p(b_1, \dotsc,
b_N) = p(b_1) \dotsm p(b_N)$.
The main quantity of interest in our study is the \textit{sum capacity} of a MAC~\cite{calvo2010computation},
\begin{align}
    S(\mathcal{N}) &= \sup \left\{\sum_{i = 1}^N R_i \mid (R_1, \dotsc, R_N) \in \text{Cap}(\mathcal{N})\right\} \nonumber \\
                   &= \max_{p(b_1 \dotsm b_N)} I(B_1, \dotsc, B_N; Z) \quad \text{such that} \quad p(b_1, \dotsm, b_N) = p(b_1) \dotsm p(b_N).
                   \label{eqn:sum_capacity}
\end{align}

Informally, the sum capacity represents the maximum sum rate at which the senders can send information through the MAC such that the transmission error vanishes asymptotically. Because the maximization involved in computing the sum capacity is constrained to be over product distributions on the input, the resulting optimization problem is nonconvex. This nonconvexity is the main source of difficulty in computing the sum capacity of a MAC in practice~\cite{buhler2011note, leditzky2020playing}. A common approach to avoid this difficult is to relax the nonconvex constraint and maximize over all possible probability distributions on the input. Such an approach was adopted by Calvo \textit{et al.}~\cite{calvo2010computation}, leading to the upper bound
\begin{equation}
    C(\mathcal{N}) = \max_{p(b_1, \dotsc, b_N)} I(B_1, \dotsc, B_N; Z) \label{eqn:relaxed_AL_capacity}
\end{equation}
on the sum capacity $S(\mathcal{N})$, where we now maximize over \emph{arbitrary} input
probability distributions. Note that $C(\mathcal{N})$ is the capacity of the
channel $\mathcal{N}$ when we think about it as a single-input single-output
channel. 
Since it is a relaxation of the sum capacity corresponding to the
capacity region, we call $C$ the \textit{relaxed sum capacity}.

\section{Nonlocal games MAC and relaxed sum capacity\label{secn:NG_MAC_relaxed_sum_capacity}}
\subsection{Bounding the sum capacity of MACs from nonlocal games\label{secn:NG_MAC_sum_capacity_upper_bound}}
Suppose we are given a promise-free nonlocal game $G$ with question sets $\mathcal{X}_1, \dotsc, \mathcal{X}_N$, answer sets $\mathcal{Y}_1, \dotsc, \mathcal{Y}_N$, and a winning condition $\mathcal{W} \subseteq (\mathcal{X}_1 \times \dotsm \times \mathcal{X}_N) \times (\mathcal{Y}_1 \times \dotsm \times \mathcal{Y}_N)$. Following Leditzky \textit{et al.}~\cite{leditzky2020playing}, we construct a MAC $\mathcal{N}_G$ with input alphabets $\mathcal{B}_i = \mathcal{X}_i \times \mathcal{Y}_i$ for $i \in [N]$, output alphabet $\mathcal{Z} = \mathcal{X}_1 \times \dotsm \mathcal{X}_N$, and a probability transition matrix
\begin{align}
    \mathcal{N}_G(\widehat{x}_1, \dotsc, \widehat{x}_N | x_1, y_1, \dotsc, x_N, y_N) &= \begin{cases} \delta_{\hat{x}_1, x_1} \dotsm \delta_{\hat{x}_N, x_N} &(x_1, \dotsc, x_N, y_1, \dotsc, y_N) \in \mathcal{W} \\
        \frac{1}{d} &(x_1, \dotsc, x_N, y_1, \dotsc, y_N) \notin \mathcal{W},
    \end{cases} \label{eqn:MAC_nonlocal_game}
\end{align}
where $d = |\mathcal{Z}| = |\mathcal{X}_1 \times \dotsm \times \mathcal{X}_N|$.

In words, the channel $\mathcal{N}_G$ takes question-answer pairs from each player as input, and if they win the game, then the questions are transmitted without any noise. However, if they lose the game, a question tuple chosen uniformly at random is output by the channel. For convenience, we will denote the input to the MAC $\mathcal{N}_G$ as $\mathcal{XY} = (\mathcal{X}_1 \times \mathcal{Y}_1) \times \dotsm \times (\mathcal{X}_N \times \mathcal{Y}_N)$, and write $\bm{xy} = (x_1, y_1, \dotsc, x_N, y_N)$. We will usually abbreviate the phrase ``MAC obtained from a nonlocal game" to ``nonlocal game MAC" or \ngmac.

Before diving into any technical details, we explain why such MACs are suitable for obtaining bounds on the sum capacity that are better than the relaxed sum capacity. We begin by noting that the sum capacity of the \ngmac $\mathcal{N}_G$ can be written as
\begin{equation*}
    S(\mathcal{N}_G) = \max_{p^{(1)}(x_1, y_1) \dotsm p^{(N)}(x_N, y_N)} I(X_1, Y_1, \dotsc, X_N, Y_N; Z)
\end{equation*}
where $(X_i, Y_i)$ is the random variable (with distribution $p^{(i)}$) describing the $i$th input and $Z$ is the random variable describing the output of $\mathcal{N}_G$. Note that $p^{(i)}$ is a probability distribution over the question-answer pairs of the $i$th player. By writing $p^{(i)}(x_i, y_i) = \pi^{(i)}(x_i) p_{Y_i | X_i}(y_i | x_i)$, we can break $p^{(i)}$ into a distribution $\pi^{(i)}$ over questions and a strategy $p_{Y_i | X_i}$ chosen by the $i$th player. As a result, we can write $p^{(1)} \dotsm p^{(N)} = \pi p_{\bm{Y} | \bm{X}}$, where $\pi = \pi^{(1)} \dotsm \pi^{(N)}$ is some distribution over the questions and $p_{\bm{Y} | \bm{X}} = p_{Y_1 | X_1} \dotsm p_{Y_N | X_N}$ is a classical strategy chosen by the players. Such a decomposition allows us to optimize separately over questions and strategies. By performing suitable relaxations, we can obtain a bound on the sum capacity that depends only on the winning probability of the game (see Thm.~\eqref{thm:NG_MAC_correlation_assisted_sum_capacity_bound} for a precise statement). In fact, such a proof technique allows us to bound the sum capacity even when the players are allowed to use different sets of strategies such as those obtained from quantum or no-signalling correlations. The resulting bound is helpful in obtaining separations between the communication capabilities of different sets of correlations.

In contrast, the relaxed sum capacity $C(\mathcal{N}_G)$ is computed by maximizing over all possible probability distributions. For a nonlocal game MAC, this amounts to maximizing over all distributions over the questions and all possible strategies (allowing full communication between the players). Using such strategies, the players will always win the game, assuming that the game has at least one correct answer for every question. This results in the trivial bound $C(\mathcal{N}_G) = \ln(d)$ on the sum capacity, since the players can always win the game. On the other extreme, if the winning probability of the game is zero, then we have $S(\mathcal{N}_G) = 0$. Following this line of thinking, we construct a game in Section~\ref{secn:relaxedALcapacity_sumcapacity_separation} that allows us to give an arbitrarily large separation between $S(\mathcal{N}_G)$ and $C(\mathcal{N}_G)$.

The above discussion suggests that the sum capacity of the \ngmac $\mathcal{N}_G$ increases with the winning probability of the game $G$, an observation that was also noted by Leditzky \textit{et al.}~\cite{leditzky2020playing}. On the other hand, we know that the winning probability of the game can increase if we allow the players to use a larger set of strategies. This motivates us to allow the senders to share some set of correlations to play the game so as to increase the sum capacity of \ngmac.

\subsubsection{Correlation assistance\label{secn:NG_MAC_strategy_assistance}}
By a correlation, we mean a probability distribution $P(\bm{y}' | \bm{xy})$, where $\bm{xy} \in \mathcal{XY}$ is the input to the \ngmac, while $\bm{y}' \in \mathcal{Y}$ is an answer to the nonlocal game $G$.\footnote{The correlation $P$ may depend on the answers $\bm{y} \in \mathcal{Y}$ in addition to the questions $\bm{x} \in \mathcal{X}$ because we wish to construct a larger MAC that has the same structure as the \ngmac $\mathcal{N}_G$ and has assistance from the correlation $P$. See Fig.~\ref{fig:correlation_assisted_nonlocal_games_MAC} for example. In practice, the correlation will often produce the answer tuple $\bm{y}'$ solely from the question tuple $\bm{x}$, using some strategy for the non-local game $G$.}
We allow the $N$ senders to share the correlation $P$ and perform local post-processing of the answer generated by $P$ to obtain their final answer for the game for the input questions. This post-processing can be expressed as a probability distribution $f_i(\overline{y}_i | x_i, y_i, y_i')$ over the answers $\overline{y}_i \in \mathcal{Y}_i$ given the input question-answer pair $(x_i, y_i) \in \mathcal{X}_i \times \mathcal{Y}_i$ and the answer $y_i' \in \mathcal{Y}_i$ generated by the correlation $P$. For convenience, denote 
\begin{equation}
    f(\overline{\bm{y}} | \bm{xy}, \bm{y}') = \prod_{i = 1}^N f_i(\overline{y}_i | x_i, y_i, y_i'),  \label{eqn:local_postprocessing}
\end{equation}
which is a distribution over the answers $\overline{\bm{y}}\in \mathcal{Y}$ given input question-answer pairs $\bm{xy}\in \mathcal{XY}$ and answers $\bm{y}'\in\mathcal{Y}$ generated by the correlation $P$.

This gives rise to the channel $\mathcal{A}_{P, f}$ having input and output alphabets $\mathcal{XY}$ and the probability transition matrix
\begin{equation}
    \mathcal{A}_{P, f}(\overline{\bm{xy}} | \bm{xy}) = \delta_{\overline{\bm{x}}, \bm{x}} \sum_{\bm{y}' \in \mathcal{Y}} f(\overline{\bm{y}} | \bm{xy}, \bm{y}') P(\bm{y}' | \bm{xy}). \label{eqn:strategy_assistance_channel}
\end{equation}
This probability transition matrix gives a probability distribution over the question-answer pairs $\overline{\bm{xy}} \in \mathcal{XY}$ given some input question-answer pair $\bm{xy} \in \mathcal{XY}$. The probability distributions $f_i$ denote local post-processing by the $i$th sender to generate the final answer $\overline{y}_i$. Note that the definition of $\mathcal{A}_{P, f}$ in Eq.~\eqref{eqn:strategy_assistance_channel} ensures that the input questions $\bm{x} \in \mathcal{X}$ are transmitted without any change, i.e., $\overline{\bm{x}} = \bm{x}$. The post-processings are only used to generate the final answers. This definition is slightly more restrictive than the one given in Leditzky \textit{et al.}~\cite{leditzky2020playing}, where post-processing of both questions and answers is considered. We impose such a restriction on post-processing so as to ensure that the set of strategies induced by the channel $\mathcal{A}_{P, f}$ has a close relation to the set of correlations $\corr$ shared by the senders. For example, classical or quantum correlations shared by the senders give rise to classical or quantum strategies to play the nonlocal game $G$. We make this idea precise in the following discussion.

An input probability distribution $p(\bm{xy})$ over the question-answer pairs $\mathcal{XY}$ can be written as $p(\bm{xy}) = \pi(\bm{x}) p_{\bm{Y} | \bm{X}}(\bm{y} | \bm{x})$, a product of probability distribution over the questions $\pi(\bm{x})$ and a strategy $p_{\bm{Y} | \bm{X}}(\bm{y} | \bm{x})$. This input probability distribution is modified by the channel $\mathcal{A}_{P, f}$ to give
\begin{equation}
    \overline{p}(\overline{xy}) = \sum_{\bm{xy} \in \mathcal{XY}} \mathcal{A}_{P, f}(\overline{\bm{xy}} | \bm{xy}) p(\bm{xy})
    = \pi(\overline{\bm{x}}) \overline{p}_{\bm{Y} | \bm{X}}(\overline{\bm{y}} | \overline{\bm{x}}), \label{eqn:strategy_assistance_channel_induced_probability}
\end{equation}
where the new strategy $\overline{p}_{\bm{Y} | \bm{X}}$ is given by
\begin{equation}
    \mathcal{A}_{P, f}(p_{\bm{Y} | \bm{X}}) \coloneqq \overline{p}_{\bm{Y} | \bm{X}}(\overline{\bm{y}} | \overline{\bm{x}}) = \sum_{\bm{y} \in \mathcal{Y}} \sum_{\bm{y}' \in \mathcal{Y}} f(\overline{\bm{y}} | \bm{\overline{x}y}, \bm{y}')
    P(\bm{y}' | \bm{\overline{x}y}) p_{\bm{Y} | \bm{X}}(\bm{y} | \overline{\bm{x}}). \label{eqn:assistance_induced_strategy}
\end{equation}
We use the notation $\mathcal{A}_{P, f}(p_{\bm{Y} | \bm{X}})$ to emphasize that $\overline{p}_{\bm{Y} | \bm{X}}$ is the strategy induced by the action of $\mathcal{A}_{P, f}$ on $p_{\bm{Y} | \bm{X}}$. Therefore, the channel $\mathcal{A}_{P, f}$ modifies the input strategy $p_{\bm{Y} | \bm{X}}$ by incorporating assistance from the correlation $P$ and local post-processings $f$. For this reason, we call $\mathcal{A}_{P, f}$ the \textit{correlation-assistance channel}.

If the senders have access to some set of correlations $\corr$, we can define the set of strategies induced by these correlations as
\begin{equation}
    \strat_\corr = \left\{\mathcal{A}_{P, f}(p_{\bm{Y} | \bm{X}}) \mid p_{\bm{Y} | \bm{X}} = \prod_{i = 1}^N p_{Y_i | X_i},\ P \in \corr,\ f \in \text{PP}\right\}. \label{eqn:correlation_induced_strategies}
\end{equation}
where $\text{PP}$ is the set of all local post-processings of the answers generated by the correlation (as defined in Eq.~\eqref{eqn:local_postprocessing}).
We now show that there is a close relation between the correlations $\corr$ shared by the senders and the strategies $\strat_\corr$ induced by these correlations.

If $\corr_{\text{cl}}$ is the set of classical correlations, then any $P \in \corr_{\text{cl}}$ can be written as $P(\bm{y}' | \bm{xy}) = \prod_{i = 1}^N P_i(y_i' | x_i, y_i)$ and convex combinations thereof, where $P_i$ are some distributions over $\mathcal{Y}_i$ given a question-answer pair from $(\mathcal{X}_i, \mathcal{Y}_i)$ for $i \in [N]$. Then, from Eq.~\eqref{eqn:assistance_induced_strategy}, we find that the strategy $\mathcal{A}_{P, f}(p_{\bm{Y} | \bm{X}})$ induced by the correlation assistance channel is also a classical strategy (since the input $p_{\bm{Y} | \bm{X}}$ is a classical strategy). Therefore, the set of strategies induced by classical correlations (as defined in Eq.~\eqref{eqn:correlation_induced_strategies}) is the set of classical strategies $\strat_{\text{cl}}$.

On the other hand, if $\corr_{\text{Q}}$ is the set of quantum correlations, then any $P \in \corr_{\text{Q}}$ can be written as
\begin{equation*}
    P(\bm{y}' | \bm{xy}) = \tr\left[\rho \left( E^{(x_1, y_1)}_{y_1'} \otimes \dotsm \otimes E^{(x_N, y_N)}_{y_N'}\right)\right]
\end{equation*}
where $\rho$ is the quantum state shared between the senders and $\{E^{(x_i, y_i)}_{y'} \mid y' \in \mathcal{Y}_i\}$ is the local measurement implemented by the $i$th player upon receiving the input $(x_i, y_i)$ for each $i \in [N]$. Now, we define new local measurement operators
\begin{equation*}
    \overline{E}^{(\overline{x}_i)}_{\overline{y}_i} = \sum_{y_i, y_i' \in \mathcal{Y}_i} p_{Y_i | X_i}(y_i | \overline{x}_i) f_i(\overline{y}_i | \overline{x}_i, y_i y_i') E^{(\overline{x}_i, y_i)}_{y_i'},
\end{equation*}
so that the strategy induced by the correlation $P$ can be written using Eq.~\eqref{eqn:assistance_induced_strategy} as
\begin{equation*}
    \overline{p}_{\bm{Y} | \bm{X}}(\overline{\bm{x}} | \overline{\bm{y}}) = \tr\left[ \rho \left(\overline{E}^{(\overline{x}_1)}_{\overline{y}_1} \otimes \dotsm \otimes \overline{E}^{(\overline{x}_N)}_{\overline{y}_N}\right) \right].
\end{equation*}
This shows that the set of strategies induced by quantum correlations, as defined in Eq.~\eqref{eqn:correlation_induced_strategies}, is the set of quantum strategies $\strat_{\text{Q}}$. Similarly, one can verify that the set of strategies induced by no-signalling correlations $\corr_{\text{NS}}$ is the set of no-signalling strategies $\strat_{\text{NS}}$.

We now elaborate on how one can use the correlation-assistance channel to boost the sum capacity of the nonlocal games MAC. Given the \ngmac $\mathcal{N}_G$ obtained from a nonlocal game $G$, some correlation $P$ shared by the senders and local post-processings $f$, we define the correlation-assisted \ngmac $\mathcal{N}_G \circ \mathcal{A}_{P, f}$. That is, the input question-answer pair is first passed through the correlation-assistance channel $\mathcal{A}_{P, f}$, which tries to improve the strategy for playing the game, and the modified question-answer pair is passed on to the \ngmac $\mathcal{N}_G$. A schematic of this procedure for the case of two senders is shown in Fig.~\ref{fig:correlation_assisted_nonlocal_games_MAC}. If the local post-processing $f$ discards information about the input questions as well as the answers generated by the correlation $P$, i.e., $f(\overline{\bm{y}} | \bm{xy}, \bm{y}') = \delta_{\overline{\bm{y}}, \bm{y}}$, then $\mathcal{A}_{P, f}$ becomes the identity channel. Therefore, the correlation-assisted \ngmac $\mathcal{N}_G \circ \mathcal{A}_{P, f}$ is at least as powerful as the \ngmac $\mathcal{N}_G$ if we allow the senders to perform any local post-processing.

\begin{figure}[H]
    \vspace{0.5cm}
    \begin{center}
        \begin{tikzpicture}
            \draw (-1.85, 2.5) node {$(X_1, Y_1)$};
            \draw (-1.85, 0.5) node {$(X_2, Y_2)$};
            \draw (-1, 2.5) -- (-0.5, 2.5) -- (0, 3) -- (2.5, 3);
            \draw (-1, 2.5) -- (-0.5, 2.5) -- (0, 2) -- (0.5, 2);
            \draw (-1, 0.5) -- (-0.5, 0.5) -- (0, 1) -- (0.5, 1);
            \draw (-1, 0.5) -- (-0.5, 0.5) -- (0, 0) -- (2.5, 0);
            \draw (0.5, 0.5) rectangle (1.5, 2.5);
            \draw (1, 1.5) node {$P$};
            \draw (1.5, 2) -- (2.5, 2);
            \draw (1.5, 1) -- (2.5, 1);
            \draw (2, 2.3) node {$Y_1'$};
            \draw (2, 0.7) node {$Y_2'$};
            \draw (2.5, 1.75) rectangle (3.2, 3.25);
            \draw (2.5, 1.25) rectangle (3.2, -0.25);
            \draw (2.85, 2.5) node {$f_1$};
            \draw (2.85, 0.5) node {$f_2$};
            \draw (3.2, 2.5) -- (5.2, 2.5);
            \draw (3.2, 0.5) -- (5.2, 0.5);
            \draw (4.3, 2.85) node {$(X_1, \overline{Y}_1)$};
            \draw (4.3, 0.15) node {$(X_2, \overline{Y}_2)$};
            \draw[dashed, thick, blue] (-0.65, -0.5) rectangle (3.45, 3.5);
            \draw (1.5, -1) node {$\mathcal{A}_{P, f}$};
            \draw (5.2, -0.25) rectangle (6.8, 3.25);
            \draw (6, 1.5) node {$\mathcal{N}_G$};
            \draw (6.8, 1.5) -- (7.3, 1.5);
            \draw (8.1, 1.5) node {$(\widehat{X}_1, \widehat{X}_2)$};
        \end{tikzpicture}
    \end{center}
    \caption{Correlation-assisted \ngmac $\mathcal{N}_G \circ \mathcal{A}_{P, f}$ for the case of two senders, obtained from the nonlocal games MAC $\mathcal{N}_G$ defined in Eq.~\eqref{eqn:MAC_nonlocal_game} and correlation-assistance channel $\mathcal{A}_{P, f}$ defined in Eq.~\eqref{eqn:strategy_assistance_channel}.}
    \label{fig:correlation_assisted_nonlocal_games_MAC}
\end{figure}
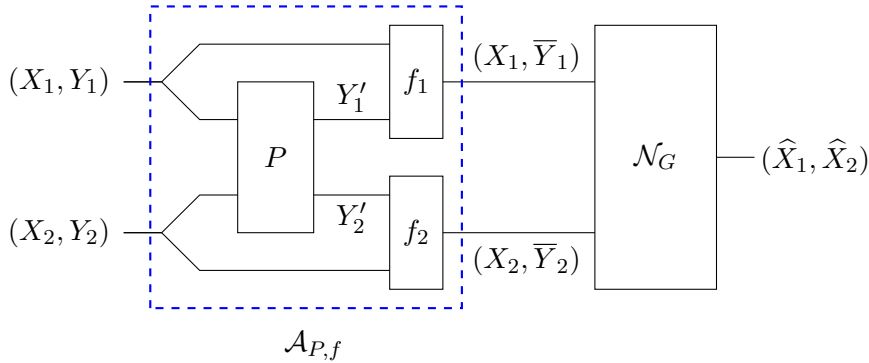

Suppose that the senders share a set of correlations $\corr$. The \textit{$\corr$-assisted achievable rate region} of the \ngmac $\mathcal{N}_G$ is defined as
\begin{equation}
    \text{Cap}_{\corr}^{(1)}(\mathcal{N}_G) = \bigcup_{\substack{P \in \corr,\\ f \in \text{PP}}} \text{Cap}(\mathcal{N}_G \circ \mathcal{A}_{P, f}) \label{eqn:C-assisted_achievable_rate_region}
\end{equation}
where $\text{Cap}(\mathcal{N}_G \circ \mathcal{A}_{P, f})$ is the capacity region defined in Eq.~\eqref{eqn:MAC_capacity_region} evaluated for the correlation-assisted \ngmac $\mathcal{N}_G \circ \mathcal{A}_{P, f}$.\footnote{The superscript $^{(1)}$ signifies that $\text{Cap}_{\corr}^{(1)}(\cdot)$ is merely a region consisting of \emph{achievable} rate pairs, and hence contained in the full capacity region $\text{Cap}_{\corr}(\cdot)$. To determine whether $\text{Cap}_{\corr}^{(1)}(\cdot) = \text{Cap}_{\corr}(\cdot)$ is outside the scope of this work.} 
The \textit{$\corr$-assisted achievable sum rate} of the \ngmac $\mathcal{N}_G$ is
\begin{equation}
    S_{\corr}(\mathcal{N}_G) = \sup \left\{\sum\nolimits_{i = 1}^N R_i \mid (R_1, \dotsc, R_N) \in \text{Cap}_{\corr}^{(1)}(\mathcal{N}_G)\right\}. \label{eqn:C-assisted_achievable_sum_rate}
\end{equation}

We now derive an alternate expression of $S_{\corr}(\mathcal{N}_G)$ that is more convenient for computation. Prior to obtaining this expression, note that for any given family of sets $\{F_i\}_{i \in \mathcal{I}}$ indexed by some set $\mathcal{I}$ and any function $g\colon \cup_{i \in \mathcal{I}} F_i \to \mathbb{R}$, we have
\begin{equation*}
    \sup_{r \in \cup_{i \in \mathcal{I}} F_i} g(r) = \sup_{i \in \mathcal{I}} \sup_{r \in F_i} g(r).
\end{equation*}
Using this equation along with Eq.~\eqref{eqn:sum_capacity}, we can write
\begin{equation}
    S_{\corr}(N_G) = \sup_{P \in \corr, f \in \text{PP}} S(\mathcal{N}_G \circ \mathcal{A}_{P, f})
    = \sup_{P \in \corr, f \in \text{PP}} \sup_{p^{(1)} \dotsm p^{(N)}} I(X_1, Y_1, \dotsc, X_N, Y_N; Z), \label{eqn:correlation_assisted_achievable_sum_rate}
\end{equation}
where $(X_i, Y_i)$ is the random variable (with distribution $p^{(i)}$) describing the input for $i \in [N]$, and $Z$ is the random variable describing the output of the \ngmac $\mathcal{N}_G \circ \mathcal{A}_{P, f}$.

Since $S_{\corr}(\mathcal{N}_G)$ corresponds to a maximization over all possible local post-processings, we must have $S(\mathcal{N}_G) \leq S_{\corr}(\mathcal{N}_G)$ for any set of correlations $\corr$. Furthermore, if $\corr_1 \subseteq \corr_2$, then $\text{Cap}_{\corr_1}^{(1)}(\mathcal{N}_G) \subseteq \text{Cap}_{\corr_2}^{(1)}(\mathcal{N}_G)$, and consequently also $S_{\corr_1}(\mathcal{N}_G) \leq S_{\corr_2}(\mathcal{N}_G)$. Finally, note that we compute the relaxed sum capacity $C(\mathcal{N}_G)$ by maximizing over all possible distributions over the questions and all possible strategies $\strat_{\text{all}}$ that the players can use (since the maximization in Eq.~\eqref{eqn:relaxed_AL_capacity} is over all input probability distributions). Because $\strat_\corr \subseteq \strat_{\text{all}}$ for any set of correlations $\corr$, we have $S_\corr(\mathcal{N}_G) \leq C(\mathcal{N}_G)$. Therefore, we obtain a hierarchy,
\begin{equation}
    S(\mathcal{N}_G) \leq S_{\text{cl}}(\mathcal{N}_G) \leq S_{\text{Q}}(\mathcal{N}_G) \leq S_{\text{NS}}(\mathcal{N}_G) \leq C(\mathcal{N}_G), \label{eqn:correlation_assisted_sum_capacity_hierarchy}
\end{equation}
where ``$\text{cl}$'', ``$\text{Q}$'', and ``$\text{NS}$'' denote classical, quantum, and no-signalling correlations, respectively. Note that the sum capacity $S(\mathcal{N}_G)$ might not be equal to $S_{\text{cl}}(\mathcal{N}_G)$ because classical correlations can be convex combinations of product distributions.

We now proceed to obtaining a bound on the $\corr$-assisted achievable sum rate.

\subsubsection{Bounding the correlation-assisted sum rate}
Let $\corr$ be any set of correlations shared between the senders. In order to bound $S_{\corr}(\mathcal{N}_G)$, we first obtain an optimization problem in terms of distributions over questions and strategies induced by the shared correlations. For a given correlation $P \in \corr$, the input and output of $\mathcal{N}_G \circ \mathcal{A}_{P, f}$ can be described as follows. The channel $\mathcal{A}_{P, f}$ takes the input random variable $(X_1, Y_1, \dotsc, X_N, Y_N)$ and outputs $(\overline{X}_1, \overline{Y}_1, \dotsc, \overline{X}_N, \overline{Y}_N)$. The output of $\mathcal{A}_{P, f}$ becomes the input to $\mathcal{N}_G$ that returns $\mathcal{Z}$, i.e.,
\begin{equation*}
    (X_1, Y_1, \dotsc, X_N, Y_N) \xrightarrow{\mathcal{A}_{P, f}} (\overline{X}_1, \overline{Y}_1, \dotsc, \overline{X}_N, \overline{Y}_N) \xrightarrow{\mathcal{N}_G} Z
\end{equation*}
forms a Markov chain. From the data processing inequality~\cite{thomas2006elements}, we obtain
\begin{equation}
    I(X_1, Y_1, \dotsc, X_N, Y_N; Z) \leq I(\overline{X}_1, \overline{Y}_1, \dotsc, \overline{X}_N, \overline{Y}_N; Z). \label{eqn:mutual_info_ineq_strategy_assistance}
\end{equation}
Then, using Eq.~\eqref{eqn:correlation_assisted_achievable_sum_rate} and Eq.~\eqref{eqn:mutual_info_ineq_strategy_assistance}, we get
\begin{equation}
    S_{\corr}(\mathcal{N}_G) \leq \sup_{\overline{p}} I(\overline{X}_1, \overline{Y}_1, \dotsc, \overline{X}_N, \overline{Y}_N; Z) \label{eqn:C-assisted_sum_rate_MI_upper_bound}
\end{equation}
where the probability distribution $\overline{p}$ defined in Eq.~\eqref{eqn:strategy_assistance_channel_induced_probability} is obtained by varying product distributions $p(x, y) = \prod_{i = 1}^N p^{(i)}(x_i, y_i)$ input to $\mathcal{A}_{P, f}$, the correlation $P \in \mathcal{C}$, and the post-processing $f \in \text{PP}$.

We can reinterpret the above equation as a maximum over distributions over questions and strategies for playing the game $G$ induced by the correlations $\corr$. First, we write $p^{(i)} = \pi^{(i)} p_{Y_i | X_i}$, where $\pi^{(i)}$ is a distribution over the questions $\mathcal{X}_i$ and $p_{Y_i | X_i}$ is a strategy chosen by $i$th player for $i \in [N]$. Therefore, the input probability distribution in Eq.~\eqref{eqn:correlation_assisted_achievable_sum_rate} can be written as $p^{(1)} \dotsm p^{(N)} = \pi p_{\bm{Y} | \bm{X}}$, where $\pi = \pi^{(1)} \dotsm \pi^{(N)}$ is a distribution over questions $\mathcal{X}$, and $p_{\bm{Y} | \bm{X}} = \prod_{i = 1}^N p_{Y_i | X_i}$ is a classical strategy chosen by the players. In particular, the input strategy $p_{\bm{Y} | \bm{X}}$ is always a classical strategy. As noted in Eq.~\eqref{eqn:assistance_induced_strategy}, the channel $\mathcal{A}_{P, f}$ takes this input strategy and returns a new strategy $\mathcal{A}_{P, f}(p_{\bm{Y} | \bm{X}})$ that incorporates assistance from the shared correlation $P$. Since the senders have access to the set of correlations $\corr$, we can write Eq.~\eqref{eqn:C-assisted_sum_rate_MI_upper_bound} as
\begin{equation}
    S_{\corr}(\mathcal{N}_G) \leq \sup_{\pi^{(1)} \dotsm \pi^{(N)}} \sup_{\overline{p}_{\bm{Y} | \bm{X}} \in \strat_\corr} I(\overline{X}_1, \overline{Y}_1, \dotsc, \overline{X}_N, \overline{Y}_N; Z),
    \label{eqn:C-assisted_sum_rate_MI_upper_bound_strategy}
\end{equation}
where $\strat_\corr$ is the set of strategies induced by $\corr$ as defined in Eq.~\eqref{eqn:correlation_induced_strategies}. To obtain an upper bound, we will perform relaxations of the RHS of the above equation, and solve the resulting optimization problems.

We begin by writing the RHS of Eq.~\eqref{eqn:C-assisted_sum_rate_MI_upper_bound_strategy} in a form that is amenable for calculations. To that end, note that given an input probability distribution $\overline{p}(\overline{\bm{xy}}) = \pi(\overline{\bm{x}}) \overline{p}_{\bm{Y} | \bm{X}}(\overline{\bm{y}} | \overline{\bm{x}})$, the probability distribution corresponding to the output of the channel $\mathcal{N}_G$ is given by
\begin{align}
    p(\bm{z}) &= \sum_{\overline{\bm{xy}} \in \mathcal{W}} \mathcal{N}_G(\bm{z} | \bm{xy}) \overline{p}(\overline{\bm{xy}})
    + \sum_{\overline{\bm{xy}} \notin \mathcal{W}} \mathcal{N}_G(\bm{z} | \bm{xy}) \overline{p}(\overline{\bm{xy}}) \nonumber \\
              &= \pi(\bm{z}) \sum_{\bm{y} : \bm{zy} \in \mathcal{W}} \overline{p}_{\bm{Y} | \bm{X}}(\bm{y} | \bm{z}) + \frac{1}{d} p_L \label{eqn:outputprob_elementwise}
\end{align}
where $d = |\mathcal{Z}|$ denotes the number of question pairs, while $p_L = \sum_{\overline{\bm{xy}} \notin \mathcal{W}} \overline{p}(\overline{\bm{xy}})$ denotes the probability of losing the game when questions are drawn as per the probability distribution $\pi$.

Note that $\mathcal{Z} = \mathcal{X}_1 \times \dotsm \times \mathcal{X}_N = \mathcal{X}$ is the set of question tuples output by the \ngmac $\mathcal{N}_G$. Since $\mathcal{Z}$ is a finite set of size $d$, we can fix a labelling for the elements of $\mathcal{Z}$ and write $\mathcal{Z} = \{\bm{z}_1, \dotsc, \bm{z}_d\}$. Each $\bm{z}_i$ corresponds to a particular question tuple. Then, we can define the contribution of a given strategy $\overline{p}_{\bm{Y} | \bm{X}}$ towards winning the game for each question tuple. We use $\mathcal{Z}$ and $\mathcal{X}$ interchangeably in the following discussion.
\begin{definition}[Winning vector]
    Given a strategy $\overline{p}_{\bm{Y} | \bm{X}}$ for playing the game $G = (\mathcal{X}, \mathcal{Y}, \mathcal{W})$, we let
    \begin{equation}
        w_i = \sum_{\bm{y} : \bm{z}_i \bm{y} \in \mathcal{W}} \overline{p}_{\bm{Y} | \bm{X}}(\bm{y} | \bm{z}_i) \qquad\text{for $i \in [d]$} \label{eqn:strategy_coeff}
    \end{equation}
    denote the contribution of the strategy towards winning the game $G$ for question $\bm{z}_i$. We call the vector $\bm{w} = (w_1, \dotsc, w_d)$ the \textit{winning vector} corresponding to the strategy $\overline{p}_{\bm{Y} | \bm{X}}$.
    Let $\winvec_\corr$ denote the set of winning vectors allowed by the strategies $\strat_\corr$,
    \begin{equation}
        \winvec_\corr = \left\{\bm{w} \in [0, 1]^d \mid  w_i = \sum\nolimits_{\bm{y} : \bm{z}_i \bm{y} \in \mathcal{W}} \overline{p}_{\bm{Y} | \bm{X}}(\bm{y} | \bm{z}_i),\ \text{ for }\overline{p}_{\bm{Y} | \bm{X}} \in \strat_\corr \text{ and $i \in [d]$.}\right\}.
        \label{eqn:strategy_coeff_set}
    \end{equation}
\end{definition}
Observe that $w_i \in [0, 1]$ for all $i \in [d]$, so that $\bm{w}$ is an element of the unit hypercube in $\mathbb{R}^d$. Note that we have $w_i = 1$ for a fixed strategy $\overline{p}_{\bm{Y} | \bm{X}}$ if and only if the players always win the game $G$ when asked the question $\bm{z}_i$ using the strategy $\overline{p}_{\bm{Y} | \bm{X}}$. On the other extreme, $w_i = 0$ if and only if the players always lose the game $G$ when asked the question $\bm{z}_i$ using the strategy $\overline{p}_{\bm{Y} | \bm{X}}$. Generally, questions are drawn with probability $\pi$ over $\mathcal{X}$. The probability of winning the game for question $i$ is $\pi_i w_i$, where $\pi_i = \pi(\bm{z}_i)$ is the probability of drawing the question tuple $\bm{z}_i$. The total probability of winning the game is $p_W = \sum_{i = 1}^d \pi_i w_i$ and the probability of losing the game is
\begin{equation}
    p_L = 1 - \sum_{i = 1}^d \pi_i w_i = 1 - \ip{\pi, \bm{w}}. \label{eqn:losingprob}
\end{equation}
Defining the matrix $\overline{W}$ with components
\begin{equation}
    \overline{W}_{ij} = w_i \delta_{ij} + \frac{1 - w_j}{d}, \label{eqn:W_matrix_elts}
\end{equation}
one may write the output probability $p(\bm{z})$ in Eq.~\eqref{eqn:outputprob_elementwise} as
\begin{equation}
    p = \overline{W} \pi. \label{eqn:outputprob}
\end{equation}

The mutual information $I(\overline{X}_1, \overline{Y}_1, \dotsc, \overline{X}_N, \overline{Y}_N; Z)$ can be written as
\begin{align}
    I(\overline{X}_1, \overline{Y}_1, \dotsc, \overline{X}_N, \overline{Y}_N; Z)  &= H(Z) - H(Z | \overline{X}_1, \overline{Y}_1, \dotsc, \overline{X}_N, \overline{Y}_N)\notag\\
                                                                                  &= H(Z) - p_L \ln(d), \label{eq:MI-formula}
\end{align}
where we used the fact that $H(Z | \overline{\bm{xy}}) = 0$ when $\overline{\bm{xy}} \in \mathcal{W}$ whereas $H(Z | \overline{\bm{xy}}) = \ln(d)$ when $\overline{\bm{xy}} \notin \mathcal{W}$.
Note that the formula \eqref{eq:MI-formula} was first derived in Ref.~\cite{leditzky2020playing} for nonlocal games MAC with two senders.
Using Eq.~\eqref{eqn:outputprob} and Eq.~\eqref{eqn:losingprob}, we obtain
\begin{equation}
    \mathscr{I}_{\bm{w}}(\pi) \coloneqq I(\overline{X}_1, \overline{Y}_1, \dotsc, \overline{X}_N, \overline{Y}_N; Z) = H(\overline{W} \pi) + \ip{\pi, \bm{w}} \ln(d) - \ln(d) \label{eqn:mutualinfo}
\end{equation}
where the notation, $\mathscr{I}_{\bm{w}}(\pi)$, for the mutual information emphasizes that it is only a function of the distribution $\pi$ over questions and the winning vector $\bm{w}$.

The RHS in Eq.~\eqref{eqn:C-assisted_sum_rate_MI_upper_bound_strategy} can be written as
\begin{align}
    \sup_{\pi^{(1)} \dotsm \pi^{(N)}} \sup_{\overline{p}_{\bm{Y} | \bm{X}}} I(\overline{X}_1, \overline{Y}_1, \dotsc, \overline{X}_N, \overline{Y}_N; Z)
                &= \sup_{\pi^{(1)} \dotsm \pi^{(N)}} \sup_{\bm{w} \in \winvec_\corr} \mathscr{I}_{\bm{w}}(\pi^{(1)} \dotsm \pi^{(N)}) \nonumber \\
                &\leq \sup_{\pi \in \Delta_d} \sup_{\bm{w} \in \winvec_\corr} \mathscr{I}_{\bm{w}}(\pi), \label{eqn:mutualinfo_optimization_relaxation_intermediate}
\end{align}
where $\winvec_\corr$ is the set of winning vectors defined in Eq.~\eqref{eqn:strategy_coeff_set}. To obtain Eq.~\eqref{eqn:mutualinfo_optimization_relaxation_intermediate}, we relax the product distribution constraint $\pi^{(1)} \dotsm \pi^{(N)}$ over the questions to obtain a maximization over all distribution $\pi \in \Delta_d$ over the questions, where $\Delta_d$ denotes the $(d - 1)$-dimensional standard simplex. This relaxation differs from that of Eq.~\eqref{eqn:relaxed_AL_capacity} used in obtaining $C(\mathcal{N}_G)$ in that we only relax the distribution over the questions, but not the whole probability distribution. 

For a fixed $\bm{w}$, the function $\mathscr{I}_{\bm{w}}(\pi)$ is continuous in $\pi$ over the compact set $\Delta_d$. Thus, the maximization in Eq.~\eqref{eqn:mutualinfo_optimization_relaxation_intermediate} can be written as
\begin{equation}
    \sup_{\bm{w} \in \winvec_\corr} \max_{\pi \in \Delta_d} \mathscr{I}_{\bm{w}}(\pi). \label{eqn:mutualinfo_optimization_relaxation}
\end{equation}
The inner optimization in Eq.~\eqref{eqn:mutualinfo_optimization_relaxation} is a convex problem since $\mathscr{I}_{\bm{w}}(\pi)$ is concave in $\pi$ and $\Delta_d$ is a convex set. However, $\mathscr{I}_{\bm{w}}(\pi)$ is not jointly concave in $\pi$ and $\bm{w}$, and moreover, $\winvec_\corr$ need not be a convex set. Therefore, the optimization in Eq.~\eqref{eqn:mutualinfo_optimization_relaxation} is generally nonconvex.

Our goal is to obtain an upper bound on the optimization in Eq.~\eqref{eqn:mutualinfo_optimization_relaxation}. To give a general idea of our approach to obtaining this bound, we list the main steps we will carry out.
\begin{itemize}[leftmargin=1.5cm]
    \item[Step 1:] For a fixed $\bm{w}$, we obtain an upper bound on the inner optimization in Eq.~\eqref{eqn:mutualinfo_optimization_relaxation}. This bound is tight when either $\bm{w} \in \{0, 1\}^d$ or $\bm{w} > 0$ component-wise.
    \item[Step 2:] 
        We relax the set of allowed $\bm{w}$ values to bound the outer optimization in Eq.~\eqref{eqn:mutualinfo_optimization_relaxation} from above.
\end{itemize}

This procedure will result in the upper bound noted in Eq.~\eqref{eqn:NG_MAC_correlation_assisted_sum_capacity_bound}.
We explain the steps in detail in the following subsections.

\subsubsection*{Step 1: Bounding the inner optimization over question distributions}
We obtain an upper bound on $\max_{\pi \in \Delta_d} \mathscr{I}_{\bm{w}}(\pi)$ by considering two cases. First, we perform this optimization exactly for the case when $\bm{w} \in \{0, 1\}^d$. Next, for any $\bm{w} \in \winvec_\corr$, we find an upper bound on $\max_{\pi \in \Delta_d} \mathscr{I}_{\bm{w}}(\pi)$ using the result of case 1. The upper bound obtained in case 2 is tight when $\bm{w} > 0$.

\noindent \textbf{Case 1: optimizing $\max_{\pi \in \Delta_d} \mathscr{I}_{\bm{w}}(\pi)$ for fixed $\bm{w} \in \{0, 1\}^d$} \newline
Winning vectors $\bm{w} \in \{0, 1\}^d$ arise from strategies that either always win or always lose the game for any given question. Deterministic strategies, for example, give rise to such winning vectors. Recall that a classical deterministic strategy corresponds to functions $f_i: \mathcal{X}_i \to \mathcal{Y}_i$, $i \in [N]$, chosen by the players. Such functions give rise to the classical strategy $p_{Y_i | X_i}(y | x) = \delta_{y, f(x)}$ that is $1$ at $y = f_i(x)$ and zero elsewhere. It follows from Eq.~\eqref{eqn:strategy_coeff} that $w_i \in \{0, 1\}$ for all $i \in \{1, \dotsc, d\}$, where $\bm{w}$ is the winning vector defined by such a deterministic strategy.

The following proposition gives the result of the optimization $\max_{\pi \in \Delta_d} \mathscr{I}_{\bm{w}}(\pi)$ when $\bm{w} \in \{0, 1\}^d$. Note that $\winvec_\corr$ need not contain such winning vectors. Computing $\max_{\pi \in \Delta_d} \mathscr{I}_{\bm{w}}(\pi)$ for $\bm{w} \in \{0, 1\}^d$ is a means to providing a bound for Eq.~\eqref{eqn:mutualinfo_optimization_relaxation}.
\begin{proposition}
    \label{prop:deterministic_strategy_maximum}
    Let $G$ be a nonlocal game, and let $\mathcal{N}_G$ be the MAC obtained from this nonlocal game. Let $\bm{w}$ denote a winning vector as defined in Eq.~\eqref{eqn:strategy_coeff}, such that $w_i \in \{0, 1\}$ for all $i \in [d]$. Let $\mathscr{I}_{\bm{w}}$ denote mutual information between input and outputs of $\mathcal{N}_G$, as defined in Eq.~\eqref{eqn:mutualinfo}. Let $\mathcal{K} = \{i \in [d] \mid w_i = 1\}$ denote the set of questions for which the strategy gives a correct answer. Denoting $K = |\mathcal{K}|$ and $\Delta_d$ to be the $(d - 1)$-dimensional standard simplex, we have
    \begin{equation}
        \max_{\pi \in \Delta_d} \mathscr{I}_{\bm{w}}(\pi) = \begin{cases} 0 & K = 0 \\
                                                                          \mathscr{I}^*_{K} &0 < K < d \\
                                                                          \ln d       &K = d.
                                                            \end{cases}
    \end{equation}
    The quantity $\mathscr{I}^*_K$ is given by the expression
    \begin{equation}
        \mathscr{I}^*_{K} = \ln\left(K + (d - K) d^{-\frac{d}{d - K}}\right). \label{eqn:deterministic_strategy_maximum_mutualinfo}
    \end{equation}
\end{proposition}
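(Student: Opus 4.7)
The plan is to exploit the block structure of $\overline{W}$ when $\bm{w}\in\{0,1\}^d$. Writing $\mathcal{K}=\{i:w_i=1\}$ and $s=\sum_{j\notin\mathcal{K}}\pi_j = 1-\ip{\pi,\bm{w}}$, the entries $\overline{W}_{ij}=w_i\delta_{ij}+(1-w_j)/d$ reduce to $\overline{W}_{ij}=\delta_{ij}$ for $j\in\mathcal{K}$ and $\overline{W}_{ij}=1/d$ for $j\notin\mathcal{K}$. Hence the output distribution $p=\overline{W}\pi$ is piecewise simple: $p_i=\pi_i+s/d$ for $i\in\mathcal{K}$ and $p_i=s/d$ for $i\notin\mathcal{K}$. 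Plugging into Eq.~\eqref{eqn:mutualinfo} gives $\mathscr{I}_{\bm{w}}(\pi)=H(p)-s\ln d$, which is the master expression driving the whole argument.

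The two boundary cases fall out immediately from this form. For $K=0$, every $p_i=1/d$, so $H(p)=\ln d$ and $s=1$, yielding $\mathscr{I}_{\bm{w}}(\pi)=0$. For $K=d$, one has $\overline{W}=\id$ and $s=0$, so $\mathscr{I}_{\bm{w}}(\pi)=H(\pi)\le \ln d$ with equality at the uniform distribution.

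For the nontrivial case $0<K<d$, I would first reduce to a scalar optimization. Concavity of the Shannon entropy on the simplex implies that, for fixed $s$, the sum $-\sum_{i\in\mathcal{K}} p_i\ln p_i$ is maximized when the $p_i$ on $\mathcal{K}$ are equal, i.e.\ $\pi_i=(1-s)/K$; the values $\pi_j$ for $j\notin\mathcal{K}$ affect $H(p)$ only through their sum $s$. Setting $\alpha=(1-s)/K+s/d$ and $\beta=s/d$, the objective becomes
\begin{equation*}
    f(s) = -K\alpha\ln\alpha - (d-K)\beta\ln\beta - s\ln d, \qquad s\in[0,1],
\end{equation*}
with the affine constraint $K\alpha+(d-K)\beta=1$ automatically satisfied. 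I would then compute $f'(s)$ using $d\alpha/ds=-(d-K)/(Kd)$ and $d\beta/ds=1/d$; the linear combination collapses via the identity $K\alpha+(d-K)\beta=1$ to $f'(s)=\frac{d-K}{d}\ln(\alpha/\beta)-\ln d$. The first-order condition gives $\alpha^*/\beta^* = d^{d/(d-K)}$, and combined with the affine constraint determines unique positive $\alpha^*,\beta^*$ (so $s^*\in(0,1)$). A short computation of $f''$ shows strict concavity on $(0,1)$, confirming a global max.

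To recover the closed form, I would substitute back using the clean rewriting
\begin{equation*}
    f(s^*) = -K\alpha^*\ln(\alpha^*/\beta^*) - (K\alpha^*+(d-K)\beta^*)\ln\beta^* - d\beta^*\ln d,
\end{equation*}
then use $K\alpha^*+(d-K)\beta^*=1$ together with $K\alpha^*\ln(\alpha^*/\beta^*)+d\beta^*\ln d = d\ln d/(d-K)$ (which follows from $\ln(\alpha^*/\beta^*)=d\ln d/(d-K)$ and $K\alpha^*+(d-K)\beta^*=1$), and read off $\beta^* = 1/(Kd^{d/(d-K)}+d-K)$. The logarithm telescopes to $\mathscr{I}^*_K=\ln(K+(d-K)d^{-d/(d-K)})$. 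The main obstacle is the last algebraic step: keeping the bookkeeping tight while substituting $\alpha^*,\beta^*$, so that the critical value cleanly factors into the stated closed form.
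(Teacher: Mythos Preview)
Your proposal is correct and cleaner than the paper's own argument, which proceeds by writing down the full Lagrangian on $\Delta_d$, applying the KKT conditions, and then inverting the resulting linear system via the Sherman--Morrison formula to solve for the optimal $p_j$ and $\pi_j$; only after that does the paper simplify the long expression for $\mathscr{I}^*_K$ to the compact logarithmic form. Your route differs in that you first exploit symmetry and concavity of $-x\ln x$ to equalize the $\pi_i$ on $\mathcal{K}$ for fixed $s$, reducing everything to a scalar concave optimization in $s\in[0,1]$; the stationarity condition $\alpha/\beta=d^{d/(d-K)}$ then drops out from one derivative, and your algebraic identity $K\alpha^*\ln(\alpha^*/\beta^*)+d\beta^*\ln d=d\ln d/(d-K)$ makes the evaluation transparent. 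The paper's approach has the virtue of being more systematic and is reused almost verbatim in the next proposition (general $\bm{w}>0$, where no such symmetry reduction is available), whereas yours is more elementary and also handles the boundary $s=0$ (the paper's case~(a), giving $\ln K$) automatically via strict concavity of $f$ on $(0,1)$ rather than as a separate sub-case.
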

\begin{proof}
    See Appendix~\ref{proof:deterministic_strategy_maximum}.
\end{proof}

Observe that the maximum only depends on the total number $d$ of questions, as well as the number $K$ of questions that can be answered correctly using the deterministic strategy.

\noindent \textbf{Case 2: Bounding $\max_{\pi \in \Delta_d} \mathscr{I}_{\bm{w}}(\pi)$ for fixed $\bm{w} \in \winvec_\corr$} \newline
\noindent When we work with arbitrary winning vectors, it is more challenging to maximize the mutual information over distributions on the questions. To make this maximization easier, we first show that the maximum mutual information $\mathscr{I}^*_{d - 1}$ corresponding to a winning vector that can answer exactly $d - 1$ questions correctly will always be larger than the maximum mutual information $\max_{\pi \in \Delta_d} \mathscr{I}_{\bm{w}}(\pi)$ for any winning vector $\bm{w}$ that answers no more than $d - 1$ questions correctly. We therefore turn our attention to $\bm{w}$ that doesn't necessarily do worse than this case, and obtain an expression for the maximum mutual information in terms of such $\bm{w}$.

\begin{proposition}
    \label{prop:arbitrary_strategy_maximum_nearperfectgame}
    Let $G$ be a nonlocal game, and let $\mathcal{N}_G$ be the MAC obtained from this nonlocal game. Suppose that the senders of $\mathcal{N}_G$ share a set of correlations $\corr$. Let $\bm{w} \in \winvec_\corr$ be any winning vector allowed by the correlations $\corr$ as defined in Eq.~\eqref{eqn:strategy_coeff_set}. Let $\mathcal{K} = \{i \in [d] \mid w_i \neq 0\}$ be the set of questions with non-zero probability of winning the game using this strategy, and denote $K = |\mathcal{K}|$. Then, the following statements hold.
    \begin{enumerate}[label=\arabic*)]
        \item Suppose that $\max_{\pi \in \Delta_d} \mathscr{I}_{\bm{w}}(\pi)$ is achieved at $\pi^*$. Denote $\mathcal{K}^* = \{i \in [d] \mid w_i \pi_i^* \neq 0\}$ and $K^* = |\mathcal{K}^*|$ $\lb$we have $\mathcal{K}^* \subseteq \mathcal{K}$$\rb$. Then, if $K^* < d$, we have $\max_{\pi \in \Delta_d} \mathscr{I}_{\bm{w}}(\pi) \leq \mathscr{I}^*_{d - 1}$, where $\mathscr{I}^*_{d - 1}$ is given by Eq.~\eqref{eqn:deterministic_strategy_maximum_mutualinfo}.
        \item As a consequence of the above result, we restrict our attention to strategies with $K^* = K = d$. In that case, we have
            \begin{align}
                \max_{\pi \in \Delta_d} \mathscr{I}_{\bm{w}}(\pi) &\equiv \mathscr{I}^*(\bm{w}) = \ln\left(\sum_{j = 1}^d \exp\left[d w_{\textnormal{eff}} \ln d \left(1 - \frac{1}{w_j}\right)\right]\right)
                                                                                                                                                                                                        \label{eqn:mutualinfo_w}
                \intertext{where}
                w_{\textnormal{eff}} &= \left(\sum_{i = 1}^d \frac{1}{w_i}\right)^{-1}. \label{eqn:weff}
            \end{align}
    \end{enumerate}
\end{proposition}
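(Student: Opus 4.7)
The proof splits naturally into the two cases, and my plan is to tackle Part 2 first since its interior-optimum analysis clarifies the ingredients needed for Part 1. Under the hypothesis $K^* = K = d$ of Part 2, both $w_i > 0$ and $\pi_i^* > 0$ hold for every $i$, so the maximiser lies in the relative interior of $\Delta_d$ and the KKT conditions apply with equality. I would compute $\partial_{\pi_k}\mathscr{I}_{\bm{w}}(\pi)$ using $\partial_{\pi_k}(\overline{W}\pi)_j = w_j\delta_{jk} + (1-w_k)/d$ and abbreviate $\alpha := \sum_j \ln(\overline{W}\pi^*)_j$; stationarity then reduces to $w_k \ln\!\bigl(d/(\overline{W}\pi^*)_k\bigr) = \lambda + 1 + (1-w_k)\alpha/d$ for a single Lagrange multiplier $\lambda$. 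Setting $\gamma' := \lambda + 1 + \alpha/d$ and $C' := d\,e^{\alpha/d}$, this rearranges to $(\overline{W}\pi^*)_k = C'\exp(-\gamma'/w_k)$. Substituting this exponential form back into the definition of $\alpha$ and cross-checking with the identity $\ln C' = \ln d + \alpha/d$ collapses everything to the closed relation $\gamma' = d\,w_{\textnormal{eff}}\ln d$, while the normalisation $\sum_k (\overline{W}\pi^*)_k = 1$ fixes $C' = 1/\sum_k e^{-\gamma'/w_k}$. It then remains to expand $\mathscr{I}_{\bm{w}}(\pi^*) = H(\overline{W}\pi^*) - p_L\ln d$; using $\sum_k (\overline{W}\pi^*)_k / w_k = 1 + p_L/(d\,w_{\textnormal{eff}})$, the $p_L$-dependence in $H(\overline{W}\pi^*)$ cancels exactly against $-p_L\ln d$, leaving $\mathscr{I}^*(\bm{w}) = \gamma' - \ln C'$, which is the log-sum-exp in Eq.~\eqref{eqn:mutualinfo_w}.

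For Part 1 my plan is to reduce $\mathscr{I}_{\bm{w}}(\pi^*)$ to a deterministic-strategy evaluation covered by Proposition~\ref{prop:deterministic_strategy_maximum}. First observe that replacing $w_i \leftarrow 0$ for every $i \notin \mathcal{K}^*$ leaves $\mathscr{I}_{\bm{w}}(\pi^*)$ unchanged, since on such indices either $w_i$ or $\pi_i^*$ already vanishes, so neither $\langle \pi^*, \bm{w}\rangle$ nor $\overline{W}\pi^*$ is affected. After this reduction, the output vector $q := \overline{W}\pi^*$ has exactly $d - K^*$ entries equal to $p_L/d$ and $K^*$ entries of the form $w_i\pi_i^* + p_L/d$. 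Holding $p_L$ fixed and averaging the $K^*$ large entries to their common mean yields a distribution $q^{\textnormal{unif}}$ with $H(q) \leq H(q^{\textnormal{unif}})$ by concavity of Shannon entropy. The crucial observation is that $q^{\textnormal{unif}}$ is exactly the output distribution arising from the deterministic winning vector $\bm{w}' = \mathbf{1}_{\mathcal{K}^*} \in \{0,1\}^d$ evaluated at the input $\pi'$ that places mass $(1-p_L)/K^*$ on each $i \in \mathcal{K}^*$ and mass $p_L/(d-K^*)$ on each $i \notin \mathcal{K}^*$. Therefore $\mathscr{I}_{\bm{w}}(\pi^*) \leq \mathscr{I}_{\bm{w}'}(\pi') \leq \max_{\pi \in \Delta_d}\mathscr{I}_{\bm{w}'}(\pi) = \mathscr{I}^*_{K^*}$ by Proposition~\ref{prop:deterministic_strategy_maximum}, and a short monotonicity check on Eq.~\eqref{eqn:deterministic_strategy_maximum_mutualinfo} gives $\mathscr{I}^*_{K^*} \leq \mathscr{I}^*_{d-1}$ since $K^* \leq d-1$.

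The main obstacle will be the algebraic collapse in Part 2: one must juggle the two independent expressions for $C'$ (normalisation and $C' = d\,e^{\alpha/d}$), verify that $\gamma'$ reduces to the clean value $d\,w_{\textnormal{eff}}\ln d$, and then carefully track the cancellation of the $p_L$-dependence in the final mutual information so that the answer depends only on $\bm{w}$ and $d$. The remaining ingredient — monotonicity of $K \mapsto \mathscr{I}^*_K$ on $\{0,\dots,d-1\}$ — is an elementary calculus check on the explicit formula in Eq.~\eqref{eqn:deterministic_strategy_maximum_mutualinfo} that I would relegate to a short lemma.
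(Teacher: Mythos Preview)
Your Part~2 is essentially the paper's argument: both set up KKT at the interior optimum, obtain the Boltzmann form $p_k \propto e^{-\gamma'/w_k}$, identify $\gamma' = d\,w_{\mathrm{eff}}\ln d$ by combining the two expressions for the normalising constant, and then verify the $p_L$–cancellation in $H(p)-p_L\ln d$. The paper organises the algebra slightly differently (it first rewrites $\mathscr{I}_{\bm w}$ as $H(p) - d\,w_{\mathrm{eff}}(\sum_j p_j/w_j - 1)\ln d$ and then substitutes), but the content is identical.

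Part~1 is where you genuinely diverge. The paper does not pass through $\mathscr{I}^*_{K^*}$ at all: it picks an injection $\sigma\colon \mathcal{K}^*\to [d]\setminus\{k\}$ into the support of a $(d-1)$-correct deterministic vector $\bm{w}^{(D)}$, defines $\widetilde\pi_{\sigma(i)} = w_i\pi_i^*$ on $\sigma(\mathcal{K}^*)$ and dumps the remaining mass on the single losing index $k$, and observes that the resulting output distribution is a \emph{permutation} of $\overline W\pi^*$ with the same $p_L$. This yields the equality $\mathscr{I}_{\bm w}(\pi^*) = \mathscr{I}_{\bm{w}^{(D)}}(\widetilde\pi) \le \mathscr{I}^*_{d-1}$ in one step, with no averaging and no monotonicity lemma. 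Your route is also correct but longer: the concavity/averaging step is in fact unnecessary (you could take $\pi''_i = w_i\pi_i^*$ on $\mathcal{K}^*$ and $\pi''_i = p_L/(d-K^*)$ off it to match $q$ exactly, not just $q^{\mathrm{unif}}$), and you then need the extra monotonicity fact $\mathscr{I}^*_{K^*}\le \mathscr{I}^*_{d-1}$. That fact is easy --- since $\delta_K = (d-K)d^{-d/(d-K)} < 1$ for $1\le K\le d-1$ one has $\mathscr{I}^*_K < \ln(K+1) \le \mathscr{I}^*_{K+1}$ --- so there is no gap, just an extra lemma the paper avoids. The upside of your approach is that it records the sharper intermediate bound $\mathscr{I}_{\bm w}(\pi^*)\le \mathscr{I}^*_{K^*}$; the upside of the paper's is brevity.
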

\begin{proof}
    See Appendix~\ref{proof:arbitrary_strategy_maximum_nearperfectgame}.
\end{proof}

Owing to the above result, we only need to focus on maximizing $\mathscr{I}^*(\bm{w})$ for those $\bm{w} \in \winvec_\corr$ with $w_i > 0$ for all $i \in [d]$. This is done in the next step.

\subsubsection*{Step 2: Bounding the outer optimization over winning vectors}
As noted in the previous step, our goal is to maximize $\mathscr{I}^*(\bm{w})$ with respect to the feasible winning vectors $\bm{w} \in \winvec_\corr$ with $w_i > 0$ for all $i \in [d]$. The set of (feasible) winning vectors $\winvec_\corr$ was defined in Eq.~\eqref{eqn:strategy_coeff_set}. Note that $\winvec_\corr$ depends on the winning condition $\mathcal{W}$ of the game as well as the set of correlations $\corr$ shared by the senders. Since we make no assumptions about the game or the set of correlations, it is difficult to optimize over $\winvec_\corr$. For this reason, we obtain a relaxation of the set $\winvec_\corr$, over which we optimize $\mathscr{I}^*(\bm{w})$. We will do this in two steps: (1) relate $\bm{w}$ to the winning probability when the questions are drawn uniformly, and (2) use the maximum winning probability $\omega^{\strat_\corr}(G)$ of the game (assumed to be known) corresponding to the strategies $\strat_\corr$ when the questions are drawn uniformly in order to obtain a convex set containing $\winvec_\corr$.

(1) From the definition of winning vector given in Eq.~\eqref{eqn:strategy_coeff}, we know that
\begin{equation*}
    w_i = \sum_{\bm{y} : \bm{z}_i \bm{y} \in \mathcal{W}} \overline{p}_{\bm{Y} | \bm{X}}(\bm{y} | \bm{z}_i).
\end{equation*}
Recall that the winning probability of the game can be written as $p_W = \sum_{i = 1}^d \pi_i w_i$ when the questions are drawn as per probability $\pi \in \Delta_d$. If the questions are drawn uniformly, then $\pi_U(z) = 1/d$ for all questions $\bm{z}_1, \dotsc, \bm{z}_d$. Therefore, $p_W = \sum_{i = 1}^d w_i / d$ is the winning probability determined by the winning vector $\bm{w}$ when the questions are drawn uniformly.

(2) We now look for a convex relaxation $\overline{\winvec_\corr}$ of $\winvec_\corr$. We want to make $\overline{\winvec_\corr}$ fairly independent of the winning set, except for dependence on $\omega^{\strat_\corr}(G)$ and the number of question tuples $d$ in the game.

Since $\omega^{\strat_\corr}(G)$ is the maximum winning probability using the set of strategies $\strat_\corr$ under consideration, we must have
\begin{equation}
    \frac{1}{d} \sum_{i = 1}^d w_i \leq \omega^{\strat_\corr}(G) \label{eqn:maximum_winning_prob_uniform}
\end{equation}
where $\bm{w} \in \winvec_\corr$. Now we make the relaxation that we allow \textit{any} winning vector that satisfies Eq.~\eqref{eqn:maximum_winning_prob_uniform}. Consequently, we define
\begin{equation}
    \overline{\winvec_\corr} = \left\{\bm{w} \in [0, 1]^d\ \Big|\ \frac{1}{d}\sum\nolimits_{i = 1}^d w_i \leq \omega^{\strat_\corr}(G)\right\}. \label{eqn:strategy_coeff_set_relaxation}
\end{equation}
Since any $\bm{w} \in \winvec_\corr$ will satisfy Eq.~\eqref{eqn:maximum_winning_prob_uniform}, we have $\winvec_\corr \subseteq \overline{\winvec_\corr}$, confirming that $\overline{\winvec_\corr}$ is a relaxation of $\winvec_\corr$. Such a relaxation may allow for strategies not described by $\strat_\corr$. Note that $\overline{\winvec_\corr}$ is a compact and convex set, and it depends only on the maximum winning probability and the number of questions in the game. Using this relaxation, we compute an upper bound on $\mathscr{I}^*(\bm{w})$ maximized over $\bm{w} \in \overline{\winvec_\corr}$ satisfying $\bm{w} > 0$ componentwise.

\begin{proposition}
    \label{prop:strategy_coeff_maximization_upper_bound}
    Let $G$ be a nonlocal game and let $\mathcal{N}_G$ be the \ngmac constructed from $G$, as defined in Eq.~\eqref{eqn:MAC_nonlocal_game}. Suppose that the senders share the set of correlations $\corr$, and let $\winvec_\corr$ be the corresponding set of winning vectors as defined in Eq.~\eqref{eqn:strategy_coeff_set}. Let $\overline{\winvec_\corr}$ be the convex relaxation of $\winvec_\corr$ defined in Eq.~\eqref{eqn:strategy_coeff_set_relaxation} that depends only on the number of question tuples $d$ in the game and the maximum winning probability $\omega^{\strat_\corr}(G)$ when the questions are drawn uniformly and answers given using strategies in $\strat_\corr$. Let $\mathscr{I}^*(\bm{w})$ be the function defined in Eq.~\eqref{eqn:mutualinfo_w}. Then the maximum of $\mathscr{I}^*(\bm{w})$ over winning vectors $\bm{w} > 0$ in $\overline{\winvec_\corr}$ is bounded from above by
    \begin{equation}
        \sup_{\bm{w} \in \overline{\winvec_\corr}, \bm{w} > 0} \mathscr{I}^*(\bm{w}) \leq \ln\left(d - 1 + d^{-(1 - \omega^{\strat_\corr}(G)) d}\right) \label{eqn:strategy_coeeff_maximization_upper_bound}
    \end{equation}
\end{proposition}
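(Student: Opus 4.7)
The plan is to rewrite the objective in a form that decouples two ingredients: the non-positivity of each exponent (which follows from $\bm{w} \in (0,1]^d$) and the upper bound on $\tfrac{1}{d}\sum_i w_i$ coming from $\bm{w} \in \overline{\winvec_\corr}$. Using $\exp[a \ln d] = d^a$, I will first rewrite
\begin{equation*}
\mathscr{I}^*(\bm{w}) = \ln \sum_{j=1}^d d^{t_j}, \qquad t_j := d\, w_{\textnormal{eff}}\bigl(1 - 1/w_j\bigr).
\end{equation*}
Since $w_j \in (0,1]$, we have $1/w_j \ge 1$ and $w_{\textnormal{eff}} \ge 0$, so each $t_j \le 0$ and $x_j := d^{t_j} \in (0,1]$.

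Next, I will establish the elementary inequality $\sum_j x_j \le (d-1) + \prod_j x_j$ for $x_j \in [0,1]$. Setting $y_j = 1 - x_j \in [0,1]$, this is equivalent to the Weierstrass product inequality $\prod_j(1-y_j) \ge 1 - \sum_j y_j$, which follows by a one-line induction on $d$. Applying it with $x_j = d^{t_j}$ yields
\begin{equation*}
\sum_j d^{t_j} \;\le\; (d-1) + d^{\sum_j t_j},
\end{equation*}
so the problem reduces to upper-bounding $\sum_j t_j$.

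A direct computation using the definition $w_{\textnormal{eff}}^{-1} = \sum_j 1/w_j$ gives
\begin{equation*}
\sum_j t_j = d\, w_{\textnormal{eff}} \Bigl(d - \sum_j \tfrac{1}{w_j}\Bigr) = d^2 w_{\textnormal{eff}} - d.
\end{equation*}
The AM--HM inequality for the positive reals $w_1, \dotsc, w_d$ reads
\begin{equation*}
\frac{1}{d}\sum_j w_j \;\ge\; \frac{d}{\sum_j 1/w_j} = d\, w_{\textnormal{eff}},
\end{equation*}
and combining this with the constraint $\tfrac{1}{d}\sum_j w_j \le \omega^{\strat_\corr}(G)$ inherent in $\overline{\winvec_\corr}$ yields $d^2 w_{\textnormal{eff}} \le d\,\omega^{\strat_\corr}(G)$. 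Hence $\sum_j t_j \le -d\bigl(1 - \omega^{\strat_\corr}(G)\bigr)$, and since $d\mapsto d^s$ is increasing for $d>1$, the previous step gives $\sum_j d^{t_j} \le (d-1) + d^{-d(1-\omega^{\strat_\corr}(G))}$. Taking $\ln$ produces the claimed bound.

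The main obstacle I expect is spotting (or reinventing) the Weierstrass product trick, which is what collapses the sum over $j$ into a single exponential of $\sum_j t_j$. Without it, one would be forced into a Lagrange-multiplier / extreme-point argument over the polytope $\{t_j \le 0,\ \sum_j t_j \le -d(1-\omega^{\strat_\corr}(G))\}$, whose supremum (by convexity of $s\mapsto d^s$) sits at a vertex with $d-1$ entries equal to $0$ and one entry equal to $-d(1-\omega^{\strat_\corr}(G))$; this delivers the same bound but is more cumbersome to justify rigorously since the feasible set is unbounded below.
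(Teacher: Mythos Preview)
Your proof is correct and takes a genuinely different route from the paper's argument. Both proofs begin by using the AM--HM inequality to pass from the constraint $\tfrac{1}{d}\sum_i w_i \le \omega^{\strat_\corr}(G)$ to the effective constraint $d\,w_{\textnormal{eff}} \le \omega^{\strat_\corr}(G)$, but they diverge after that.

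The paper does exactly the ``more cumbersome'' alternative you sketched: it substitutes $t_i = 1/w_i$, fixes the level set $\tfrac{1}{d}\sum_i t_i = s$, observes that the objective is convex in $\bm{t}$ (as an affine precomposition of log-sum-exp), and therefore attains its maximum at an extreme point of the translated simplex $\{t_i \ge 1,\ \tfrac{1}{d}\sum_i t_i = s\}$---namely the point with $d-1$ coordinates equal to $1$ and one coordinate equal to $1+(s-1)d$. Evaluating there and then minimizing the resulting one-variable expression over $s \ge 1/\omega^{\strat_\corr}(G)$ gives the bound.

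Your route via the Weierstrass product inequality $\sum_j x_j \le (d-1) + \prod_j x_j$ for $x_j \in [0,1]$ is more elementary and sidesteps the extreme-point analysis entirely: once you observe that $\sum_j t_j = d^2 w_{\textnormal{eff}} - d$ collapses to a single quantity controlled by AM--HM, the bound drops out in one line. The paper's approach, by contrast, makes explicit \emph{where} the supremum is (asymptotically) attained---at a winning vector with $d-1$ entries equal to $1$---which is informative but not needed for the inequality itself. Your argument is shorter and arguably cleaner; the paper's gives slightly more structural insight into the optimizer.
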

\begin{proof}
    See Appendix~\ref{proof:strategy_coeff_maximization_upper_bound}.
\end{proof}

\subsubsection*{Bound on the correlation-assisted achievable sum rate}
:e put all the above steps together to obtain a bound on $S_\corr(\mathcal{N}_G)$.

\begin{theorem}
    \label{thm:NG_MAC_correlation_assisted_sum_capacity_bound}
    Let $G$ be an $N$-player promise-free nonlocal game with $d$ question tuples, and let $\mathcal{N}_G$ be the MAC obtained from $G$ as defined in Eq.~\eqref{eqn:MAC_nonlocal_game}. Suppose that the senders share a set of correlations $\corr$. Let $\strat_\corr$ be the set of strategies induced by the correlations as defined in Eq.~\eqref{eqn:correlation_induced_strategies}. Let $\omega^{\strat_\corr}(G)$ denote the maximum winning probability of the game when the questions are drawn uniformly and answers are obtained using strategies in $\strat_\corr$. Let $S_{\corr}(\mathcal{N}_G)$ denote the $\mathcal{C}$-assisted achievable sum rate of the \ngmac $\mathcal{N}_G$ as defined in Eq.~\eqref{eqn:correlation_assisted_achievable_sum_rate}. Then, we have
    \begin{equation}
        S_{\corr}(\mathcal{N}_G) \leq \ln\left(d - 1 + d^{-(1 - \omega^{\strat_\corr}(G)) d}\right) \label{eqn:NG_MAC_correlation_assisted_sum_capacity_bound}
    \end{equation}
    with entropy measured in nats.
\end{theorem}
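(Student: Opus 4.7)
The plan is to assemble the theorem from the pieces already set up in Steps~1 and~2. Starting from Eq.~\eqref{eqn:mutualinfo_optimization_relaxation}, I have
\begin{equation*}
    S_{\corr}(\mathcal{N}_G) \;\leq\; \sup_{\bm{w} \in \winvec_\corr} \max_{\pi \in \Delta_d} \mathscr{I}_{\bm{w}}(\pi),
\end{equation*}
so it suffices to bound the right-hand side by $\ln(d-1+d^{-(1-\omega^{\strat_\corr}(G))d})$.

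The first step is to handle the inner maximum for a fixed $\bm{w} \in \winvec_\corr$ by splitting into two cases based on the optimizer $\pi^*$. If the support parameter $K^*$ (number of indices with $w_i \pi_i^* \neq 0$) satisfies $K^* < d$, then part~(1) of Proposition~\ref{prop:arbitrary_strategy_maximum_nearperfectgame} gives $\max_\pi \mathscr{I}_{\bm{w}}(\pi) \leq \mathscr{I}^*_{d-1} = \ln(d-1+d^{-d})$. Otherwise $K^* = d$, so $\bm{w} > 0$ componentwise, and part~(2) gives $\max_\pi \mathscr{I}_{\bm{w}}(\pi) = \mathscr{I}^*(\bm{w})$.

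The second step is to take the supremum over $\bm{w} \in \winvec_\corr$. Since $\winvec_\corr \subseteq \overline{\winvec_\corr}$, I apply Proposition~\ref{prop:strategy_coeff_maximization_upper_bound} to the $K^* = d$ case and obtain
\begin{equation*}
    \sup_{\bm{w} \in \winvec_\corr,\,\bm{w}>0} \mathscr{I}^*(\bm{w}) \;\leq\; \ln\!\left(d - 1 + d^{-(1 - \omega^{\strat_\corr}(G)) d}\right).
\end{equation*}

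The final step, essentially a sanity check, is to verify that the $K^* < d$ bound $\ln(d-1+d^{-d})$ is dominated by the bound above, so that taking the max over the two cases does not worsen the estimate. This follows because $\omega^{\strat_\corr}(G) \geq 0$ implies $(1-\omega^{\strat_\corr}(G))d \leq d$, and since $d \geq 2$ the function $x \mapsto d^{-x}$ is non-increasing, giving $d^{-d} \leq d^{-(1-\omega^{\strat_\corr}(G))d}$. Combining both cases yields the claimed bound.

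There is no real obstacle at this stage, since all the technical content has been absorbed into the three propositions: Proposition~\ref{prop:deterministic_strategy_maximum} evaluates the inner max on the Boolean corners, Proposition~\ref{prop:arbitrary_strategy_maximum_nearperfectgame} reduces the general case either to a deterministic bound or to the explicit formula $\mathscr{I}^*(\bm{w})$, and Proposition~\ref{prop:strategy_coeff_maximization_upper_bound} carries out the convex relaxation of $\winvec_\corr$ and the optimization of $\mathscr{I}^*$ on it. The theorem itself is just the bookkeeping that stitches these three results together along with the data-processing and product-distribution relaxations from Eqs.~\eqref{eqn:C-assisted_sum_rate_MI_upper_bound_strategy}--\eqref{eqn:mutualinfo_optimization_relaxation}.
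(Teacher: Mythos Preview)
Your proposal is correct and mirrors the paper's own proof essentially step for step: both start from the relaxation in Eq.~\eqref{eqn:mutualinfo_optimization_relaxation}, invoke Proposition~\ref{prop:arbitrary_strategy_maximum_nearperfectgame} to split into the $K^*<d$ case (bounded by $\mathscr{I}^*_{d-1}$) and the $K^*=d$ case (handled by $\mathscr{I}^*(\bm{w})$), apply Proposition~\ref{prop:strategy_coeff_maximization_upper_bound} after the relaxation $\winvec_\corr\subseteq\overline{\winvec_\corr}$, and then observe $\mathscr{I}^*_{d-1}=\ln(d-1+d^{-d})\le \ln(d-1+d^{-(1-\omega^{\strat_\corr}(G))d})$ to merge the two cases. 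If anything, your case split on $K^*$ is slightly more faithful to the statement of Proposition~\ref{prop:arbitrary_strategy_maximum_nearperfectgame} than the paper's phrasing in terms of $\bm{w}$ having zero entries.
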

\begin{proof}
    To obtain an upper bound on $S_{\corr}(\mathcal{N}_G)$, we start with Eq.~\eqref{eqn:C-assisted_sum_rate_MI_upper_bound_strategy}. The RHS of Eq.~\eqref{eqn:C-assisted_sum_rate_MI_upper_bound_strategy} can be bounded by performing the maximization $\sup_{\bm{w} \in \winvec_\corr} \max_{\pi \in \Delta_d} \mathscr{I}_{\bm{w}}(\pi)$, where $\mathscr{I}_{\bm{w}}(\pi)$ is the mutual information defined in Eq.~\eqref{eqn:mutualinfo}. The set $\Delta_d$ denote the $(d - 1)$-dimensional standard simple, while $\winvec_\corr$ denotes the set of winning vectors induced by the correlations $\corr$ as defined in Eq.~\eqref{eqn:strategy_coeff_set}.

    In Prop.~\ref{prop:arbitrary_strategy_maximum_nearperfectgame}, we show that if $\bm{w} \in \winvec_\corr$ has one or more zero entries, then $\max_{\pi \in \Delta_d} \mathscr{I}_{\bm{w}}(\pi) \leq \mathscr{I}^*_{d - 1}$, where $\mathscr{I}^*_{d - 1}$ is given by Eq.~\eqref{eqn:deterministic_strategy_maximum_mutualinfo}. Therefore, we only maximize $\mathscr{I}^*(\bm{w}) = \max_{\pi \in \Delta_d} \mathscr{I}_{\bm{w}}(\pi)$ over winning vectors $\bm{w} \in \winvec_\corr$ satisfying $\bm{w} > 0$. The expression for $\mathscr{I}^*(\bm{w})$ in this case is given by Eq.~\eqref{eqn:mutualinfo_w}. We relax the set $\winvec_\corr$ to the compact and convex set $\overline{\winvec_\corr}$ defined in Eq.~\eqref{eqn:strategy_coeff_set_relaxation}. Then, we give an upper bound on $\sup_{\bm{w} \in \overline{\winvec_\corr}, \bm{w} > 0} \mathscr{I}^*(\bm{w})$ in Eq.~\eqref{eqn:strategy_coeeff_maximization_upper_bound}.

    By preceding remarks, we have
    \begin{equation}
        S_{\corr}(\mathcal{N}_G) \leq \max\left\{\mathscr{I}^*_{d - 1},\ \ln\left(d - 1 + d^{-(1 - \omega^{\strat_\corr}(G)) d}\right)\right\}, \label{eqn:NG_MAC_correlation_assisted_sum_capacity_bound_intermediate}
    \end{equation}
    while from Eq.~\eqref{eqn:deterministic_strategy_maximum_mutualinfo} we have
    \begin{equation*}
        \mathscr{I}^*_{d - 1} = \ln\left(d - 1 + d^{-d}\right).
    \end{equation*}
    Using $\ln\left(d - 1 + d^{-(1 - \omega^{\strat_\corr}(G)) d}\right) \geq \mathscr{I}^*_{d - 1}$ in Eq.~\eqref{eqn:NG_MAC_correlation_assisted_sum_capacity_bound_intermediate}, we obtain Eq.~\eqref{eqn:NG_MAC_correlation_assisted_sum_capacity_bound}.
\end{proof}

\begin{corollary}
    \label{cor:NG_MAC_sum_capacity_bound}
    Let $G$ be an $N$-player promise-free nonlocal game with $d$ question tuples, and let $\mathcal{N}_G$ be the MAC obtained from $G$ as defined in Eq.~\eqref{eqn:MAC_nonlocal_game}. Let $\omega^{\text{cl}}(G)$ denote the maximum winning probability of the game when the questions are drawn uniformly and answers are obtained using classical strategies. Let $S(\mathcal{N}_G)$ denote the sum capacity of the \ngmac $\mathcal{N}_G$. Then, we have
    \begin{equation}
        S(\mathcal{N}_G) \leq \ln\left(d - 1 + d^{-(1 - \omega^{\text{cl}}(G)) d}\right) \label{eqn:NG_MAC_sum_capacity_bound}
    \end{equation}
    with entropy measured in nats.
\end{corollary}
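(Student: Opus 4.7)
The plan is to obtain this as an immediate specialization of Theorem~\ref{thm:NG_MAC_correlation_assisted_sum_capacity_bound} to the case of classical correlations, chained with the hierarchy of sum rates recorded in Eq.~\eqref{eqn:correlation_assisted_sum_capacity_hierarchy}. Since the technical content (the two-step bound on $\sup_{\bm w \in \winvec_\corr}\max_{\pi \in \Delta_d} \mathscr{I}_{\bm w}(\pi)$) has already been carried out inside the theorem, no further calculation should be required.

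Concretely, I would first invoke the hierarchy
\[
S(\mathcal{N}_G) \leq S_{\text{cl}}(\mathcal{N}_G),
\]
which holds because, as observed just before Eq.~\eqref{eqn:correlation_assisted_sum_capacity_hierarchy}, taking the trivial post-processing $f(\overline{\bm{y}} \mid \bm{xy}, \bm{y}') = \delta_{\overline{\bm{y}}, \bm{y}}$ in Eq.~\eqref{eqn:strategy_assistance_channel} turns $\mathcal{A}_{P,f}$ into the identity channel, so that the unassisted MAC $\mathcal{N}_G$ is always available as a special case of a classically-assisted one. I would then take $\corr = \corr_{\text{cl}}$ in Theorem~\ref{thm:NG_MAC_correlation_assisted_sum_capacity_bound}. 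The discussion following Eq.~\eqref{eqn:correlation_induced_strategies} identifies the set of strategies induced by classical correlations with $\strat_{\text{cl}}$, so that $\omega^{\strat_{\text{cl}}}(G) = \omega^{\text{cl}}(G)$.

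Combining these two ingredients yields
\[
S(\mathcal{N}_G) \leq S_{\text{cl}}(\mathcal{N}_G) \leq \ln\!\left(d - 1 + d^{-(1 - \omega^{\text{cl}}(G))\,d}\right),
\]
which is precisely the claim. There is no real obstacle: the only point worth checking is that the strategy set $\strat_\corr$ appearing in the general bound genuinely reduces to $\strat_{\text{cl}}$ when $\corr = \corr_{\text{cl}}$, so that the winning probability appearing on the right-hand side is indeed the classical one $\omega^{\text{cl}}(G)$ and not some a priori larger quantity.
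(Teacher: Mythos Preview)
Your proposal is correct and matches the paper's own proof essentially verbatim: invoke the hierarchy $S(\mathcal{N}_G) \leq S_{\text{cl}}(\mathcal{N}_G)$ from Eq.~\eqref{eqn:correlation_assisted_sum_capacity_hierarchy} and then specialize Theorem~\ref{thm:NG_MAC_correlation_assisted_sum_capacity_bound} to $\corr = \corr_{\text{cl}}$, using that $\strat_{\corr_{\text{cl}}} = \strat_{\text{cl}}$ so that the winning probability on the right-hand side is $\omega^{\text{cl}}(G)$.
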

\begin{proof}
    From Eq.~\eqref{eqn:correlation_assisted_sum_capacity_hierarchy}, we know that $S(\mathcal{N}_G) \leq S_{\text{cl}}(\mathcal{N}_G)$. Then, using Thm.~\ref{thm:NG_MAC_correlation_assisted_sum_capacity_bound}, we obtain Eq.~\eqref{eqn:NG_MAC_sum_capacity_bound}.
\end{proof}

\noindent Note that the bounds on $S_{\corr}(\mathcal{N}_G)$ and $S(\mathcal{N}_G)$ given by Eq.~\eqref{eqn:NG_MAC_correlation_assisted_sum_capacity_bound} and Eq.~\eqref{eqn:NG_MAC_sum_capacity_bound}, respectively, lie between $\ln(d - 1)$ and $\ln(d)$. For sufficiently large $d$, when $\omega^{\strat_\corr}(G)$ is not close to $1$, the sum capacity is bounded above by $\approx \ln(d - 1)$. On the other hand, for $\omega^{\strat_\corr}(G) = 1$, we obtain an upper bound of $\ln(d)$, which can be achieved by $\bm{w} = (1, \dots, 1)^T$ as seen from Prop.~\ref{prop:deterministic_strategy_maximum}. Using this, we can obtain separations between the correlation-assisted achievable sum rate corresponding to two different sets of correlations.

\subsubsection{Separation between sum rates with assistance from different sets of correlations}
If $\corr_1$ and $\corr_2$ are two sets of correlations such that $\omega^{\strat_{\corr_1}}(G) < 1$ while $\omega^{\strat_{\corr_2}}(G) = 1$, then $S_{\corr_1}(\mathcal{N}_G) < S_{\corr_2}(\mathcal{N}_G) = \ln(d)$. We use this idea to provide separations of correlation-assisted achievable sum rate using classical, quantum and no-signalling correlations.

\subsubsection*{\texorpdfstring{Separating $S_{\textnormal{Q}}(\mathcal{N}_G)$ from $S(\mathcal{N}_G)$ for two-sender MACs}{Q-cl-2senders}}
Consider the Magic Square Game, $G_{\text{MS}}$, used previously
in~\cite{mermin1990simple, peres1990incompatible,aravind2002simple,brassard2005quantum} to obtain  a separation between $S(\mathcal{N}_G)$ and
$S_{\text{Q}}(\mathcal{N}_G)$. In this game, the referee selects a row $r \in
\{1, 2, 3\}$ and column $c \in \{1, 2, 3\}$ from a $3 \times 3$ grid uniformly
at random. The row is handed over to Alice while the column is given to Bob.
Without communicating with each other, Alice \& Bob need to fill bits in the
given row and column such that the total parity of bits in the row is even,
total parity of bits in the column is odd, and the bit at the intersection of
the given row and column match. 

There are $d = 9$ possible question pairs corresponding to the indices $(r, c)$. Classically, Alice \& Bob can win the game at least $8$ out of $9$ times by implementing for example the following strategy:
\begin{center}
\begin{tabular}{|c|c|c|}
    \hline & & \\[-0.3cm]
    1 & 0 & 1 \\
    \hline & &  \\[-0.3cm]
    1 & 1 & 0 \\
    \hline & &  \\[-0.3cm]
    1 & 0 & ? \\
    \hline
\end{tabular}
\end{center}
where the entry in each box indicates bits filled by Alice and Bob. It can be
shown that this strategy is optimal, therefore
$\omega^{\text{cl}}(g_{\text{ms}}) = 8/9$~\cite{brassard2005quantum}.

On the other hand, if Alice \& Bob are allowed to use a quantum strategy, then
they can share two copies of a maximally entangled Bell state,
\begin{equation*}
    \rho_{\text{Bell}} = \frac{1}{2} \begin{pmatrix} 1 & 0 & 0 & 1 \\
                                                     0 & 0 & 0 & 0 \\
                                                     0 & 0 & 0 & 0 \\
                                                     1 & 0 & 0 & 1 \end{pmatrix},
\end{equation*}
and submit answers to the referee based on a set of measurements given in Table~\ref{eq:mTable}. 
Alice and Bob answer $0$ if their measurement yields an eigenvector with
eigenvalue $1$, else they answer $1$.
Answers obtained this way can be shown to always satisfy the winning condition
of the magic square game, i.e., $\omega^{\text{Q}}(G_{\text{MS}}) = 1$ \cite{mermin1990simple, peres1990incompatible}.

\begin{table}
    \centering
    \caption{Measurement Table}\label{eq:mTable}
\begin{tabular}{|c|c|c|}
    \hline & & \\[-0.25cm]
    $\sigma_x \otimes \id$ & $\sigma_x \otimes \sigma_x$ & $\id \otimes \sigma_x$ \\[0.1cm]
    \hline & &  \\[-0.25cm]
    $-\sigma_x \otimes \sigma_z$ & $\sigma_y \otimes \sigma_y$ & $-\sigma_z \otimes \sigma_x$ \\[0.1cm]
    \hline & &  \\[-0.25cm]
    $\id \otimes \sigma_z$ & $\sigma_z \otimes \sigma_z$ & $\sigma_z \otimes \id$ \\[0.1cm]
    \hline
\end{tabular}
\end{table}

As the MAC obtained from the Magic Square Game has received attention in a previous study, we summarize the separation between sum capacity and entanglement-assisted sum rate given by our method in the following corollary.
\begin{corollary}
    \label{cor:NG_MAC_MagicSquareGame_Separation}
    Let $\mathcal{N}_{G_{\text{MS}}}$ denote the MAC obtained from the magic square game. Then, the sum capacity of this MAC is bounded above as
    \begin{equation*}
        S(\mathcal{N}_{G_{\text{MS}}}) \leq 3.02\ \textnormal{bits}.
    \end{equation*}
    On the other hand, using assistance from quantum correlations, we obtain $S_{\text{Q}}(\mathcal{N}_{G_{\text{MS}}}) = 3.17\ \textnormal{bits}$. This gives a separation of at least $0.15\ \textnormal{bits}$ between sum rate with and without entanglement assistance.
\end{corollary}
\begin{proof}
    Since $\omega^{\text{cl}}(G_{\text{MS}}) = 8/9$, we may use
Cor.~\eqref{cor:NG_MAC_sum_capacity_bound} to obtain a bound
$S(\mathcal{N}_{G_{\text{MS}}}) \leq 3.02$ bits.  At the same time, a perfect
quantum strategy is available, i.e., $\omega^{\text{Q}}(G_{\text{MS}}) = 1$,
and thus $S_{\text{Q}}(\mathcal{N}_{G_{\text{MS}}}) = 3.17$ bits \cite{leditzky2020playing}.
\end{proof}
Our bound of $3.02$ bits on the sum capacity of the \ngmac $\mathcal{N}_{G_{\text{MS}}}$ is
tighter than the previously reported bound of $3.14$ bits~\cite{leditzky2020playing}. Thus, our bound shows that entanglement assistance increases the sum rate by at least $4.7 \%$, in comparison with the previously known result of $0.9 \%$.

Since every quantum strategy is also a no-signalling strategy, we automatically
obtain a separation between $S(\mathcal{N}_{G_{\text{MS}}})$ and
$S_{\text{NS}}(\mathcal{N}_{G_{\text{MS}}})$. However,
$S_{\text{NS}}(\mathcal{N}_{G_{\text{MS}}}) =
S_{\text{Q}}(\mathcal{N}_{G_{\text{MS}}})$. In the following section, we use
a game different from $G_{\text{MS}}$ to obtain
a separation between the quantum and no-signalling assisted sum rates.

\subsubsection*{\texorpdfstring{Separating $S_{\textnormal{NS}}(\mathcal{N}_G)$ from $S(\mathcal{N}_G)$ and $S_{\textnormal{Q}}(\mathcal{N}_G)$ for two-sender MACs}{NS-Q-cl-2senders}}
In order to obtain a separation between the quantum-assisted sum rate and the
no-signalling assisted sum rate, we consider the Clauser-Horne-Shimony-Holt
(CHSH) game $G_{\text{CHSH}}$~\cite{clauser1969proposed,
cleve2004consequences}. In this game, a referee selects bits $x_1, x_2 \in \{0,
1\}$ uniformly at random, and gives them to Alice and Bob, respectively.
Upon receiving these question bits, Alice answers with the bit $y_1 \in \{0,
1\}$ and Bob with $y_2 \in \{0, 1\}$. Alice and Bob chose their answers without
communicating with each other. They win the game if
\begin{equation*}
    x_1 \land x_2 = y_1 \oplus y_2,
\end{equation*}
where $\land$ and $\oplus$ represent logical AND and bitwise addition modulo 2.
This game has a total of $d = 4$ question pairs.

It is known that the best classical strategy can answer only $3$ out of the $4$
question pairs correctly, i.e., $\omega^{\text{cl}}(G_{\text{CHSH}}) =
3/4$~\cite{cleve2004consequences}.
The optimal quantum strategy achieves a winning probability of
$\omega^{\text{Q}}(G_{\text{CHSH}}) = (1 + 1/\sqrt{2})/2 \approx 85.4
\%$~\cite{cleve2004consequences}. While there is no classical or quantum
strategy that can always win the game, one can construct a no-signalling
distribution, 
\begin{equation*}
    P_{\text{PR}}(y_1, y_2 | x_1, x_2) = \frac{1}{2} \delta_{x_1 \land x_2, y_1 \oplus y_2},
\end{equation*}
usually called the Popescu-Rohrlich (PR) box~\cite{popescu1994quantum}, which
represents a perfect strategy for winning the CHSH game.  Therefore
$\omega^{\text{NS}}(G_{\text{CHSH}}) = 1$.

Using $\omega^{\text{cl}}(G_{\text{CHSH}}) = 3/4$ in
Cor.~\eqref{cor:NG_MAC_sum_capacity_bound} gives
$S(\mathcal{N}_{G_{\text{CHSH}}}) \leq 1.7$ in the classical case.  On the
other hand, using $\omega^{\text{Q}}(G_{\text{CHSH}}) = (1 + 1/\sqrt{2})/2$ in
Thm.~\ref{thm:NG_MAC_correlation_assisted_sum_capacity_bound}, gives an upper
bound, $S_{\text{Q}}(\mathcal{N}_{G_{\text{CHSH}}}) \leq 1.78$ bits, in the
quantum case.  In the case of no-signalling, a perfect strategy is possible and
thus we have $S_{\text{NS}}(\mathcal{N}_{G_{\text{CHSH}}}) = 2$ bits. In this
way, we obtain a separation between the quantum and no-signalling assisted
achievable sum rate.

\subsubsection*{\texorpdfstring{Separating $S_{\textnormal{Q}}(\mathcal{N}_G)$ from $S(\mathcal{N}_G)$ for $N$-sender MACs}{Q-cl-Nsenders}}
We now consider a game $G_{\text{MPP}}$ that we call the multiparty parity game, which was first introduced
by Brassard \textit{et al.}~\cite{brassard2003multi}. In this game, $N$ players
are each handed a bit and they each answer by returning a bit. The players have
a \textit{promise}: the total number of ones in the $N$-bit string handed to
them is even. If this even number is divisible by $4$, then the winning
condition is that the total bit string returned by the players have an even
number of ones. Otherwise, the winning condition is to return a string with an
odd number of ones.

\noindent Formally, we have $\mathcal{X}_i = \{0, 1\}$ and $\mathcal{Y}_i = \{0, 1\}$ for $i \in [N]$. As before, we denote $\mathcal{X} = \mathcal{X}_1 \times \dotsm \times \mathcal{X}_N$ as the set of questions and $\mathcal{Y} = \mathcal{Y}_1 \times \dotsm \times \mathcal{Y}_N$ as the set of answers for the $N$ players. 
The promise,
\begin{equation*}
    \mathcal{P} = \left\{\bm{x} \in \mathcal{X} \mid \sum\nolimits_i x_i = 0\ (\text{mod } 2)\right\}
\end{equation*}
is a subset of $\mathcal{X}$ from which the questions are draw. 
The winning condition for the game can be described by the set
\begin{equation*}
    \mathcal{W}^{\mathcal{P}} = \left\{(\bm{x}, \bm{y}) \in \mathcal{P} \times \mathcal{Y} \mid \sum\nolimits_i y_i - \frac{1}{2} \sum\nolimits_i x_i\ =0\ (\text{mod } 2)\right\}.
\end{equation*}
Brassard \textit{et al.}~\cite{brassard2003multi} demonstrated that classical strategies can win this game with a probability of at most $\omega^{\text{cl}, \mathcal{P}}(G_{\text{MPP}}) = 1/2 + 2^{-\lceil N/2 \rceil}$ when the questions are drawn uniformly from the promise set. In contrast, a perfect quantum strategy is possible~\cite{brassard2003multi}.

Now we consider the following promise-free version of this game. Herein, the question and answer set remain the same, but the winning condition is defined as the set $\mathcal{W} = \mathcal{W}^{\mathcal{P}} \cup (\mathcal{P}^c \times \mathcal{Y})$. That is, the players win automatically if a question from outside the promise set is presented to them. To apply the bound obtained in Thm.~\ref{thm:NG_MAC_correlation_assisted_sum_capacity_bound}, we need to compute the maximum classical winning probability $\omega^{\text{cl}}(G_{\text{MPP}})$ for the promise-free case when questions are drawn uniformly.
To that end, we show how to compute the maximum winning probability $\omega^{\strat}(G)$ when we convert any game with a promise $\mathcal{P}$ to a promise-free game, assuming that the question are drawn uniformly and answers are given using strategies in $\strat$. Since $\mathcal{W}^{\mathcal{P}} \cap (\mathcal{P}^c \times \mathcal{Y}) = \varnothing$, we have
\begin{equation*}
    p_W = \sum_{(x, y) \in W^{\mathcal{P}}} \frac{1}{|\mathcal{X}|} p(y | x) + \sum_{x \in \mathcal{P}^c} p(x)
        = \frac{|\mathcal{P}|}{|\mathcal{X}|} p_W^{\mathcal{P}} + \left(1 - \frac{|\mathcal{P}|}{|\mathcal{X}|}\right).
\end{equation*}
Since the set of strategies $\strat$ chosen by the players have a maximum probability $\omega^{\strat, \mathcal{P}}(G)$ of winning when the questions are drawn uniformly from $\mathcal{P}$, we can infer that
\begin{equation*}
    \omega^{\strat}(G) = \frac{|\mathcal{P}|}{|\mathcal{X}|} \omega^{\strat, \mathcal{P}}(G) + \left(1 - \frac{|\mathcal{P}|}{|\mathcal{X}|}\right).
\end{equation*}

For the multiparty parity game, since half the $N$-bit strings are in $\mathcal{P}$ and the other half are in $\mathcal{P}^c$, we get
\begin{equation}
    \omega^{\text{cl}}(G_{\text{MPP}}) = \frac{3}{4} + 2^{-(\lceil N/2 \rceil + 1)}. \label{eqn:multiparty_parity_promisefree_game_max_winning_prob_uniform}
\end{equation}
Then, by Cor.~\eqref{cor:NG_MAC_sum_capacity_bound}, we find that the sum capacity for the MAC obtained from the multiparty parity game is bounded above as $S(\mathcal{N}_{G_{\text{MPP}}}) \leq \log(d - 1 + 2^{-(1 - \omega^{\text{cl}}(G)) d})$ bits, where $d = 2^N$ and $\omega^{\text{cl}}(G)$ is given by Eq.~\eqref{eqn:multiparty_parity_promisefree_game_max_winning_prob_uniform}. In particular, $S(\mathcal{N}_{G_{\text{MPP}}}) < \log(d)$. In contrast, since a perfect quantum strategy is available, we have $S_Q(\mathcal{N}_{G_{\text{MPP}}}) = \log(d)$, thus giving a separation between the sum capacity and the quantum-assisted sum rate for $N$-sender MACs. For example, when we have $N = 3$ senders, we obtain $S(\mathcal{N}_{G_{\text{MPP}}}) \leq 2.84$ bits. In contrast, $S_{\text{Q}}(\mathcal{N}_{G_{\text{MPP}}}) = 3$ bits in the quantum case.

\subsection{Looseness of convex relaxation of the sum capacity\label{secn:relaxedALcapacity_sumcapacity_separation}}
In the previous section, we looked at separations between the sum rates with assistance from classical, quantum and no-signalling strategies. In this section, we construct a game such that one can obtain an arbitrarily large separation between the sum capacity and the relaxed sum capacity.
Recall that the relaxed sum capacity corresponds to dropping the product distribution constraint in the maximization problem:
\begin{equation*}
    C(\mathcal{N}_G) = \max_{\overline{p}(\overline{\bm{xy}})} I(\overline{X}_1, \overline{Y}_1, \dotsc, \overline{X}_N, \overline{Y}_N; Z),
\end{equation*}
where $\overline{X}_1, \overline{Y}_1, \dotsc, \overline{X}_N, \overline{Y}_N$ are random variables describing the input to the \ngmac $\mathcal{N}_G$, while $Z$ is the random variable describing the output. We maximize over all possible input probability distributions $\overline{p}$, so that the resulting quantity is the capacity of $\mathcal{N}_G$ when we think of it as a single-input single-output channel. As noted in Sec.~\ref{secn:NG_MAC_strategy_assistance}, we have $S(\mathcal{N}_G) \leq S_{\corr}(\mathcal{N}_G) \leq C(\mathcal{N}_G)$ for any set of correlations $\corr$. Indeed, since we maximize over all probability distributions over the input to $\mathcal{N}_G$, we can write the relaxed sum capacity as
\begin{equation*}
    C(\mathcal{N}_G) = \max_{\overline{p}_{\bm{Y} | \bm{X}} \in \strat_{\text{all}}} \max_{\overline{\pi} \in \Delta_d} I(\overline{X}_1, \overline{Y}_1, \dotsc, \overline{X}_N, \overline{Y}_N; Z)
\end{equation*}
where $\strat_{\text{all}}$ denotes the set of all possible strategies that the players can use to play the game. In particular, this amounts to allowing the players to communicate after the questions have been handed over to them.

To analyze $C(\mathcal{N}_G)$, we study some properties of $\strat_{\text{all}}$. It can be verified that $\strat_{\text{all}}$ is a convex set. The extreme points of this set correspond to deterministic strategies $f\colon \mathcal{X} \to \mathcal{Y}$ that allow for communication between the players (see Prop.~\ref{prop:deterministic_strategy_extreme_point} in App.~\ref{app:strategy_nonlocal_games}). This implies that the maximum winning probability of the game $\omega^{\strat_{\text{all}}}(G)$, when the questions are drawn uniformly and answers are obtained using the strategies in $\strat_{\text{all}}$, is always achieved by a deterministic strategy of the form mentioned above. We now give an explicit description of a deterministic strategy (not necessarily unique) achieving the maximum winning probability.

The best deterministic strategy $f^{(D)}\colon \mathcal{X} \to \mathcal{Y}$ can be written as
\begin{equation}
    f^{(D)}(\bm{x}) = \begin{cases} \bm{y} &\exists \bm{y} \in \mathcal{Y} \text{ such that } (\bm{x}, \bm{y}) \in \mathcal{W} \\
                                    \bm{y}_{\text{o}} &\text{ otherwise, } 
                      \end{cases} \label{eqn:best_deterministic_strategy_relaxed_AL_region}
\end{equation}
where $\bm{y}_{\text{o}} \in \mathcal{Y}$ is an arbitrary element chosen beforehand. We note that for each $\bm{x} \in \mathcal{X}$, some element $\bm{y} \in \mathcal{Y}$ satisfying $(\bm{x}, \bm{y}) \in \mathcal{W}$ is chosen \textit{apriori} (if it exists), so that the function is well-defined, though not necessarily unique. In other words, $f^{(D)}$ gives the correct answer if a correct answer for the given question exists, and if not, it gives an arbitrary answer that is necessarily incorrect. It can, therefore, be inferred that the maximum winning probability can be written as
\begin{equation}
 \omega^{\strat_{\text{all}}}(G) = \frac{\left|\{\bm{x} \in \mathcal{X} \mid \exists \bm{y} \in \mathcal{Y} \text{ such that } (\bm{x}, \bm{y}) \in \mathcal{W}\}\right|}{d}.
                                                                                                                                                                \label{eqn:maximum_winning_prob_uniform_relaxed_AL_region}
\end{equation}
This is the best that one can do given any nonlocal game $G$. Note also that $\omega^{\strat_{\text{all}}}(G)$ can be directly computed from the winning condition $\mathcal{W}$.

The above observation directly leads to upper and lower bounds on $C(\mathcal{N}_G)$. The upper bound is obtained by using $\omega^{\strat_{\text{all}}}(G)$ from Eq.~\eqref{eqn:maximum_winning_prob_uniform_relaxed_AL_region} in Thm.~\ref{thm:NG_MAC_correlation_assisted_sum_capacity_bound}. Here, we implicitly use the fact that our upper bound is valid even when the questions are drawn arbitrarily. Let $\bm{w}^{(D)} \in \{0, 1\}^d$ be the winning vector corresponding to the best deterministic strategy $f^{(D)}$ given in Eq.~\eqref{eqn:best_deterministic_strategy_relaxed_AL_region}. Note that we maximize over all distributions over the questions when computing $C(\mathcal{N}_G)$. Therefore, using $\bm{w}^{(D)}$ in Prop.~\ref{prop:deterministic_strategy_maximum}, we obtain a lower bound on $C(\mathcal{N}_G)$. In particular, if there is at least one correct answer for every question, then $f^{(D)}$ is a perfect strategy and $C(\mathcal{N}_G) = \ln(d)$.

Now, we obtain a separation between $C(\mathcal{N}_G)$ and $S(\mathcal{N}_G)$. To that end, we construct a game called the signalling game.
\subsubsection{\texorpdfstring{Signalling game $G_{\textnormal{s}}$ and separation of $C(\mathcal{N}_{G_{\textnormal{s}}})$ from $S(\mathcal{N}_{G_{\textnormal{s}}})$ and $S_{\textnormal{NS}}(\mathcal{N}_{G_{\textnormal{s}}})$}{Separation of sum capacities using the signalling game}}
Consider a game where Alice \& Bob are each given a question from some set of questions. They win the game if they can correctly guess the question handed over to the other person. 
Since the game can be won if Alice \& Bob ``signal" their question to each other, we call this the signalling game $G_{\text{s}}$.

Formally, we consider question sets $\mathcal{X}_1$, $\mathcal{X}_2$ and answer sets $\mathcal{Y}_1 = \mathcal{X}_2$ and $\mathcal{Y}_2 = \mathcal{X}_1$, and the winning condition is defined as
\begin{equation*}
    \mathcal{W} = \{(x_1, x_2, y_1, y_2) \in (\mathcal{X}_1 \times \mathcal{X}_2) \times (\mathcal{Y}_1 \times \mathcal{Y}_2) \mid y_1 = x_2,\ y_2 = x_1\}.
\end{equation*}
Note that there is exactly one correct answer for each question pair $(x_1, x_2) \in \mathcal{X}_1 \times \mathcal{X}_2$.

Since we bound the sum capacity of the \ngmac $\mathcal{N}_{G_{\text{s}}}$ obtained from the signaling game $G_{\text{s}}$ using the maximum winning probability, we analyze the winning strategies for this game. To that end, consider some set of strategies $\strat$ that Alice and Bob use to play the game. For simplicity, we assume that this is a compact set (thinking of strategies as vectors as in Prop.~\ref{prop:deterministic_strategy_extreme_point}), which holds, for example, when $\strat$ is the set of no-signalling strategies (see Prop.~\ref{prop:NS_strategies_convex_polytope}). Suppose that $\omega^{\strat}(G) = 1$ corresponding to this set of strategies, and let $p^*_{\bm{Y} | \bm{X}}$ be the strategy that achieves this maximum winning probability. When the questions are drawn uniformly, the winning probability for strategy $p^*_{\bm{Y} | \bm{X}}$ is given by $p^*_W = (\sum_i w^*_i)/d$, where $\bm{w}^* = (w^*_1,\dots,w^*_d)$ is the winning vector corresponding to $p^*_{\bm{Y} | \bm{X}}$, as defined in Eq.~\eqref{eqn:strategy_coeff}. Since $p^*_W = \omega^{\strat}(G) = 1$ by assumption and $\bm{w}^* \in [0, 1]^d$, we must have $w^*_i = 1$ for all $i \in [d]$.

For convenience, denote $\mathcal{X} = \mathcal{X}_1 \times \mathcal{X}_2$ as the set of questions. Using Def.~\ref{eqn:strategy_coeff}, we can write the winning vector $\bm{w}^*$ as
\begin{equation*}
    w^*_{\bm{x}} = \sum_{\bm{y}\colon \bm{xy} \in \mathcal{W}} p^*_{\bm{Y} | \bm{X}}(\bm{y} | \bm{x})
\end{equation*}
where we label the components of $\bm{w}^*$ using the questions $\bm{x} \in \mathcal{X}$. Since $w^*_{\bm{x}} = 1$ for all $\bm{x} \in \mathcal{X}$ and because there is exactly one correct answer for each question $\bm{x}$, we can infer that $p^*_{\bm{Y} | \bm{X}}$ must be a deterministic strategy. Written explicitly, we have
\begin{equation*}
    p^*_{\bm{Y} | \bm{X}}(y_1, y_2 | x_1, x_2) = \delta_{y_1, x_2} \delta_{y_2, x_1}.
\end{equation*}

\noindent Note that $p^*_{\bm{Y} | \bm{X}}$ cannot be a no-signalling strategy. Indeed,
\begin{equation}
    p^*_{\bm{Y} | \bm{X}}(y_1 | x_1, x_2) = \sum_{y_2 \in \mathcal{Y}_2} p^*_{\bm{Y} | \bm{X}}(y_1, y_2 | x_1, x_2) = \delta_{y_1, x_2}. \label{eqn:perfect_strategy_signalling_game}
\end{equation}
This cannot satisfy the no-signalling condition $p^*_{\bm{Y} | \bm{X}}(y_1 | x_1, x_2) = p^*_{\bm{Y} | \bm{X}}(y_1 | x_1)$ given in Eq.~\eqref{eqn:no_signalling_strategy} because the RHS in Eq.~\eqref{eqn:perfect_strategy_signalling_game} depends on $x_2$. In other words, we cannot have $w_{\bm{x}} = 1$ for \textit{any} question $\bm{x} \in \mathcal{X}$ using a no-signalling strategy. In particular, the perfect strategy is not no-signalling, and therefore, $\omega^{\text{NS}}(G_{\text{s}}) < 1$ for no-signalling strategies. Subsequently, we also have $\omega^{\text{Q}}(G_{\text{s}}) < 1$ and $\omega^{\text{cl}}(G_{\text{s}}) < 1$, because the set of classical and quantum strategies are contained in the set of no-signalling strategies. It then follows from Thm.~\ref{thm:NG_MAC_correlation_assisted_sum_capacity_bound} and Cor.~\ref{cor:NG_MAC_sum_capacity_bound} that each of $S(\mathcal{N}_{G_{\text{s}}}), S_{\text{Q}}(\mathcal{N}_{G_{\text{s}}}), S_{\text{NS}}(\mathcal{N}_{G_{\text{s}}})$ is strictly less than $\ln(d)$.
On the other hand, since a perfect strategy is possible allowing communication between Alice \& Bob, we have $C(\mathcal{N}_{G_{\text{s}}}) = \ln(d)$. Therefore, we have obtained a separation between $C(\mathcal{N}_{G_{\text{s}}})$ and $S(\mathcal{N}_{G_{\text{s}}}), S_{\text{Q}}(\mathcal{N}_{G_{\text{s}}}), S_{\text{NS}}(\mathcal{N}_{G_{\text{s}}})$.

Below, we argue that this separation becomes arbitrarily large as the number of questions increases. To that end, we compute $\omega^{\text{cl}}(G_{\text{s}})$. Since the maximum winning probability obtained using classical strategies when questions are drawn uniformly is achieved by a deterministic strategy, it is sufficient to restrict our attention to classical deterministic strategies chosen by Alice \& Bob. Recall that a classical deterministic strategy corresponds to two functions $f_1\colon \mathcal{X}_1 \to \mathcal{Y}_1$ and $f_2\colon \mathcal{X}_2 \to \mathcal{Y}_2$ chosen by Alice \& Bob, respectively. This translates to the probability distribution $p^{(D)}_{\bm{Y} | \bm{X}}(y_1, y_2 | x_1, x_2) = \delta_{y_1, f_1(x)} \delta_{y_2, f_2(x)}$. Then, the winning probability using a classical deterministic strategy when the questions are drawn uniformly is given by
\begin{equation}
    p_W^{(D)} = \frac{1}{d} \sum_{x_1 \in \mathcal{X}_1, x_2 \in \mathcal{X}_2} \delta_{x_2, f_1(x_1)} \delta_{x_1, f_2(x_2)} \label{eqn:uniform_winning_prob_nosignalling_game_classical_strategy}
\end{equation}
where we used the fact that the signalling game has only one correct answer $(x_2, x_1)$ corresponding to each question $(x_1, x_2)$. It can be seen from Eq.~\eqref{eqn:uniform_winning_prob_nosignalling_game_classical_strategy} that, for achieving maximum winning probability, the function $f_1$ must be able to invert the action of the function $f_2$ or vice-versa.

If $|\mathcal{X}_2| \leq |\mathcal{X}_1|$, then the set $f_2(\mathcal{X}_2)$ can cover at most $|\mathcal{X}_2|$ elements of $\mathcal{X}_1$. Subsequently, $\delta_{x_1, f_2(x_2)} \neq 0$ for at most $|\mathcal{X}_2|$ elements of $\mathcal{X}_1$. We can then infer from Eq.~\eqref{eqn:uniform_winning_prob_nosignalling_game_classical_strategy} that
\begin{equation*}
    p_W^{(D)} \leq \frac{|\mathcal{X}_2|}{d} = \frac{1}{|\mathcal{X}_1|}
\end{equation*}
since $d = |\mathcal{X}_1| |\mathcal{X}_2|$. Using a similar reasoning when $|\mathcal{X}_1| \leq |\mathcal{X}_2|$, 
\begin{equation*}
    \omega^{\text{cl}}(G_{\text{s}}) =  \frac{1}{\max(|\mathcal{X}_1|, |\mathcal{X}_2|)}.
\end{equation*}
In particular, we have $d \to \infty$ if either of $|\mathcal{X}_1|$ or $|\mathcal{X}_2|$ diverges, while the winning probability $\omega^{\text{cl}}(G) \to 0$. Subsequently, $C(\mathcal{N}_{G_{\text{s}}}) \to \infty$ but $S(\mathcal{N}_{G_{\text{s}}}) \to 0$. Therefore, we get an arbitrarily large separation between $C(\mathcal{N}_{G_{\text{s}}})$ and $S(\mathcal{N}_{G_{\text{s}}})$.

In fact, we verify through numerical simulations that the situation is equally bad for the no-signalling assisted sum rate. To that end, we compute the maximum winning probability $\omega^{\text{NS}}(G_{\text{s}})$ numerically. We show in Prop.~\ref{prop:NS_strategies_convex_polytope} that the set of no-signalling strategies for $N$-player games is a compact and convex set (specifically, a convex polytope). Therefore, computing $\omega^{\text{NS}}(G_{\text{s}})$ amounts to solving a linear program (this fact is well-known for $2$-player games~\cite{toner2009monogamy}). For $2 \leq |\mathcal{X}_1|, |\mathcal{X}_2| \leq 10$, we verify that the numerically computed value for $\omega^{\text{NS}}(G_{\text{s}})$ matches $\omega^{\text{cl}}(\mathcal{N}_{G_{\text{s}}}) = 1/\max(|\mathcal{X}_1|, |\mathcal{X}_2|)$. Thus, we expect $\omega^{\text{cl}}(\mathcal{N}_{G_{\text{s}}}) = \omega^{\text{Q}}(\mathcal{N}_{G_{\text{s}}}) = \omega^{\text{NS}}(\mathcal{N}_{G_{\text{s}}})$ for the signalling game. This would imply that $S_{\text{Q}}(\mathcal{N}_{G_{\text{s}}}), S_{\text{NS}}(\mathcal{N}_{G_{\text{s}}}) \to 0$ as $d \to \infty$ but $C(\mathcal{N}_{G_{\text{s}}}) \to \infty$. In other words, even with quantum or no-signalling assistance, the sum rate is far lower than the bound given by the relaxed sum capacity.

This example highlights the importance of finding better methods to upper bound the sum capacity of MACs. In the next part, we take a step in this direction by defining and studying a class of global optimization problems with relevance in information theory. This class of optimization problems is motivated from the non-convex problem encountered in sum capacity computation. In the third part of this study, we will show how to apply these algorithms for computing the sum capacity of arbitrary two-sender MACs.

\section{Optimization of Lipschitz-like functions\label{secn:Lipschitzlike_optimization}}
\subsection{Lipschitz-like functions}\label{secn:lipschitz-like-functions}
The main object of our study is a Lipschitz-like function. Such functions are a generalization of Lipschitz-continuous and H\"older continuous functions. Recall that a function $f\colon \mathcal{D} \to \mathcal{E}$ between two subsets of normed vector spaces is said to be Lipschitz continuous if for all $x, x' \in \mathcal{D}$, we have $\norm{f(x) - f(x')}_{\mathcal{E}} \leq L \norm{x - x'}_{\mathcal{D}}$ for some constant $L > 0$. The function $f$ is said to be H\"older continuous if $\norm{f(x) - f(x')}_{\mathcal{E}} \leq L \norm{x - x'}_{\mathcal{D}}^\gamma$ for some constants $L, \gamma > 0$~\cite{deklerk2008complexity}. H\"older continuity is a more general notion than Lipschitz continuity since $\gamma = 1$ gives the definition of Lipschitz continuity. Even so, this definition is not general enough to capture the continuity properties of entropic quantities. Shannon and von Neumann entropies, for example, satisfy a different continuity bound. Specifically, if $p, q \in \Delta_d$ are discrete probability distributions, then the Shannon entropy satisfies
\begin{equation}
    |H(p) - H(q)| \leq \frac{1}{2} \log(d - 1) \norm{p - q}_1 + h\left(\frac{1}{2} \norm{p - q}_1\right), \label{eqn:entropy_continuity_bound}
\end{equation}
where $h(x) = -x\log(x) - (1 - x)\log(1 - x)$ is the binary entropy function~\cite{zhang2007estimating}. Similarly, von Neumann entropy satisfies the Fannes-Audenaert inequality~\cite{audenaert2007sharp}. To encapsulate such behaviour of entropic quantities, we define Lipschitz-like functions as follows.

\begin{definition}[Lipschitz-like function]
Let $\beta\colon \mathbb{R}_+ \to \mathbb{R}$ be a non-negative, continuous, monotonically increasing function such that $\beta(0) = 0$. Let $\mathcal{D}$ and $\mathcal{E}$ be subsets of a normed vector space. Then a function $f\colon \mathcal{D} \to \mathbb{E}$ is said to be Lipschitz-like or $\beta$-Lipschitz-like if it satisfies
\begin{equation}
    \norm{f(x) - f(x')}_{\mathcal{E}} \leq \beta(\norm{x - x'}_{\mathcal{D}}) \quad \forall\ x, x' \in \mathcal{D}. \label{eqn:Lipschitz_like_function}
\end{equation}
\end{definition}

Some remarks about this definition are in order. The definition of Lipschitz-like functions can be generalized to metric spaces in a straightforward manner. The reason we require $\beta$ to be monotonically increasing is because it makes Lipschitz-like functions behave similar to Lipschitz continuous functions in the sense that the bound on $\norm{f(x) - f(x')}_{\mathcal{E}}$ tightens or loosens with the value of $\norm{x - x'}_{\mathcal{D}}$. Moreover, this assumption helps with optimization of Lipschitz-like functions. Similarly, we require continuity of $\beta$ for simplicity, but this can be relaxed to right-continuity at $0$. Since $\beta$ is (right-)continuous at $0$, it follows from Eq.~\eqref{eqn:Lipschitz_like_function} that $f$ is a continuous function. If $\beta(x) = L x$ for some $L > 0$, then $f$ is a Lipschitz continuous function with Lipschitz constant $L$, and if $\beta(x) = L x^\gamma$ for $L, \gamma > 0$, then $f$ is a H\"older continuous function with constant $L$. Lipschitz-like functions are therefore a generalization of Lipschitz and H\"older continuous functions.

With a slight modification of the right-hand side of Eq.~\eqref{eqn:entropy_continuity_bound}, we can show that entropy is a Lipschitz-like function. To that end, we define the modified binary entropy as follows.
\begin{equation}
    \overline{h}(x) = \begin{cases} -x\log(x) - (1 - x)\log(1 - x) \hspace{0.1cm} \text{ if } x \leq \frac{1}{2} \\
                                    \log(2)                        \hspace{4.1cm} \text{ if } x \geq \frac{1}{2}
                      \end{cases} \label{eqn:modified_binary_entropy}
\end{equation}
Observe that $\overline{h}$ is a non-negative, continuous, and monotonically increasing function that satisfies $\overline{h}(0) = 0$, and furthermore, we have $h(x) \leq \overline{h}(x)$ for all $x \in [0, 1]$. Thus, defining $\beta_H(x) = \log(d - 1) x/2 + \overline{h}(x/2)$, we obtain
\begin{equation*}
    |H(p) - H(q)| \leq \beta_H\left(\norm{p - q}_1\right).
\end{equation*}
From this, we can conclude that Shannon entropy (and similarly, von Neumann entropy) is a Lipschitz-like function.

Moreover, linear combinations and compositions of Lipschitz-like functions is again a Lipschitz-like function. We summarize this observation in the following result.
\begin{proposition}
    \begin{enumerate}
        \item If $f_1\colon \mathcal{D} \to \mathcal{E}$ and $f_2\colon \mathcal{D} \to \mathcal{E}$ are $\beta_1$-Lipschitz-like and $\beta_2$-Lipschitz-like functions respectively, then the linear combination $\alpha_1 f_1 + \alpha_2 f_2$ is a $(|\alpha_1| \beta_1 + |\alpha_2| \beta_2)$-Lipschitz-like function, where $\alpha_1, \alpha_2$ are scalars.
        \item If $f_1\colon \mathcal{D} \to \mathcal{E}$ and $f_2\colon \mathcal{E} \to \mathcal{F}$ are $\beta_1$-Lipschitz-like and $\beta_2$-Lipschitz-like functions respectively, then the composition $f_2 \circ f_1$ is a $(\beta_2 \circ \beta_1)$-Lipschitz-like function.
    \end{enumerate}
\end{proposition}
\begin{proof}
    1. The function $g = \alpha_1 f_1 + \alpha_2 f_2$ satisfies $\norm{g(x) - g(y)}_{\mathcal{E}} \leq |\alpha_1| \norm{f_1(x) - f_1(y)}_{\mathcal{E}} + |\alpha_2| \norm{f_2(x) - f_2(y)}_{\mathcal{E}}$ for $x, y \in \mathcal{D}$ by triangle inequality. Then, using the fact that $f_1, f_2$ are Lipschitz-like, we obtain $\norm{g(x) - g(y)}_{\mathcal{E}} \leq |\alpha_1| \beta_1(\norm{x - y}_{\mathcal{D}}) + |\alpha_2| \beta_2(\norm{x - y}_{\mathcal{D}})$. Since $\beta = |\alpha_1| \beta_1 + |\alpha_2| \beta_2$ is non-negative, continuous and monotonically increasing with $\beta(0) = 0$, we can conclude that $g$ is $\beta$-Lipschitz-like.

    2. The function $g = f_2 \circ f_1$ satisfies $\norm{g(x) - g(y)}_{\mathcal{F}} \leq \beta_2(\norm{f_1(x) - f_1(y)}_{\mathcal{E}})$ for $x, y \in \mathcal{D}$ by using the fact that $f_2$ is $\beta_2$-Lipschitz-like. Then, since $f_1$ is $\beta_1$-Lipschitz-like and $\beta_2$ is monotonically increasing, we obtain $\norm{g(x) - g(y)}_{\mathcal{F}} \leq \beta_2(\beta_1(\norm{x - y}_{\mathcal{D}}))$. Since $\beta = \beta_2 \circ \beta_1$ is non-negative, continuous and monotonically increasing with $\beta(0) = 0$, we can conclude that $g$ is $\beta$-Lipschitz-like.
\end{proof}

In light of the above result, we can conclude that several entropic quantities derived from Shannon entropy and von Neumann entropy are Lipschitz-like functions. This lends support to our claim that the techniques developed in this section can potentially be useful for non-convex optimization problems in information theory.

For the purposes of optimization, we take the co-domain to be the real line, i.e., $\mathcal{E} = \mathbb{R}$ with the usual norm. Motivated by applications in information theory, we will mainly be focusing on the case where the domain $\mathcal{D} = \Delta_d$ is the standard simplex in $\mathbb{R}^d$ and $\norm{\cdot}_{\mathcal{D}} = \norm{\cdot}_1$. Nevertheless, our techniques work more generally with any norm (for example, using equivalence of norms in finite dimensions).

In this section, we will develop algorithms for optimizing Lipschitz-like functions over a closed interval and a standard simplex. We also discuss techniques for optimizing Lipschitz-like functions over arbitrary compact and convex domains in App.~\ref{app:Lipschitzlike_optimization_compact_convex_domain}. We approach this problem by showing how to extend a Lipschitz-like function from a compact and convex domain to all of Euclidean space such that the global optimum is not affected, which may be of independent interest. Some of these algorithms we discuss are generalizations of existing algorithms for optimization of Lipschitz continuous functions. For this reason, we present a brief overview of some existing algorithms and our approach to generalizing them. In the next section, we focus on outlining the high-level ideas without delving into the technical details. Subsequently, we will give detailed results on the optimization algorithms, along with some numerical examples.

\subsection{Overview of the optimization algorithms}\label{secn:overview-optimization-algorithms}
All the algorithms we propose in this study are designed to optimize any
$\beta$-Lipschitz-like function~$f$. The function $\beta$ is assumed to be
known beforehand, but the function $f$ is unknown and we can only query it at a
specified point. The algorithms can then use the knowledge of the domain, the
function $\beta$, the queried points and corresponding values of the objective
function $f$ to approximate the maximum of $f$ to an additive precision $\epsilon > 0$.
Such a setting is commonly used to study the performance of optimization
algorithms~\cite{deklerk2008complexity}.

In order to make concrete statements about the complexity of optimization, we will assume that $\beta$ does not explicitly depend on the dimension. Our algorithms and convergence analysis are valid even if this assumption does not hold.
By an efficient algorithm for optimization, we mean an algorithm that computes the optimum of $f$ to a given additive precision $\epsilon > 0$ in time polynomial in the dimension and inverse precision $1/\epsilon$.
We sometimes informally use the phrase ``practically efficient" and variations thereof to mean that the algorithm runs reasonably fast in practice, e.g., to exclude situations where the scaling of runtime with dimension is too large (for example, $O(d^{10})$).

Before discussing technical details of the algorithms presented in this study,
we give a high-level overview of the main ideas. We begin by presenting an
algorithm that can optimize any $\beta$-Lipschitz-like function $f$ when the
domain $\mathcal{D} = [a, b]$ is a closed interval. Our
algorithm is a generalization of the Piyavskii-Shubert
algorithm~\cite{piyavskii1972algorithm, shubert1972sequential}, and it focuses
on constructing successively better upper-bounding functions by using the
Lipschitz-like property of the objective function. At each iteration, the
maximum of the upper-bounding function is computed, which is an easier problem
because we only need to maximize the function $\beta$ that is known to be
monotonically increasing. The computed maximum of the upper-bounding function
at each iteration generates a sequence of points that partitions the interval.
When the distance between any two of these points becomes sufficiently small,
one can show that the maximum of the upper-bounding function is close to the
maximum of the original objective function. Our algorithm is guaranteed to
converge to the optimal solution within a precision of $\epsilon > 0$ in
$\lceil(b - a)/\delta\rceil$ time steps in the worst-case, where $\delta > 0$
is the largest number satisfying $\beta(\delta) \leq \epsilon/2$ (see
Prop.~\ref{prop:Lipshitz_like_maximization_analysis}). We refer to this
algorithm as modified Piyavskii-Shubert algorithm.
We remark that several extensions of the Piyavskii-Shubert algorithm have been presented in the literature (see, for
example, Ref.~\cite{lera2002global, vanderbei1999extension,
mladineo1986algorithm, rahal2008new}).
It would be interesting to undertake a more detailed study comparing such algorithms with our method in the future.

In higher dimensions, we focus on the case where the objective $f$ is a $\beta$-Lipschitz-like
function over the standard simplex $\mathcal{D} = \Delta_d$. For this case, we present two algorithms for finding the global optimum of $f$.
For the first algorithm, we resort to a straightforward grid
search. Using the results of Ref.~\cite{deklerk2008simplexcomplexity}, one can
show that a grid of size $O(d^{\lceil 1/\delta^2 \rceil})$ suffices to converge to a
specified precision of $\epsilon > 0$, where $\delta$ is the largest number
satisfying $\beta(\delta) \leq \epsilon/2$ (see
Prop.~\ref{prop:simplex_grid_search_Lipschitzlike_optimization}).\footnote{Note the different scaling in $\delta$ in Prop.~\ref{prop:simplex_grid_search_Lipschitzlike_optimization} compared to that of $O(d^{\lceil 1/\sqrt{\delta} \rceil})$ mentioned in Ref.~\cite{deklerk2008simplexcomplexity}, which we believe to be erroneous.}
These results stand to demonstrate that one can, in principle, compute the
maximum of $f$ in polynomial time for a fixed precision.
Moreover, a simple observation about ordering the elements of the grid allows
for efficient construction of the grid, along with the possibility of
parallelizing the grid search.
Despite the possibility of polynomial
complexity (in dimension) and numerical improvements, grid search is still too inefficient to
be of practical use except for very small dimensions.

Another strategy we propose is to construct a ``dense" Lipschitz continuous
curve that gets close to each point of $\mathcal{D}$ to within some specified
distance. Such a strategy was adopted by Ref.~\cite{ziadi2001global} for
optimizing Lipschitz continuous functions over a hypercube, and is referred to
as Alienor method in the literature.
When $\mathcal{D}$ is the standard simplex in $d$ dimensions, we give a time
and memory efficient algorithm to construct such a curve. This allows us to
reduce the $d$-dimensional problem of optimizing $f$ over $\mathcal{D}$ to the
one-dimensional problem of optimizing it over an interval using the generated
curve. If $\alpha > 0$ is the largest number satisfying $\beta(\alpha) \leq
\epsilon/2$, this method takes $O(\alpha^{1 - d}/d)$ iterations in the worst
case for large dimensions. While this is much worse than a grid search in large
dimensions, in small dimensions this takes fewer iterations to
converge than grid search when the tolerance $\epsilon$ is small.
The dense curve algorithm also has the advantage that we can find an upper bound on the maximum by running the algorithm for a fixed number of time steps.
Furthermore, similar to grid search, the dense curve algorithm can be parallelized.
Nevertheless, we remark that both the grid search and dense curve algorithms are impractical for even moderately large dimensions. We detail these methods here in the hope that these engender the development of more practical algorithms for optimizing Lipschitz-like functions over the standard simplex in higher dimensions.

We also study the more general case when $\mathcal{D}$ is a compact and convex set in $\mathbb{R}^d$. To handle optimization in this general case, we seek to reduce it to a case that can be solved using known techniques. To that end, we show how to extend the objective function $f$ from the domain $\mathcal{D}$ to the full Euclidean space $\mathbb{R}^d$ while retaining the Lipschitz-like property when $\beta$ is itself Lipschitz-like (see Def.~\ref{def:Lipschitzlike_function_extension} and Prop.~\ref{prop:Lipschitzlike_function_extension}). The extended function $\overline{f}$ has the property that its optimum coincides with the optimum of $f$ when optimized over any set containing $\mathcal{D}$. Since we have the freedom to choose which set to optimize $\overline{f}$ over, there are different algorithms one can potentially use for this optimization. In this study, we find the maximum value of $f$ by optimizing its extension $\overline{f}$ over a hypercube containing $\mathcal{D}$. To perform this optimization, we resort to using dense curves because the convergence analysis is very similar to the case of the standard simplex $\mathcal{D} = \Delta_d$. The effective problem, as before, is one-dimensional and can be solved using the modified Piyavskii-Shubert algorithm. In general, this algorithm needs exponentially many iterations (with the dimension) to find the optimal solution to within a specified precision of $\epsilon > 0$ (see Prop.~\ref{prop:Lipschitzlike_optimization_dense_curve_compact_convex_set} for a precise statement). This exponential scaling with the dimension stems from the fact that we do not use any structure of $\mathcal{D}$ to construct the dense curve and instead rely on a curve generated for a hypercube. The $O(1/\epsilon^d)$ complexity cannot be improved in general without additional assumptions on $\mathcal{D}$ or the class of functions we optimize (see Prop.~\ref{prop:Lipschitzlike_optimization_dense_curve_compact_convex_set}). However, it might be possible to improve the other factors that scale exponentially in the algorithm as noted in App.~\ref{app:Lipschitzlike_optimization_compact_convex_domain}. An advantage of using this algorithm is that one can specify a fixed number of iterations to obtain an upper bound on the maximum.
Because optimizing $f$ over an arbitrary compact and convex domain is not directly relevant to our study of MACs, we relegate this discussion to App.~\ref{app:Lipschitzlike_optimization_compact_convex_domain}.

We now dive into details of the proposed algorithms.

\subsection{Optimizing Lipschitz-like functions over an interval using modified Piyavskii-Shubert algorithm\label{secn:Lipschitzlike_optimization_1D}}
We begin our study by presenting an algorithm for computing the maximum of any $\beta$-Lipschitz-like function $f$ over a closed interval $\mathcal{D} = [a, b]$. A pseudocode for this algorithm is given in Alg.~\ref{alg:maximize_Lipschitz_like_function_1D}.
\begin{algorithm}[ht]
    \begin{algorithmic}[1]
        \Function{maximize\_Lipshcitz-like\_function\_1D}{$\epsilon$}
            \State Initialize $q^{(0)} = a$
            \State Set $F_0(q) = f(q^{(0)}) + \beta(|q - q^{(0)}|)$ for $q \in [a, b]$
            \State Set $F \gets F_0$ and $q^* \gets b$
            \State Set $q^{(1)} \gets q^*$, $k \gets 1$
            \While{$F(q^*) - f(q^*) > \epsilon$}
                \State Sort $\{q^{(0)}, \dotsc, q^{(k)}\}$ from smallest to largest and relabel the points in ascending order.
                \State Define $F_i(q) = f(q^{(i)}) + \beta(|q - q^{(i)}|)$ for $0 \leq i \leq k$ \label{algln:bounding_function}
                \For{$i = 0, \dotsc, k - 1$}
                    \State Set $g_i(q) = F_i(q) - F_{i + 1}(q)$ \label{algln:root_finding1}
                    \State Find $\overline{q}^{(i)} \in [q^{(i)}, q^{(i + 1)}]$ such that $g_i(\overline{q}^{(i)}) = 0$ using any root finding method. \label{algln:root_finding2}
                \EndFor
                \State Pick an index $m \in \argmax_{0 \leq i \leq k - 1} F_i(\overline{q}^{(i)})$ and set $q^* = \overline{q}^{(m)}$.
                \State Update $F \gets F_m$
                \State Set $q^{(k + 1)} \gets q^*$, $k \gets k + 1$
            \EndWhile
            \State \textbf{return} $f(q^*)$
        \EndFunction
    \end{algorithmic}
    \caption{Computing the maximum of a function $f$ satisfying Eq.~\eqref{eqn:Lipschitz_like_function} for $\mathcal{D} = [a, b]$, given $\epsilon > 0$}
    \label{alg:maximize_Lipschitz_like_function_1D}
\end{algorithm}

We refer to the function $F_i$ defined in line~\ref{algln:bounding_function} of Alg.~\ref{alg:maximize_Lipschitz_like_function_1D} as a bounding function, since $f(q) \leq F_i(q)$ for all $q \in [a, b]$ and all $i$. Note that $F_i(q^{(i)}) = f(q^{(i)})$ for all $i$. The bounding function $F_i$ depends non-trivially on the argument $q$ only through the function $\beta(|q - q^{(i)}|)$. We compute the optimum of the function $f$ by maximizing these bounding functions, which is an easier problem because $\beta$ is continuous and monotonic. Essentially, the algorithm does the following. Suppose that at the $K$th time step, we have the iterates $q^{(0)}, \dotsm, q^{(K)}$ sorted in the ascending order $a = q^{(0)} \leq q^{(1)} \leq \dotsm \leq q^{(K)} = b$. Then in each of the intervals $[q^{(k)}, q^{(k + 1)}]$ for $k \in \{0, \dotsc, K - 1\}$ we compute the point $\overline{q}^{(k)}$ attaining the maximum of the bounding function $\min\{F_k, F_{k + 1}\}$ (see lines~\ref{algln:root_finding1}, \ref{algln:root_finding2} in Alg.~\ref{alg:maximize_Lipschitz_like_function_1D}). Our next iterate $q^{(K + 1)}$ is chosen to be in $\argmax\{\overline{q}^{(0)}, \dotsc, \overline{q}^{(K - 1)}\}$. In the next iterate, since $F_{K + 1}(q^{(K + 1)}) = f(q^{(K + 1)})$, we have essentially tightened the upper bound on the function $f$. Proceeding this way, one can verify that the algorithm eventually approximates the function $f$ from above well enough. A schematic of this procedure is shown in Fig.~\ref{fig:lipschitzlike_opt_schematic}.

\begin{figure}[ht]
    \begin{center}
        \includegraphics[width=0.85\textwidth]{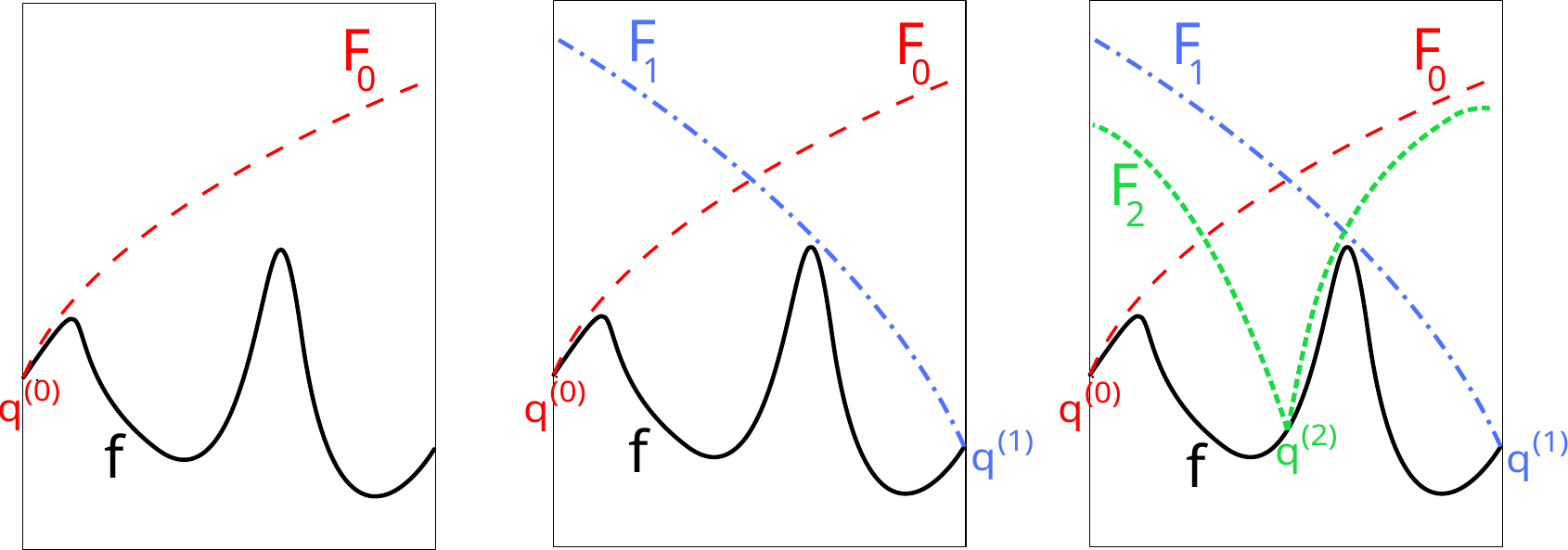}
    \end{center}
    \caption{A schematic of iterations of Alg.~\ref{alg:maximize_Lipschitz_like_function_1D} for optimizing a Lipschitz-like function $f$.}
    \label{fig:lipschitzlike_opt_schematic}
\end{figure}

Indeed, we show in Prop.~\ref{prop:Lipshitz_like_maximization_analysis} that Alg.~\ref{alg:maximize_Lipschitz_like_function_1D} is guaranteed to converge to the global maximum within an error of $\epsilon$. This convergence takes at most $\lceil(b - a)/\delta\rceil$ in the worst-case, where $\delta > 0$ is the largest number satisfying $\beta(\delta) \leq \epsilon/2$. Since we find successively better upper bounds on the objective, Alg.~\ref{alg:maximize_Lipschitz_like_function_1D} can be modified so that it accepts a fixed number of iterations instead of a precision, and outputs the upper bound $F(q^*)$ on the maximum of $f$. This upper bound has an error of at most $F(q^*) - f(q^*)$.

When the domain $\mathcal{D} = \Delta_2$ is the standard simplex in $2$-dimensions, we can parameterize any $x \in \Delta_2$ as $x = (q, 1 - q)$ with $q \in [0, 1]$. Furthermore, we have $\norm{x - y}_1 = 2 |p - q|$ for $x, y \in \Delta_2$ parametrized as $x = (q, 1 - q)$ and $y = (p, 1 - p)$. Thus, taking $a = 0$, $b = 1$, and replacing $|q - q^{(i)}|$ with $2|q - q^{(i)}|$ in Alg.~\ref{alg:maximize_Lipschitz_like_function_1D}, we get an algorithm for optimizing $f$ over $\Delta_2$.

Next, we present algorithms to optimize Lipschitz-like functions in higher dimensions. Our main focus will be on optimizing Lipschitz-like functions over the standard simplex $\Delta_d$ for $d \geq 3$.

\subsection{Optimizing Lipschitz-like function over the standard simplex\label{secn:Lipschitzlike_optimization_simplex}}
We present two algorithms to optimize Lipschitz-like functions over $\mathcal{D} = \Delta_d$. The first algorithm is a simple grid search, whereas the second algorithm uses dense curves to fill $\mathcal{D}$. We describe these algorithms, corresponding convergence guarantees and practical implementation in detail.

\subsubsection{Optimization using grid search\label{secn:Lipschitzlike_optimization_simplex_grid_search}}
Our goal in this section is to optimize a $\beta$-Lipschitz-like function $f$ over the standard simplex $\mathcal{D} = \Delta_d$. Based on the results of Ref.~\cite{deklerk2008simplexcomplexity}, we present a grid search method for finding the maximum of $f$ over $\mathcal{D}$. We begin by defining the integer grid
\begin{equation}
    \mathcal{I}_{d, N} = \left\{n \in \mathbb{N}^d \mid \sum_{i = 1}^d n_i = N\right\}, \label{eqn:integer_grid_simplex}
\end{equation}
where $d \in \mathbb{N}_+$ denotes the dimension and $N \in \mathbb{N}_+$ denotes the size of the integer grid. This grid has
\begin{equation}
    N_{\text{grid}} = \binom{N + d - 1}{d - 1} \label{eqn:simplex_grid_cardinality}
\end{equation}
elements because each element of $\mathcal{I}_{d, N}$ can be obtained by arranging $N$ ones into $d$ coordinates. From $\mathcal{I}_{d, N}$, we obtain the grid
\begin{equation}
    \Delta_{d, N} = \left\{\frac{n}{N} \mid n \in \mathcal{I}_{d,N}\right\} \label{eqn:simplex_grid}
\end{equation}
for the standard simplex $\Delta_d$. Note that $\Delta_{d, N}$ is just a rescaled version of $\mathcal{I}_{d, N}$, i.e., $\Delta_{d, N} = \mathcal{I}_{d, N}/N$.

The authors of Ref.~\cite{deklerk2008simplexcomplexity} propose to search the grid $\Delta_{d, N}$ in order to optimize H\"older continuous functions over the standard simplex. Their results are based on approximations of the function $f$ using Bernstein polynomials, and are in fact general enough to handle the optimization of Lipschitz-like functions (see Prop.~\ref{prop:simplex_grid_search_Lipschitzlike_optimization} for details). Such an approach to compute the maximum of $f$ to a specified precision $\epsilon > 0$ is summarized in Alg.~\ref{alg:maximize_Lipschitz_like_function_simplex_grid_search}.
\begin{algorithm}[H]
    \begin{algorithmic}[1]
        \Function{maximize\_Lipshcitz-like\_function\_simplex}{$d$, $\beta$, $\epsilon$}
            \State Set $N = \lceil 1/\delta^2 \rceil$, where $\delta$ is the largest number satisfying $\beta(\delta) \leq \epsilon/2$
            \State Construct the grid $\Delta_{d, N}$
            \State \textbf{return} $f^* = \max\{f(x) \mid x \in \Delta_{d, N}\}$
        \EndFunction
    \end{algorithmic}
    \caption{Computing the maximum of a function $f$ satisfying Eq.~\eqref{eqn:Lipschitz_like_function} for $\mathcal{D} = \Delta_d$, given $\epsilon > 0$}
    \label{alg:maximize_Lipschitz_like_function_simplex_grid_search}
\end{algorithm}

In Prop.~\ref{prop:simplex_grid_search_Lipschitzlike_optimization}, we show that Alg.~\ref{alg:maximize_Lipschitz_like_function_simplex_grid_search} computes the maximum of $f$ to a precision $\epsilon > 0$ in $N_{\text{grid}} = \binom{N + d - 1}{d - 1}$ time steps, where $N = \lceil 1/\delta^2 \rceil$ and $\delta$ is the largest number satisfying $\beta(\delta) \leq \epsilon/2$. For fixed $\epsilon > 0$ and $d \gg N$, this amounts to $O(d^{\lceil 1/\delta^2 \rceil})$ iterations. In other words, we can find the optimum of $f$ in polynomial time for a fixed precision. We note that if $\beta$ achieves the value $\epsilon/2$, we can obtain $\delta$ by solving $\beta(\delta) = \epsilon/2$ using bisection (or any other root finding method) because $\beta$ is continuous and monotonically increasing.

The crucial step in implementing the above algorithm is computing the grid $\Delta_{d, N}$ efficiently. For that purpose, note that we can write any element $x \in \Delta_{d, N}$ as $x = (N - n_{d - 1}, n_{d - 1} - n_{d - 2}, \dotsc, n_2 - n_1, n_1)/N$, where $0 \leq n_i \leq n_{i + 1} \leq N$ for $i \in [d - 2]$ (see Prop.~\ref{prop:simplex_grid} for a proof). Thus, the elements of $\Delta_{d, N}$ can be computed iteratively, with the total number of iterations equalling $N_{\text{grid}}$. This also allows for parallelizing the search over the grid. The exact algorithm we use to query the elements of the grid orders the elements such that every consecutive element is equidistant with respect to $l_1$-norm. This approach is explained in the next section and allows for easy parallelization.

\subsubsection{Optimization using dense curves\label{secn:Lipschitzlike_optimization_simplex_dense_curve}} 
Next, we outline a method to optimize Lipschitz-like functions over $\mathcal{D} = \Delta_n$ by filling $\mathcal{D}$ with an $\alpha$-dense Lipschitz curve. We propose this method as a way to reduce the total number of iterations required for finding the optimum to a small additive error in practice compared to grid search. Such a strategy of using dense curves was outlined in Ref.~\cite{ziadi2001global} to optimize Lipschitz continuous functions over a hypercube.

\begin{definition}[$\alpha$-dense curve]
\label{def:dense_curve}
Given a number $\alpha > 0$, numbers $a, b \in \mathbb{R}$, and a nonempty set $\mathcal{S} \subseteq \mathbb{R}^n$, a function $\gamma\colon [a, b] \to \mathcal{S}$ is said to be an $\alpha$-dense curve in the norm $\norm{\cdot}$ if for any $x \in \mathcal{S}$, we can find some $\theta \in [a, b]$ such that $\norm{\gamma(\theta) - x} \leq \alpha$~\cite{ziadi2001global}.

The curve $\gamma$ is said to be $\beta_\gamma$-Lipschitz-like if there is some non-negative, continuous, monotonically increasing function $\beta_\gamma\colon \mathbb{R}_+ \to \mathbb{R}$ with $\beta_\gamma(0) = 0$ such that for any $\theta, \theta' \in [a, b]$, we have $\norm{\gamma(\theta) - \gamma(\theta')} \leq \beta_\gamma(|\theta - \theta'|)$.
\end{definition}
The numbers $a, b \in \mathbb{R}$ are the end points of the interval over which the curve is defined. In this section, we will focus on constructing an $\alpha$-dense curve for the standard simplex $\mathcal{S} = \Delta_d$ in $l_1$-norm, where $\alpha = 2(d - 1)/N$. Here, $d$ is the dimension and $N$ is a positive integer that controls the value of $\alpha$. We achieve this by finding a way to efficiently connect the points of the grid $\mathcal{I}_{d, N}$ defined in Eq.~\eqref{eqn:integer_grid_simplex} in the previous section. Towards this end, we define the following ordering of the grid $\mathcal{I}_{d, N}$.
\begin{definition}[Equidistant ordering of $\mathcal{I}_{d, N}$]
    \label{def:equidistant_ordering_simplex_grid}
    Given $d, N \in \mathbb{N}_+$ with $d \geq 2$, let $\mathcal{I}_{d, N}$ be the grid defined in Eq.~\eqref{eqn:simplex_grid}. For $d = 2$, define the forward ordering $\mathcal{I}_{2, N} = \{(N, 0), (N - 1, 1), \dotsc, (1, N - 1), (0, N)\}$ and the reverse ordering $\mathcal{I}_{2, N} = \{(0, N), (1, N - 1), \dotsc, (N - 1, 1), (N, 0)\}$. For $d \geq 3$, define the forward ordering inductively as follows. Start with $\mathcal{I}_{d, N} = \varnothing$. For each $n_{d - 1} \in [N]$, forward order the elements of $\mathcal{I}_{d - 1, n_{d - 1}}$ if $n_{d - 1}$ is odd, and reverse order them if $n_{d - 1}$ is even. Append the elements $(N - n_{d - 1}, n_{d - 1} - n_{d - 2}, \dotsc, n_2 - n_1, n_1)$ with $(n_{d - 1} - n_{d - 2}, \dotsc, n_2 - n_1 n_1) \in \mathcal{I}_{d - 1, n_{d - 1}}$ as ordered above to the (ordered) set $\mathcal{I}_{d, N}$. Reverse ordering of $\mathcal{I}_{d, N}$ corresponds to writing the forward ordered set in the reverse order.
\end{definition}

We remark that $d = 2$ is just a special case of the definition showing the basis step for induction. It can be seen that the first element of $\mathcal{I}_{d, N}$ is $(N, 0, \dotsc, 0)$ and the last element is $(0, \dotsc, 0, N)$ if $\mathcal{I}_{d, N}$ is forward ordered (and the opposite is true if the elements are reverse ordered). In Prop.~\ref{prop:simplex_grid}, we show that $\mathcal{I}_{d, N}$ can be ordered in this manner and the distance between any two consecutive elements of $\mathcal{I}_{d, N}$ according to this ordering is $2$ measured in $l_1$-norm. By default, we work with forward ordering unless specified otherwise. As an example, the (forward) ordering of $\mathcal{I}_{3, 3}$ is given by
\begin{equation*}
  \mathcal{I}_{3, 3} = \{(3, 0, 0), (2, 1, 0), (2, 0, 1), (1, 0, 2), (1, 1, 1), (1, 2, 0), (0, 3, 0), (0, 2, 1), (0, 1, 2), (0, 0, 3)\}.
\end{equation*}
It can be seen that any two consecutive elements have an $l_1$-norm distance of $2$. Since the grid $\Delta_{d, N}$ defined in Eq.~\eqref{eqn:simplex_grid} is just a scaled version of $\mathcal{I}_{d, N}$, this ordering also applies to $\Delta_{d, N}$.

To generate a dense curve filling $\Delta_d$, we make use of the grid $\Delta_{d, N}$ along with the ordering described above. However, $\Delta_{d, N}$ has $N_{\text{grid}} = \binom{N + d - 1}{d - 1}$ elements, and therefore, it can get very expensive to store this grid in memory even if $N$ and $d$ are only moderately large (for example, $N = d = 20$ gives $N_{\text{grid}} \approx 7 \times 10^{10}$). For this purpose, we develop an algorithm that can efficiently query the elements of ordered $\Delta_{d, N}$ without explicitly constructing the set. We skip the details of this algorithm in the study, but include an implementation of the algorithm on Github (see Sec.~\ref{secn:code_availability}). Owing to such an approach, the resulting construction is both time and memory efficient, i.e.,  $\gamma(\theta)$ can be computed efficiently for a curve $\gamma$. We describe the construction of the curve below.
\begin{algorithm}[H]
    \begin{algorithmic}[1]
        \Function{construct\_dense\_curve\_standard\_simplex}{$d$, $N$, $\theta$}
            \State Compute the index $k = \lfloor \theta N/2 \rfloor$
            \State Compute $t = 1 + k - \theta N / 2$
            \State Obtain the grid points $x_k, x_{k + 1} \in \Delta_{d, N}$, where the elements of $\Delta_{d, N}$ are ordered as per Def.~\eqref{def:equidistant_ordering_simplex_grid}
            \State \textbf{return} $t x_k + (1 - t) x_{k + 1}$
        \EndFunction
    \end{algorithmic}
    \caption{Construct an $\alpha$-dense curve that fills the simplex $\Delta_d$ for $\alpha = 2(d - 1)/N$, $N \in \mathbb{N}_+$ and $\theta \in [0, L_{\text{curve}}]$, where $L_{\text{curve}}$ is given in Eq.~\eqref{eqn:length_dense_curve_simplex}}
    \label{alg:construct_simplex_dense_curve}
\end{algorithm}
We now explain the reasoning behind Alg.~\ref{alg:construct_simplex_dense_curve}. We construct the $(2(d - 1)/N)$-dense curve $\gamma$ by joining the consecutive points of the ordered grid $\Delta_{d, N}$. Since every two consecutive points in the ordered grid $\Delta_{d, N}$ are $2/N$ distance apart in $l_1$-norm, the total length of this curve measured in $l_1$-norm is
\begin{equation}
    L_{\text{curve}} = \frac{2}{N} N_{\text{grid}} = \frac{2}{N} \binom{N + d - 1}{d - 1}.\label{eqn:length_dense_curve_simplex}
\end{equation}
In order to obtain $\gamma(\theta)$ efficiently for a given $\theta \in [0, L_{\text{curve}}]$, we first compute the grid points $x_k$ and $x_{k + 1}$ for which $\gamma(\theta)\in [x_k, x_{k + 1}]$. In particular, we have $\gamma(\theta) = t x_k + (1 - t) x_{k + 1}$ for some $t \in [0, 1]$. In order to compute $t$, we note that $\norm{\gamma(\theta) - x_k}_1 = \theta - 2k/N$, where the RHS is the length of the curve at $\theta$ minus the length of the curve at grid point $x_k$ (which is $2k/N$). Since $\norm{\gamma(\theta) - x_k}_1 = (1 - t) \norm{x_{k + 1} - x_k}_1 = 2 (1 - t)/N$, we obtain $t = 1 + k - \theta N / 2$. Since $\theta N/2 - 1 \leq k \leq \theta N /2$, we have $0 \leq t \leq 1$. In Prop.~\ref{prop:Lipschitzlike_optimization_dense_curve_simplex}, we show that the curve constructed using Alg.~\ref{alg:construct_simplex_dense_curve} is $(2(d - 1)/N)$-dense and satisfies the property
\begin{equation}
    \norm{\gamma(\theta) - \gamma(\theta')}_1 \leq \min\{|\theta - \theta'|, 2\}. \label{eqn:simplex_dense_curve_Lipschitz}
\end{equation}
That is, the curve $\gamma$ is $\beta_\gamma$-Lipschitz-like with $\beta_\gamma(x) = \min\{x, 2\}$. In particular, $\gamma$ is Lipschitz continuous with Lipschitz constant $1$.

The essential idea behind the optimization of Lipschitz-like functions using an $\alpha$-dense curve is as follows. If the objective function $f\colon \mathcal{D} \to \mathbb{R}$ is Lipschitz-like and $\gamma\colon [a, b] \to \mathcal{D}$ is an $\alpha$-dense Lipschitz-like curve, then the function $f \circ \gamma\colon [a, b] \to \mathbb{R}$ is also Lipschitz-like. Therefore, one can optimize the function $f \circ \gamma$ using Alg.~\ref{alg:maximize_Lipschitz_like_function_1D}. In order to converge to the optimum within a precision of $\epsilon > 0$, the constant $\alpha$ must be chosen appropriately. The exact procedure is outlined in Alg.~\ref{alg:maximize_Lipschitz_like_function_simplex_dense_curve}.
\begin{algorithm}[H]
    \begin{algorithmic}[1]
        \Function{maximize\_Lipshcitz-like\_function\_simplex}{$d$, $\beta$, $\epsilon$}
            \State Compute the largest number $\alpha > 0$ such that $\beta(\alpha) \leq \epsilon/2$
            \State Set $N = \lceil 2(d - 1)/\alpha \rceil$
            \State Construct the $(2(d - 1)/N)$-dense curve $\gamma\colon [0, L_{\text{curve}}] \to \mathcal{D}$ as per Alg.~\ref{alg:construct_simplex_dense_curve}
            \State Compute the maximum $g^*$ of $g = f \circ \gamma$ over $[0, L_{\text{curve}}]$ to a precision of $\epsilon/2$ using Alg.~\ref{alg:maximize_Lipschitz_like_function_1D}
            \State \textbf{return} $g^*$
        \EndFunction
    \end{algorithmic}
    \caption{Computing the maximum of a $\beta$-Lipschitz-like function $f$ satisfying Eq.~\eqref{eqn:Lipschitz_like_function} for $\mathcal{D} = \Delta_d$, given $\epsilon > 0$}
    \label{alg:maximize_Lipschitz_like_function_simplex_dense_curve}
\end{algorithm}
In Prop.~\ref{prop:Lipschitzlike_optimization_dense_curve_simplex}, we show that Alg.~\ref{alg:maximize_Lipschitz_like_function_simplex_dense_curve} is guaranteed to converge to the maximum of $f$ to within a precision of $\epsilon > 0$. This takes $\lceil 2 \binom{N + d - 1}{d - 1} / N\alpha \rceil$ iterations in the worst case. For large dimensions and fixed precision, this number scales as $O(\alpha^{1 - d}/d)$ with the dimension.

One can parallelize the above algorithm by breaking the interval $[0, L_{\text{curve}}]$ into finitely many sub-intervals and optimizing the function over each interval separately. The final maximum can be obtained by taking the maximum of the maximum values computed for each sub-interval. Since we only break the interval into finitely many sub-intervals, the guarantees given by Prop.~\ref{prop:Lipschitzlike_optimization_dense_curve_simplex} remain valid.

\subsubsection{Numerical examples}
For verifying the performance of the grid search algorithm (Alg.~\ref{alg:maximize_Lipschitz_like_function_simplex_grid_search}) and the algorithm based on dense curves (Alg.~\ref{alg:maximize_Lipschitz_like_function_simplex_dense_curve}), we present two numerical examples in Tab.~\ref{tab:Lipschitzlike_maximization_simplex_examples} where the maximum over the simplex can be computed exactly. The functions considered in both these examples are Lipschitz continuous with respect to the $l_1$-norm. It can be seen that both these methods compute the maximum to within the specified precision. For small dimensions and small values of tolerance $\epsilon$, we find that the algorithm based on dense curves requires far fewer iterations than grid search. As a result, the dense curve algorithm converges much faster than grid search for the examples presented in Tab.~\ref{tab:Lipschitzlike_maximization_simplex_examples}. On the other hand, for larger values of tolerance and higher dimensions, grid search can converge faster since its complexity scales polynomially with the dimension whereas the complexity of dense curve algorithm scales exponentially in the worst case. Even so, the power of the polynomial in complexity of grid search can be large. Therefore, as the dimension $d$ increases, both these algorithms become infeasible to implement in practice.

\begin{table}[ht]
    \begin{center}
        \begin{tabular}{ccccc}
            \toprule
            \textbf{function} & \textbf{algorithm} & \textbf{true maximum} & \textbf{computed maximum} & \textbf{iterations} \\
            \midrule
            \multirow{2}{*}{$f(x) = \sin(\norm{x}_2)$} & grid search & 0.841 & 0.841 & 16290 \\
                                                       & dense curve & 0.841 & 0.836 & 155 \\
            \midrule
            \multirow{2}{*}{$f(x) = -\frac{\norm{x}_2^3}{6} + \frac{\norm{x}_2^2}{4} - \frac{\norm{x}_2}{6\pi}$} & grid search & 0.033 & 0.033 & 19900 \\
                                                                                                                 & dense curve & 0.033 & 0.033 & 480 \\
            \bottomrule
        \end{tabular}
    \end{center}
    \caption{Computing the maximum of Lipschitz-continuous functions $f$ over the standard simplex $\Delta_d \subseteq \mathbb{R}^d$ in dimension $d = 3$ to a precision of $\epsilon = 0.15$. Grid search (Alg.~\ref{alg:maximize_Lipschitz_like_function_simplex_grid_search}) and the algorithm based on dense curves (Alg.~\ref{alg:maximize_Lipschitz_like_function_simplex_dense_curve}) are used to compute the maximum numerically. Numerical values are rounded to three decimal places.}
    \label{tab:Lipschitzlike_maximization_simplex_examples}
\end{table}

Details on how one can use $\alpha$-dense curves to optimize a Lipschitz-like function over any compact and convex domain is given in App.~\ref{app:Lipschitzlike_optimization_compact_convex_domain}.

\section{Sum capacity computation of two-sender MACs\label{secn:sum_capacity_computation_algorithm}}
\subsection{Sum capacity computation as optimization of a Lipschitz-like function\label{secn:sum_capacity_computation_Lipschitz_like}}
Our focus in this part is computing the sum capacity of arbitrary $2$-sender MACs. As noted in the preliminary Sec.~\ref{secn:preliminaries_MAC}, the optimization involved in computing the sum capacity is non-convex. Unfortunately, the approach via convex relaxation commonly pursued is, in general, not a viable strategy for bounding the sum capacity of MACs. The family of two-sender MACs constructed in Sec.~\ref{secn:relaxedALcapacity_sumcapacity_separation} provides a striking example of this claim, where we show that the difference between the convex relaxation and the actual sum capacity can be made arbitrarily large. In light of this result, developing algorithms to compute, or at least better approximate, the sum capacity becomes important, and we undertake this task here.

As discussed in Sec.~\ref{secn:preliminaries_MAC}, a two-sender MAC with
input alphabets $\BC_1$ and $\BC_2$ and output alphabet $\ZC$ is described by
a transition matrix $\NC$. Each entry of this matrix, $\NC(z|b_1,b_2)$, where
$z \in \ZC, b_1 \in \BC_1$, and $b_2 \in \BC_2$, represents the probability
that the channel output takes a value $z$ when the first and second channel
inputs take values $b_1$ and $b_2$, respectively.
We introduce new notation for this sub-section. Let $d_1, d_2$, and $\dout$
denote $|\BC_1|, |\BC_2|,$ and $|\ZC|$ respectively. We write $p(b_1) \in
\Delta_{d_1}$, but refer to $p(b_2)$ by $q(b_2)$ where $q(b_2) \in
\Delta_{d_2}$.
In this notation, the sum capacity~\eqref{eqn:sum_capacity} of the two-sender
MAC $\NC$ takes the form
\begin{equation}
    S(\NC) = \max_{p(b_1, b_2)} I(B_1, B_2; Z) \quad \text{such that} \quad p(b_1,b_2) = p(b_1) q(b_2),
    \label{eq:twoSenderSumCap}
\end{equation}
where $B_1, B_2$, and $Z$ are random variables that describe the channel's
first input, second input, and output respectively.
For fixed $\NC(z|b_1, b_2)$, the mutual information $I(B_1, B_2; Z)$ function
is concave in the argument $p(b_1, b_2)$. On the other hand, the set of joint
distributions which satisfy the product constraint, $p(b_1, b_2) = p(b_1)
p(b_2)$, is not convex.
This lack of convexity turns the maximization
in Eq.~\eqref{eq:twoSenderSumCap} into a non-convex problem.

Our approach to solving the non-convex problem~\eqref{eq:twoSenderSumCap} is
to move the non-convexity from the constraint to the objective function. Instead
of maximizing over the set of product distributions, $p(b_1, b_2) = p(b_1)
q(b_2)$, we maximize sequentially, that is, we write
\begin{equation}
    S(\mathcal{N}) = \max_{q \in \Delta_{d_2}} \max_{p \in \Delta_{d_1}} I(B_1, B_2; Z).
    \label{eq:twosendersumcapre}
\end{equation}
Now, both the inner and outer maximization in Eq.~\eqref{eq:twosendersumcapre} are
carried out over convex sets $\Delta_{d_1}$ and $\Delta_{d_2}$, respectively. 
To carry out these optimizations we derive certain convenient expressions. The
output probability distribution $p^Z$ over $\ZC$ can be written as
\begin{align}
    p^Z(z) &= \sum_{b_1 \in \BC_1} A_q(z, b_1) p(b_1) \label{eqn:outputprob_twosenderMAC_productinput}
    \intertext{where}
    A_q(z, b_1) &= \sum_{b_2 \in \BC_2} \mathcal{N}(z | b_1, b_2) q(b_2) \label{eqn:outputprob_twosenderMAC_productinput_Aq}
\end{align}
for $z \in \ZC$ and $b_1\in\BC_1$. Note that $A_q$ can be considered as a left stochastic matrix
of size $\dout \times d_1$, i.e., every entry of $A_q$ is non-negative and the
columns sum to $1$.  One can view
Eq.~\eqref{eqn:outputprob_twosenderMAC_productinput} as a vector equation $p^Z =
A_q p$, where $p \in \Delta_{d_1}$ and $p^Z \in \Delta_{\dout}$. The mutual
information $I(B_1, B_2; Z)$ can be written as
\begin{equation*}
    I(B_1, B_2; Z) = H(Z) - \sum_{b_1, b_2} p(b_1) q(b_2) H(Z | B_1 = b_1, B_2 = b_2).
\end{equation*}
To express the mutual information in terms of vectors and matrices, we define a $d_1$-dimensional vector $b_q$ with non-negative components $b_q(b_1)$ for $b_1 = 1,\dots,d_1$, where
\begin{equation}
    b_q(b_1) = -\sum_{b_2 \in \mathcal{B}_2} q(b_2) \sum_{z \in \mathcal{Z}} \mathcal{N}(z | b_1, b_2) \log(\mathcal{N}(z | b_1, b_2)).
    \label{eqn:outputprob_twosenderMAC_productinput_bq}
\end{equation}
This allows us to express the mutual information compactly as
\begin{equation}
    I(p, q) \equiv I(B_1, B_2; Z) = H(A_q p) - \ip{b_q, p}, \label{eqn:mutualinfo_twosenderMAC_productinput}
\end{equation}
and the sum capacity as
\begin{equation}
    S(\mathcal{N}) = \max_{q \in \Delta_{d_2}} \max_{p \in \Delta_{d_1}} \left\lbrace H(A_q p) - \ip{b_q, p}\right\rbrace.
    \label{eqn:sum_capacity_twosenderMAC}
\end{equation}
The inner optimization over $p \in \Delta_{d_1}$ is a convex optimization
problem because $I(p, q)$ is a concave function of $p$ for fixed $q$.
Since $I(p, q)$ is not jointly concave over $(p, q)$, the function
\begin{equation}
    I^*(q) = \max_{p \in \Delta_{d_1}} \left(H(A_q p) - \ip{b_q, p}\right) 
    \label{eqn:outer_optimization_objective_twosenderMAC}
\end{equation}
is not concave in general. Therefore, the outer optimization of $I^*(q)$ over
$q \in \Delta_{d_2}$, i.e., $S(\mathcal{N}) = \max_{q \in \Delta_{d_2}} I^*(q)$, is in general a nonconvex problem.

Nevertheless, we show that the non-concave function $I^*(q)$ is a Lipschitz-like function as defined in Eq.~\eqref{eqn:Lipschitz_like_function}.
To elaborate, this means there is some real-valued function $\beta_I$ that is non-negative, continuous, and monotonically increasing with $\beta_I(0) = 0$, such that $|I^*(q) - I^*(q')| \leq \beta_I(\norm{q - q'}_1)$.
We show that $I^*(q)$ indeed satisfies such a property by proving an appropriate continuity bound for $I^*(q)$.
This is summarized in the following result.

\begin{proposition}
    \label{prop:mutual_information_maximum_Lipschitz_like}
    Let $\mathcal{N}$ be any two-sender MAC with input alphabets $\mathcal{B}_1$, $\mathcal{B}_2$ of size $d_1$, $d_2$, and output alphabet $\mathcal{Z}$ of size $\dout$. Assume that $d_1, d_2, \dout \geq 2$. Given input probability distributions $p$ over $\mathcal{B}_1$ and $q$ over $\mathcal{B}_2$, the mutual information between the inputs and the output of the MAC can be written as
    \begin{equation*}
        I(p, q) = H(A_q p) - \ip{b_q, p}
    \end{equation*}
    where the matrix $A_q$ and the vector $b_p$ are defined in Eq.~\eqref{eqn:outputprob_twosenderMAC_productinput_Aq} and Eq.~\eqref{eqn:outputprob_twosenderMAC_productinput_bq} respectively. Define the function
    \begin{equation*}
        I^*(q) = \max_{p \in \Delta_{d_1}} I(p, q).
    \end{equation*}
    Then, for any $p \in \Delta_{d_1}$, and any $q, q' \in \Delta_{d_2}$, we have
    \begin{equation*}
        |I(p, q) - I(p, q')| \leq \beta_I\left(\norm{q - q'}_1\right),
    \end{equation*}
    and subsequently,
    \begin{equation}
        |I^*(q) - I^*(q')| \leq \beta_I\left(\norm{q - q'}_1\right). \label{eq:L-Like}
    \end{equation}
    The function $\beta_I$ is defined as
    \begin{align}
        \beta_I(x) &= \left(\frac{1}{2} \log(\dout - 1) + H_{\mathcal{N}}^{\max}\right) x + \overline{h}\left(\frac{x}{2}\right), \label{eqn:beta_twosenderMAC}
        \intertext{where}
        H_{\mathcal{N}}^{\max} &= \max_{a_1 \in \mathcal{A}_1, a_2 \in \mathcal{A}_2} \left\lbrace -\sum_{z \in \mathcal{Z}} \mathcal{N}(z | a_1, a_2) \log(\mathcal{N}(z | a_1, a_2))\right\rbrace, \label{eqn:HNmax_twosenderMAC}
        \intertext{and}
        \overline{h}(x) &= \begin{cases} -x\log(x) - (1 - x)\log(1 - x) \hspace{0.1cm} \text{ if } x \leq \frac{1}{2} \\
                                         \log(2)                        \hspace{4.1cm} \text{ if } x \geq \frac{1}{2}
                           \end{cases} \nonumber
    \end{align}
    is the modified binary entropy defined in Eq.~\eqref{eqn:modified_binary_entropy}.
\end{proposition}
\begin{proof}
    See Appendix~\ref{proof:mutual_information_maximum_Lipschitz_like}.
\end{proof}

This observation is important in the sense that it allows for off-the-shelf use of algorithms developed in Sec.~\ref{secn:Lipschitzlike_optimization} for optimizing any Lipschitz-like function in order to compute the sum capacity of two-sender MACs. Following this line of approach, we will begin by developing an efficient algorithm for computing the sum capacity of any two-sender MAC where one of the input alphabets is of size $2$. As a result, we can efficiently compute the sum capacity of a large family of MACs that includes all binary MACs. Next, we will develop two algorithms for computing the sum capacity of an arbitrary two-sender MAC. The first algorithm is important from a theoretical standpoint, and shows that sum capacity of an arbitrary two-sender MAC can be computed in quasi-polynomial time. However, it can be costly to implement in practice. Our second algorithm can be faster to run in practice, at least when one of the MAC input alphabet sizes is small, but it suffers from exponential complexity in the worst case.

\subsection{Computing the sum capacity of a two-sender MAC with one input alphabet of size 2\label{secn:sum_capacity_computation_algorithm_inputsize2}}
As before, we take $\mathcal{N}$ to be a two-sender MAC with input alphabets $\mathcal{B}_1$, $\mathcal{B}_2$ of sizes $d_1$, $d_2$ and an output alphabet $\mathcal{Z}$ of size $\dout$. In this section, we focus on the case where at least one of $d_1$ or $d_2$ is equal to $2$. For concreteness, take $d_2 = 2$ (this choice is inconsequential for the algorithm).

For this case, any probability distribution $q_s \in \Delta_{d_2}$ can be expressed as $q_s = (s, 1 - s)$ for some $0 \leq s \leq 1$. Thus, the maximization over $q_s \in \Delta_2$ in computing the sum capacity $S(\mathcal{N}) = \max_{q_s \in \Delta_{d_2}} I^*(q_s)$ is essentially one-dimensional. In other words, considering the objective $I^*(s) \coloneqq I^*(q_s)$ as a function of $s$, we can write the sum capacity for any MAC $\mathcal{N}$ with $d_2 = 2$ as
\begin{equation*}
    S(\mathcal{N}) = \max_{s \in [0, 1]} I^*(s).
\end{equation*}
We will show that the mutual information $I^*(s)$ considered as a function of $s$ is still a Lipschitz-like function. First, observe that for $q_s = (s, 1 - s)$, we have $\norm{q_s - q_{s'}}_1 = 2|s - s'|$. Then, $|I^*(s) - I^*(s')| = |I^*(q_s) - I^*(q_{s'})| \leq \beta_I(\norm{q_s - q_{s'}}_1) = \beta_I(2 |s - s'|)$. Therefore, taking $\beta(x) = \beta_I(2 x)$, we find that $I^*$ as a function of $s$ is $\beta$-Lipschitz-like. Subsequently, we can use modified Piyavskii-Shubert algorithm developed in Sec.~\ref{secn:Lipschitzlike_optimization_1D} to compute the maximum $\max_{s \in [0, 1]} I^*(s)$ to any given precision.

For the convenience of the reader, we rewrite Alg.~\ref{alg:maximize_Lipschitz_like_function_1D} specifically for the purpose of computing the sum capacity of a two-sender MAC $\mathcal{N} $when $d_2 = 2$. This sum capacity computation algorithm is summarized in Alg.~\ref{alg:compute_sum_capacity_twosenderMACs_inputsize2} below. We note that similar to Alg.~\ref{alg:maximize_Lipschitz_like_function_1D}, one can modify Alg.~\ref{alg:compute_sum_capacity_twosenderMACs_inputsize2} so that it accepts a fixed number of iterations and outputs the value $F(s^*)$, which is an upper bound on the sum capacity. This upper bound exceeds the sum capacity by at most $F(s^*) - I^*(s^*)$.
\begin{algorithm}[ht]
    \begin{algorithmic}[1]
        \Function{compute\_sum\_capacity}{$\mathcal{N}, d_1, \epsilon$}
        \State Initialize $s^{(0)} = 0$
        \State Define $F_0(s) = I^*(s^{(0)}) + \beta_I(2|s - s^{(0)}|)$ for $s \in [0, 1]$
            \State Set $F \gets F_0$ and $s^* \gets 1$
            \State Set $s^{(1)} \gets s^*$, $k \gets 1$
            \While{$F(s^*) - I^*(s^*) > \epsilon$}
                \State Sort $\{s^{(0)}, \dotsc, s^{(k)}\}$ from smallest to largest and relabel the points in ascending order.
                \State Set $F_i(s) = I^*(s^{(i)}) + \beta_I(2|s - s^{(i)}|)$ for $0 \leq i \leq k$
                \For{$i = 0, \dotsc, k - 1$}
                    \State Set $g_i(s) = F_i(s) - F_{i + 1}(s)$
                    \State Find $\overline{s}_i \in [s^{(i)}, s^{(i + 1)}]$ such that $g_i(\overline{s}_i) = 0$ using any root finding method.
                \EndFor
                \State Pick an index $m \in \argmax_{0 \leq i \leq k - 1} F_i(\overline{s}_i)$ and set $s^* = \overline{s}_m$.
                \State Update $F \gets F_m$
                \State Set $s^{(k + 1)} \gets s^*$, $k \gets k + 1$
            \EndWhile
            \State \textbf{return} $I^*(s^*)$
        \EndFunction
    \end{algorithmic}
    \caption{Computing the sum capacity of a two-sender MAC $\mathcal{N}$ to a precision $\epsilon > 0$, when one of the input alphabets has size $2$.}
    \label{alg:compute_sum_capacity_twosenderMACs_inputsize2}
\end{algorithm}

We make a technical remark that is important to perform the optimization as per Alg.~\ref{alg:compute_sum_capacity_twosenderMACs_inputsize2}. The objective function $I^*$ appearing in the optimization $S(\mathcal{N}) = \max_{s \in [0, 1]} I^*(x)$ is non-trivial to compute. More precisely, in order to compute the value of $I^*(s)$ for any given $s$, we need to solve another optimization problem. This follows from the definition of $I^*$ given in Eq.~\eqref{eqn:outer_optimization_objective_twosenderMAC}. Fortunately, this optimization problem is convex, and hence, it can be solved by standard convex optimization techniques.

In Prop.~\ref{prop:max_iter_sum_capacity_alg_twosenderMAC_inputsize2}, we show that the number of iterations required by the \texttt{while} loop in Alg.~\ref{alg:compute_sum_capacity_twosenderMACs_inputsize2} to converge to the sum capacity within a tolerance of $0 < \epsilon \leq 3$ is bounded above as $O(\log(\dout)/\epsilon)$, where $\dout \geq 2$ is the size of the output alphabet and $\epsilon$ is a fixed constant. Note that this bound does not account for the number of iterations required to compute $I^*$, for sorting or for root-finding. We show in Prop.~\ref{prop:max_iter_sum_capacity_alg_twosenderMAC_inputsize2} that the total cost involved is at most polynomial in $d_1$, $\dout$, and $1/\epsilon$.

Next, we show how to compute the sum capacity of any two-sender MAC.

\subsection{Computing the sum capacity of any two-sender MAC\label{secn:sum_capacity_computation_algorithm_2senderMAC}}
Let an arbitrary two-sender MAC $\mathcal{N}$ with input alphabet sizes $d_1, d_2$ and output alphabet size $\dout$. Without loss of generality, we assume that $d_2 \leq d_1$. Then, the sum capacity computation can be expressed as $S(\mathcal{N}) = \max_{q \in \Delta_{d_2}} I^*(q)$. This amounts to optimizing the Lipschitz-like function $I^*(q)$ over the standard simplex $\Delta_{d_2}$. As a result, the algorithms developed in Sec.~\ref{secn:Lipschitzlike_optimization_simplex_grid_search} and Sec.~\ref{secn:Lipschitzlike_optimization_simplex_dense_curve} can both be used to compute the sum capacity of $\mathcal{N}$. 
For the convenience of the reader, we present these algorithms here again adapted specifically for sum capacity computation.

The first algorithm that we discuss is grid search. This algorithm is helpful in proving complexity results, but not as helpful from a practical implementation standpoint.
\subsubsection{Sum capacity computation using grid search}
Our goal is to perform the optimization of $I^*(q)$ over the simplex $q \in \Delta_{d_2}$. We perform this optimization by computing the maximum over the grid
\begin{equation*}
    \Delta_{d, N} = \left\{\frac{n}{N} \mid n \in \mathcal{I}_{d,N}\right\}
\end{equation*}
defined in Eq.~\eqref{eqn:simplex_grid}. Thus, the grid search algorithm can be described as follows.

\begin{algorithm}[H]
    \begin{algorithmic}[1]
        \Function{compute\_sum\_capacity\_grid\_search}{$\mathcal{N}$, $d_1$, $d_2$, $\dout$, $\epsilon$}
            \State Set $N = \lceil 1/\delta^2 \rceil$, where $\delta$ is the largest number satisfying $\beta_I(\delta) \leq \epsilon/2$
            \State Construct the grid $\Delta_{d_2, N}$
            \State \textbf{return} $S^* = \max\{I^*(q) \mid q \in \Delta_{d_2, N}\}$
        \EndFunction
    \end{algorithmic}
    \caption{Computing the sum capacity of a two-sender MAC $\mathcal{N}$ to a precision $\epsilon > 0$ using grid search. $d_1, d_2$ are input alphabet sizes with $d_2 \leq d_1$ and $\dout$ is the output alphabet size. The function $\beta_I$ is defined in Eq.~\eqref{eqn:beta_twosenderMAC}.}
    \label{alg:compute_sum_capacity_twosenderMACs_simplex_grid_search}
\end{algorithm}
As noted in Sec.~\ref{secn:Lipschitzlike_optimization_simplex_grid_search}, optimization of a $\beta$-Lipschitz-like functions using grid search can be done in polynomial time if the function $\beta$ does not depend on the dimension. For the case of sum capacity computation, the function $\beta$ is $\beta_I$ given in Eq.~\eqref{eqn:beta_twosenderMAC}. $\beta_I$ depends on the size of the output alphabet of the MAC, and therefore, the complexity analysis is more involved. Furthermore, for any given $q \in \Delta_{d_2}$, the function $I^*(q)$ needs to be computed using convex optimization. Thus, the cost of computing $I^*$ also needs to be included in the complexity analysis.

In Prop.~\ref{prop:max_iter_sum_capacity_alg_twosenderMAC_grid_search}, we show that the total cost of computing the sum capacity to a fixed precision $0 < \epsilon \leq 1$ is roughly bounded above by $\text{poly}(d_1, \dout, 1/\epsilon) O(d_2^{96 \log^2(\dout)/\epsilon^2 + 2})$ (see Eq.~\eqref{eqn:grid_search_complexity_sum_capacity_twosenderMAC} for a more precise statement).
Therefore, the sum capacity can be computed to a fixed precision $\epsilon > 0$ in quasi-polynomial time.
This quasi-polynomial behaviour comes from the fact that $\beta_I$ depends on $\log(\dout)$.
From the proof of Prop.~\ref{prop:max_iter_sum_capacity_alg_twosenderMAC_grid_search}, we can also infer that if one fixes the output alphabet size $\dout$, then the sum capacity can be computed in polynomial time.

While these results are useful from a theoretical standpoint, grid search can be slow in practice.
The reason is two-fold. One, the power of the polynomial appearing in time complexity can be large even for moderately small precision and dimensions (i.e., the size of the grid becomes fairly large). Two, for each point in the grid, we need to solve a convex optimization problem for computing $I^*$. Together, these factors make grid search not ideal for practical implementations. We note that one can nevertheless parallelize grid search to get some improvements in the speed, though eventually even this will be too costly. With this in mind, we study another algorithm to compute the sum capacity using dense curves.

\subsubsection{Sum capacity computation using dense curves}
In this section, we will study the application of algorithm developed in Sec.~\ref{secn:Lipschitzlike_optimization_simplex_dense_curve} for computing the sum capacity. The idea is to fill the simplex with a curve that comes within a distance of $\alpha$ to any given point on the simplex. Such a curve is therefore called an $\alpha$-dense curve. Using such a curve, we reduce a high-dimensional optimization problem to a one-dimensional optimization problem over an interval. The one-dimensional optimization problem can be solved using the modified Piyavskii-Shubert algorithm that underlies Alg.~\ref{alg:compute_sum_capacity_twosenderMACs_inputsize2}.

We refer the reader to Alg.~\ref{alg:construct_simplex_dense_curve} which shows how to construct an $\alpha$-dense curve for filling the standard simplex for $\alpha = 2(d_2 - 1) / N$, where $N$ is a positive integer that controls the value of $\alpha$. Below, we show how to use Alg.~\ref{alg:maximize_Lipschitz_like_function_simplex_dense_curve} to compute the sum capacity of a two-sender MAC.
\begin{algorithm}[H]
    \begin{algorithmic}[1]
        \Function{compute\_sum\_capacity\_dense\_curve}{$\mathcal{N}$, $d_1$, $d_2$, $\dout$, $\epsilon$}
            \State Compute the largest number $\alpha > 0$ such that $\beta_I(\alpha) \leq \epsilon/2$
            \State Set $N = \lceil 2(d_2 - 1)/\alpha \rceil$
            \State Construct the $(2(d_2 - 1)/N)$-dense curve $\gamma\colon [0, L_{\text{curve}}] \to \mathcal{D}$ as per Alg.~\ref{alg:construct_simplex_dense_curve}
            \State Compute the maximum $S^*$ of the function $I^* \circ \gamma$ over $[0, L_{\text{curve}}]$ to a precision of $\epsilon/2$ using Alg.~\ref{alg:maximize_Lipschitz_like_function_1D}
            \State \textbf{return} $S^*$
        \EndFunction
    \end{algorithmic}
    \caption{Computing the sum capacity of a two-sender MAC $\mathcal{N}$ to a precision $\epsilon > 0$ using a dense curve. $d_1, d_2$ are input alphabet sizes with $d_2 \leq d_1$ and $\dout$ is the output alphabet size. The function $\beta_I$ is defined in Eq.~\eqref{eqn:beta_twosenderMAC}.}
    \label{alg:compute_sum_capacity_twosenderMACs_simplex_dense_curve}
\end{algorithm}
From Prop.~\ref{prop:Lipschitzlike_optimization_dense_curve_simplex}, we can infer that the above algorithm is guaranteed to compute the sum capacity to within a precision of $\epsilon > 0$. Prop.~\ref{prop:Lipschitzlike_optimization_dense_curve_simplex} shows that the dense curve algorithm can take an exponential time in the worst case to converge. For this reason, we avoid doing a detailed complexity analysis of this algorithm for sum capacity computation, as better theoretical guarantees can be obtained using grid search. Instead, we focus on more practical gains that may be obtained using Alg.~\ref{alg:compute_sum_capacity_twosenderMACs_simplex_dense_curve} over Alg.~\ref{alg:compute_sum_capacity_twosenderMACs_simplex_grid_search}.

Using Alg.~\ref{alg:compute_sum_capacity_twosenderMACs_simplex_dense_curve} gives another advantage. Instead of computing the sum capacity to a given precision, one can run Alg.~\ref{alg:compute_sum_capacity_twosenderMACs_simplex_dense_curve} for a fixed number of time steps to get an upper bound on the sum capacity. Thus, it can also be used for bounding the sum capacity instead of computing it to a good precision.

\subsubsection{Comparing grid search and dense curve algorithm performance}
We compare the performance of the grid search algorithm and the dense curve algorithm for computing the sum capacity. For this purpose, we will consider a randomly constructed two-sender MAC with input alphabet sizes $d_1, d_2$ and output alphabet size $\dout$. We fix $d_1 = 10$ and $\dout = 20$ and consider the cases $d_2 = 2$ and $d_2 = 3$. Note that the smaller dimension $d_2$ determines the dimension for the non-convex optimization. Hence, the value of $d_2$ effectively decides the overall computation time. The numerical simulations are run on single core of a personal computer, and the code is implemented in Python (see Sec.~\ref{secn:code_availability} for more details). All the run times reported are average time taken over three repetitions of the simulation.

For $d_2 = 2$, the dense curve algorithm can be simplified to modified Piyavskii-Shubert algorithm, as noted in Sec.~\ref{secn:sum_capacity_computation_algorithm_inputsize2}. Thus, for $d_2 = 2$, we compare the performance of modified Piyavskii-Shubert Alg.~\ref{alg:compute_sum_capacity_twosenderMACs_inputsize2} with that of grid search Alg.~\ref{alg:compute_sum_capacity_twosenderMACs_simplex_grid_search} for computing the sum capacity to a tolerance of $\epsilon = 0.15$. For the example we consider, we find that Alg.~\ref{alg:compute_sum_capacity_twosenderMACs_inputsize2} takes about $0.34 \text{ s}$ to run, whereas Alg.~\ref{alg:compute_sum_capacity_twosenderMACs_simplex_grid_search} takes about $1.51 \text{ mins}$ to run. Thus, for this example, modified Piyavskii-Shubert algorithm is more than $250$ times faster than grid search. For a tolerance of $\epsilon = 0.65$, the run time of grid search Alg.~\ref{alg:maximize_Lipschitz_like_function_simplex_grid_search} improves greatly to $1.06 \text{ s}$, whereas the run time of modified Piyavskii-Shubert Alg.~\ref{alg:compute_sum_capacity_twosenderMACs_inputsize2} improves to $0.11 \text{ s}$. In this case, the modified Piyavskii-Shubert algorithm is almost $10$ ten times faster than grid search. We remark that a tolerance of $\epsilon = 0.65$ is not acceptable in practice because the computed sum capacity is about $0.19$ nats. We therefore use such a large value of $\epsilon$ only for benchmarking purposes.

Next, we consider the higher dimensional case of $d_2 = 3$. For this case, we compare the dense curve Alg.~\ref{alg:compute_sum_capacity_twosenderMACs_simplex_dense_curve} with grid search Alg.~\ref{alg:compute_sum_capacity_twosenderMACs_simplex_grid_search}. For a tolerance of $\epsilon = 0.65$, we find that the dense curve Alg.~\ref{alg:compute_sum_capacity_twosenderMACs_simplex_dense_curve} takes $3.55 \text{ s}$ whereas grid search Alg.~\ref{alg:compute_sum_capacity_twosenderMACs_simplex_grid_search} takes $3.65 \text{ min}$. In this case, dense curve algorithm is about $65$ times faster, though as noted previously, a tolerance of $\epsilon = 0.65$ cannot be used in practice. For $\epsilon = 0.15$, the dense curve Alg.~\ref{alg:compute_sum_capacity_twosenderMACs_simplex_dense_curve} takes about $1.72 \text{ mins}$, whereas grid search takes too long to complete. We estimate that our current implementation of grid search algorithm will take more than $3$ days to complete for $\epsilon = 0.15$ on the hardware the code was executed. This shows that for the example under consideration, grid search is not a practical option.

We remark that there can be some cases where grid search performs as good as or better than the dense curve algorithm. For small dimensions, grid search scales poorly with the tolerance in comparison with dense curve and modified Piyavskii-Shubert algorithms, as evidenced from the above numerical simulations. As the dimension $d_2$ becomes large and $d_1, \dout$, and $\epsilon$ are fixed, the grid search algorithm scales polynomially with $d_2$ whereas dense curve algorithm scales exponentially. Despite this, the polynomial power can be so large in practice that grid search is still impractical. Thus, our numerical simulations suggest that for small dimensions and small tolerance, modified Piyavskii-Shubert algorithm (for $d_2 = 2$) and the dense curve algorithm (for $d_2 > 2$) perform better than the grid search algorithm in practice.

As noted previously, both the grid search algorithm and the dense curve algorithm (as well as modified Piyavskii-Shubert algorithm) can be parallelized to get better performance. It remains to see if these algorithms can be improved further so as to make them practical for larger values of $d_2$. Finally, we show how our algorithms compare with the relaxed sum capacity.

\subsection{Comparison with relaxed sum capacity\label{secn:sum_capacity_relaxed_sum_capacity_comparison}}
In order to compare our algorithms for computing the sum capacity with the relaxed sum capacity, we construct a family of binary MACs parametrized by a tunable parameter $t \in [0, 1]$. These MACs have the property that for $t = 0$, the relaxed sum capacity is equal to the actual sum capacity, whereas for $t = 1$, the relaxed sum capacity is twice the actual sum capacity. By computing the sum capacity using our algorithms as well as the relaxed sum capacity for as a function of $t$, we can compare how well our algorithms do in relation to the convex relaxation approach.

For generating examples of such binary MACs, we construct a family of MACs that we call the noise-free subspace MAC. To that end, let $\mathcal{A}$ and $\mathcal{B}$ be input alphabets and let $\mathcal{Z}$ be the output alphabet of the noise-free subspace MAC. We assume that $\mathcal{A}$, $\mathcal{B}$ and $\mathcal{Z}$ are finite sets. The noise-free subspace corresponds to a set $\mathcal{W} \subseteq \mathcal{A} \times \mathcal{B}$. Consider the mapping $n_F\colon \mathcal{W} \to \mathcal{Z}$ that determines the symbol that is deterministically output by the channel when the input is in the noise-free subspace. Then, given the tuple $(\mathcal{A}, \mathcal{B}, \mathcal{Z}, \mathcal{W}, n_F)$, the MAC $\mathcal{N}_F$ has the probability transition matrix
\begin{equation}
    \mathcal{N}_F(z | a, b) = \begin{cases}
                                      \delta_{z, n_F(a, b)}   & (a, b) \in \mathcal{W} \\
                                      \frac{1}{|\mathcal{Z}|} & (a, b) \notin \mathcal{W}
                                  \end{cases} \label{eqn:NF_MAC}
\end{equation}
When the input is in the noise-free subspace, the channel deterministically outputs the symbol selected by $n_F$. On the other hand, when the input is not in the noise-free subspace, the channel outputs a symbol uniformly at random. Therefore, a noise-free subspace MAC can be thought of as a generalization of nonlocal games MAC studied in Part I (see Eq.~\eqref{eqn:MAC_nonlocal_game}).

Now, for constructing the parametrized family of binary MACs mentioned at the beginning of the section, we construct two examples of the MAC $\mathcal{N}_F$ corresponding to the ``extremities" of the parametrized family. For both of these examples, we consider the alphabets $\mathcal{A} = \{a_1, a_2\}$, $\mathcal{B} = \{b_1, b_2\}$, and $\mathcal{Z} = \{z_1, z_2\}$. The examples are labelled $0$ and $1$.

For the first example, we take $\mathcal{W} = \{(a_1, b_1)\}$ and $n_F(a_1, b_1) = z_1$. That is, only a single input $(a_1, b_1)$ is transmitted noise-free. For this example, we can write the probability transition matrix as
\begin{equation*}
    \mathcal{N}_F^{(0)} = \begin{pmatrix}
                              1 & 0.5 & 0.5 & 0.5 \\
                              0 & 0.5 & 0.5 & 0.5
                          \end{pmatrix}
\end{equation*}
where the rows correspond to $z \in \mathcal{Z}$, while the columns correspond to $(a, b) \in \mathcal{A} \times \mathcal{B}$. In App.~\ref{app:noise_free_subspace_MAC}, we show that the sum capacity in nats is
\begin{equation*}
    S(\mathcal{N}_F^{(0)}) = h\left(\frac{4}{5}\right) - \frac{2}{5} \ln(2) \approx 0.223
\end{equation*}
where $h$ is the binary entropy measured in nats. We furthermore show that the relaxed sum capacity in this case is also equal to $C(\mathcal{N}_F^{(0)}) = 0.223$ nats. Thus, for our first example, we have $C(\mathcal{N}_F^{(0)}) = S(\mathcal{N}_F^{(0)})$.

For the second example, we take $\mathcal{W} = \{(a_1, b_1), (a_2, b_2)\}$ as well as $n_F(a_1, b_1) = z_1$ and $n_F(a_2, b_2) = z_2$. That is, the inputs $(a_1, b_1)$ and $(a_2, b_2)$ are transmitted noise-free. The probability transition matrix in this case can be written as
\begin{equation*}
    \mathcal{N}_F^{(1)} = \begin{pmatrix}
                              1 & 0.5 & 0.5 & 0 \\
                              0 & 0.5 & 0.5 & 1
                          \end{pmatrix}
\end{equation*}
where the rows correspond to $z \in \mathcal{Z}$, while the columns correspond to $(a, b) \in \mathcal{A} \times \mathcal{B}$. In App.~\ref{app:noise_free_subspace_MAC}, we show that the sum capacity of the MAC $\mathcal{N}_F^{(1)}$ in nats is equal to
\begin{equation*}
    S(\mathcal{N}_F^{(1)}) = 0.5 \ln(2) \approx 0.3466.
\end{equation*}
On the other hand, we show that in this example the relaxed sum capacity takes the maximum possible value $C(\mathcal{N}_F^{(1)}) = \ln(2)$ nats, thus significantly overestimating the sum capacity. For this example, we have $C(\mathcal{N}_F^{(1)}) = 2 S(\mathcal{N}_F^{(1)})$.

Using these examples, we construct the parametric family of binary MACs as a convex combination of the MACs in above examples. That is,
\begin{equation}
    \mathcal{N}_F^{(t)} = (1 - t) \mathcal{N}_F^{(0)} + t \mathcal{N}_F^{(1)} = \begin{pmatrix}
                                                                                    1 & 0.5 & 0.5 & 0.5 (1 - t) \\
                                                                                    0 & 0.5 & 0.5 & 0.5 (1 + t)
                                                                                \end{pmatrix} \label{eqn:parametric_family_NF_MACs}
\end{equation}
where $t \in [0, 1]$. Observe that for $t = 0$, we get the first example $\mathcal{N}_F^{(0)}$, whereas for $t = 1$, we get the second example $\mathcal{N}_F^{(1)}$. From the above results, we know that $C(\mathcal{N}_F^{(0)}) = S(\mathcal{N}_F^{(0)})$ whereas $C(\mathcal{N}_F^{(1)}) = 2 S(\mathcal{N}_F^{(1)})$. The MAC $\mathcal{N}_F^{(t)}$ basically interpolates between these two cases.

In Fig.~\ref{fig:sum_capacity_relaxed_sum_capacity_NF_MACs}, we plot $S(\mathcal{N}_F^{(t)})$ and $C(\mathcal{N}_F^{(t)})$ as a function of $t$. The sum capacity $S(\mathcal{N}_F^{(t)})$ is computed using Alg.~\ref{alg:compute_sum_capacity_twosenderMACs_inputsize2} with a tolerance of $\epsilon = 0.01$, whereas relaxed sum capacity $C(\mathcal{N}_F^{(t)})$ is computed using standard techniques in convex optimization. We can see from the figure that at $t = 0$ and $t = 1$, the numerically computed values agree with the analytical results. Furthermore, we observe that $C(\mathcal{N}_F^{(t)})$ becomes a progressively worse bound on $S(\mathcal{N}_F^{(t)})$ as $t$ ranges from $0$ to $1$. Thus, we demonstrate that our algorithm for computing the sum capacity does better than the relaxed sum capacity.
\begin{figure}[ht]
    \begin{center}
        \includegraphics[width=0.85\textwidth]{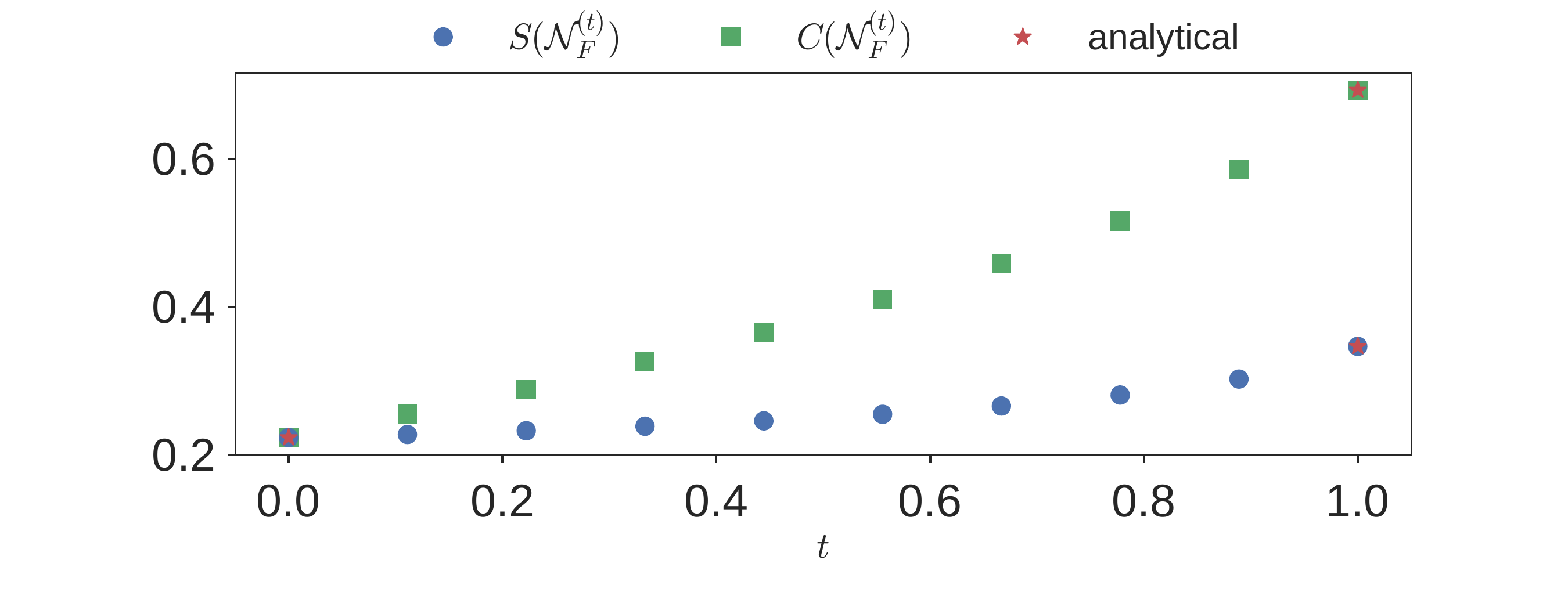}
    \end{center}
    \caption{Plot of the sum capacity $S(\mathcal{N}_F^{(t)})$ (in nats) and the relaxed sum capacity $C(\mathcal{N}_F^{(t)})$ (in nats) as a function of the tunable parameter $t$. The parameterized family of binary MACs $\mathcal{N}_F^{(t)}$ is defined in Eq.~\eqref{eqn:parametric_family_NF_MACs}. The sum capacity is computed to within a tolerance of $\epsilon = 0.01$ using Alg.~\ref{alg:compute_sum_capacity_twosenderMACs_inputsize2}. We find that the relaxed sum capacity gives a progressively worse bound on the actual sum capacity. We also find that the analytically computed values of $S(\mathcal{N}_F^{(t)})$ and $C(\mathcal{N}_F^{(t)})$ for $t = 0$ and $t = 1$ show good agreement with the numerically computed values.}
    \label{fig:sum_capacity_relaxed_sum_capacity_NF_MACs}
\end{figure}

\section{Conclusion and future directions of research}
Computing the sum capacity of a multiple access channel is a nonconvex optimization problem. For MACs obtained from nonlocal games, we obtained an analytical upper bound on the sum capacity that depends only on the number of question tuples in the game and the maximum winning probability of the game when the questions are drawn uniformly at random. Our formula is an upper bound on the achievable sum rate even when the senders of the MAC can share an arbitrary set of correlations. Using this formula, we found a separation between the sum capacity and the entanglement-assisted sum rate for the $2$-sender MAC obtained from the Magic Square game that is larger than the previously reported value. We also obtained separations in some other relevant scenarios using the CHSH game and multiparty parity game.

Furthermore, we studied the performance of the upper bound on the sum capacity obtained by relaxing the nonconvex problem to a convex optimization problem. By constructing the signalling game, we showed that one can obtain an arbitrarily large separation between the sum capacity and the relaxed sum capacity. With the help of numerical simulations, we argued that this separation holds even when the senders are allowed to share no-signalling correlations. These results indicate that the relaxed sum capacity can be a very poor upper bound on the sum capacity. In a recent work, Fawzi \& Ferm\'e~\cite{fawzi2022beating} compute the no-signalling assisted sum rate for MACs, allowing feedback. In their study, they pose the question as to whether the no-signalling assisted sum rate (allowing feedback) is equal to the relaxed sum capacity. It would be interesting to see if the MAC obtained from the signalling game can be used to answer this question in the negative.

In response to the above observations, we studied algorithms to compute the sum capacity. First, we identified that the mutual information occurring in the computation of the sum capacity satisfies a Lipschitz-like property. We subsequently proposed a few algorithms one can use to optimize such functions, by appropriately modifying and generalizing existing algorithms for optimizing Lipschitz-continuous functions. Using this, we were able to show that the sum capacity of any two-sender MAC can be computed to a fixed precision in quasi-polynomial time. Our algorithms are practically efficient for computing the sum capacity of a family of two-sender MACs that have at least one input alphabet of size two.

We remark that some other entropic quantities also satisfy a Lipschitz-like property. This is also true in the quantum setting, for example for the von Neumann entropy. Therefore, further investigation of algorithms to optimize such Lipschitz-like functions might be helpful in solving nonconvex problems in both classical and quantum information theory. In particular, it would be interesting to see if there are other practically relevant problems in information theory which would benefit from such an approach. In any case, the algorithms we present in this study for optimizing Lipschitz-like functions over the standard simplex suffer from the drawback that they are not scalable, i.e., the optimization is very costly to perform in practice as the dimension increases. Therefore, finding more practical algorithms to perform this optimization is an interesting direction for future research. In particular, the algorithm for optimization using dense curves has a scope for improvement because we only prove sub-optimal convergence guarantees.

Another avenue for performing such nonconvex optimization is to devise randomized algorithms, which might allow for faster convergence (with high probability). We remark that using randomized algorithms to optimize Lipschitz-like functions over an arbitrary compact and convex domain is not expected to give significant improvements over the deterministic algorithm we presented in this study.
This is because the number of iterations needed for convergence will scale exponentially with the dimension in both the deterministic and stochastic setting~\cite{malherbe2017global}. It is therefore important to use information about the domain (for example, standard simplex) or impose some additional restrictions on the objective functions while designing such algorithms.
It would also be interesting to see if there is a general method to determine a priori if the convex relaxation gives a good or a bad bound on the optimum of the non-convex problem.
This would allow one to use the more efficiently computable convex relaxation in relevant scenarios.

\section{Acknowledgements}
AS would like to thank Stephen Becker for helpful discussions and for suggesting to look into Lipschitz grid search algorithms for nonconvex optimization.
The authors would like to thank Mohammad Alhejji for helpful discussions and comments on the manuscript.
This work is partially funded by NSF grants CCF 1652560, PHY 1915407, QuIC-TAQS 2137984, and QuIC-TAQS 2137953.
Akshay Seshadri acknowledges support from the Professional Research Experience Program (PREP) operated jointly by NIST and the University of Colorado. 

\section{Code availablility\label{secn:code_availability}}
An open source implementation of the codes used in this study can be found at\\
\href{https://github.com/akshayseshadri/sum-capacity-computation}{https://github.com/akshayseshadri/sum-capacity-computation}.

This repository contains code to perform the following tasks: 1)~Maximize Lipschitz-like functions over the interval (Alg.~\ref{alg:maximize_Lipschitz_like_function_1D}) and standard simplex using grid search (Alg.~\ref{alg:maximize_Lipschitz_like_function_simplex_grid_search}) and dense curves (Alg.~\ref{alg:maximize_Lipschitz_like_function_simplex_dense_curve}), 2)~Compute the sum capacity of two-sender MACs by implementing the aforementioned algorithms, and 3)~Find the maximum winning probability of an $N$-player non-local game using no-signalling strategies and a corresponding optimal NS strategy.

\printbibliography[heading=bibintoc]

\appendix
\section{Bounding the correlation-assisted achievable sum rate of MACs from nonlocal games}
\begin{proof}[\textbf{Proof of Proposition \ref{prop:deterministic_strategy_maximum}}]
    \label{proof:deterministic_strategy_maximum}
    We solve the minimization problem $\min_{\pi \in \Delta_d} -\mathscr{I}_{\bm{w}}(\pi)$, which is equivalent to the given maximization problem. Note that since the entries of $\bm{w}$ are either $0$ or $1$, we can write $\mathcal{K} = \{i \in [d] \mid w_i \neq 0\}$. Then we can write the Lagrangian for this minimization problem as
\begin{align}
    \mathcal{L}(\pi; \lambda, \nu) &= -\mathscr{I}_{\bm{w}}(\pi) - \ip{\lambda, \pi} + \nu \left(\sum_{i = 1}^d \pi_i - 1\right) \nonumber \\
                                   &= \sum_{i = 1}^d p_i \ln p_i - \sum_{i \in \mathcal{K}} \pi_j w_j \ln d + \ln d - \sum_{i = 1}^d \lambda_i \pi_i + \nu \left(\sum_{i = 1}^d \pi_i - 1\right), \label{eqn:Lagrangian_deterministic}
    \intertext{where}
    p_i &= (\overline{W} \pi)_i = \pi_i w_i + \frac{1}{d} - \frac{1}{d} \sum_{j = 1}^d \pi_j w_j.
\end{align}
The expression for the probability distribution $p$, which is the output of the channel $\mathcal{N}_G$, is obtained from Eqs.~\eqref{eqn:W_matrix_elts} and \eqref{eqn:outputprob}. The variables $\lambda_1, \dotsc, \lambda_d$ are the dual variables corresponding to the inequality constraint $\pi_i \geq 0$ for all $i \in [d]$, and $\nu$ is the dual variable corresponding to the equality constraint $\sum_{i = 1}^d \pi_i = 1$. We also write $K = |\mathcal{K}|$.

\noindent\underline{Case 1}: $0 < K < d$ \newline
We can write
\begin{equation}
    p_i = \begin{cases}
              \pi_i + \frac{1}{d} - \frac{1}{d} \sum_{j \in \mathcal{K}} \pi_j & \text{for }i \in \mathcal{K} \\
              \frac{1}{d} - \frac{1}{d} \sum_{j \in \mathcal{K}} \pi_j & \text{for }i \notin \mathcal{K}.
          \end{cases} \label{eqn:outputprob_deterministic}
\end{equation}
Since $p_i \geq 0$ for all $i$, we must have
\begin{equation*}
    \sum_{j \in \mathcal{K}} \pi_j \leq 1,
\end{equation*}
Then, we can consider two sub-cases: either (a) $\sum_{j \in \mathcal{K}} \pi_j = 1$ or (b) $\sum_{j \in \mathcal{K}} \pi_j < 1$ at the optimum.

\noindent (a) If $\sum_{j \in \mathcal{K}} \pi_j = 1$ at the optimum, then from Eq.~\eqref{eqn:outputprob_deterministic}, we get
\begin{equation}
    p_i = \begin{cases}
              \pi_i & \text{for }i \in \mathcal{K} \\
               0    & \text{for }i \notin \mathcal{K}.
          \end{cases} \label{eqn:outputprob_deterministic_case1a}
\end{equation}
Furthermore, since $\sum_{j \in \mathcal{K}} \pi_j = 1$, we also have $\pi_i = 0$ for $i \notin \mathcal{K}$. It follows from Eq.~\eqref{eqn:mutualinfo} and Eq.~\eqref{eqn:outputprob_deterministic_case1a} that $\mathscr{I}_{w}(\pi) = H(p)$, where $p$ is as given in Eq.~\eqref{eqn:outputprob_deterministic_case1a}. Thus, $p$ can be taken as a probability distribution on the indices $\mathcal{K}$, and subsequently, we obtain
\begin{equation}
    \max_\pi \mathscr{I}_{w}(\pi) = \ln K. \label{eqn:deterministic_maximization_casea}
\end{equation}

\noindent (b) If $\sum_{j \in \mathcal{K}} \pi_j < 1$ at the optimum, we can infer from Eq.~\eqref{eqn:outputprob_deterministic} that $p_i > 0$ for all $i \in [d]$ at the optimum. Thus, the entropy $H(p)$ is differentiable at the optimum, and consequently, we can differentiate the Lagrangian given in Eq.~\eqref{eqn:Lagrangian_deterministic}. The gradient of the Lagrangian is given as
\begin{equation*}
    \frac{\partial \mathcal{L}}{\partial \pi_j} = \begin{cases}
                                                       \ln p_j - \frac{1}{d} \sum_{i = 1}^d \ln p_i - \ln d - \lambda_j + \nu & \text{for }j \in \mathcal{K} \\
                                                       -\lambda_j + \nu & \text{for }j \notin \mathcal{K}.
                                                  \end{cases}
\end{equation*}
Note that we are solving a convex optimization problem and Slater's condition holds because the constraint set is a simplex~\cite{boyd2004convex}. Therefore, the KKT conditions are necessary and sufficient for optimality~(see Ch.~$5.5$ in Ref.~\cite{boyd2004convex}). Subsequently, the optimal distribution $\pi$ satisfies~\cite{boyd2004convex}
\begin{itemize}[label={}]
    \item (Primal feasibility) $\pi \in \Delta_d$,
    \item (Dual feasibility) $\lambda_i \geq 0$ for all $i \in [d]$,
    \item (Complementary slackness) $\lambda_i \pi_i = 0$ for all $i \in [d]$, and
    \item (Stationarity) $\nabla_\pi \mathcal{L} = 0$.
\end{itemize}

\noindent For $j \notin \mathcal{K}$, the condition $\nabla_\pi \mathcal{L} = 0$ gives
\begin{equation}
    \lambda_j = \nu. \label{eqn:Lagrangian_deterministic_stationarity_outisdeK}
\end{equation}
Since $\sum_{j \in \mathcal{K}} \pi_j < 1$ by assumption and $\pi \in \Delta_d$, we must have $\pi_{j'} > 0$ for some $j' \notin \mathcal{K}$. Then, by complementary slackness, we obtain $\lambda_{j'} = 0$. Using this in Eq.~\eqref{eqn:Lagrangian_deterministic_stationarity_outisdeK}, we get $\nu = 0$.

\noindent On the other hand, for $j \in \mathcal{K}$ the condition $\nabla_\pi \mathcal{L} = 0$ gives
\begin{equation*}
    \ln p_j - \frac{1}{d} \sum_{i = 1}^d \ln p_i = \lambda_j + \ln d,
\end{equation*}
where we used the fact that $\nu = 0$. To simplify this equation further, we make the observation that
\begin{equation*}
    p_i = \begin{cases}
                \pi_i + \frac{p_L}{d} & \text{for }i \in \mathcal{K} \\
                \frac{p_L}{d} & \text{for }i \notin \mathcal{K}
          \end{cases}
\end{equation*}
with $p_L > 0$, which follows from Eq.~\eqref{eqn:losingprob} and Eq.~\eqref{eqn:outputprob_deterministic}. Therefore, 
\begin{align*}
	\sum_{i = 1}^d \ln p_i = \sum_{i \in \mathcal{K}} \ln p_i + (d - K) \ln(p_L/d),
\end{align*} 
where $d - K = |[d] \setminus \mathcal{K}|$. Thus, for $j \in \mathcal{K}$, we obtain
\begin{equation*}
    \ln p_j - \frac{1}{d} \sum_{i \in \mathcal{K}} \ln p_i = \lambda_j + \ln d + \left(1 - \frac{K}{d}\right) \ln \frac{p_L}{d}.
\end{equation*}
If we label the indices in $\mathcal{K}$ as $j_1, \dotsc, j_K$, we can write the above as the following matrix equation:
\begin{equation}
    \begin{pmatrix}
        1 - \frac{1}{d} & -\frac{1}{d} & \dotsb & -\frac{1}{d} \\
            \vdots      &              & \dotsb & \vdots \\
           -\frac{1}{d} & -\frac{1}{d} & \dotsb & 1 - \frac{1}{d}
    \end{pmatrix}
    \begin{pmatrix} \ln p_{j_1} \\ \vdots \\ \ln p_{j_{K}} \end{pmatrix}
    = \begin{pmatrix}
        \lambda_{j_1} + \ln d + \left(1 - \frac{K}{d}\right) \ln \frac{p_L}{d} \\
          \vdots \\
          \lambda_{j_K} + \ln d + \left(1 - \frac{K}{d}\right) \ln \frac{p_L}{d}
      \end{pmatrix} \label{eqn:outputprob_deterministic_case2a}
\end{equation}
Let $o = \begin{pmatrix} 1 & \dotsb & 1 \end{pmatrix}^T$ denote the vector of ones. Then, the matrix appearing in the LHS of \eqref{eqn:outputprob_deterministic_case2a} can be expressed as $\id - o o^T / d$. By the matrix determinant lemma (see Cor.~(18.1.3) in Ref.~\cite{harville1998matrix}), we know that this matrix is invertible iff $1 - o^T o / d = 1 - K / d \neq 0$, which is always true by our assumption that $K < d$. Its inverse is given by the Sherman-Morrison formula~\cite{sherman1950adjustment, bartlett1951inverse} as follows:
\begin{equation*}
    \left(\id - \frac{1}{d} o o^T\right)^{-1} = \id + \frac{1}{d - K} oo^T.
\end{equation*}

\noindent Using this inverse in Eq.~\eqref{eqn:outputprob_deterministic_case2a}, for $j \in \mathcal{K}$, we obtain
\begin{align*}
    p_j &= E_j \frac{p_L}{d},
    \intertext{where}
    E_j &= \exp\left(\lambda_j + \frac{1}{d - K} \sum_{i \in \mathcal{K}} \lambda_i + \frac{d}{d - K} \ln d\right).
\end{align*}
Since $\lambda_i \geq 0$ for all $i \in [d]$ due to dual feasibility, we have $E_j > 1$ for all $j \in \mathcal{K}$.

\noindent Using $p_j = \pi_j + p_L / d$ for $j \in \mathcal{K}$, we obtain
\begin{equation}
    \pi_j = (E_j - 1) \frac{p_L}{d}. \label{eqn:deterministic_maximization_pij_Ej}
\end{equation}
Because $E_j > 1$, we have $\pi_j > 0$ for each $j \in \mathcal{K}$. Then, by complementary slackness, we have $\lambda_j = 0$ for all $j \in \mathcal{K}$. Therefore, we obtain
\begin{equation}
    E_j = d^{\frac{d}{d - K}}\ \forall j \in \mathcal{K}. \label{eqn:deterministic_maximization_variableEj}
\end{equation}

\noindent Now, we will solve for $\pi_j$ for $j \in \mathcal{K}$. For this purpose, note that
\begin{equation*}
    e_j^{-1} = \frac{d}{E_j - 1}
\end{equation*}
is a positive number for all $j \in \mathcal{K}$. Then, using $p_L = 1 - \sum_{i \in \mathcal{K}} \pi_i$ and $\pi_j = e_j p_L$ for $j \in \mathcal{K}$ (see Eq.~\eqref{eqn:deterministic_maximization_pij_Ej}), we can write
\begin{align*}
    e_j^{-1} \pi_j + \sum_{i \in \mathcal{K}} \pi_i &= 1,
\end{align*}
which can be written in matrix form as
\begin{align*}
    \begin{pmatrix}
        e_{j_1}^{-1} + 1 & 1 & \dotsb & 1 \\
        \vdots           &   & \dotsb & \vdots \\
        1                & 1 & \dotsb & e_{j_{K}}^{-1} + 1
    \end{pmatrix}
    &\begin{pmatrix}
        \pi_{j_1} \\
        \vdots \\
        \pi_{j_{K}}
    \end{pmatrix}
    = \begin{pmatrix}
           1 \\
           \vdots \\
           1
       \end{pmatrix}.
\end{align*}
The matrix appearing in the LHS of the above equation can be written as $\text{diag}(e_j^{-1}) + o o^T$, which is invertible iff $1 + \sum_{j \in \mathcal{K}} e_j \neq 0$. This condition always holds because $e_j > 0$ for each $j \in \mathcal{K}$. Denoting $A = \text{diag}(e_j)$, we can use the Sherman-Morrison formula~\cite{sherman1950adjustment, bartlett1951inverse} to write
\begin{equation}
    \begin{pmatrix}
        \pi_{j_1} \\
        \vdots \\
        \pi_{j_{K}}
    \end{pmatrix}
    = A o - \frac{o^T A o}{1 + o^T A o} A o
    = \begin{pmatrix}
           \frac{e_{j_1}}{1 + \sum_{j \in \mathcal{K}} e_j} \\
           \vdots \\
           \frac{e_{j_{K}}}{1 + \sum_{j \in \mathcal{K}} e_j}
       \end{pmatrix}. \label{eqn:deterministic_maximization_optimum_pi_e}
\end{equation}

\noindent Then, using $e_j = (E_j - 1) / d$ for $j \in \mathcal{K}$ along with Eq.~\eqref{eqn:deterministic_maximization_variableEj} and Eq.~\eqref{eqn:deterministic_maximization_optimum_pi_e}, we obtain
\begin{align*}
    \pi_j &= \frac{d^{\frac{d}{d - K}} - 1}{d + K(d^{\frac{d}{d - K}} - 1)}\ \forall j \in \mathcal{K} \\
    \intertext{and}
    p_L &= \frac{d}{d + K(d^{\frac{d}{d - K}} - 1)}.
\end{align*}
Subsequently, we obtain
\begin{align}
    \mathscr{I}^*_K \equiv \max_{\pi \in \Delta_d} \mathscr{I}_{w}(\pi) &= -\frac{K d^{\frac{d}{d - K}}}{d + K(d^{\frac{d}{d - K}} - 1)} \ln\left(\frac{d^{\frac{d}{d - K}}}{d + K(d^{\frac{d}{d - K}} - 1)}\right) \nonumber \\
                                                       &\hspace{0.3cm}- \frac{(d - K)}{d + K(d^{\frac{d}{d - K}} - 1)} \ln\left(\frac{1}{d + K(d^{\frac{d}{d - K}} - 1)}\right) - \frac{d}{d + K(d^{\frac{d}{d - K}} - 1)} \ln d.
                                                                                                                                                                            \label{eqn:deterministic_maximization_caseb}
\end{align}

\noindent While the expression for $\mathscr{I}^*_K$ looks complicated, it can be greatly simplified. Denoting
\begin{equation*}
    \delta_K = \frac{d - K}{d^{\frac{d}{d - K}}},
\end{equation*}
one can rearrange terms in $\mathscr{I}^*_K$ to obtain
\begin{equation}
    \mathscr{I}^*_K = \ln\left(K + \delta_K\right). \label{eqn:deterministic_maximization_caseb_simplified}
\end{equation}

\noindent From Eq.~\eqref{eqn:deterministic_maximization_casea} and Eq.~\eqref{eqn:deterministic_maximization_caseb} and the fact that $\delta_K > 0$ for $0 < K < d$, we can infer that
\begin{equation*}
    \max_{\pi \in \Delta_d} \mathscr{I}_{\bm{w}}(\pi) = \mathscr{I}^*_K.
\end{equation*}

\noindent\underline{Case 2}: $K = 0$ or $K = d$

$K = 0$ implies $w_i = 0$ for all $i \in [d]$. Thus, $\ip{\pi, \bm{w}} = 0$ for all $\pi \in \Delta_d$, i.e., we will always lose the game. In that case, it can be verified that we must have $\mathscr{I}_{\bm{w}}(\pi) = 0$ for all $\pi \in \Delta_d$.

On the other hand, $K = d$ implies $w_i = 1$ for all $i \in [d]$, meaning that the strategy is perfect. In particular, we have $\ip{\pi, \bm{w}} = 1$ for all $\pi \in \Delta_d$, i.e., $p_L = 0$ for any distribution on questions. It can be verified that $\mathscr{I}_{\bm{w}}(\pi_U) = \ln d$ for the uniform distribution $\pi_U \in \Delta_d$.
\end{proof}

\begin{proof}[\textbf{Proof of Proposition \ref{prop:arbitrary_strategy_maximum_nearperfectgame}}]
    \label{proof:arbitrary_strategy_maximum_nearperfectgame}
    1) Let $\bm{w}^{(D)} \in \{0, 1\}^d$ denote a binary deterministic strategy that can answer $d - 1$ of the $d$ questions correctly. Then, there is exactly one index $k \in [d]$ such that $w^{(D)}_k = 0$. We show that we can achieve $\max_\pi \mathscr{I}_{\bm{w}}(\pi)$ using the deterministic strategy $\mathscr{I}_{\bm{w}^{(D)}}(\widetilde{\pi})$ for some appropriate distribution $\widetilde{\pi} \in \Delta_d$. Note that the winning vector $\bm{w}^{(D)}$ may not be permitted by the game, but it still gives an upper bound on $\max_\pi \mathscr{I}_{\bm{w}}(\pi)$.

    Since by assumption $K^* < d$, we can find an injective function $\sigma\colon \mathcal{K}^* \to [d] \setminus \{k\}$. The map $\sigma$ will be used to label the indices. We construct the distribution $\widetilde{\pi} \in \Delta_d$ as follows:
    \begin{align}
        \widetilde{\pi}_i &= w_{\sigma^{-1}(i)} \pi^*_{\sigma^{-1}(i)}, &&\text{for }i \in \sigma(\mathcal{K}^*) \nonumber \\
        \widetilde{\pi}_i &= 0, &&\text{for }i \notin \sigma(\mathcal{K}^*) \setminus \{k\} \nonumber \\
        \widetilde{\pi}_k &= 1 - \sum_{i \in \sigma(\mathcal{K}^*)} \widetilde{\pi}_i.
    \end{align}
    Since $\sigma$ is an injective function on $\mathcal{K}^*$, it is invertible when its co-domain is restricted to its range $\sigma(\mathcal{K}^*)$. Thus, $\sigma^{-1}(i)$ is well-defined for $i \in \sigma(\mathcal{K}^*)$.

    \noindent Since $w^{(D)}_j = 1$ for $j \in [d] \setminus \{k\}$ and $w^{(D)}_k = 0$, we have
    \begin{equation}
        \ip{\bm{w}^{(D)}, \widetilde{\pi}} = \sum_{i \in \sigma(\mathcal{K}^*)} w_{\sigma^{-1}(i)} \pi^*_{\sigma^{-1}(i)}
                                           = \sum_{j \in \mathcal{K}^*} w_j \pi^*_j
                                           = \ip{\bm{w}, \pi^*}. \label{eqn:strategy_improvement_winningprob}
    \end{equation}
    That is, the winning probabilities obtained from $\bm{w}$, $\pi^*$ and $\bm{w}^{(D)}$, $\widetilde{\pi}$ are the same.

    Next, we will look at the output probability of the channel in the two cases. For $\bm{w}$, $\pi^*$, the output probability is given as (see Eq.~\eqref{eqn:outputprob})
    \begin{align*}
        p^*_i &= \begin{cases}
                     w_i \pi^*_i + \frac{1}{d} - \frac{1}{d} \sum_{j \in \mathcal{K}^*} w_j \pi^*_j & \text{for }i \in \mathcal{K}^* \\[0.4cm]
                     \frac{1}{d} - \frac{1}{d} \sum_{j \in \mathcal{K}^*} w_j \pi^*_j & \text{for }i \notin \mathcal{K}^*,
                  \end{cases}
        \intertext{whereas for $w^{(D)}$, $\widetilde{\pi}$, the output probability is given as}
        \widetilde{p}_i &= \begin{cases}
                               w_{\sigma^{-1}(i)} \pi^*_{\sigma^{-1}(i)} + \frac{1}{d} - \frac{1}{d} \sum_{j \in \mathcal{K}^*} w_j \pi^*_j & \text{for }i \in \sigma(\mathcal{K}^*) \\[0.4cm]
                               \frac{1}{d} - \frac{1}{d} \sum_{j \in \mathcal{K}^*} w_j \pi^*_j & \text{for }i \notin \sigma(\mathcal{K}^*).
                           \end{cases}
    \end{align*}
    Here we used Eq.~\eqref{eqn:strategy_improvement_winningprob} to obtain the expression for $\widetilde{p}$. Since $|\mathcal{K}^*| = |\sigma(\mathcal{K}^*)|$, the probability $\widetilde{p}$ is just a permutation of $p^*$. Furthermore, because Shannon entropy is invariant under permutation of the entries of the probability distribution, we have $H(\widetilde{p}) = H(p^*)$. Then from Eq.~\eqref{eqn:outputprob}, Eq.~\eqref{eqn:mutualinfo}, and Eq.~\eqref{eqn:strategy_improvement_winningprob}, we have
    \begin{equation*}
        \max_{\pi \in \Delta_d} \mathscr{I}_{\bm{w}}(\pi) = \mathscr{I}_{\bm{w}}(\pi^*) = \mathscr{I}_{\bm{w}^{(D)}}(\widetilde{\pi})
                                                                                        \leq \max_{\pi \in \Delta_d} \mathscr{I}_{\bm{w}^{(D)}}(\pi)
                                                                                        = \mathscr{I}^*_{d - 1}
    \end{equation*}
    where $\mathscr{I}^*_{d - 1}$ is given by Eq.\eqref{eqn:deterministic_strategy_maximum_mutualinfo}.

    2) We focus on the case where $K^* = K = d$. This implies $w_i \neq 0$ and $\pi^*_i \neq 0$ for all $i \in [d]$. Our goal is to solve the optimization problem $\max_{\pi \in \Delta_d} \mathscr{I}_{\bm{w}}(\pi)$. Following Eq.~\eqref{eqn:Lagrangian_deterministic}, we write the Lagrangian for the problem
    \begin{align}
        \mathcal{L}(\pi; \lambda, \nu) &= \sum_{i = 1}^d p_i \ln p_i - \sum_{i = 1}^d \pi_j w_j \ln d + \ln d - \sum_{i = 1}^d \lambda_i \pi_i + \nu \left(\sum_{i = 1}^d \pi_i - 1\right),
        \intertext{where}
        p_i &= \pi_i w_i + \frac{1}{d} - \frac{1}{d} \sum_{j = 1}^d \pi_j w_j. \label{eqn:arbitrary_strategy_prob}
    \end{align}
    Since $K^* = d$, the probabilities $p_i$ are non-zero at the optimum, and hence we can differentiate the entropy at the optimum.

    \noindent Note that we are solving a convex optimization problem and Slater's condition holds~\cite{boyd2004convex}. Therefore, KKT conditions are necessary and sufficient for optimality~\cite{boyd2004convex}.
    From KKT conditions, it follows that $\nabla_\pi \mathcal{L} = 0$, which gives
    \begin{equation*}
        \ln p_j - \frac{1}{d} \sum_{i = 1}^d \ln p_i = \frac{\lambda_j}{w_j} - \frac{\nu}{w_j} + \ln d, \quad j \in [d].
    \end{equation*}
    Since by assumption $\pi_i \neq 0$ at the optimum for all $i \in [d]$, we have $\lambda_i = 0$ by complementary slackness. This gives
    \begin{equation}
        p_j = d \left(\prod_{i = 1}^d p_i\right)^{1/d} \exp\left(-\frac{\nu}{w_j}\right)\label{eqn:arbitrary_strategy_prob_eqn}
    \end{equation}
    for $j \in [d]$. By multiplying Eq.~\eqref{eqn:arbitrary_strategy_prob_eqn} for $j = 1, \dotsc, d$, we can infer that
    \begin{equation}
        \nu = d w_{\text{eff}} \ln d \label{eqn:arbitrary_strategy_prob_eqn_nu}
    \end{equation}
    since $p_i \neq 0$ at the optimum. Next, summing over $j \in [d]$ in Eq.~\eqref{eqn:arbitrary_strategy_prob_eqn}, we obtain
    \begin{equation}
        d \left(\prod_{i = 1}^d p_i\right)^{1/d} = \frac{1}{\sum_{j = 1}^d \exp\left(-\frac{\nu}{w_j}\right)}.
    \end{equation}

    \noindent Combining this with Eq.~\eqref{eqn:arbitrary_strategy_prob_eqn} and Eq.~\eqref{eqn:arbitrary_strategy_prob_eqn_nu}, we obtain
    \begin{equation}
        p_i = \frac{\exp\left(-\frac{d w_{\text{eff}} \ln d}{w_i}\right)}{\sum_{j = 1}^d \exp\left(-\frac{d w_{\text{eff}} \ln d}{w_j}\right)}. \label{eqn:boltzmannprob_opt}
    \end{equation}
    Therefore, at the optimum, we have a Boltzmann distribution for the probability of outcomes of the channel.

    Now, instead of solving for $\pi$ from Eq.~\eqref{eqn:boltzmannprob_opt}, we find a suitable expression for the objective function $\mathscr{I}_w(\pi)$ in terms of $p$ and $w$ (instead of $\pi$ and $w$). To this end, subtracting Eq.~\eqref{eqn:arbitrary_strategy_prob} corresponding to two different indices $i$ and $j$ and summing over the index $i$, we obtain
    \begin{equation*}
        \sum_{i = 1}^d \pi_i w_i = 1 + d(\pi_j w_j - p_j).
    \end{equation*}
    Dividing this equation by $w_j$ and summing over $j$, we get
    \begin{align*}
        \sum_{j = 1}^d \frac{1}{w_j} \sum_{i = 1}^d \pi_i w_i &= \sum_{j = 1}^d \frac{1}{w_j} + d\left(1 - \sum_{j = 1}^d \frac{p_j}{w_j}\right) \\
        \sum_{i = 1}^d \pi_i w_i = 1 &- d w_{\text{eff}} \left(\sum_{j = 1}^d \frac{p_j}{w_j} - 1\right),
    \end{align*}
    where $w_{\text{eff}}$ is as defined in Eq.~\eqref{eqn:weff}. Substituting this in Eq.~\eqref{eqn:mutualinfo}, we find that the mutual information $\mathscr{I}_{\bm{w}}(\pi)$ can be written as a function of the outcome probability $p$ as follows:
    \begin{equation}
        \mathscr{I}_{\bm{w}}(\pi) = H(p) - d w_{\text{eff}} \left(\sum_{j = 1}^d \frac{p_j}{w_j} - 1\right) \ln d. \label{eqn:arbit_strategy_mutualinfo_outputprob}
    \end{equation}

    To obtain the value of $\mathscr{I}_{\bm{w}}(\pi)$ at the optimum, we substitute Eq.~\eqref{eqn:boltzmannprob_opt} in Eq.~\eqref{eqn:arbit_strategy_mutualinfo_outputprob}. Denoting $\max_{\pi \in \Delta_d} \mathscr{I}_{\bm{w}}(\pi) = \mathscr{I}^*(\bm{w})$, one can rearrange terms to obtain
    \begin{align}
        \mathscr{I}^*(\bm{w}) &= d w_{\text{eff}} \ln d + \ln\left(\sum_{j = 1}^d \exp\left(-\frac{d w_{\text{eff}} \ln d}{w_j}\right)\right) \\
                              &= \ln\left(\sum_{j = 1}^d \exp\left[d w_{\textnormal{eff}} \ln d \left(1 - \frac{1}{w_j}\right)\right]\right),
    \end{align}
    where we get the last equation by noting that $d w_{\text{eff}} \ln d = \ln \exp(d w_{\text{eff}} \ln d)$.
\end{proof}

\begin{proof}[\textbf{Proof of Proposition~\ref{prop:strategy_coeff_maximization_upper_bound}}]
    \label{proof:strategy_coeff_maximization_upper_bound}
    To obtain an upper bound on
    \begin{equation*}
        \sup_{\bm{w} \in \overline{\winvec_\corr}, \bm{w} > 0} \mathscr{I}^*(\bm{w})
    \end{equation*}
    given in Eq.~\eqref{eqn:strategy_coeeff_maximization_upper_bound}, we solve a relaxation of this optimization problem. For this purpose, note that the set over which we optimize can be written as
    \begin{equation*}
        \overline{\winvec_\corr}_> = \left\{\bm{w} \in [0, 1]^d \mid w > 0,\ \frac{\sum_{i = 1}^d w_i}{d} \leq \omega^{\mathcal{C}}(G)\right\}.
    \end{equation*}

    Let $\HM(w_1, \dotsc, w_d)$ be the harmonic mean and $\AM(w_1, \dotsc, w_d)$ be the arithmetic mean of $w_1, \dotsc, w_d$, respectively. Observe that
    \begin{equation*}
        d w_{\text{eff}} = \HM(w_1, \dotsc, w_d).
    \end{equation*}
    Then, since $\HM(w_1, \dotsc, w_d) \leq \AM(w_1, \dotsc, w_d)$ for $w_1, \dotsc, w_d > 0$ (with equality iff $w_1 = \dotsb = w_d$), we have
    \begin{equation}
        d w_{\text{eff}} \leq \frac{\sum_{i = 1}^d w_i}{d} \leq \omega^{\strat_\corr}(G). \label{eqn:AM_HM_ineq}
    \end{equation}
    
    \noindent With this in mind, we define the set
    \begin{equation*}
        \overline{\winvec_\corr}_{\text{eff}} = \left\{w \in [0, 1]^d \mid w > 0,\ d w_{\text{eff}} \leq \omega^{\strat_\corr}(G)\right\}.
    \end{equation*}
    Then, from Eq.~\eqref{eqn:AM_HM_ineq}, it follows that $\overline{\winvec_\corr}_> \subseteq \overline{\winvec_\corr}_{\text{eff}}$.
    Therefore, we have
    \begin{equation*}
        \sup_{\bm{w} \in \overline{\winvec_\corr}_>} \mathscr{I}^*(\bm{w}) \leq \sup_{\bm{w} \in \overline{\winvec_\corr}_{\text{eff}}} \mathscr{I}^*(\bm{w}).
    \end{equation*}
    
    \noindent Subsequently, we solve the optimization problem $\sup_{\bm{w} \in \overline{\winvec_\corr}_{\text{eff}}} \mathscr{I}^*(\bm{w})$. To that end, we make the change of variables
    \begin{equation*}
        t_i = \frac{1}{w_i}
    \end{equation*}
    for $i \in [d]$. The optimization problem then becomes
    \begin{align}
        \sup \quad &\ln\left(\sum_{i = 1}^d \exp\left[\frac{d}{\sum_{j = 1}^d t_j} \ln d\ \left(1 - t_i\right)\right]\right) \nonumber \\
        \text{s.t.} \quad &\bm{t} \geq 1 \nonumber \\
                          &\frac{\sum_{i = 1}^d t_i}{d} \geq \frac{1}{\omega^{\strat_\corr}(G)}. \label{eqn:strategy_coeff_maximization_relaxed_eff}
    \end{align}

    \noindent We solve Eq.~\eqref{eqn:strategy_coeff_maximization_relaxed_eff} by splitting it into two maximizations. To that end, define
    \begin{align}
        s &= \frac{\sum_{i = 1}^d t_i}{d},
        \intertext{so that the objective of the optimization in Eq.~\eqref{eqn:strategy_coeff_maximization_relaxed_eff} can be written as}
        f(\bm{t}, s) &= f(\bm{t}, s) = \text{LSE}\left(-\frac{\ln d}{s} t_1, \dotsc, -\frac{\ln d}{s} t_d\right) + \frac{\ln d}{s}, \label{eqn:strategy_coeff_objective_t_s}
    \end{align}
    where $\text{LSE}(y_1, \dotsc, y_d) = \ln(\sum_{j = 1}^d e^{y_j})$ is the log-sum-exp function. Since LSE is a convex function, we can conclude that $f(\bm{t}, s)$ is convex in $\bm{t}$ for a fixed $s$. Then, Eq.~\eqref{eqn:strategy_coeff_maximization_relaxed_eff} can be expressed as
    \begin{equation*}
        \sup_{\bm{t} \in \overline{\winvec_\corr}_{\text{eff}}} f(\bm{t}) = \sup_{s \geq (\omega^{\mathcal{C}}(G))^{-1}} \max_{\bm{t} \in \overline{\winvec_\corr}_{\text{eff}}^{(s)}} f(\bm{t}, s),
    \end{equation*}
    where
    \begin{equation*}
        \overline{\winvec_\corr}_{\text{eff}}^{(s)} = \left\{\bm{t} \in \mathbb{R}^d \mid \bm{t} \geq 1,\ \frac{\sum_{i = 1}^d t_i}{d} = s\right\}
    \end{equation*}
    is a translated and scaled simplex. The extreme points of $\overline{\winvec_\corr}_{\text{eff}}^{(s)}$ are given as 
    \begin{equation}
        \bm{t}^{(1)}_s = \begin{pmatrix} 1, & \dotsc, & 1, & 1 + (s - 1) d) \end{pmatrix}^T, \label{eqn:strategy_coeff_set_relaxation_eff_s_extremept}
    \end{equation}
    and its permutations are denoted by $\bm{t}^{(2)}_s, \dotsc, \bm{t}^{(d)}_s$. Note that the constraints in Eq.~\eqref{eqn:strategy_coeff_maximization_relaxed_eff} imply $s \geq 1$.

    Now, we are seeking to maximize the convex function $f(\bm{t}, s)$ over the set $\overline{\winvec_\corr}_{\text{eff}}^{(s)}$ (for a fixed $s$). However, since $f$ is convex and any point $\bm{t} \in \overline{\winvec_\corr}_{\text{eff}}^{(s)}$ can be written as $\bm{t} = \sum_{i = 1}^d \lambda_i \bm{t}^{(i)}_s$, we have
    \begin{equation*}
        f(\bm{t}, s) \leq \sum_{i = 1}^d \lambda_i f(\bm{t}^{(i)}_s, s).
    \end{equation*}
    Then, because $f(\bm{t}, s)$ is invariant under the permutation of the components of $\bm{t}$, we have $f(\bm{t}^{(i)}_s, s) =  f(\bm{t}^{(j)}_s, s)$ for any $i,j\in[d]$. Therefore,
    \begin{equation*}
        \max_{\bm{t} \in \overline{\winvec_\corr}_{\text{eff}}^{(s)}} f(\bm{t}, s) = f(\bm{t}^{(1)}_s, s) = \ln\left(d - 1 + \exp\left[-d \ln d \left(\frac{s - 1}{s}\right)\right]\right)
    \end{equation*}
    where we substituted Eq.~\eqref{eqn:strategy_coeff_set_relaxation_eff_s_extremept} in Eq.~\eqref{eqn:strategy_coeff_objective_t_s} in the last equation. Then, since $f(\bm{t}^{(1)}, s)$ is a decreasing function of $s$, we can infer that
    \begin{equation*}
        \sup_{s \geq (\omega^{\strat_\corr}(G))^{-1}} \max_{\bm{t} \in \overline{\winvec_\corr}_{\text{eff}}^{(s)}} f(\bm{t}, s) = f\left(\bm{t}^{(1)}_s, \frac{1}{\omega^{\strat_\corr}(G)}\right)
        = \ln\left(d - 1 + d^{-(1 - \omega^{\strat_\corr}(G)) d}\right),
    \end{equation*}
    giving the desired bound.
\end{proof}

\section{Some notes on strategies for nonlocal games\label{app:strategy_nonlocal_games}}
We begin by characterizing the extreme points of the set of conditional distributions (over finite sets).
\begin{proposition}
    \label{prop:deterministic_strategy_extreme_point}
    Let $\mathcal{X}$ and $\mathcal{Y}$ be finite sets, and let $\mathcal{C}$ denote the set of conditional probability distributions on $\mathcal{Y}$ given $\mathcal{X}$. Then, the extreme points of $\mathcal{C}$ correspond to conditional probabilities $p^f_{\bm{Y} | \bm{X}}(\bm{y} | \bm{x}) = \delta_{\bm{y}, f(\bm{x})}$ obtained from functions $f\colon \mathcal{X} \to \mathcal{Y}$. In particular, any conditional probability distribution $p_{\bm{Y} | \bm{X}} \in \mathcal{C}$ can be written as a convex combination of $p^f_{\bm{Y} | \bm{X}}$ corresponding to functions $f$.
\end{proposition}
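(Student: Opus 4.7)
The plan is to exploit the product structure of the set $\mathcal{C}$. A conditional distribution $p_{\bm{Y}|\bm{X}}$ is equivalent to specifying, for each $\bm{x} \in \mathcal{X}$, a probability distribution $p(\cdot|\bm{x}) \in \Delta_{|\mathcal{Y}|}$ over $\mathcal{Y}$. Since these choices are unconstrained across different values of $\bm{x}$, there is a natural bijection $\mathcal{C} \cong \prod_{\bm{x} \in \mathcal{X}} \Delta_{|\mathcal{Y}|}$, and this bijection is affine when $\mathcal{C}$ is given the natural convex structure inherited from $\mathbb{R}^{|\mathcal{X}|\cdot|\mathcal{Y}|}$. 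In particular, $\mathcal{C}$ is a compact convex polytope.

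Next I would invoke the standard fact that the extreme points of a finite product of convex sets $\prod_i K_i$ are precisely the products of extreme points of the individual factors, i.e. $\text{ext}(\prod_i K_i) = \prod_i \text{ext}(K_i)$. Applying this to $\mathcal{C}$, I need only identify the extreme points of $\Delta_{|\mathcal{Y}|}$, which are well known to be the $|\mathcal{Y}|$ standard basis vectors $e_{\bm{y}}$, corresponding to the point masses $\delta_{\cdot,\bm{y}}$. Therefore an extreme point of $\mathcal{C}$ is obtained by choosing, for each $\bm{x} \in \mathcal{X}$, a single element $f(\bm{x}) \in \mathcal{Y}$, giving precisely $p^f_{\bm{Y}|\bm{X}}(\bm{y}|\bm{x}) = \delta_{\bm{y}, f(\bm{x})}$. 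Conversely every such $p^f$ arises this way from a function $f\colon \mathcal{X} \to \mathcal{Y}$, establishing the characterization.

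For the last assertion, since $\mathcal{C}$ is a compact convex subset of a finite-dimensional real vector space (indeed a polytope), the Minkowski--Carath\'eodory theorem implies that every element of $\mathcal{C}$ can be written as a convex combination of finitely many of its extreme points. Combined with the characterization above, this shows that any $p_{\bm{Y}|\bm{X}} \in \mathcal{C}$ decomposes as $p_{\bm{Y}|\bm{X}} = \sum_f \lambda_f\, p^f_{\bm{Y}|\bm{X}}$ with $\lambda_f \geq 0$ and $\sum_f \lambda_f = 1$, the sum ranging over finitely many functions $f\colon \mathcal{X} \to \mathcal{Y}$.

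There is no real obstacle here; the only thing to be slightly careful about is justifying the product-of-extreme-points identity, which follows directly from the definition of extreme point: if $(x_i)_i = \frac{1}{2}((y_i)_i + (z_i)_i)$ in $\prod_i K_i$ then $x_i = \frac{1}{2}(y_i + z_i)$ componentwise, so extremality in the product is equivalent to extremality in every factor. Alternatively, one can give a direct argument: suppose $p_{\bm{Y}|\bm{X}}$ is extreme but not deterministic; then for some $\bm{x}_0$ the distribution $p(\cdot|\bm{x}_0)$ has support of size at least two, and one can split it as a nontrivial convex combination of two distributions that agree with $p_{\bm{Y}|\bm{X}}$ on every $\bm{x} \neq \bm{x}_0$, contradicting extremality.
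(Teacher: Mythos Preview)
Your proposal is correct and follows essentially the same approach as the paper: identify $\mathcal{C}$ with the product $(\Delta_{|\mathcal{Y}|})^{|\mathcal{X}|}$, use that extreme points of a product are products of extreme points (the paper asserts this without the justification you supply), identify the extreme points of each simplex as the standard basis vectors, and then invoke a convex-hull theorem (the paper cites Krein--Milman where you cite Minkowski--Carath\'eodory).
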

\begin{proof}
    For convenience, let us denote $|\mathcal{X}| = m$ and $|\mathcal{Y}| = n$ for some $m, n \in \mathbb{N}$. We also fix a labelling of elements $\mathcal{X} = \{x_1, \dotsc, x_m\}$ and $\mathcal{Y} = \{y_1, \dotsc, y_n\}$. The set of conditional probability distributions $\mathcal{C}$ can be thought of as the set of functions from $\mathcal{X} \to \Delta_n$. That is, for each $\bm{x} \in \mathcal{X}$, we have a probability distribution over $\mathcal{Y}$. Thus, we identify $\mathcal{C}$ with the set $(\Delta_n)^m$.

    Since $\Delta_n$ is a simplex in $\mathbb{R}^n$, its extreme points are $\bm{e}_1, \dotsc, \bm{e}_n$, where $\bm{e}_i$ is the standard Euclidean basis vector for the $i$th coordinate. Therefore, the extreme points of $\mathcal{C} = (\Delta_n)^m$ are $(\bm{e}_{i_1}, \dotsc, \bm{e}_{i_m})$ for $i_1, \dotsc, i_m \in [n]$. We argue that such extreme points can be obtained from functions $f\colon \mathcal{X} \to \mathcal{Y}$.

    To that end, given an extreme point $(\bm{e}_{i_1}, \dotsc, \bm{e}_{i_m})$ of $\mathcal{C}$, construct the function
    \begin{equation*}
        f(\bm{x}_j) = \bm{y}_{i_j} \qquad j \in [m].
    \end{equation*}
    Observe that the conditional probability distribution determined by the function $f$ is
    \begin{equation*}
        p^f_{\bm{Y} | \bm{X}}(\cdot, \bm{x}_j) = \delta_{\cdot, f(\bm{x}_j)}
                                               = \delta_{\cdot, \bm{y}_{i_j}}
                                               = \bm{e}_{i_j}
    \end{equation*}
    for $j \in [m]$. Since there are a total of $|\mathcal{Y}|^{|\mathcal{X}|}$ extreme points of $\mathcal{C}$, and just as many functions from $\mathcal{X}$ to $\mathcal{Y}$, the above construction gives a bijective mapping between the extreme points and these functions. Since $\mathcal{C}$ is a compact and convex set with a finite number of extreme points, it is generated as the convex hull of its extreme points by the Krein-Milman theorem~\cite{conway2019course}.
\end{proof}

In the following proposition, we show that the set of no-signalling strategies is a compact and convex set. The proof of this proposition is constructive, and it can therefore be used to construct the set of no-signalling distributions numerically.
\begin{proposition}
    \label{prop:NS_strategies_convex_polytope}
    Let $\mathcal{X}_1, \dotsc, \mathcal{X}_N$ denote the question set for $N$ players, and let $\mathcal{Y}_1, \dotsc, \mathcal{Y}_N$ denote the answer set. Let $\strat_{\text{NS}}$ denote the set of no-signalling strategies used by the players. That is, $p_{\bm{Y} | \bm{X}} \in \strat_{\text{NS}}$ iff
    \begin{equation}
        p_{Y_i | \bm{X}}(y_i | x_1, \dotsc, x_N) = p_{Y_i | X_i}(y_i | x_i)\ \forall x_k \in \mathcal{X}_k,\ k \in [N] \setminus \{i\} \label{eqn:no_signalling_strategy_rewritten}
    \end{equation}
    for all $y_i \in \mathcal{Y}_i$, $x_i \in \mathcal{X}_i$, $i \in [N]$. Then, $\strat_{\text{NS}}$ is a compact and convex set. Specifically, $\strat_{\text{NS}}$ is a convex polytope obtained as the intersection of hyperplanes and halfspaces.
\end{proposition}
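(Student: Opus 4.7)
The plan is to exhibit $\strat_{\text{NS}}$ as the solution set of a finite system of linear equalities and inequalities in a finite-dimensional Euclidean space, which is by definition a convex polyhedron; convexity and closedness are then immediate, and compactness will follow from the observation that the defining constraints already force every coordinate into the interval $[0,1]$.

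First I would identify each candidate conditional distribution $p_{\bm{Y} | \bm{X}}$ with its coordinate vector in $\mathbb{R}^{|\mathcal{X}|\,|\mathcal{Y}|}$, indexed by pairs $(\bm{y},\bm{x})$ where $\mathcal{X} = \mathcal{X}_1 \times \dotsm \times \mathcal{X}_N$ and similarly for $\mathcal{Y}$. The conditions for $p_{\bm{Y}|\bm{X}}$ to be a conditional probability distribution then translate directly into linear constraints: non-negativity $p_{\bm{Y}|\bm{X}}(\bm{y}|\bm{x}) \geq 0$ cuts out $|\mathcal{X}|\,|\mathcal{Y}|$ closed halfspaces, while normalization $\sum_{\bm{y}\in\mathcal{Y}} p_{\bm{Y}|\bm{X}}(\bm{y}|\bm{x}) = 1$ for each $\bm{x}\in\mathcal{X}$ cuts out $|\mathcal{X}|$ hyperplanes.

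Next I would rewrite the no-signalling constraint~\eqref{eqn:no_signalling_strategy_rewritten} directly as linear relations among these coordinates. The marginal appearing on the left-hand side is by definition $p_{Y_i|\bm{X}}(y_i|\bm{x}) = \sum_{(y_j)_{j\neq i}} p_{\bm{Y}|\bm{X}}(\bm{y}|\bm{x})$, a linear functional of $p_{\bm{Y}|\bm{X}}$. The condition that this marginal depends only on $x_i$, not on the other players' questions, is equivalent to requiring, for every $i \in [N]$, every $y_i \in \mathcal{Y}_i$, every $x_i \in \mathcal{X}_i$, and every pair $\bm{x},\bm{x}' \in \mathcal{X}$ that share their $i$-th coordinate, the identity $\sum_{(y_j)_{j\neq i}} p_{\bm{Y}|\bm{X}}(\bm{y}|\bm{x}) = \sum_{(y_j)_{j\neq i}} p_{\bm{Y}|\bm{X}}(\bm{y}|\bm{x}')$. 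Each such equation is linear in $p_{\bm{Y}|\bm{X}}$, so this adds only a finite collection of hyperplanes; the right-hand side $p_{Y_i|X_i}$ of Eq.~\eqref{eqn:no_signalling_strategy_rewritten} can then be \emph{defined} as the common value of these marginals, rather than treated as an independent variable.

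Finally I would conclude that $\strat_{\text{NS}}$ is the intersection of finitely many closed halfspaces and hyperplanes, hence a convex polyhedron (in particular convex and closed). Because non-negativity together with normalization forces $p_{\bm{Y}|\bm{X}}(\bm{y}|\bm{x}) \in [0,1]$ for every $(\bm{y},\bm{x})$, the polyhedron lies inside the unit hypercube and is therefore bounded; a bounded polyhedron in Euclidean space is a polytope and is compact. The only delicate point, which is really the whole content of the proof, is to phrase the no-signalling condition as linear equations purely among the primitive variables $p_{\bm{Y}|\bm{X}}(\bm{y}|\bm{x})$ without circularly invoking $p_{Y_i|X_i}$ as an independent quantity; once this is done, both linearity of all constraints and boundedness are immediate.
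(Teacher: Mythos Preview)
Your proposal is correct and takes essentially the same approach as the paper: both identify strategies with vectors in $\mathbb{R}^{|\mathcal{X}||\mathcal{Y}|}$, rewrite the no-signalling condition~\eqref{eqn:no_signalling_strategy_rewritten} as the linear equalities $\sum_{(y_j)_{j\neq i}} p_{\bm{Y}|\bm{X}}(\bm{y}|\bm{x}) = \sum_{(y_j)_{j\neq i}} p_{\bm{Y}|\bm{X}}(\bm{y}|\bm{x}')$ for $\bm{x},\bm{x}'$ agreeing in the $i$-th coordinate, and then intersect with the normalization and non-negativity constraints. The only difference is that the paper goes further and explicitly constructs the matrices $D_{x_i}S_{(x_i,y_i)}$ encoding these hyperplanes, since the authors want a concrete description usable for numerical computation of $\omega^{\text{NS}}(G)$; your more abstract argument is sufficient for the stated proposition.
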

\begin{proof}
    For convenience, we denote $\mathcal{X} = \mathcal{X}_1 \times \dotsm \times \mathcal{X}_N$ as the question set and $\mathcal{Y} = \mathcal{Y}_1 \times \dotsm \times \mathcal{Y}_N$ as the answer set. The set of all strategies $p_{\bm{Y} | \bm{X}}(y_1, \dotsc, y_N | x_1, \dotsc, x_N)$ can be written as a product of simplices $(\Delta_{|\mathcal{Y}|})^{|\mathcal{X}|}$ (see Prop.~\ref{prop:deterministic_strategy_extreme_point}), and is therefore a compact set. Also note that we can write the no-signalling condition given in Eq.~\eqref{eqn:no_signalling_strategy_rewritten} as
    \begin{multline}
        \sum_{\substack{y_j \in \mathcal{Y}_j\\j \neq i}} p_{\bm{Y} | \bm{X}}(y_1, \dotsc, y_i, \dotsc, y_N | x_1, \dotsc, x_i, \dotsc, x_N) \\
         = \sum_{\substack{y_j \in \mathcal{Y}_j \\j \neq i}} p_{\bm{Y} | \bm{X}}(y_1, \dotsc, y_i, \dotsc, y_N | x_1', \dotsc, x_i, \dotsc, x_N')
            \quad \forall x_k, x_k' \in \mathcal{X}_k,\ k \in [N] \setminus \{i\} \label{eqn:no_signalling_strategy_alternate_def}
    \end{multline}
    for all $y_i \in \mathcal{Y}_i$, $x_i \in \mathcal{X}_i$, $i \in [N]$. In the above equation, the sum over $y_j \in \mathcal{Y}_j$ is a shorthand for the sum over $y_1 \in \mathcal{Y}_1, \dotsc, y_N \in \mathcal{Y}_N$.

    We will show that the set of no-signalling strategies $\strat_{\text{NS}}$ is a closed set. First, note that $\strat_{\text{NS}} \subseteq (\Delta_{|\mathcal{Y}|})^{|\mathcal{X}|}$, i.e., the set of no-signalling strategies is contained in the set of all strategies. Thus, we can write the elements of $\strat_{\text{NS}}$ as vectors $v = (v^{(1)}, \dotsc, v^{(|\mathcal{X}|)})$, where $v^{(i)} \in \Delta_{|\mathcal{Y}|}$ is a probability vector. Note that $v^{(i)}$ is a $|\mathcal{Y}|$-dimensional vector whereas $v$ is a $|\mathcal{X}||\mathcal{Y}|$-dimensional vector. Essentially, the vector $v^{(i)}$ denotes a probability distribution $p_{\bm{Y} | \bm{X}}(y | x^{(i)})$ over $\mathcal{Y}$ for a fixed $x^{(i)} \in \mathcal{X}$. Using this vectorial representation of a strategy, we will write the no-signalling condition given in Eq.~\eqref{eqn:no_signalling_strategy_alternate_def} in matrix form.

    To that end, fix an ordering for the elements of $\mathcal{Y}$. Then, we can index the elements of $v^{(i)}$ as $v^{(i)}_{(y_1, \dotsc, y_N)}$ for $(y_1, \dotsc, y_N) \in \mathcal{Y}$ corresponding to that ordering. For each $y_i \in \mathcal{Y}_i$ ($i \in [d]$), let $s_{y_i}$ denote the $|\mathcal{Y}|$-dimensional vector with $1$ at each index $(y_1', \dotsc, y_N') \in \mathcal{Y}$ with $y_i' = y_i$ and $0$ elsewhere. For example, if $\mathcal{Y}_1 = \{a, b\}$ and $\mathcal{Y}_2 = \{c, d\}$ and we write the elements of $\mathcal{Y} = \{(a, c), (a, d), (b, c), (b, d)\}$ in that order, then, $s_a = (1 , 1 , 0 , 0 )^T$, $s_c = (1, 0, 1, 0)^T$, and so forth. Observe also that, for each $k \in [|\mathcal{X}|]$,
    \begin{equation*}
        s_{y_i}^T v^{(k)} = \sum_{\substack{y_j \in \mathcal{Y}_j\\j \neq i}} p_{\bm{Y} | \bm{X}}(y_1, \dotsc, y_i, \dotsc, y_N | x^{(k)}).
    \end{equation*}

    Similarly, fix an ordering for $\mathcal{X}$. For each $x_i \in \mathcal{X}_i$ with $i \in [N]$, let $\mathcal{I}_{x_i}$ be the set that contains the indices $(x_1', \dotsc, x_N') \in \mathcal{X}$ with $x_i' = x_i$. Then, define the $|\mathcal{I}_{x_i}| \times |\mathcal{X}||\mathcal{Y}|$ matrix $S_{(x_i, y_i)}$ in block form as follows. Imagine each row of $S_{(x_i, y_i)}$ being split into $|\mathcal{X}|$ blocks of $|\mathcal{Y}|$-dimensional vectors, i.e., $\begin{pmatrix} b_1 & \dotsb & b_{|\mathcal{X}|} \end{pmatrix}$  with $b_i$ a $|\mathcal{Y}|$-dimensional (row) vector. We label the rows of $S_{(x_i, y_i)}$ with $\mathcal{I}_{x_i}$. Define the row $k \in \mathcal{I}_{x_i}$ to be the block $\begin{pmatrix} 0_{1 \times |\mathcal{Y}|} & \dotsb & s_{y_i}^T & \dotsb & 0_{1 \times |\mathcal{Y}|} \end{pmatrix}$ with $s_{y_i}^T$ in the $k$-th block.

    Let us label the blocks $v^{(i)}$ of the vector $v = (v^{(1)}, \dotsc, v^{(|\mathcal{X}|)}) \in (\Delta_{|\mathcal{Y}|})^{|\mathcal{X}|}$ as $v^{(x_1, \dotsc, x_N)}$ for $(x_1, \dotsc, x_N) \in \mathcal{X}$ using the ordering of $\mathcal{X}$ that we have fixed. Then, we have $(S_{(x_i, y_i)} v)_{(x_1, \dotsc, x_N)} = s_{y_i}^T v^{(x_1, \dotsc, x_N)}$ for $(x_1, \dotsc, x_N) \in \mathcal{I}_{x_i}$. That is, $S_{(x_i, y_i)} v$ is a $|\mathcal{I}_{x_i}|$-dimensional vector with the entries
    \begin{equation*}
        (S_{(x_i, y_i)} v)_{(x_1, \dotsc, x_N)} = \sum_{\substack{y_j \in \mathcal{Y}_j\\j \neq i}} p_{\bm{Y} | \bm{X}}(y_1, \dotsc, y_i, \dotsc, y_N | x_1, \dotsc, x_i, \dotsc, x_N)
    \end{equation*}
    for $(x_1, \dotsc, x_N) \in \mathcal{I}_{x_i}$. Since the $i$-th component of $(x_1, \dotsc, x_N) \in \mathcal{I}_{x_i}$ is fixed to be $x_i$, the no-signalling condition given in Eq.~\eqref{eqn:no_signalling_strategy_alternate_def} says that all the components of $S_{(x_i, y_i)} v$ are equal.

    Therefore, we enforce the no-signalling condition as follows. Define a $(|\mathcal{I}_{x_i}| - 1) \times |\mathcal{I}_{x_i}|$ matrix
    \begin{equation*}
        D_{x_i} = \begin{pmatrix}
                      1 & -1 & 0 & \dotsb & 0 & 0 \\
                      0 & 1 & -1 & \dotsb & 0 & 0 \\
                        &   &    & \ddots &   &  \\
                      0 & 0 & 0 & \dotsb & 1 & -1
                  \end{pmatrix},
    \end{equation*}
    and observe that if $r = \begin{pmatrix} r_1 & \dotsb & r_{|\mathcal{I}_{x_i}|}\end{pmatrix}^T$ is any $|\mathcal{I}_{x_i}|$-dimensional vector, then $D_{x_i} r$ is the $(|\mathcal{I}_{x_i}| - 1)$-dimensional vector $\begin{pmatrix} r_1 - r_2 & \dotsb & r_{|\mathcal{I}_{x_i}| - 1} - r_{|\mathcal{I}_{x_i}|}\end{pmatrix}^T$. Then, by the preceding remarks, the no-signalling condition can be written as
    \begin{equation}
        D_{x_i} S_{(x_i, y_i)} v = 0 \quad \text{ for all } x_i \in \mathcal{X}_i,\ y_i \in \mathcal{Y}_i,\ i \in [N]. \label{eqn:no_signalling_strategy_matrix_form}
    \end{equation}

    Since $D_{x_i} S_{(x_i, y_i)}$ is a $(|\mathcal{I}_{x_i}| - 1) \times |\mathcal{X}||\mathcal{Y}|$ matrix and $v$ is a $|\mathcal{X}||\mathcal{Y}|$-dimensional vector, the equation $D_{x_i} S_{(x_i, y_i)} v = 0$ encodes $(|\mathcal{I}_{x_i}| - 1)$ hyperplanes (thinking of $v$ as an arbitrary $|\mathcal{X}||\mathcal{Y}|$-dimensional vector). Therefore, the set of no-signalling strategies can be written as
    \begin{equation}
        \strat_{\text{NS}} = \{v \in (\Delta_{|\mathcal{Y}|})^{|\mathcal{X}|} \mid D_{x_i} S_{(x_i, y_i)} v = 0\ \forall x_i \in \mathcal{X}_i,\ y_i \in \mathcal{Y}_i,\ i \in [N]\}.
    \end{equation}
    This is the intersection of the compact set $(\Delta_{|\mathcal{Y}|})^{|\mathcal{X}|}$ with the hyperplanes defined by $D_{x_i} S_{(x_i, y_i)} v = 0$. Since hyperplanes are closed, the set $\strat_{\text{NS}}$ is closed as well.

    Since $(\Delta_{|\mathcal{Y}|})^{|\mathcal{X}|}$ is bounded, $\strat_{\text{NS}}$ is also bounded, so that $\strat_{\text{NS}}$ is compact. The convexity of $\strat_{\text{NS}}$ follows from the fact that $(\Delta_{|\mathcal{Y}|})^{|\mathcal{X}|}$ and hyperplanes are convex.
\end{proof}

\section{Analysis of algorithms for optimization of Lipschitz-like function\label{app:Lipschitzlike_optimization_analysis}}
In this section, we present convergence analysis for algorithms used for Lipschitz-like optimization. We also present an algorithm for performing optimization of Lipschitz-like functions over an arbitrary compact \& convex domain.

\subsection{Optimizing Lipschitz-like functions over an interval using modified Piyavskii-Shubert algorithm}
We first present a convergence analysis for modified Piyavskii-Shubert algorithm that finds the global maximum of Lipschitz-like functions over a closed interval.

\begin{proposition}
    \label{prop:Lipshitz_like_maximization_analysis}
    Let $\beta\colon \mathbb{R}_+ \to \mathbb{R}$ be a non-negative, continuous, monotonically increasing function with $\beta(0) = 0$. Let $\mathcal{D} = [a, b]$ be a closed interval, where $a, b \in \mathbb{R}$. Let $f\colon \mathcal{D} \to \mathbb{R}$ be a real-valued function satisfying
    \begin{equation*}
        |f(q) - f(q')| \leq \beta(|q - q'|)
    \end{equation*}
    Then, for each choice of tolerance $\epsilon > 0$, Alg.~\ref{alg:maximize_Lipschitz_like_function_1D} terminates in a finite number of time steps. If $f(q^*)$ denotes the output of Alg.~\ref{alg:maximize_Lipschitz_like_function_1D} corresponding to a tolerance of $\epsilon > 0$, we have $\max_{q \in \mathcal{D}} f(q) - f(q^*) \leq \epsilon$. The number of time steps required to converge with a tolerance of $\epsilon > 0$ is bounded above by $\lceil (b - a) / \delta \rceil$, where $\delta = \sup\{\delta' > 0 \mid \beta(x) < \epsilon/2\ \forall\ 0 \leq x \leq \delta'\}$.
\end{proposition}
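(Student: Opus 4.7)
The plan is to decompose the proof into three parts: (i) show that the bounding envelope $F$ produced by the algorithm is a pointwise upper bound on $f$, which yields the claimed precision guarantee the moment the loop exits; (ii) bound the gap $F(q^*) - f(q^*)$ in terms of the spacing between $q^*$ and its neighbors in the current partition; (iii) use that spacing bound to control the total number of iterations.

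For (i), the $\beta$-Lipschitz-like hypothesis gives immediately $f(q) \leq f(q^{(i)}) + \beta(|q - q^{(i)}|) = F_i(q)$ for every $q \in [a,b]$ and every stored sample $q^{(i)}$. Consequently the upper envelope $F(q) = \min_i F_i(q)$ satisfies $F(q) \geq f(q)$ pointwise, and in particular $F(q^*) \geq \max_{q \in [a,b]} f(q)$ at every iteration. When the while loop exits we have $F(q^*) - f(q^*) \leq \epsilon$, so $\max f - f(q^*) \leq \epsilon$, as claimed.

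For (ii), at the intersection point $q^* = \overline{q}^{(m)} \in [q^{(m)}, q^{(m+1)}]$ chosen by the algorithm, one has $F(q^*) = F_m(q^*) = f(q^{(m)}) + \beta(|q^* - q^{(m)}|)$, while $f(q^*) \geq f(q^{(m)}) - \beta(|q^* - q^{(m)}|)$ by the Lipschitz-like property applied to $f$. Subtracting these two inequalities yields $F(q^*) - f(q^*) \leq 2\beta(|q^* - q^{(m)}|)$, and by the same computation at $q^{(m+1)}$, also $F(q^*) - f(q^*) \leq 2\beta(|q^{(m+1)} - q^*|)$.

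For (iii), the core observation is that if termination fails at some iteration then $2\beta(|q^* - q^{(m)}|) > \epsilon$, so $\beta(|q^* - q^{(m)}|) > \epsilon/2 \geq \beta(\delta)$, and monotonicity of $\beta$ forces $|q^* - q^{(m)}| > \delta$; symmetrically $|q^{(m+1)} - q^*| > \delta$. Thus every non-terminating insertion splits an interval of length strictly greater than $2\delta$ into two subintervals each of length strictly greater than $\delta$. A short induction on the iteration count then gives the invariant: whenever the algorithm has not yet terminated, every interval in the current partition has length exceeding $\delta$ (intervals unaffected by the latest split are handled by the inductive hypothesis, and the two new subintervals by the argument just given). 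Since these $k+1$ intervals partition $[a,b]$, we obtain $(k+1)\delta < b - a$ after $k$ non-terminating iterations, so termination must occur within $\lceil (b - a)/\delta \rceil$ iterations. The main subtlety I expect is verifying this invariant carefully, in particular ensuring that the strict inequality $|q^* - q^{(m)}| > \delta$ passes through the definition $\delta = \sup\{\delta' > 0 : \beta(x) < \epsilon/2\ \forall\ 0 \leq x \leq \delta'\}$ correctly using continuity of $\beta$; everything else is a direct consequence of the Lipschitz-like bound and the geometry of the sawtooth envelope.
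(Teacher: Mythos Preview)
Your argument is correct overall, with one small wrinkle in (i): you write the envelope as $F(q) = \min_i F_i(q)$, but the algorithm's $F$ at the termination check is the single branch $F_m$, and it is $F_m(q^*) \geq \max f$ that you actually need (the full envelope evaluated at $q^*$ need not dominate $\max f$ once $q^*$ itself is among the anchors, since then $\min_i F_i(q^*) = f(q^*)$). The paper establishes $F_m(q^*) \geq \max f$ via the chain $f(q) \leq \min\{F_i, F_{i+1}\}(q) \leq F_i(\overline{q}^{(i)}) \leq F_m(q^*)$ on each subinterval, which is the clean way to phrase your (i).

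Your termination argument (iii), however, is genuinely different from the paper's and tighter. The paper observes that each insertion at least halves one side of the split interval and argues somewhat informally that this eventually drives a side below $\delta$; it then separately identifies the worst case as all intervals having length exactly $\delta$ to extract the bound $\lceil (b-a)/\delta \rceil$. Your invariant---that every interval in the current partition has length strictly exceeding $\delta$ so long as the loop has not exited---delivers both finite termination and the quantitative bound in one step: $j$ intervals each longer than $\delta$ force $j\delta < b - a$, so the loop body cannot execute more than $\lceil (b-a)/\delta \rceil$ times. This is a more transparent route to the iteration bound than the paper's halving heuristic, which as written does not obviously rule out the algorithm visiting many different intervals before any one of them is subdivided finely enough.
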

\begin{proof}
    Let $i \in \mathbb{N}$ be a natural number. Given any $q^{(i)} \in [a, b]$, define the function
    \begin{equation}
        F_i(q) = f(q^{(i)}) + \beta(|q - q^{(i)}|) \label{eqn:bounding_function_1D}
    \end{equation}
    for $q \in [a, b]$. Since $f(q^{(i)}) - f(q) \geq -\beta(|q - q^{(i)}|)$ by assumption, we have $f(q) \leq F_i(q)$ for all $q \in [a, b]$. Given that we initialize $q^{(0)} = a$, the maximum of $F_0(q)$ occurs at $b$ because $\beta$ is a monotonically increasing function with $\beta(0) = 0$. This justifies the assignment $q^{(1)} = b$ at the start of the algorithm.

    Now, suppose that Alg.~\ref{alg:maximize_Lipschitz_like_function_1D} terminates at the $K$th time step. Then, the algorithm has generated points $q^{(0)}, \dotsc, q^{(K)} \in [a, b]$ and a point $q^* \in [a, b]$ satisfying $F(q^*) - f(q^*) \leq \epsilon$. The function $F$ and the point $q^*$ are obtained as follows. The points $q^{(0)}, \dotsc, q^{(K)}$ are sorted at the beginning of the $K$th iteration, so that $a = q^{(0)} \leq q^{(1)} \leq \dotsb \leq q^{(K)} = b$. Then, points $\overline{q}^{(i)} \in \argmax_{q \in [q^{(i)}, q^{(i + 1)}]} \min\{F_i(q), F_{i + 1}(q)\}$ are obtained through root finding for $0 \leq i \leq K - 1$. We choose $m \in \argmax_{0 \leq i \leq K - 1} F_i(\overline{q}^{(i)})$ and set $q^* = \overline{q}^{(m)}$ and $F = F_m$.

    Next, we elaborate on how $\overline{q}^{(i)} \in \argmax_{q \in [q^{(i)}, q^{(i + 1)}]} \min\{F_i(q), F_{i + 1}(q)\}$ is computed using root finding. First, consider the function $g_i(q) = F_i(q) - F_{i + 1}(q)$. From the non-negativity and monotonicity of $\beta$, it follows that $F_i(q)$ is a monotonically increasing function in the interval $q \in [q^{(i)}, q^{(i + 1)}]$, whereas $F_{i + 1}(q)$ is a monotonically decreasing function in the interval $q \in [q^{(i)}, q^{(i + 1)}]$. Therefore, $g_i(q)$ is a continuous and monotonically increasing function in the interval $q \in [q^{(i)}, q^{(i + 1)}]$, where the continuity of $g_i$ follows from that of $F_i$ and $F_{i + 1}$. Since $\beta(0) = 0$ and $f \leq F_i$ for all $0 \leq i \leq K$, we have $g_i(q^{(i)}) \leq 0$ and $g_i(q^{(i + 1)}) \geq 0$. Therefore, the function $g_i$ has a root $\overline{q}^{(i)}$ in the interval $[q^{(i)}, q^{(i + 1)}]$, and since $g_i = F_i - F_{i + 1}$, we have $F_i(\overline{q}^{(i)}) = F_{i + 1}(\overline{q}^{(i)})$. From the monotonicity properties of $F_i$ and $F_{i + 1}$, we can infer that $\overline{q}^{(i)}$ maximizes the bounding function $\min\{F_i, F_{i + 1}\}$ in the interval $[q^{(i)}, q^{(i + 1)}]$, and this maximum value is equal to $F_i(\overline{q}^{(i)}) = F_{i + 1}(\overline{q}^{(i)})$.

    Then, because $m \in \argmax_{0 \leq i \leq K - 1} F_i(\overline{q}^{(i)})$, $q^* = \overline{q}^{(m)}$ and $F = F_m$, we can infer that for all $q \in [q^{(i)}, q^{(i + 1)}]$, we have
    \begin{equation*}
        f(q) \leq \min\{F_i(q), F_{i + 1}(q)\} \leq F_i(\overline{q}^{(i)}) \leq F(q^*) \leq f(q^*) + \epsilon
    \end{equation*}
    Since the above equation holds for every $0 \leq i \leq K - 1$, and the intervals $[q^{(0)}, q^{(1)}], \dotsc, [q^{(K - 1)}, q^{(K)}]$ cover $[a, b]$, we can infer that
    \begin{equation*}
        \max_{q \in [a, b]} f(q) \leq f(q^*) + \epsilon
    \end{equation*}

    It remains to show that Alg.~\ref{alg:maximize_Lipschitz_like_function_1D} terminates in a finite number of time steps. To that end, note that $\beta(x)$ is a continuous function of $x \in \mathbb{R}_+$ with $\beta(0) = 0$. Therefore, for any given $\epsilon > 0$, we can find a $\delta > 0$ such that $\beta(x) \leq \epsilon/2$ whenever $0 \leq x \leq \delta$. Let $K \in \mathbb{N}$ denote the current time step. For $0 \leq i \leq K - 1$, let $\overline{q}^{(i)}$ denote a root of $F_i - F_{i + 1}$ in the interval $[q^{(i)}, q^{(i + 1)}]$. Let $m \in \argmax_{0 \leq i \leq K - 1} F_i(\overline{q}^{(i)})$, $q^* = \overline{q}^{(m)}$ and $F = F_m$ as before. Since $f(q^{(m)}) - f(q^*) \leq \beta(|q^* - q^{(m)}|)$, we have $F(q^*) - f(q^*) \leq 2 \beta(|q^* - q^{(m)}|)$. Therefore, when $|q^* - q^{(m)}| \leq \delta$, we have $F(q^*) - f(q^*) \leq \epsilon$. Since $F(q^*) = F_m(q^*) = F_{m + 1}(q^*)$, we can similarly infer that $F(q^*) - f(q^*) \leq \epsilon$ whenever $|q^* - q^{(m + 1)}| \leq \delta$. Therefore, the algorithm terminates if either $|q^* - q^{(m)}| \leq \delta$ or $|q^* - q^{(m + 1)}| \leq \delta$, where $q^* \in [q^{(m)}, q^{(m + 1)}]$.

    As per the procedure outlined in Alg.~\ref{alg:maximize_Lipschitz_like_function_1D}, the point $q^*$ will join the iterates $q^{(0)}, \dotsc, q^{(K)}$ at the $(K + 1)$th time step. When this new iterate $q^*$ is added, the updated intervals include $[q^{(m)}, q^*]$ and $[q^*, q^{(m + 1)}]$. Since $q^* \in [q^{(m)}, q^{(m + 1)}]$, we either have $|q^* - q^{(m)}| \leq |q^{(m + 1)} - q^{(m)}|/2$ or $|q^* - q^{(m + 1)}| \leq |q^{(m + 1)} - q^{(m)}|/2$. Since there are only a finite number of intervals at each time step and the length of one of the sides of the interval where the new iterate falls is at least halved at each time step, at some large enough time step $K$, we will have $|q^* - q^{(m)}| \leq \delta$ or $|q^* - q^{(m + 1)}| \leq \delta$. Thus, the algorithm terminates in a finite number of time steps.

    The worst case scenario corresponds to the situation where $|q^{(i)} - q^{(i + 1)}| = \delta$ for all $0 \leq i \leq K - 1$. In this case, the algorithm terminates at the $K$th time step, wherein $q^*$ falls within one of these intervals. Taking $\delta = \sup\{\delta' > 0 \mid \beta(\delta') \leq \epsilon/2\}$, there is a sequence $\delta'_n \to \delta$ with $\beta(\delta'_n) \leq \epsilon/2\ \forall n$, so that by continuity of $\beta$, we have $\beta(\delta) = \lim_{n \to \infty} \beta(\delta'_n) \leq \epsilon/2$. Therefore, the number of time steps required for the algorithm to terminate is bounded above by the number $\lceil (b - a) / \delta \rceil$, where $\delta > 0$ can be taken as the largest number that satisfies $\beta(x) \leq \epsilon/2$ for $0 \leq x \leq \delta$.
\end{proof}

Next, we present details of constructing and searching over the grid for Lipschitz-like optimization.
\subsection{Lipschitz-like optimization over the standard simplex using grid search}
We make the following observations about the integer grid defined in Eq.~\eqref{eqn:integer_grid_simplex}. For the grid $\Delta_{d, N}$ defined in Eq.~\eqref{eqn:simplex_grid} over the standard simplex, we have $\Delta_{d, N} = \mathcal{I}_{d, N}/N := \{n/N \mid n \in \mathcal{I}_{d, N}\}$. Therefore, the observations noted below also apply to $\Delta_{d, N}$ with appropriate modifications.

\begin{proposition}
    \label{prop:simplex_grid}
    Given $d, N \in \mathbb{N}_+$, let $\mathcal{I}_{d, N}$ denote the integer grid defined in Eq.~\eqref{eqn:integer_grid_simplex}. Then the following hold.
    \begin{enumerate}[label=\arabic*)]
        \item Any element $n \in \mathcal{I}_{d, N}$ can be written as $n = (N - \ell_{d - 1}, \ell_{d - 1} - \ell_{d - 2}, \dotsc, \ell_2 - \ell_1, \ell_1)$ for some integers $0 \leq \ell_i \leq \ell_{i + 1} \leq N$, $i \in [d - 2]$.
        \item The elements of $\mathcal{I}_{n, N}$ can be ordered such that any two consecutive elements are distance $2$ apart in $l_1$-norm. This ordering is constructive and can be implemented algorithmically.
        \item The grid $\Delta_{d, N}$ defined in Eq.~\eqref{eqn:simplex_grid} is a $(2(d - 1)/N)$-net in the $l_1$-norm for the standard simplex in dimension $d$. That is, given any $x \in \Delta_d$, there is some $z \in \Delta_{d, N}$ such that $\norm{x - z}_1 \leq 2(d - 1)/N$.
    \end{enumerate}
\end{proposition}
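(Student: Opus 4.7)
For part 1, I would establish a bijection between $\mathcal{I}_{d,N}$ and the set of weakly increasing sequences $0 \leq \ell_1 \leq \ell_2 \leq \dotsb \leq \ell_{d-1} \leq N$. Given $n = (n_1, \dotsc, n_d) \in \mathcal{I}_{d,N}$, define $\ell_i = \sum_{j=d-i+1}^{d} n_j$ for $i \in [d-1]$, that is, the partial sum from the right. Since each $n_j \geq 0$, the sequence $(\ell_i)$ is weakly increasing, and since $\sum_{j=1}^d n_j = N$ we have $\ell_{d-1} = N - n_1 \leq N$. Conversely, given such a sequence $(\ell_i)$, set $n_d = \ell_1$, $n_{d-j} = \ell_{j+1} - \ell_j$ for $j \in [d-2]$, and $n_1 = N - \ell_{d-1}$; non-negativity of the differences holds since $(\ell_i)$ is weakly increasing. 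This inverse shows the mapping is a bijection and directly yields the claimed representation.

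For part 2, I would proceed by induction on $d$, verifying that the procedure of Def.~\ref{def:equidistant_ordering_simplex_grid} (or a slight variant of it, alternating forward/reverse orderings recursively) produces an ordering whose consecutive elements have $l_1$-distance exactly $2$. The base case $d=2$ is immediate since the forward ordering $(N,0), (N-1,1), \dotsc, (0,N)$ yields successive differences of the form $(-1,+1)$. For the inductive step, each ``block'' with fixed $n_{d-1} = k$ inherits from the $(d-1)$-dimensional ordering of $\mathcal{I}_{d-1,k}$, and within a block the first coordinate $N-k$ is constant, so consecutive elements within a block remain at $l_1$-distance $2$ by the inductive hypothesis. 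The delicate part is the boundary transition from block $k$ to block $k+1$: the first coordinate jumps from $N-k$ to $N-k-1$, contributing $1$ to the $l_1$-distance, so one needs the last element $e_k$ of block $k$ and the first element $s_{k+1}$ of block $k+1$ (inside $\mathbb{R}^{d-1}$) to differ by exactly $+1$ in a single coordinate. To guarantee this, the inductive hypothesis must be strengthened to track the starting and ending element of the forward/reverse ordering of each $\mathcal{I}_{d-1,k}$; the alternation between forward and reverse orderings then ensures compatibility across the junction. This bookkeeping -- choosing the recursive ordering so that $s_{k+1} = e_k + \bm{e}_j$ for some unit vector $\bm{e}_j$ -- is where I expect the main technical obstacle to lie. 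Since both $e_k$ and $s_{k+1}$ belong to lattice points in the simplex graph with edges at $l_1$-distance $2$, the existence of such a compatible ordering amounts to the existence of a Hamiltonian path with prescribed endpoints, which I would verify by induction.

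For part 3, I would use a straightforward rounding argument. Given $x \in \Delta_d$, define $\widetilde{z}_i = \lfloor N x_i \rfloor / N$ and let $r_i = N x_i - \lfloor N x_i \rfloor \in [0,1)$ denote the fractional residues. Since $\sum_i N x_i = N \in \mathbb{Z}$, the quantity $m = \sum_i r_i$ is a non-negative integer, and strict inequality $r_i < 1$ implies $m \leq d - 1$. To produce $z \in \Delta_{d,N}$, let $A \subseteq [d]$ be a choice of $m$ indices with the largest residues and set $z_i = \widetilde{z}_i + 1/N$ for $i \in A$ and $z_i = \widetilde{z}_i$ otherwise; then $z \in \Delta_{d,N}$ by construction. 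The $l_1$-distance is
\begin{equation*}
\|x - z\|_1 = \frac{1}{N}\left(\sum_{i \in A}(1 - r_i) + \sum_{i \notin A} r_i\right) = \frac{2}{N}\left(m - \sum_{i \in A} r_i\right) \leq \frac{2m}{N} \leq \frac{2(d-1)}{N},
\end{equation*}
using $\sum_{i \in A} r_i \geq 0$ and $m \leq d-1$. This completes the proof that $\Delta_{d,N}$ is a $(2(d-1)/N)$-net for $\Delta_d$ in the $l_1$-norm.
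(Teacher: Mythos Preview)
Your proposal is correct. For Part~2 you follow essentially the same inductive approach as the paper: block the elements by the first coordinate, use the $(d-1)$-dimensional ordering within each block, and alternate forward/reverse so that the block boundaries match up. The paper makes this explicit by taking as strengthened induction hypothesis that the forward ordering of $\mathcal{I}_{d,M}$ starts at $(M,0,\dotsc,0)$ and ends at $(0,\dotsc,0,M)$, which immediately yields the $l_1$-distance~$2$ transition at each block boundary; your proposal identifies exactly this bookkeeping as the crux but stops short of writing out the endpoint verification.

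For Parts~1 and~3, however, you take genuinely different and more elementary routes than the paper. The paper proves both by induction on $d$: for Part~1 it peels off the first coordinate and invokes the $(d-1)$-dimensional case, and for Part~3 it rounds the first coordinate, rescales the remaining ones into $\Delta_{d-1}$, applies the induction hypothesis, and then carefully tracks error terms. Your arguments are direct: the partial-sum bijection for Part~1 is one line, and your rounding argument for Part~3 (floor each $Nx_i$, then round up exactly $m=\sum_i r_i\leq d-1$ coordinates) gives the $2(d-1)/N$ bound in a single clean computation without any recursion. Both of your approaches avoid the inductive overhead and are arguably preferable.
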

\begin{proof}
    1) We prove this statement by induction on the dimension. For $d = 2$, any $n \in \mathcal{I}_{d, M}$ satisfies $n_1 + n_2 = M$, and therefore, $n = (M - n_1, n_1)$ holds for all $M \in \mathbb{N}_+$. Now, assume that for any $M \in \mathbb{N}_+$, we can write each $m \in \mathcal{I}_{d - 1, M}$ in dimension $d - 1$ as $m = (M - s_{d - 2}, \dotsc, s_2 - s_1, s_1)$, where $s_1, \dotsc, s_{d - 2} \in \mathbb{N}$ satisfy $0 \leq s_i \leq s_{i + 1} \leq M$ for $i \in [d - 3]$. Then, given any $n \in \mathcal{I}_{d, N}$, we can write $n_d + \sum_{i = 1}^{d - 1} n_i = N$. Denote $\ell_{d - 1} = \sum_{i = 1}^{d - 1} n_i$, so that $n_d = N - \ell_{d - 1}$ and $0 \leq \ell_{d - 1} \leq N$. Then, since $\sum_{i = 1}^{d - 1} = \ell_{d - 1}$, we can find some numbers $\ell_1, \dotsc, \ell_{d - 2}$ such that $0 \leq \ell_i \leq \ell_{i + 1} \leq \ell_{d - 1} \leq N$ for $i \in [d - 3]$ and $(n_1, \dotsc, n_{d - 1}) = (\ell_{d - 1} - \ell_{d - 2}, \dotsc, \ell_2 - \ell_1, \ell_1)$ by assumption. Subsequently, we can write $n = (N - \ell_{d - 1}, \dotsc, \ell_2 - \ell_1, \ell_1)$. Thus, by induction, the statement holds for all dimensions. \newline

    2) We explicitly construct an ordering to prove the assertion. The proof uses induction to obtain the desired result. Suppose that for dimension $d \in \mathbb{N}_+$ and any $N \in \mathbb{N}_+$, the elements of $\mathcal{I}_{d, N}$ can be arranged such that the first element is $(N, 0, \dotsc, 0)$, the last element is $(0, \dotsc, 0, N)$, and the $l_1$-norm distance between any two consecutive elements is $2$. We call this forward ordering of elements. Writing the elements of a forward ordered set gives us reverse ordering, i.e., the first element is $(0, \dotsc, 0, N)$, the last element is $(N, 0, \dotsc, 0)$, and the $l_1$-norm distance between any two consecutive elements is $2$.

    For $d = 2$, the elements of $\mathcal{I}_{2, N}$ can be ordered as either $\{(N, 0), (N - 1, 1), \dotsc, (1, N - 1), (0, N)\}$ (forward ordering) or $\{(0, N), (1, N - 1), \dotsc, (N - 1, 1), (N, 0)\}$ (reverse ordering), so that $1$-norm distance between any two consecutive elements is $2$.

    Assuming that this statement (induction hypothesis) holds for dimension $d$, we show that it also holds for $d + 1$. To that end, we note using the previous result that every element of $\mathcal{I}_{d + 1, N}$ can be written as $(N - s_d, \dotsc, s_2 - s_1, s_1)$ with $0 \leq s_i \leq s_{i + 1} \leq N$ for $i \in [d - 1]$. Then, we (forward) order the elements of $\mathcal{I}_{d + 1, N}$ as follows. Let the first element be $(N, 0, \dotsc, 0)$. Choose the next sequence of elements as follows. For elements of the form $(N - 1, 1 - s_{d - 1}, \dotsc, s_2 - s_1, s_1)$ arrange the elements $(1 - s_{d - 1}, \dotsc, s_2 - s_1, s_1)$ in forward order, which is possible by assumption. For elements of the form $(N - 2, 2 - s_{d - 1}, \dotsc, s_2 - s_1)$ arrange the elements $(2 - s_{d - 1}, \dotsc, s_2 - s_1)$ in reverse order. Continuing this way, given elements of the form $(N - s_d, s_d - s_{d - 1}, \dotsc, s_2 - s_1, s_1)$ for fixed $s_d$, arrange the elements $(s_d - s_{d - 1}, \dotsc, s_2 - s_1, s_1)$ in forward order if $s_d$ is odd, and in reverse order if $s_d$ is even.

    Then, for a fixed $0 \leq s_d \leq N$, if $(N - s_d, s_d - s_{d - 1}', \dotsc, s_1')$ is the element after $(N - s_d, s_d - s_{d - 1}, \dotsc, s_1)$ as per the above ordering, we have $\norm{(N - s_d, s_d - s_{d - 1}', \dotsc, s_1') - (N - s_d, s_d - s_{d - 1}, \dotsc, s_1)}_1 = \norm{(s_d - s_{d - 1}', \dotsc, s_1') - (s_d - s_{d - 1}', \dotsc, s_1')}_1 = 2$ by induction hypothesis. Next, we consider the case when $s_d$ increases by $1$. In this case, the last element of $(N - s_d, \dotsc, s_2 - s_1)$ is $(N - s_d, s_d, 0, \dotsc, 0)$ if $s_d$ is even and it is $(N - s_d, 0, \dotsc, 0, s_d)$ if $s_d$ is odd. Then, the first element of the next sequence $(N - s_d - 1, s_d + 1 - s_{d - 1}, \dotsc, s_1)$ is $(N - s_d - 1, s_d + 1, 0, \dotsc, 0)$ if $s_d$ is even and it is $(N - s_d - 1, 0, \dotsc, 0, s_d + 1)$ if $s_d$ is odd. Therefore, the $l_1$-norm distance between consecutive elements when $s_d$ increases by $1$ is equal to $2$. Therefore, the elements of $\mathcal{I}_{d + 1, N}$ can be ordered as described above. By induction, the result holds for any dimension. \newline

    3) We prove this by induction. Given dimension $d \in \mathbb{N}_+$, assume that for every $x \in \Delta_d$, there is some $z \in \Delta_{d, N}$ such that $\norm{x - z}_1 \leq 2(d - 1)/N$ for all $N \in \mathbb{N}_+$. For $d = 2$, we can write any $x \in \Delta_2$ as $x = (1 - x_1, x_1)$ for some $x_1 \in [0, 1]$. Let $n_1 \in \argmin\{|x_1 - m_1/N| \mid 0 \leq m_1 \leq N\}$, so that have $|x_1 - n_1/N| \leq 1/2N$. Choosing $z = (1 - n_1/N, n_1/N)$, we obtain $\norm{x - z}_1 \leq 1/N \leq 2/N$.

    Now, suppose that the assumption holds for dimension $d$. We show that it also holds for dimension $d + 1$. To that end, let $x \in \Delta_{d + 1}$ be written as $x = (1 - s, x_2, \dotsc, x_{d + 1})$, where $s = \sum_{i = 2}^{d + 1} x_i$. If $s = 0$, then $x = (1, 0, \dotsc, 0)$ and the result follows. Therefore, let $s > 0$ and consider the vector $x' = (x_2, \dotsc, x_{d + 1})/s$, so that $x' \in \Delta_d$. Let $M = \lceil s N \rceil$, and by assumption, there is some $z' \in \Delta_{d, M}$ such that $\norm{x' - z'}_1 \leq 2(d - 1)/M$. Denote $z' = (n_2, \dotsc, n_{d + 1})/M$ and define $n_1 = N - \sum_{i = 2}^{d + 1} n_i = N - M$. Then, $|(1 - s) - n_1/N| = |s N - M|/N \leq 1/N$ since $sN \leq M \leq sN + 1$. Then, denoting $z = (n_1, \dotsc, n_{d + 1})/N$, we have
    \begin{align*}
        \norm{x - z}_1 &= \left|(1 - s) - \frac{n_1}{N}\right| + \norm{s x' - z' M/N}_1 \\
                       &\leq \frac{1}{N} + s \norm{x' - z'}_1 + \left|s - \frac{M}{N}\right| \norm{z'}_1 \\
                       &\leq \frac{2}{N} + 2(d - 1) \frac{s}{M} \\
                       &\leq \frac{2d}{N}
    \end{align*}
    where we used the fact the $s/M \leq 1/N$. Thus, the statement holds by induction.
\end{proof}

Prop.~\ref{prop:simplex_grid}.1 helps in iterative construction of the grid, while Prop.~\ref{prop:simplex_grid}.2,3 will be used in construction of dense curve. In practice, the ordering of elements given in Prop.~\ref{prop:simplex_grid}.2 can be done efficiently for moderately small dimensions. We use this ordering in practice to generate the grid, and this allows for easy parallelization.

Next, based on the results of Ref.~\cite{deklerk2008simplexcomplexity}, we calculate the value of $N$ that needs to be used in the grid search to converge to a precision of $\epsilon > 0$.
\begin{proposition}
    \label{prop:simplex_grid_search_Lipschitzlike_optimization}
    Let $f\colon \mathcal{D} \to \mathbb{R}$ be a $\beta$-Lipschitz-like function satisfying Eq.~\eqref{eqn:Lipschitz_like_function}, where $\mathcal{D} = \Delta_d$ is the standard simplex in $\mathbb{R}^d$. Let $N = \lceil 1/\delta^2 \rceil$, where $\delta = \sup\{\delta' > 0 \mid \beta(\delta') \leq \epsilon/2\}$. Let $\Delta_{d, N}$ be the grid defined in Eq.~\eqref{eqn:simplex_grid}, and let $f^* = \max\{f(z) \mid z \in \Delta_{d, N}\}$. Then, we have $\max_{x \in \Delta_d} f(x) - f^* \leq \epsilon$.
\end{proposition}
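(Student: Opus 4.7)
The plan is to prove the bound via Bernstein polynomial approximation of $f$ over the simplex, following the general strategy of Ref.~\cite{deklerk2008simplexcomplexity}. The starting point is to define, for each $x \in \Delta_d$, the Bernstein polynomial
\begin{equation*}
    B_N[f](x) = \sum_{n \in \mathcal{I}_{d,N}} \binom{N}{n_1,\dotsc,n_d}\, x_1^{n_1} \dotsm x_d^{n_d}\, f(n/N).
\end{equation*}
By the multinomial theorem, the coefficients $b_n(x) = \binom{N}{n_1,\dotsc,n_d} \prod_i x_i^{n_i}$ are nonnegative and sum to $1$ over $n \in \mathcal{I}_{d,N}$, so $B_N[f](x)$ is a convex combination of the values $\{f(z) \mid z \in \Delta_{d,N}\}$. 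In particular, $B_N[f](x) \leq f^*$ for every $x \in \Delta_d$, and therefore
\begin{equation*}
    \max_{x \in \Delta_d} f(x) - f^* \;\leq\; \sup_{x \in \Delta_d}\bigl(f(x) - B_N[f](x)\bigr).
\end{equation*}

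Next, I would bound the approximation error by viewing $b_n(x)$ as the probability mass function of $Y \sim \mathrm{Multinomial}(N, x)$, so that $B_N[f](x) = \mathbb{E}[f(Y/N)]$. Applying the Lipschitz-like property inside the expectation yields
\begin{equation*}
    |f(x) - B_N[f](x)| \;\leq\; \mathbb{E}\bigl[\beta(\|Y/N - x\|_1)\bigr].
\end{equation*}
The task then reduces to showing that this expectation is at most $\epsilon$ whenever $N \geq \lceil 1/\delta^2 \rceil$, where $\delta$ is the largest number with $\beta(\delta) \leq \epsilon/2$. My plan is to decompose the expectation into the events $\{\|Y/N - x\|_1 \leq \delta\}$ and $\{\|Y/N - x\|_1 > \delta\}$. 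On the first event, monotonicity of $\beta$ bounds the integrand by $\beta(\delta) \leq \epsilon/2$. On the second event, I would use Chebyshev's inequality together with the second-moment estimate $\mathbb{E}[\|Y/N - x\|_2^2] = (1 - \sum_i x_i^2)/N \leq 1/N$, coming from the independence structure of the marginal binomials $Y_i \sim \mathrm{Binomial}(N, x_i)$, combined with the fact that $\beta$ is bounded on the compact range $[0, 2]$ of $l_1$-distances within $\Delta_d$.

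The main obstacle will be in controlling the tail contribution tightly enough to recover the claimed scaling $N = \lceil 1/\delta^2 \rceil$. Naively passing from the $l_2$ variance bound to an $l_1$ concentration estimate via Cauchy--Schwarz introduces a factor of $\sqrt{d}$, which would force a dimension-dependent $N$; avoiding this requires a more careful argument that exploits the structure of the multinomial (for instance, using per-coordinate Chebyshev or Hoeffding bounds and summing appropriately) so that the $d$-dependence gets absorbed into the combinatorial factor $\binom{N+d-1}{d-1}$ counting grid points rather than into $N$ itself. Once the expected deviation is bounded by $\epsilon$, substituting back yields $\max_x f(x) - f^* \leq \epsilon$, completing the argument. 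Finally, I would cross-check the scaling against the footnote correcting the exponent from $\lceil 1/\sqrt{\delta}\rceil$ in \cite{deklerk2008simplexcomplexity} to $\lceil 1/\delta^2\rceil$, making sure the relation $\beta(\delta) \leq \epsilon/2$ (rather than $\leq \epsilon$) is the precise form needed to absorb both the main-term and tail contributions.
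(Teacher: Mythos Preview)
Your overall architecture matches the paper's: define the Bernstein approximant $B_N[f]$, observe that $B_N[f](x)\le f^*$ because the multinomial weights form a convex combination (this is exactly the paper's use of Lemma~3.1 in \cite{deklerk2008simplexcomplexity}), and then bound $\sup_x (f(x)-B_N[f](x))$. The divergence, and the gap, is in how you handle that last approximation error.

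Your tail-split plan gives
\[
\mathbb{E}\bigl[\beta(\lVert Y/N-x\rVert_1)\bigr]\;\le\;\beta(\delta)\;+\;\beta(2)\cdot P\bigl(\lVert Y/N-x\rVert_1>\delta\bigr),
\]
and controlling the tail via the $\ell_2$ second moment and Chebyshev yields $P(\lVert Y/N-x\rVert_1>\delta)\le d/(N\delta^2)$ (since $\lVert v\rVert_1\le\sqrt{d}\,\lVert v\rVert_2$). This forces $N\gtrsim \beta(2)\,d/(\epsilon\,\delta^2)$, which carries both an extraneous factor $\beta(2)/\epsilon$ and the dimension $d$ that you already flagged. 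Neither can be ``absorbed into the combinatorial factor $\binom{N+d-1}{d-1}$'': that binomial is the grid size, not a quantity appearing in the approximation inequality, so there is nothing for the $d$-dependence of $N$ to cancel against.

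The paper sidesteps this entirely by not working with $\beta$ inside the expectation. It inserts the modulus of continuity $\omega(f,\cdot)$ as an intermediary, notes $\omega(f,\delta)\le\beta(\delta)$, and then invokes Thm.~3.2 of \cite{deklerk2008simplexcomplexity} as a black box to obtain $\lVert B_N(f)-f\rVert_\infty\le 2\,\omega(f,1/\sqrt{N})$. The reason this works where your split does not is that $\omega$ (unlike a generic $\beta$) is automatically subadditive, so $\omega(f,t)\le(1+t^2/\delta^2)\,\omega(f,\delta)$; taking expectations converts the bound into a single second-moment term $\mathbb{E}[\lVert Y/N-x\rVert^2]/\delta^2$ with no residual $\beta(2)$ and no tail probability to control. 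Choosing $\delta=1/\sqrt{N}$ then gives the factor of $2$ and the scaling $N=\lceil 1/\delta^2\rceil$ with $\beta(\delta)\le\epsilon/2$. If you want to make your route work, replace the tail split by this subadditivity step on $\omega$ rather than on $\beta$.
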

\begin{proof}
    This result is implied by techniques developed in Ref.~\cite{deklerk2008simplexcomplexity}. For the sake of completeness, we give a proof here for the specific case of Lipschitz-like functions.

    Given $f\colon \mathcal{D} \to \mathbb{R}$, the Bernstein polynomial approximation of $f$ of order $N \in \mathbb{N}_+$ is defined as
    \begin{equation*}
        B_N(f)(x) = \sum_{n \in \mathcal{I}_{d, N}} f\left(\frac{n}{N}\right) \frac{N!}{n!}\ x^n
    \end{equation*}
    for $x \in \Delta_d$, where we use the multi-index notation $n! = n_1! \dotsm n_d!$, $x^n = x_1^{n_1} \dotsm x_d^{n_d}$ for $n \in \mathcal{I}_{d, N}$~\cite{deklerk2008simplexcomplexity}. The sup-norm of a continuous function $f\colon \mathcal{D} \to \mathbb{R}$ is given by $\norm{f}_\infty = \max_{x \in \mathcal{D}} |f(x)|$, where we use the fact that $\mathcal{D}$ is compact. Next, given a continuous function $f\colon \mathcal{D} \to \mathbb{R}$, the modulus of continuity of $f$ is defined as
    \begin{equation*}
        \omega(f, \delta) = \max_{\substack{x, y \in \mathcal{D}\\ \norm{x - y} \leq \delta}} |f(x) - f(y)|
    \end{equation*}
    with respect to a suitable norm $\norm{\cdot}$ on $\mathbb{R}^n$~\cite{deklerk2008simplexcomplexity}. If $f$ is Lipschitz-like, then we use the norm with respect to which $f$ satisfies the Lipschitz-like property. For a $\beta$-Lipschitz-like function $f$, using Eq.~\eqref{eqn:Lipschitz_like_function}, we have
    \begin{equation}
        \omega(f, \delta) \leq \beta(\delta), \label{eqn:modulus_of_continuity_Lipschitzlike}
    \end{equation}
    where we used the fact that $\beta$ is a monotonically increasing function.

    Then, given a $\beta$-Lipschitz-like function $f\colon \mathcal{D} \to \mathbb{R}$, using Thm.~(3.2) of Ref.~\cite{deklerk2008simplexcomplexity}, we have
    \begin{equation*}
        \norm{B_N(f) - f}_\infty \leq 2 \omega\left(f, \frac{1}{\sqrt{N}}\right) \leq 2 \beta\left(\frac{1}{\sqrt{N}}\right)
    \end{equation*}
    where we used Eq.~\eqref{eqn:modulus_of_continuity_Lipschitzlike} to obtain the last inequality. In particular, this implies that $\max_{x \in \mathcal{D}} B_N(f) \geq \max_{x \in \mathcal{D}} f(x) - 2 \beta(1/\sqrt{N})$. Using this result along with Lemma~(3.1) of Ref.~\cite{deklerk2008simplexcomplexity}, we obtain
    \begin{equation*}
        \max_{x \in \mathcal{D}} f(x) - 2 \beta\left(\frac{1}{\sqrt{N}}\right) \leq \max_{x \in \mathcal{D}} B_N(f)(x) \leq \max_{x \in \Delta_{d, N}} f(x).
    \end{equation*}
    Therefore, choosing
    \begin{equation*}
        \beta\left(\frac{1}{\sqrt{N}}\right) \leq \frac{\epsilon}{2},
    \end{equation*}
    we obtain
    \begin{equation*}
        \max_{x \in \mathcal{D}} f(x) \leq \max_{x \in \Delta_{d, N}} f(x) + \epsilon
    \end{equation*}
    giving us the desired result. In other words, to compute the maximum of $f$ to within a precision of $\epsilon > 0$, it suffices to search the grid $\Delta_{d, N}$ with $N = \lceil 1/\delta^2 \rceil$, where $\delta$ is the largest number satisfying $\beta(\delta) \leq \epsilon/2$.
\end{proof}

\subsection{Lipschitz-like optimization over the standard simplex using dense curves\label{app:Lipschitzlike_optimization_simplex_dense_curve}}
We show how to optimize Lipschitz-like functions over the standard simplex using $\alpha$-dense curves. Such curves get within a distance $\alpha$ of all points in the simplex (see Def.~\eqref{def:dense_curve}). As we show below, by appropriately choosing $\alpha$, one can perform a one-dimensional optimization to obtain the maximum of $f$ to the desired precision.

\begin{proposition}
    \label{prop:Lipschitzlike_optimization_dense_curve_simplex}
    Suppose that $f\colon \Delta_d \to \mathbb{R}$ is a $\beta$-Lipschitz-like function satisfying Eq.~\eqref{eqn:Lipschitz_like_function}. Then the following hold.
    \begin{enumerate}
        \item Given $N \in \mathbb{N})_+$, let $\gamma\colon [0, L_{\text{curve}}] \to \Delta_d$ be a Lipschitz-like curve constructed as per Alg.~\ref{alg:construct_simplex_dense_curve}. Then the curve $\gamma$ is $(2(d - 1)/N)$-dense in the simplex $\Delta_d$ and satisfies Eq.~\eqref{eqn:simplex_dense_curve_Lipschitz}.
        \item Alg.~\ref{alg:maximize_Lipschitz_like_function_simplex_dense_curve} computes the maximum of $f$ to within a precision of $\epsilon > 0$ for any $\beta_\gamma$-Lipschitz-like, $(2(d - 1)/N)$-dense curve $\gamma$. Here, $N = \lceil 2(d - 1)/\alpha \rceil$ with $\alpha = \sup\{\alpha' > 0 \mid \beta(\alpha') \leq \epsilon/2\}$ as noted in Alg.~\ref{alg:maximize_Lipschitz_like_function_simplex_dense_curve}. In the worst case, the algorithm takes
            \begin{equation*}
                \left\lceil \frac{2}{N \delta} \binom{N + d - 1}{d - 1} \right\rceil
            \end{equation*}
            time steps to converge to the maximum within a precision of $\epsilon > 0$, where $\delta = \sup\{\delta' > 0 \mid \beta(\beta_\gamma(\delta')) \leq \epsilon/2\}$. When $\gamma$ is the curve generated by Alg.~\ref{alg:construct_simplex_dense_curve}, $\alpha < 1$ and $d \gg 1$, this amounts to $O(\alpha^{1 - d}/d)$ time steps in the worst case.
    \end{enumerate}
\end{proposition}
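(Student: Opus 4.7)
For Part 1, the plan is to verify the two claims separately. The density claim reduces immediately to Prop.~\ref{prop:simplex_grid}.3: for any $x\in\Delta_d$ there is a grid point $z\in\Delta_{d,N}$ with $\|x-z\|_1\leq 2(d-1)/N$, and the curve $\gamma$ passes through every element of $\Delta_{d,N}$ by construction. For the Lipschitz-like property, I will first verify that on a single segment $[x_k,x_{k+1}]$ of the ordered grid, the parameterization $\gamma(\theta)=tx_k+(1-t)x_{k+1}$ with $t=1+k-\theta N/2$ has unit speed in the $l_1$-norm: since $\|x_{k+1}-x_k\|_1=2/N$ by Prop.~\ref{prop:simplex_grid}.2 and the parameter length of a segment is $2/N$, one checks $\|\gamma(\theta)-\gamma(\theta')\|_1=|\theta-\theta'|$ for $\theta,\theta'$ in the same segment. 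Extending to arbitrary $\theta,\theta'$ by the triangle inequality along consecutive segments gives $\|\gamma(\theta)-\gamma(\theta')\|_1\leq|\theta-\theta'|$. Combining this with the elementary bound $\|u-v\|_1\leq 2$ for any $u,v\in\Delta_d$ (which follows from $|u_i-v_i|\leq u_i+v_i$ and $\sum u_i=\sum v_i=1$) yields Eq.~\eqref{eqn:simplex_dense_curve_Lipschitz}.

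For Part 2, the key observation is that if $\gamma\colon[0,L_{\text{curve}}]\to\Delta_d$ is $\beta_\gamma$-Lipschitz-like and $f$ is $\beta$-Lipschitz-like, then $g=f\circ\gamma$ is a $(\beta\circ\beta_\gamma)$-Lipschitz-like function on the interval $[0,L_{\text{curve}}]$: indeed, $\beta\circ\beta_\gamma$ is non-negative, continuous, monotonically increasing, and vanishes at $0$, and the chain of inequalities $|g(\theta)-g(\theta')|\leq\beta(\|\gamma(\theta)-\gamma(\theta')\|_1)\leq\beta(\beta_\gamma(|\theta-\theta'|))$ is immediate. Alg.~\ref{alg:maximize_Lipschitz_like_function_simplex_dense_curve} can therefore invoke Alg.~\ref{alg:maximize_Lipschitz_like_function_1D} on $g$, and by Prop.~\ref{prop:Lipshitz_like_maximization_analysis} the returned value $g^*$ satisfies $\max_{\theta}g(\theta)-g^*\leq\epsilon/2$.

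Now I will combine the density estimate with the one-dimensional guarantee to obtain the overall accuracy. Let $x^*\in\argmax_{x\in\Delta_d}f(x)$. By $(2(d-1)/N)$-density and the choice $N=\lceil 2(d-1)/\alpha\rceil$, there exists $\theta'\in[0,L_{\text{curve}}]$ with $\|\gamma(\theta')-x^*\|_1\leq\alpha$, hence $f(x^*)-g(\theta')\leq\beta(\alpha)\leq\epsilon/2$. Then
\begin{equation*}
\max_{x\in\Delta_d}f(x)-g^*\;=\;f(x^*)-g(\theta')+g(\theta')-g^*\;\leq\;\tfrac{\epsilon}{2}+\bigl(\max_\theta g(\theta)-g^*\bigr)\;\leq\;\epsilon,
\end{equation*}
while the opposite inequality $g^*\leq\max_xf(x)$ is immediate since $g^*=f(\gamma(\theta^*))$ for some $\theta^*$. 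This establishes correctness.

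For the iteration count, Prop.~\ref{prop:Lipshitz_like_maximization_analysis} applied to $g$ over an interval of length $L_{\text{curve}}=2\binom{N+d-1}{d-1}/N$ (see Eq.~\eqref{eqn:length_dense_curve_simplex}) with tolerance $\epsilon/2$ yields $\lceil L_{\text{curve}}/\delta\rceil$ steps, where $\delta$ is the largest number with $\beta(\beta_\gamma(\delta))\leq\epsilon/2$ (up to the conventional factor of $2$), giving the stated bound. For the asymptotic claim with $\gamma$ from Alg.~\ref{alg:construct_simplex_dense_curve}, I substitute $\beta_\gamma(x)=\min\{x,2\}$, which for $\delta\leq 2$ gives $\delta\asymp\alpha$; expanding $\binom{N+d-1}{d-1}\leq(N+d-1)^{d-1}/(d-1)!$ with $N\approx 2(d-1)/\alpha$ and invoking Stirling's formula on $(d-1)!$ yields $O(\alpha^{1-d}/d)$ (with $\alpha<1$ and $d\gg 1$, polynomial-in-$d$ prefactors being absorbed into the big-$O$). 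The main obstacle I anticipate is the bookkeeping in this final asymptotic estimate — in particular, controlling the ratio of binomial factors and factorials cleanly enough that the $d$-dependent corrections collapse into the stated scaling — while the earlier steps are routine once the right decomposition of $\gamma$ is in place.
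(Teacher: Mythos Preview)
Your proposal is correct and follows essentially the same approach as the paper: both parts proceed by reducing density to Prop.~\ref{prop:simplex_grid}.3, verifying unit $l_1$-speed on each segment and extending by the triangle inequality, then for Part~2 composing to get a $(\beta\circ\beta_\gamma)$-Lipschitz-like one-dimensional problem and combining the $\alpha$-density bound with Prop.~\ref{prop:Lipshitz_like_maximization_analysis}. The only cosmetic difference is in the asymptotic step: the paper uses the elementary binomial bounds $(n/k)^k\leq\binom{n}{k}\leq(en/k)^k$ directly, whereas you expand $\binom{N+d-1}{d-1}\leq (N+d-1)^{d-1}/(d-1)!$ and invoke Stirling; both routes land on the same $O(\alpha^{1-d}/d)$ scaling.
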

\begin{proof}
    1. From Prop.~\ref{prop:simplex_grid}, we know that $\Delta_{d, N}$ is a $(2(d - 1)/N)$-net. Since $\Delta_{d, N} \subseteq \text{Range}(\gamma)$, for any $x \in \Delta_d$, there is some $z \in \text{Range}(\gamma)$ such that $\norm{x - z}_1 \leq 2(d - 1)/N$. In other words, $\gamma$ is a $(2(d - 1)/N)$-dense curve in $\Delta_d$. Next, we show that $\gamma$ is Lipschitz-like function. To that end, given $\theta, \theta' \in \mathbb{R}$, let $k = \lceil \theta N/2 \rceil$ and $k' = \lceil \theta' N/2 \rceil$. Without loss of generality, take $\theta \leq \theta'$. Given $i \in [N_{\text{grid}}]$, where $N_{\text{grid}}$ is defined in Eq.~\eqref{eqn:simplex_grid_cardinality}, denote $x_i \in \Delta_{d, N}$ to be the $i$th element of $\Delta_{d, N}$ ordered as per Eq.~\eqref{def:equidistant_ordering_simplex_grid}. Now, if $k = k'$, then we can write $\gamma(\theta) = t x_k + (1 - t) x_{k + 1}$ and $\gamma(\theta') = t' x_k + (1 - t') x_{k + 1}$, where $t = 1 + k - \theta N/2$ and $t' = 1 + k - \theta' N/2$ (see Alg.~\ref{alg:construct_simplex_dense_curve}). Therefore, $\norm{\gamma(\theta') - \gamma(\theta)}_1 \leq |t - t'| \norm{x_{k + 1} - x_k}_1 = |\theta' - \theta|$, where we used the fact that $\norm{x_{k + 1} - x_k}_1 = 2/N$ (see Prop.~\ref{prop:simplex_grid}). Thus, consider the case $k' > k$, so that
    \begin{equation*}
        \norm{\gamma(\theta') - \gamma(\theta)}_1 \leq \norm{\gamma(\theta') - x_{k'}}_1 + \sum_{i = k + 1}^{k' - 1} \norm{x_{i + 1} - x_i}_1 + \norm{x_{k + 1} - \gamma(\theta)}_1 = |\theta' - \theta|
    \end{equation*}
    where in the last line we used the fact that $\theta$ (and $\theta'$) is defined as the length along obtained by joining consecutive points of the grid until we reach $\theta$ (and $\theta'$ respectively). Since $\norm{x - y}_1 \leq 2$ for all $x, y \in \Delta_d$, Eq.~\eqref{eqn:simplex_dense_curve_Lipschitz} follows. \newline

    2. Consider the real-valued function $g = f \circ \gamma$ defined on the interval $[0, L_{\text{curve}}]$. Here, $\gamma$ is a $(2(d - 1)/N)$-dense $\beta_\gamma$-Lipschitz-like curve with $N = \lceil 2(d - 1)/\alpha \rceil$ and $\alpha = \sup\{\alpha' > 0 \mid \beta(\alpha) \leq \epsilon/2\}$. Since $2(d - 1)/N \leq \alpha$, $\gamma$ is also an $\alpha$-dense curve. Let $\theta^* \in \argmax\{g(\theta) \mid \theta \in [0, L_{\text{curve}}]\}$, and let $\theta^*_\epsilon \in [0, L_{\text{curve}}]$ be the point output by Alg.~\ref{alg:maximize_Lipschitz_like_function_1D}. Since $f$ is $\beta$-Lipschitz-like and $\gamma$ is $\beta_\gamma$-Lipschitz-like, $g = f \circ \gamma$ is $\beta \circ \beta_\gamma$-Lipschitz-like, so by Prop.~\ref{prop:Lipshitz_like_maximization_analysis}, we know that $g(\theta^*) - g(\theta^*_\epsilon) \leq \epsilon/2$. Let $x^* \in \argmax\{f(x) \mid x \in \Delta_d\}$ denote a point achieving the maximum of $f$. Then, since $\gamma$ is an $\alpha$-dense curve, there is some point $\theta_0 \in [0, L_{\text{curve}}]$ such that $\norm{x^* - \gamma(\theta_0)}_1 \leq \alpha$. Noting that $g(\theta_0) \leq g(\theta^*)$, we have
    \begin{align*}
        f(x^*) - g(\theta^*_\epsilon) &\leq f(x^*) - g(\theta^*) + \frac{\epsilon}{2} \\
                                      &\leq f(x^*) - g(\theta_0) + \frac{\epsilon}{2} \\
                                      &\leq \beta(\norm{x^* - \gamma(\theta_0)}_1) + \frac{\epsilon}{2} \\
                                      &\leq \epsilon
    \end{align*}
    To obtain the penultimate inequality we used the Lipschitz-like property of $f$ along with the fact that $g(\theta_0) = f(\gamma(\theta_0))$. The last inequality follows by noting that $\beta(\norm{x^* - \gamma(\theta_0)}_1) \leq \beta(\alpha) \leq \epsilon/2$. The number of iterations needed to compute $\theta^*_\epsilon$ in the worst-case is given by
    \begin{equation*}
        N_{\text{maxiter}} = \left\lceil \frac{L_{\text{curve}}}{\delta} \right\rceil = \left\lceil \frac{2}{N \delta} \binom{N + d - 1}{d - 1} \right\rceil
    \end{equation*}
    where $\delta = \sup\{\delta' > 0 \mid \beta(\beta_\gamma(\delta')) \leq \epsilon/2\}$, which follows from Prop.~\ref{prop:Lipshitz_like_maximization_analysis}. Using the fact that $(n/k)^k \leq \binom{n}{k} \leq (e n/k)^k$ and $2(d - 1)/\alpha \leq N \leq 2(d - 1)/\alpha + 1$, the number of iterations is bounded as
    \begin{equation*}
        \frac{\alpha}{(d - 1) \delta} \left(\frac{(2(d - 1)/\alpha + d - 1)}{d - 1}\right)^{d - 1} \leq N_{\text{maxiter}} \leq \frac{\alpha}{(d - 1) \delta} \left(\frac{e (2(d - 1)/\alpha + d)}{d - 1}\right)^{d - 1}.
    \end{equation*}
    Thus, assuming $\alpha < 1$ and $d \gg 1$, we need $O(\alpha^{2 - d} \delta^{-1}/d)$ time steps for convergence. For the case when $\gamma$ is the curve generated by Alg.~\ref{alg:construct_simplex_dense_curve}, we can take $\beta_\gamma(x) = x$ owing to Eq.~\eqref{eqn:simplex_dense_curve_Lipschitz}, and therefore, the statement of the proposition follows.
\end{proof}

\subsection{Lipschitz-like optimization over a compact \& convex domain\label{app:Lipschitzlike_optimization_compact_convex_domain}}
We now consider the general problem of optimizing a $\beta$-Lipschitz-like function $f\colon \mathcal{D} \to \mathbb{R}$ satisfying Eq.~\eqref{eqn:Lipschitz_like_function}, where $\mathcal{D} \subseteq \mathbb{R}^d$ is a nonempty compact and convex domain. Many techniques have been developed for optimizing Lipschitz or H\"older continuous functions when $\mathcal{D}$ is a specific set such as hypercube, simplex or is the full Euclidean space $\mathbb{R}^n$ as noted in Sec.~\ref{secn:Lipschitzlike_optimization}.
However, when dealing with an arbitrary compact and convex domain, it is not always obvious how to encode the constraints defining the domain in the algorithm.

We circumvent this problem by looking for a function $\overline{f}\colon \mathbb{R}^d \to \mathbb{R}$ satisfying the following properties:
\begin{enumerate}
    \item The function $\overline{f}$ is an extension of $f$ in the sense that $\overline{f}$ is Lipschitz-like and $\overline{f}|_{\mathcal{D}} = f$.
    \item The maximum of $\overline{f}$ and $f$ coincide. More specifically, $\sup_{x \in \mathcal{K}} \overline{f} = \max_{x \in \mathcal{D}} f$ for every set $\mathcal{D} \subseteq \mathcal{K} \subseteq \mathbb{R}^d$.
    \item The function $\overline{f}$ can be efficiently computed whenever $f$ and $\beta$ can be efficiently computed.
\end{enumerate}
The first property ensures that we are able to use existing Lipschitz-like optimization algorithms. The second property ensures that we can compute the maximum of $f$ by computing the maximum of $\overline{f}$ over a convenient set $\mathcal{K}$ containing the domain $\mathcal{D}$. The choice of the set $\mathcal{K}$ will usually depend on the actual algorithm used to perform the optimization. The last property ensures that the function $\overline{f}$ can be efficiently evaluated in practice.

We show that one can construct a function $\overline{f}$ satisfying all the above properties for the case when $\beta$ is itself a Lipschitz-like function. That is, there is some non-negative, continuous, monotonically increasing function $\kappa\colon \mathbb{R}_+ \to \mathbb{R}$ with $\kappa(0) = 0$ such that, for all $x, y \in \mathbb{R}_+$,
\begin{equation}
    |\beta(x) - \beta(y)| \leq \kappa(|x - y|). \label{eqn:beta_also_Lipschitz_like}
\end{equation}
 This assumption is not too restrictive for our purposes because Lipschitz continuous functions (i.e., $\beta(x) = x$), and functions $\beta$ relevant to entropic quantities like the one in Eq.~\eqref{eqn:beta_twosenderMAC} satisfy Eq.~\eqref{eqn:beta_also_Lipschitz_like} (see Prop.~\ref{prop:Lipschitzlike_function_extension}). In this case, we can define the extension of $\overline{f}$ as follows.

\begin{definition}[Lipschitz-like extension]
    \label{def:Lipschitzlike_function_extension}
    Let $f\colon \mathcal{D} \to \mathbb{R}$ be a $\beta$-Lipschitz-like function over a compact and convex domain $\mathcal{D} \subseteq \mathbb{R}^d$. Let $\norm{\cdot}$ be the norm on $\mathbb{R}^d$ with respect to which Eq.~\eqref{eqn:Lipschitz_like_function} holds. Suppose that $\beta$ is $\kappa$-Lipschitz-like. Then, the Lipschitz-like extension of $f$ is the function $\overline{f}\colon \mathbb{R}^d \to \mathbb{R}$ defined as
    \begin{equation}
        \overline{f}(x) = f(\Pi_{\mathcal{D}}(x)) - \beta(\norm{x - \Pi_{\mathcal{D}}(x)}), \label{eqn:Lipschitzlike_function_extension}
    \end{equation}
    where $\Pi_{\mathcal{D}} = \argmin_{z \in \mathcal{D}} \norm{z - x}_2$ is the projection of $x$ onto $\mathcal{D}$ with respect to the Euclidean norm.
\end{definition}
We remark that the choice of norm in Eq.~\eqref{eqn:Lipschitz_like_function}, and subsequently in Eq.~\eqref{eqn:Lipschitzlike_function_extension}, is flexible. Moreover, because all norms on $\mathbb{R}^d$ are equivalent, one can change the function $\beta$ so as to get a Lipschitz-like property with respect to a different norm. Recall that two norms $\norm{\cdot}_a$ and $\norm{\cdot}_b$ on $\mathbb{R}^d$ are said to be equivalent if there exist constants $c_1, c_2 > 0$ (possibly dimension dependent) such that $c_1 \norm{\cdot}_a~\leq~\norm{\cdot}_b~\leq~c_2 \norm{\cdot}_a$. The caveat of using equivalence of norms to change the function $\beta$ is that the modified $\beta$ might end up depending on the dimension. This could negatively impact the convergence rate (for example, polynomial time convergence guarantee for optimization over the simplex using grid search might be lost if $\beta$ depends on the dimension). This observation is also of relevance in Def.~\ref{def:Lipschitzlike_function_extension} because we define the projection with respect to the Euclidean norm, while allow an arbitrary choice of norm in Eq.~\eqref{eqn:Lipschitzlike_function_extension} (also see Prop.~\ref{prop:Lipschitzlike_function_extension}).

Note that $\Pi_{\mathcal{D}}$ is well-defined since $\mathcal{D}$ is compact and convex (see Thm.~(2.5) in Ref.~\cite{conway2019course}). Since $\Pi_{\mathcal{D}}(x) = x$ for $x \in \mathcal{D}$, we can see that $\overline{f}$ is indeed an extension of $f$. Roughly speaking, the extension is also Lipschitz-like because both $f$ and $\beta$ are Lipschitz-like. Furthermore, since $\beta$ is a non-negative monotonically increasing function, the value of $\overline{f}$ decreases as we move away from $\mathcal{D}$. Thus, the maximum of $\overline{f}$ occurs over $\mathcal{D}$. Finally, so long as the projection $\Pi_{\mathcal{D}}$ can be efficiently computed, the function $\overline{f}$ can be efficiently computed (assuming $f$ and $\beta$ can be efficiently computed).

We formalize the observations made above in the following proposition.
\begin{proposition}
    \label{prop:Lipschitzlike_function_extension}
    Let $\beta\colon \mathbb{R}_+ \to \mathbb{R}$ be a non-negative, continuous, monotonically increasing function with $\beta(0) = 0$. Suppose that $\beta$ is $\kappa$-Lipschitz-like in the sense of Eq.~\eqref{eqn:beta_also_Lipschitz_like} for some appropriate $\kappa$. Let $f\colon \mathcal{D} \to \mathbb{R}$ be a $\beta$-Lipschitz-like function over some compact and convex domain $\mathcal{D} \subseteq \mathbb{R}^d$. Let $\overline{f}$ be the extension of $f$ as defined in Eq.~\eqref{eqn:Lipschitzlike_function_extension}. Then the following statements hold.
    \begin{enumerate}
        \item Define the constant $C = c_2/c_1$, where $c_1, c_2 > 0$ are obtained from equivalence of $\norm{\cdot}, \norm{\cdot}_2$ on $\mathbb{R}^d$, i.e., $c_1 \norm{\cdot} \leq \norm{\cdot}_2 \leq c_2 \norm{\cdot}$. Then, the extension $\overline{f}$ is a $\overline{\beta}$-Lipschitz-like function, where
            \begin{equation}
                \overline{\beta}(x) = \beta(C x) + \kappa((C + 1) x). \label{eqn:extended_beta}
            \end{equation}
        \item Given any set $\mathcal{D} \subseteq \mathcal{K} \subseteq \mathbb{R}^d$, we have $\sup_{x \in \mathcal{K}} \overline{f}(x) = \max_{x \in \mathcal{D}} f(x)$. Moreover, any point achieving the maximum of $f$ achieves the maximum of $\overline{f}$.
        \item The modified binary entropy $\overline{h}$ defined in Eq.~\eqref{eqn:modified_binary_entropy} is an $\overline{h}$-Lipschitz-like function, that is,
                \begin{equation*}
                    |\overline{h}(x) - \overline{h}(y)| \leq \overline{h}(|x - y|)
                \end{equation*}
              for all $x, y \in \mathbb{R}_+$. Subsequently, $\beta_I$ defined in Eq.~\eqref{eqn:beta_twosenderMAC} is a $\beta_I$-Lipschitz-like function.
    \end{enumerate}
\end{proposition}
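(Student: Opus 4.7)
The plan is to tackle the three parts in order, with the bulk of the work being the first and third items, while the second follows essentially by inspection.

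For part~1, the strategy is to split $\overline{f}(x) - \overline{f}(y)$ via the triangle inequality into a contribution from $f$ evaluated at projections and a contribution from the $\beta$-penalty. For the first piece, I would invoke the fact that the Euclidean projection $\Pi_{\mathcal{D}}$ onto a closed convex set is non-expansive, i.e.\ $\|\Pi_{\mathcal{D}}(x) - \Pi_{\mathcal{D}}(y)\|_2 \le \|x - y\|_2$, and then translate this back to the working norm $\|\cdot\|$ using the equivalence constants $c_1, c_2$, picking up the factor $C = c_2/c_1$. Monotonicity of $\beta$ then bounds the $f$-part by $\beta(C\|x-y\|)$. For the second piece, the reverse triangle inequality gives $\bigl|\|x - \Pi_{\mathcal{D}}(x)\| - \|y - \Pi_{\mathcal{D}}(y)\|\bigr| \le \|x-y\| + \|\Pi_{\mathcal{D}}(x) - \Pi_{\mathcal{D}}(y)\| \le (C+1)\|x-y\|$, so the hypothesis that $\beta$ itself is $\kappa$-Lipschitz-like yields the bound $\kappa((C+1)\|x-y\|)$. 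Summing gives the stated $\overline{\beta}$, and a quick check confirms that $\overline{\beta}$ inherits non-negativity, continuity, monotonicity, and $\overline{\beta}(0) = 0$ from $\beta$ and $\kappa$.

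Part~2 is almost immediate from the definition. Since $\beta \ge 0$, for any $x \in \mathbb{R}^d$ we have $\overline{f}(x) \le f(\Pi_{\mathcal{D}}(x)) \le \max_{z\in\mathcal{D}} f(z)$, which gives the upper bound on $\sup_{x\in\mathcal{K}} \overline{f}(x)$ for any $\mathcal{K}$. For the matching lower bound, a maximizer $x^* \in \mathcal{D}$ of $f$ satisfies $\Pi_{\mathcal{D}}(x^*) = x^*$ and hence $\overline{f}(x^*) = f(x^*) - \beta(0) = f(x^*)$; since $x^* \in \mathcal{D} \subseteq \mathcal{K}$, this maximum is attained inside $\mathcal{K}$, and simultaneously shows that any maximizer of $f$ is a maximizer of $\overline{f}$.

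Part~3 is where I expect the real work to be, specifically the self-bound $|\overline{h}(x) - \overline{h}(y)| \le \overline{h}(|x-y|)$ for $x, y \in \mathbb{R}_+$. The plan is a case analysis in $x \le y$. The trivial case is $x, y \ge 1/2$, where both sides involve $\log 2$ and the inequality collapses. The central case is $x, y \in [0, 1/2]$, where $\overline{h}$ coincides with the standard binary entropy $h$; here I would write $h(y) - h(x) = \int_x^y h'(s)\,ds = \int_0^{y-x} h'(x + u)\,du$ and use that $h'$ is monotonically decreasing on $[0, 1/2]$ to get $h'(x+u) \le h'(u)$, so the integral is bounded by $h(y-x) - h(0) = h(y-x) = \overline{h}(y-x)$. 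The mixed case $x \le 1/2 \le y$ reduces to the central case: since $\overline{h}(y) = \overline{h}(1/2)$, the LHS equals $\overline{h}(1/2) - \overline{h}(x) \le \overline{h}(1/2 - x)$, and $1/2 - x \le y - x$ together with monotonicity of $\overline{h}$ finishes the argument. Once this self-bound is established, the Lipschitz-like property of $\beta_I$ follows by writing $\beta_I(x) = a\,x + \overline{h}(x/2)$ with $a = \tfrac{1}{2}\log(\dout - 1) + H_{\mathcal{N}}^{\max} \ge 0$, applying the triangle inequality, and using linearity of the first term together with the $\overline{h}$-self-bound (with the $1/2$ rescaling absorbed into the argument) to recover $\beta_I(|x-y|)$ on the right-hand side. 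The main obstacle is ensuring the monotonicity-of-$h'$ argument on $[0, 1/2]$ is handled cleanly at the endpoint $x=0$ where $h'$ diverges; the integral form sidesteps this since $h$ itself is continuous at $0$.
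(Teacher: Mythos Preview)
Your proposal is correct, and for Parts~1 and~2 it is essentially identical to the paper's argument: the same non-expansiveness of the Euclidean projection, the same norm-equivalence translation yielding the factor $C$, the same reverse-triangle-inequality bound on the $\beta$-penalty, and the same two-line argument for Part~2.

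The only genuine difference is in Part~3, specifically the central case $x, y \in [0, 1/2]$. The paper does not prove $|h(x) - h(y)| \le h(|x-y|)$ directly; instead it invokes an external continuity bound for the Shannon entropy (citing Zhang~2007). Your integral argument --- writing $h(y) - h(x) = \int_0^{y-x} h'(x+u)\,du$ and using that $h'$ is decreasing on $(0,1/2]$ to replace $h'(x+u)$ by $h'(u)$ --- is a genuinely more elementary and self-contained route that avoids the citation entirely. The mixed case and the reduction to $\beta_I$ are handled the same way in both. What your approach buys is independence from the external reference; what the paper's approach buys is brevity if one is willing to quote the known bound.
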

\begin{proof}
    1. We first note that $\Pi_{\mathcal{D}}$ is a non-expansive mapping on $\mathbb{R}^d$, i.e., for all $x, y \in \mathbb{R}^d$, we have $\norm{\Pi_{\mathcal{D}}(x) - \Pi_{\mathcal{D}}(y)}_2 \leq \norm{x - y}_2$~\cite{balestro2019convex}. By equivalence of norms, we can find a (possibly dimension-dependent) constant $C > 0$ noted in the statement of the proposition, such that $\norm{\Pi_{\mathcal{D}}(x) - \Pi_{\mathcal{D}}(y)} \leq C \norm{x - y}$. From this, we can infer the following list of inequalities:
    \begin{align*}
        |\overline{f}(x) - \overline{f}(y)| &\leq |f(\Pi_{\mathcal{D}}(x) - f(\Pi_{\mathcal{D}}(y))| + |\beta(\norm{x - \Pi_{\mathcal{D}}(x)}) - \beta(\norm{y - \Pi_{\mathcal{D}}(y)})| \\
                                            &\leq \beta(\norm{\Pi_{\mathcal{D}}(x) - \Pi_{\mathcal{D}}(y)}) + \kappa(|\norm{x - \Pi_{\mathcal{D}}(x)} - \norm{y - \Pi_{\mathcal{D}}(y)}|) \\
                                            &\leq \beta(C \norm{x - y}) + \kappa(\norm{x - y + \Pi_{\mathcal{D}}(y) - \Pi_{\mathcal{D}}(x)}) \\
                                            &\leq \beta(C \norm{x - y}) + \kappa(\norm{x - y} + \norm{\Pi_{\mathcal{D}}(x) - \Pi_{\mathcal{D}}(y)}) \\
                                            &\leq \beta(C \norm{x - y}) + \kappa((C + 1) \norm{x - y}) \\
                                            &= \overline{\beta}(\norm{x - y})
    \end{align*}
    To obtain the second inequality, we use the fact that $f$ is $\beta$-Lipschitz-like and $\beta$ is $\kappa$-Lipschitz-like. To obtain the third inequality, we use the reverse triangle inequality, the fact that $\Pi_{\mathcal{D}}$ is non-expansive, and that $\beta$ and $\kappa$ are monotonically increasing functions. The last two inequalities follow from similar observations. Finally, we note that $\overline{\beta}(x) = \beta(C x) + \kappa((C + 1) x)$ is a non-negative, continuous, monotonically increasing function with $\overline{\beta}(0) = 0$. \newline

    2. Since $\beta$ is a non-negative function, we have $\overline{f}(x) \leq f(\Pi_{\mathcal{D}}(x))$ for all $x \in \mathbb{R}^d$. Therefore, given any set $\mathcal{D} \subseteq \mathcal{K} \subseteq \mathbb{R}^d$, we have $\sup_{x \in \mathcal{K}} \overline{f}(x) \leq \sup_{x \in \mathcal{K}} f(\Pi_{\mathcal{D}}(x)) = \max_{x \in \mathcal{D}} f(x)$. The last equality follows by noting that $\Pi_{\mathcal{D}}(x) \in \mathcal{D}$ for all $x \in \mathbb{R}^d$, $f$ is continuous and $\mathcal{D}$ is compact. If $x^* \in \mathcal{D}$ achieves the maximum of $f$, then $\sup_{x \in \mathcal{K}} \overline{f}(x) = \max_{x \in \mathcal{D}} f(x) = f(x^*) = \overline{f}(x^*)$, where in the last step, we used the fact that $f = \overline{f}$ on $\mathcal{D}$. \newline

    3. Let $\overline{h}$ be the modified binary entropy defined in Eq.~\eqref{eqn:modified_binary_entropy}. If $x \in [0, 1/2]$ then, $\overline{h}(x) = h(x)$. Furthermore, we have $h(x) = H((x, 1 - x))$ for $x \in [0, 1]$, where $H$ is the Shannon entropy. Therefore, given $x, y \in [0, 1/2]$, we have $|\overline{h}(x) - \overline{h}(y)| = |h(x) - h(y)| = |H((x, 1 - x)) - H((y, 1 - y))| \leq h(|x - y|) = \overline{h}(|x - y|)$, where the inequality follows from the results of Ref.~\cite{zhang2007estimating}. Here, we used the fact that $|x - y| \leq 1/2$ when $x, y \in [0, 1/2]$. For $x, y \geq 1/2$, we have $\overline{h}(x) = \overline{h}(y) = \ln(2)$, and thus, $|\overline{h}(x) - \overline{h}(y)| \leq \overline{h}(|x - y|)$. Finally, we consider the case when $x \in [0, 1/2]$ and $y \geq 1/2$. Here, we have $|\overline{h}(x) - \overline{h}(y)| = \ln(2) - h(x) = h(1/2) - h(x) \leq h(|1/2 - x|) = \overline{h}(|1/2 - x|) \leq \overline{h}(|y - x|)$. In the last step, we used the fact that $|1/2 - x| \leq |y - x|$ and $\overline{h}$ is monotonically increasing. Therefore, we have $|\overline{h}(x) - \overline{h}(y)| \leq \overline{h}(|x - y|)$ for all $x, y \in \mathbb{R}_+$.
\end{proof}

The above extension gives some freedom in determining what algorithm to use to perform the maximization of $\overline{f}$, especially for the case when $f$ is Lipschitz or H\"older continuous. One can, for example, use an unconstrained optimization method developed for optimization of Lipschitz or H\"older continuous function. Alternatively, one can embed $\mathcal{D}$ inside a hypercube $\mathcal{K}$, and use an algorithm that can perform global Lipschitz optimization~\cite{lera2010lipschitz, sergeyev2013introduction, ziadi2001global}. In this study, we generalize the method used in Ref.~\cite{ziadi2001global} using dense curves because the proof of convergence essentially follows Prop.~\ref{prop:Lipschitzlike_optimization_dense_curve_simplex}. In practice, one might get faster convergence by generalizing the global optimization method using Hilbert space-filling curves studied in Ref.~\cite{lera2010lipschitz}.

We use the $\alpha$-dense curve for filling the hypercube $\mathcal{K} = \prod_{i = 1}^d [a_i, b_i]$ used in Ref.~\cite{ziadi2001global}, which we reproduce below for convenience. Let $d \geq 2$, $\eta > 0$, $\eta_1 = 1$ and define 
\begin{equation}
    \eta_i = \left(\frac{\eta}{\pi}\right)^{i - 1} \prod_{j = 2}^i \frac{1}{|a_j| + |b_j|} \label{eqn:dense_curve_hypercube_eta_i}
\end{equation}
for $i = 2, \dotsc, d$. Then, the curve $\gamma\colon [0, \pi/\eta_d] \to \mathcal{K}$ defined as
\begin{equation}
    \gamma_i(\theta) = \frac{a_i - b_i}{2} \cos(\eta_i \theta) + \frac{a_i + b_i}{2} \label{eqn:dense_curve_hypercube}
\end{equation}
for $i \in [d]$ is a $(\sqrt{d - 1} \eta)$-dense curve with respect to the Euclidean norm~\cite{ziadi2001global}. Furthermore, $\gamma$ is Lipschitz continuous (with respect to the Euclidean norm) with Lipschitz constant
\begin{equation}
    L_\gamma = \frac{1}{2} \left(\sum_{i = 1}^n (|a_i| + |b_i|)^2 \eta_i^2\right)^{1/2}. \label{eqn:dense_curve_hypercube_Lipschitz_constant}
\end{equation}

\begin{algorithm}[H]
    \begin{algorithmic}[1]
        \Function{maximize\_Lipshcitz-like\_function\_compact\_convex\_domain}{$d$, $\beta$, $\epsilon$}
            \State Find $\mathcal{K} = \prod_{i = 1}^d [a_i, b_i] \subseteq \mathbb{R}^d$ such that $\mathcal{D} \subseteq \mathcal{K}$
            \State Find $c_1 > 0$ such that $c_1 \norm{\cdot} \leq \norm{\cdot}_2$
            \State Construct the curve $\gamma$ as per Eq.~\eqref{eqn:dense_curve_hypercube} for $\eta = c_1 \alpha/\sqrt{d - 1}$, where $\alpha = \sup\{\alpha' > 0 \mid \overline{\beta}(\alpha') \leq \epsilon/2\}$
            \State Construct the extension $\overline{f}$ as per Eq.~\eqref{eqn:Lipschitzlike_function_extension}
            \State Compute the maximum $\overline{g}^*$ of $\overline{g} = \overline{f} \circ \gamma$ over $[0, \pi/\eta_d]$ to a precision of $\epsilon/2$ using Alg.~\ref{alg:maximize_Lipschitz_like_function_1D}
            \State \textbf{return} $\overline{g}^*$
        \EndFunction
    \end{algorithmic}
    \caption{Computing the maximum of a $\beta$-Lipschitz-like function $f$ satisfying Eq.~\eqref{eqn:Lipschitz_like_function} for $\mathcal{D} = \Delta_d$, given $\epsilon > 0$}
    \label{alg:maximize_Lipschitz_like_function_dense_curve_compact_convex_set}
\end{algorithm}

Below, we show that the above algorithm will converge to the maximum within precision $\epsilon > 0$. The proof essentially adapts the ideas used to prove Prop.~\ref{prop:Lipschitzlike_optimization_dense_curve_simplex}.
\begin{proposition}
    \label{prop:Lipschitzlike_optimization_dense_curve_compact_convex_set}
    Let $f\colon \mathcal{D} \to \mathbb{R}$ be a $\beta$-Lipschitz-like function with respect to the norm $\norm{\cdot}$ on $\mathbb{R}^d$, where $\mathcal{D} \subseteq \mathbb{R}^d$ is a compact \& convex set. Suppose that $\beta$ is $\kappa$-Lipschitz-like. Let $\mathcal{K} \subseteq \mathbb{R}^d$ be a bounded set containing $\mathcal{D}$. Let $\gamma$ be any $\alpha$-dense, $\beta_\gamma$-Lipschitz-like curve from the interval $[a, b]$ into the set $\mathcal{K}$. Let $\overline{f}$ be a $\overline{\beta}$-Lipschitz-like extension of $f$ defined in Eq.~\eqref{eqn:Lipschitzlike_function_extension}, where $\overline{\beta}$ is given in Eq.~\eqref{eqn:extended_beta}. Suppose that we compute the maximum of $f$ using an appropriate generalization of Alg.~\ref{alg:maximize_Lipschitz_like_function_dense_curve_compact_convex_set} with the curve $\gamma$. Then, for $\alpha = \sup\{\alpha' > 0 \mid \overline{\beta}(\alpha') \leq \epsilon/2\}$, the algorithm computes the maximum of $f$ to within a precision of $\epsilon > 0$. In the worst case, this algorithm takes $\lceil (b - a) / \delta \rceil$ time steps to converge, where $\delta = \sup\{\delta' > 0 \mid \overline{\beta}(\beta_\gamma(\delta')) \leq \epsilon/2\}$.

    In particular, if $\mathcal{K}$ is a hypercube containing $\mathcal{D}$ and $\gamma$ is the curve generated as per Eq.~\eqref{eqn:dense_curve_hypercube}, a value of $\eta = c_1 \alpha/\sqrt{d - 1}$ suffices to converge to the maximum of $f$ within a precision of $\epsilon > 0$. Here, $\alpha$ is as defined above and $c_1 > 0$ is a \textnormal{(}possibly dimension dependent\textnormal{)} constant satisfying $c_1 \norm{\cdot} \leq \norm{\cdot}_2$. The algorithm takes $\lceil \frac{\pi}{\eta_d \delta} \rceil$ time steps to converge in the worst case, where $\eta_d$ is defined in Eq.~\eqref{eqn:dense_curve_hypercube_eta_i} and $\delta$ is as defined above. For the case when $f$ is a Lipschitz continuous function with Lipschitz constant $L$ with respect to the Euclidean norm, Alg.~\ref{alg:maximize_Lipschitz_like_function_dense_curve_compact_convex_set} takes $O((6 \pi L \sqrt{d}/\epsilon)^d)$ iterations to converge to a precision of $\epsilon > 0$. The exponential scaling with $L$ and $\epsilon$ cannot be improved without additional assumptions on the functions or the domain.
\end{proposition}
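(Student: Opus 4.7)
The overall strategy mirrors that of Prop.~\ref{prop:Lipschitzlike_optimization_dense_curve_simplex}, with two extra ingredients needed to handle the more general setting. First, we replace $f$ by its Lipschitz-like extension $\overline{f}$ from Def.~\ref{def:Lipschitzlike_function_extension}, which by Prop.~\ref{prop:Lipschitzlike_function_extension} is $\overline{\beta}$-Lipschitz-like on all of $\mathbb{R}^d$ and satisfies $\sup_{x\in\mathcal{K}} \overline{f}(x) = \max_{x\in\mathcal{D}} f(x)$ for any $\mathcal{D}\subseteq\mathcal{K}$. This reduces the constrained problem on $\mathcal{D}$ to an unconstrained problem on the (more convenient) set $\mathcal{K}$. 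Second, since $\gamma\colon [a,b]\to\mathcal{K}$ is $\alpha$-dense and $\beta_\gamma$-Lipschitz-like, the composition $\overline{g} = \overline{f}\circ\gamma$ is a one-dimensional $(\overline{\beta}\circ\beta_\gamma)$-Lipschitz-like function, which we can optimize using the modified Piyavskii–Shubert algorithm (Alg.~\ref{alg:maximize_Lipschitz_like_function_1D}) with the guarantees of Prop.~\ref{prop:Lipshitz_like_maximization_analysis}.

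For the convergence bound, let $x^* \in \argmax_{x\in\mathcal{D}} f(x)$ and pick $\theta_0\in[a,b]$ with $\norm{x^* - \gamma(\theta_0)} \leq \alpha$, which exists by $\alpha$-density. If $\theta^*_\epsilon$ is the point returned when we optimize $\overline{g}$ to precision $\epsilon/2$, then Prop.~\ref{prop:Lipshitz_like_maximization_analysis} gives $\overline{g}(\theta^*) - \overline{g}(\theta^*_\epsilon) \leq \epsilon/2$ where $\theta^* \in \argmax \overline{g}$. Chaining inequalities exactly as in Prop.~\ref{prop:Lipschitzlike_optimization_dense_curve_simplex},
\begin{equation*}
    f(x^*) - \overline{g}(\theta^*_\epsilon) \leq f(x^*) - \overline{g}(\theta_0) + \tfrac{\epsilon}{2}
    = \overline{f}(x^*) - \overline{f}(\gamma(\theta_0)) + \tfrac{\epsilon}{2}
    \leq \overline{\beta}(\alpha) + \tfrac{\epsilon}{2} \leq \epsilon,
\end{equation*}
where the last step uses our choice $\overline{\beta}(\alpha)\leq\epsilon/2$. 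The iteration count $\lceil(b-a)/\delta\rceil$ is then inherited directly from Prop.~\ref{prop:Lipshitz_like_maximization_analysis} applied to the $(\overline{\beta}\circ\beta_\gamma)$-Lipschitz-like function $\overline{g}$.

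For the hypercube specialization, the curve $\gamma$ in Eq.~\eqref{eqn:dense_curve_hypercube} is guaranteed to be $(\sqrt{d-1}\,\eta)$-dense with respect to the Euclidean norm $\norm{\cdot}_2$, but the Lipschitz-like property of $f$ (and hence of $\overline{f}$) is stated with respect to an arbitrary norm $\norm{\cdot}$. Using the (possibly dimension-dependent) constant $c_1 > 0$ with $c_1\norm{\cdot}\leq\norm{\cdot}_2$, any set of diameter $\leq \sqrt{d-1}\,\eta$ in $\norm{\cdot}_2$ has diameter $\leq \sqrt{d-1}\,\eta/c_1$ in $\norm{\cdot}$, so $\gamma$ is $(\sqrt{d-1}\,\eta/c_1)$-dense in $\norm{\cdot}$. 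Setting $\eta = c_1\alpha/\sqrt{d-1}$ makes this density exactly $\alpha$, giving the first claim. The interval length is $\pi/\eta_d$, so the iteration count follows directly. For Lipschitz $f$ with constant $L$ in the Euclidean norm, we have $c_1 = 1$, $\beta(x) = Lx$, and $\kappa(x) = Lx$, so $\overline{\beta}(x) = 3Lx$, giving $\alpha = \epsilon/(6L)$ and $\delta = \alpha$; then plugging in $\eta_i$ with $(|a_i|+|b_i|)\leq\text{diam}(\mathcal{K})$ and absorbing constants yields the $O((6\pi L\sqrt{d}/\epsilon)^d)$ bound after routine arithmetic.

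The main obstacle is the final negative claim: the exponential-in-$d$ scaling cannot be improved in general. For this I would invoke an information-theoretic packing argument: any deterministic algorithm that queries $f$ at $N$ points and outputs a value within $\epsilon$ of the maximum must have its query set be an $(\epsilon/L)$-net of $\mathcal{D}$ in the sense that the function values on an $(\epsilon/L)$-separated ``bump'' adversary can be made to differ by more than $\epsilon$ at an unqueried point. Standard volume arguments for a unit-volume domain give a packing number $\Omega((L/\epsilon)^d)$, which forces $N = \Omega((L/\epsilon)^d)$. This can be made precise by constructing, for any fixed query sequence, an $L$-Lipschitz adversarial function that vanishes at all queried points yet attains value $\epsilon$ at some unqueried point in the packing — a standard technique in the Lipschitz optimization literature (cf.~the minimax lower bounds in the references already cited in the paper).
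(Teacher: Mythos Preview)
Your argument is essentially identical to the paper's: both use the extension $\overline{f}$ from Prop.~\ref{prop:Lipschitzlike_function_extension}, compose with the dense curve to get a one-dimensional $(\overline{\beta}\circ\beta_\gamma)$-Lipschitz-like problem, and then invoke Prop.~\ref{prop:Lipshitz_like_maximization_analysis}; the chain of inequalities and the hypercube specialization match the paper line by line. Two small points: your claim ``$\delta = \alpha$'' in the Lipschitz case is not quite right, since $\delta$ must satisfy $\overline{\beta}(\beta_\gamma(\delta))\leq\epsilon/2$ and $\beta_\gamma$ carries the curve's Lipschitz constant $L_\gamma$ (the paper gets $\delta = \epsilon/(6LL_\gamma)$ with $L_\gamma\leq\sqrt{d}/2$), though this only perturbs the final $O(\cdot)$ bound by a subexponential factor. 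For the lower bound the paper simply cites the known $\Omega((L/2\epsilon)^d)$ result from Ref.~\cite{deklerk2008complexity} rather than sketching the adversary construction; your packing argument is the standard proof of that cited result, so this is a matter of presentation rather than a different approach.
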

\begin{proof}
    Since $\mathcal{D} \subseteq \mathbb{R}^d$ is compact, it is bounded, and can therefore be embedded into a bounded set $\mathcal{K} \subseteq \mathbb{R}^d$. Let $\overline{f}$ be the $\overline{\beta}$-Lipschitz-like extension of $f$ defined in Eq.~\eqref{eqn:Lipschitzlike_function_extension}, where $\overline{\beta}$ is given in Eq.~\eqref{eqn:extended_beta}. Let $\gamma\colon [a, b] \to \mathcal{K}$ be an $\alpha$-dense, $\beta_\gamma$-Lipschitz-like curve, where $\alpha = \sup\{\alpha' > 0 \mid \overline{\beta}(\alpha') \leq \epsilon/2\}$. Consider the real-valued function $\overline{g} = \overline{f} \circ \gamma$ defined on the interval $[a, b]$. Let $\theta^* \in \argmax\{\overline{g}(\theta) \mid \theta \in [a, b]\}$, and let $\theta^*_\epsilon \in [a, b]$ be the point output by Alg.~\ref{alg:maximize_Lipschitz_like_function_1D}. Since $\overline{f}$ is $\overline{\beta}$-Lipschitz-like and $\gamma$ is $\beta_\gamma$-Lipschitz-like, $\overline{g} = \overline{f} \circ \gamma$ is $\overline{\beta} \circ \beta_\gamma$-Lipschitz-like. So, by Prop.~\ref{prop:Lipshitz_like_maximization_analysis}, we know that $\overline{g}(\theta^*) - \overline{g}(\theta^*_\epsilon) \leq \epsilon/2$. 

Let $x^* \in \argmax\{f(x) \mid x \in \Delta_d\}$ denote a point achieving the maximum of $f$. From Prop.~\ref{prop:Lipschitzlike_function_extension}, we know that $\overline{f}(x^*) = \sup_{x \in \mathcal{K}} \overline{f}(x) = \sup_{x \in \mathcal{D}} f(x) = f(x^*)$. Then, since $\gamma$ is an $\alpha$-dense curve and $\mathcal{D} \subseteq \mathcal{K}$, there is some point $\theta_0 \in [a, b]$ such that $\norm{x^* - \gamma(\theta_0)}_1 \leq \alpha$. Noting that $\overline{g}(\theta_0) \leq \overline{g}(\theta^*)$, we have
    \begin{align*}
        f(x^*) - \overline{g}(\theta^*_\epsilon) &\leq \overline{f}(x^*) - \overline{g}(\theta^*) + \frac{\epsilon}{2} \\
                                                 &\leq \overline{f}(x^*) - \overline{g}(\theta_0) + \frac{\epsilon}{2} \\
                                                 &\leq \overline{\beta}(\norm{x^* - \gamma(\theta_0)}_1) + \frac{\epsilon}{2} \\
                                                 &\leq \epsilon
    \end{align*}
    To obtain the penultimate inequality we used the Lipschitz-like property of $\overline{f}$ along with the fact that $\overline{g}(\theta_0) = \overline{f}(\gamma(\theta_0))$. The last inequality follows by noting that $\overline{\beta}(\norm{x^* - \gamma(\theta_0)}_1) \leq \overline{\beta}(\alpha) \leq \epsilon/2$. The number of iterations needed to compute $\theta^*_\epsilon$ in the worst-case is given by
    \begin{equation*}
        N_{\text{maxiter}} = \left\lceil \frac{b - a}{\delta} \right\rceil
    \end{equation*}
    where $\delta = \sup\{\delta' > 0 \mid \overline{\beta}(\beta_\gamma(\delta')) \leq \epsilon/2\}$, which follows from Prop.~\ref{prop:Lipshitz_like_maximization_analysis}.

    Now suppose that $\mathcal{K}$ is a hypercube containing $\mathcal{D}$ and $\gamma$ is the curve given in Eq.~\eqref{eqn:dense_curve_hypercube}. By equivalence of norms on $\mathbb{R}^d$, we have $c_1 \norm{\cdot} \leq \norm{\cdot}_2 \leq c_2 \norm{\cdot}$ for some (possibly dimension dependent) constants $c_1, c_2 > 0$. Therefore, $\gamma$ is $(\sqrt{d - 1} \eta/c_1)$-dense and $(L_g/c_1)$-Lipschitz continuous with respect to the norm $\norm{\cdot}$, where $L_g$ is given in Eq.~\eqref{eqn:dense_curve_hypercube_Lipschitz_constant}. By the above results, we can choose $\eta = c_1 \alpha/\sqrt{d - 1}$ for convergence, where $\alpha = \{\alpha' > 0 \mid \overline{\beta}(\alpha') \leq \epsilon/2\}$. In this worst case, this algorithm takes $\lceil \pi/(\eta_d \delta) \rceil$, where $\eta_d$ is defined in Eq.~\eqref{eqn:dense_curve_hypercube_eta_i} and $\delta = \sup\{\delta' > 0 \mid \overline{\beta}(L_\gamma \delta') \leq \epsilon/2\}$.

    Consider the case when $d \geq 2$, $\mathcal{D} = [0, 1]^d$, $\mathcal{K} = \mathcal{D}$, and $f$ is Lipschitz continuous with Lipschitz constant $L$ (independent of the dimension) with respect to the Euclidean norm $\norm{\cdot}_2$. Then, $c_1 = c_2 = 1$ as defined above, and subsequently, $C$ defined in Prop.~\ref{prop:Lipschitzlike_function_extension} is equal to $1$. Then, since $\kappa(x) = \beta(x) = L x$, we have $\overline{\beta}(x) = 3 L x$. Subsequently, we have $\alpha = \epsilon/6L$, and for $\epsilon < 6L$, we have $\alpha < 1$. For the curve $\gamma$, we have $\eta = \alpha/\sqrt{d -1} = \epsilon/6 L \sqrt{d - 1}$, $\eta_d = (\eta/\pi)^{d - 1}$, $\delta = \epsilon/6 L L_\gamma$, and $L_\gamma \leq \sqrt{d}/2$ assuming that $\alpha < 1$. Thus, the Alg.~\ref{alg:maximize_Lipschitz_like_function_dense_curve_compact_convex_set} takes
    \begin{equation*}
        \left\lceil \frac{6\pi^d L L_\gamma}{\eta^{d - 1} \epsilon} \right\rceil \leq \left\lceil \frac{(6 \pi L \sqrt{d})^d }{\epsilon^d} \right\rceil
    \end{equation*}
    iterations to converge. It is known that any algorithm (in the sense of a black box model) needs at least $O((L/2 \epsilon)^d)$ iterations to compute the optimum of a $L$-Lipschitz function over the unit hypercube to a precision of $\epsilon > 0$ (see Ref.~\cite{deklerk2008complexity} for details). Thus, one cannot, in general improve the $1/\epsilon^d$ scaling for Alg.~\ref{alg:maximize_Lipschitz_like_function_dense_curve_compact_convex_set} without additional assumptions on $\mathcal{D}$ or the class of functions that are optimized by the algorithm.
\end{proof}

The above result shows that the exponential scaling with the dimension cannot be improved without additional assumptions on the functions or the domain. It is likely the case that the $\sqrt{d}$ factor is sub-optimal and comes from the choice of the curve $\gamma$ in Eq.~\eqref{eqn:dense_curve_hypercube}, so one might be able to improve that factor by using better constructions for the curve or by resorting to a different approach altogether.

We remark that the above algorithm will perform worse than Alg.~\ref{alg:maximize_Lipschitz_like_function_simplex_grid_search} and Alg.~\ref{alg:maximize_Lipschitz_like_function_simplex_dense_curve} in practice when $\mathcal{D}$ is the standard simplex. This is because we essentially use no information about the domain except for the projection $\Pi_{\mathcal{D}}$ in constructing the extension $\overline{f}$ (see Eq.~\eqref{eqn:Lipschitzlike_function_extension}). Indeed, when using grid search specially designed for $\mathcal{D} = \Delta_d$, we can get polynomial scaling with the dimension for a fixed precision (see Prop.~\ref{prop:simplex_grid_search_Lipschitzlike_optimization}). Thus, it might be preferable to resort to methods designed specifically for the particular domain and function class of interest, as opposed to general algorithms like Alg.~\ref{alg:maximize_Lipschitz_like_function_dense_curve_compact_convex_set}. That said, an advantage of Alg.~\ref{alg:maximize_Lipschitz_like_function_dense_curve_compact_convex_set} is that the maximum is computed by finding successively better approximations. Therefore, one can decide to terminate the computation after some fixed number of iterations in order to obtain an upper bound on the maximum.

\section{Analysis of algorithms for computing the sum capacity for two-sender MACs\label{app:twosender_MAC_algorithm_analysis}}
\begin{proof}[Proof of Prop.~\ref{prop:mutual_information_maximum_Lipschitz_like}]
    \label{proof:mutual_information_maximum_Lipschitz_like}
    For any probability distributions $p^Z_1, p^Z_2 \in \Delta_{\dout}$, we have $|H(p^Z_1) - H(p^Z_2)| \leq \theta \log(\dout - 1) + h(\theta)$, where $\theta = \norm{p^Z_1 - p^Z_2}_1 / 2$ and $h$ is the binary entropy function~\cite{zhang2007estimating}. In particular, if we define $\overline{h}$ as in Eq.~\eqref{eqn:modified_binary_entropy}, then we have
    \begin{equation}
        |H(p^Z_1) - H(p^Z_2)| \leq \frac{1}{2} \log(\dout - 1) \norm{p^Z_1 - p^Z_2}_1 + \overline{h}\left(\frac{1}{2} \norm{p^Z_1 - p^Z_2}_1\right) \label{eqn:entropy_Lipschitzlike_bound}
    \end{equation}
    because $h(x) \leq \overline{h}(x)$ for all $x \in [0, 1]$.

    Now, note that using the same ideas employed in obtaining $I(p, q)$ given in Eq.~\eqref{eqn:mutualinfo_twosenderMAC_productinput}, we can interchange the roles of $p \in \Delta_{d_1}$ and $q \in \Delta_{d_2}$ to obtain
    \begin{equation*}
        I(p, q) = H(A_p q) - \ip{b_p, q}
    \end{equation*}
    where
    \begin{equation*}
        A_p(z, b_2) = \sum_{b_1 \in \mathcal{B}_1} N(z | b_1, b_2) p(b_1)
        \quad \text{and} \quad
        b_p(b_2) = -\sum_{b_1 \in \mathcal{B}_1} p(b_1) \sum_{z \in \mathcal{Z}} N(z | b_1, b_2) \log(N(z | b_1, b_2)).
    \end{equation*}
    Then, for any $q, q' \in \Delta_{d_2}$ and a fixed $p \in \Delta_{d_1}$, we have
    \begin{align}
        |I(p, q) - I(p, q')| &\leq |H(A_p q) - H(A_p q')| + |\ip{b_p, q - q'}| \nonumber \\
                             &\leq \frac{1}{2} \log(\dout - 1) \norm{A_p q - A_p q'}_1 + \overline{h}\left(\frac{1}{2} \norm{A_p q - A_p q'}_1\right) + \norm{b_p}_\infty \norm{q - q'}_1
                                                                                                                                                        \label{eqn:mutualinfo_twosenderMAC_productinput_intermediate_bound}
    \end{align}
    where in the second step, we used Eq.~\eqref{eqn:entropy_Lipschitzlike_bound} and H\"older's inequality.

    Now, we note that $\norm{A_p q - A_p q'}_1 \leq \norm{A_p}_{1 \to 1} \norm{q - q'}_1$, where $\norm{A_p}_{1 \to 1}$ is the induced matrix norm with respect to $l_1$-norm on the domain and the co-domain of $A_p$. This norm is equal to the maximum (absolute value) column sum of $A_p$. Since $A_p$ is a left stochastic matrix, all of its entries are non-negative and all of its columns sum to $1$, and therefore, $\norm{A_p}_{1 \to 1} = 1$. Similarly, we have $0 \leq b_p(b_2) \leq H_{\mathcal{N}}^{\max}$ for all $b_2 \in \mathcal{B}_2$ (see Eq.~\eqref{eqn:HNmax_twosenderMAC} for definition of $H_{\mathcal{N}}^{\max}$). Therefore, we have $\norm{b_p}_\infty \leq H_{\mathcal{N}}^{\max}$. Then, since $\overline{h}$ a monotonically increasing function, from Eq.~\eqref{eqn:mutualinfo_twosenderMAC_productinput_intermediate_bound}, we obtain
    \begin{equation}
        |I(p, q) - I(p, q')| \leq \frac{1}{2} \log(\dout - 1) \norm{q - q'}_1 + \overline{h}\left(\frac{1}{2} \norm{q - q'}_1\right) + H_{\mathcal{N}}^{\max} \norm{q - q'}_1
                             = \beta_I\left(\norm{q - q'}_1\right) \label{eqn:mutualinfo_twosenderMAC_productinput_bound}
    \end{equation}

    \noindent Since Eq.~\eqref{eqn:mutualinfo_twosenderMAC_productinput_bound} holds for every $p \in \Delta_{d_1}$, we have for fixed $q, q' \in \Delta_{d_2}$
    \begin{equation*}
        \max_{p \in \Delta_{d_1}} I(p, q) \leq \max_{p \in \Delta_{d_1}} I(p, q') + \beta_I\left(\norm{q - q'}_1\right) \text{ and }
        \max_{p \in \Delta_{d_1}} I(p, q') \leq \max_{p \in \Delta_{d_1}} I(p, q) + \beta_I\left(\norm{q - q'}_1\right)
    \end{equation*}
    Since $I^*(q) = \max_{p \in \Delta_{d_1}} I(p, q)$, we obtain $|I^*(q) - I^*(q')| \leq \beta_I\left(\norm{q - q'}_1\right)$.
\end{proof}

\begin{proposition}
    \label{prop:max_iter_sum_capacity_alg_twosenderMAC_inputsize2}
    Let $\mathcal{N}$ be any two-sender MAC with input alphabets of size $d_1$, $d_2$, and output alphabet of size $\dout$. Suppose that that $d_1, \dout \geq 2$ and $d_2 = 2$. Then, for $0 < \epsilon \leq 3\log(\dout)$, the number of iterations required by the \textnormal{\texttt{while}} loop in Alg.~\ref{alg:compute_sum_capacity_twosenderMACs_inputsize2} to compute the sum capacity of the MAC $\mathcal{N}$ to a precision $\epsilon > 0$ is bounded above by
    \begin{equation}
        \left\lceil \frac{2 (\mu^2 + 4)}{(\epsilon \mu + 4) - \sqrt{16 + 8 \epsilon \mu - 4 \epsilon^2}} \right\rceil \label{eqn:max_iter_PS_sum_capacity_twosenderMAC_inputsize2}
    \end{equation}
    where $\mu = 3 \log(\dout)$. In particular, for a precision $0 < \epsilon \leq 3$ chosen independent of $\dout$, the number of iterations is bounded above by $O(\log(\dout)/\epsilon)$ for $\log(\dout) \gg 1$.

    The total cost of computing the sum capacity to a precision $\epsilon > 0$, including the costs of performing intermediate steps such as computing $I^*$, sorting and root-finding, is at most polynomial in $d_1, \dout$ and $1/\epsilon$.
\end{proposition}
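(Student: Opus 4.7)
The proof plan is to combine the general convergence guarantee of the modified Piyavskii--Shubert algorithm (Prop.~\ref{prop:Lipshitz_like_maximization_analysis}) with a careful upper bound on the Lipschitz-like constant $\beta_I$ so that the defining inequality $\beta_I(2\delta)\le \epsilon/2$ can be solved in closed form. First, I would observe that Alg.~\ref{alg:compute_sum_capacity_twosenderMACs_inputsize2} is literally Alg.~\ref{alg:maximize_Lipschitz_like_function_1D} run on $[a,b]=[0,1]$ with the bounding function $\beta(x)=\beta_I(2x)$, so Prop.~\ref{prop:Lipshitz_like_maximization_analysis} gives an iteration count of $\lceil 1/\delta\rceil$ where $\delta$ is the largest number satisfying $\beta_I(2\delta)\le \epsilon/2$. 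Thus the problem reduces to a quantitative lower bound on this $\delta$.

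Next, I would estimate $\beta_I$ from above. Using $H_{\mathcal{N}}^{\max}\le \log\dout$ (since $H_{\mathcal{N}}^{\max}$ is bounded by the entropy of a distribution on $\dout$ symbols) and $\tfrac{1}{2}\log(\dout-1)\le \tfrac{1}{2}\log\dout$, the linear coefficient of $\beta_I$ is at most $\tfrac{3}{2}\log\dout$, so
\[
    \beta_I(2\delta) \;\le\; 3\log(\dout)\,\delta + \overline{h}(\delta) \;=\; \mu\delta + \overline{h}(\delta).
\]
The key analytic inequality I would then invoke is the standard entropy-to-variance bound $\overline{h}(\delta)\le 2\sqrt{\delta(1-\delta)}$ valid for $\delta\in[0,1/2]$ (in nats). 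Substituting, it suffices to produce a $\delta\in[0,1/2]$ satisfying $\mu\delta+2\sqrt{\delta(1-\delta)}\le \epsilon/2$, and the smallest such $\delta$ lower-bounds the $\delta$ coming out of Prop.~\ref{prop:Lipshitz_like_maximization_analysis}.

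Solving the equation $\mu\delta+2\sqrt{\delta(1-\delta)}=\epsilon/2$ by isolating the square root and squaring (which is valid for $\delta\le \epsilon/(2\mu)\le 1/2$ under the assumption $\epsilon\le \mu$) yields the quadratic
\[
    (\mu^2+4)\,\delta^2 \;-\; (\mu\epsilon+4)\,\delta \;+\; \tfrac{\epsilon^2}{4} \;=\; 0.
\]
The relevant (smaller) root is $\delta^\star = \bigl[(\mu\epsilon+4)-\sqrt{16+8\mu\epsilon-4\epsilon^2}\bigr]/\bigl[2(\mu^2+4)\bigr]$, and I would verify that the discriminant is nonnegative for $0<\epsilon\le 3\log\dout$ and that $\delta^\star\in(0,1/2]$. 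Taking $\lceil 1/\delta^\star\rceil$ reproduces the claimed bound~\eqref{eqn:max_iter_PS_sum_capacity_twosenderMAC_inputsize2}. For the asymptotic claim, I would rationalize the numerator, giving $1/\delta^\star = \bigl[2(\mu\epsilon+4) + 2\sqrt{16+8\mu\epsilon-4\epsilon^2}\bigr]/\epsilon^2$; expanding for $\mu\epsilon$ large shows this behaves like $2\mu/\epsilon=O(\log(\dout)/\epsilon)$ when $\epsilon$ is fixed independently of $\dout$.

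The final part of the statement concerns total computational cost. Here I would note that each iteration requires three auxiliary tasks: (i) evaluating $I^*(s)$ at a new point, which is a concave maximization over $\Delta_{d_1}$ (the single-user capacity of the channel $A_q$ offset by a linear term) and is solvable to precision $\epsilon$ in time polynomial in $d_1,\dout,1/\epsilon$ by e.g.\ Arimoto--Blahut; (ii) sorting at most $O(\log(\dout)/\epsilon)$ iterates, which is polylogarithmic in $1/\epsilon$ and $\log\dout$; and (iii) a root-finding step for the continuous increasing function $g_i=F_i-F_{i+1}$ on each interval, solvable by bisection in $O(\log(1/\epsilon))$ evaluations of $\beta_I$, each costing $O(1)$ arithmetic operations. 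Summing over the $O(\log(\dout)/\epsilon)$ outer iterations yields an overall cost polynomial in $d_1,\dout,1/\epsilon$. The main subtle step will be the entropy bound $\overline{h}(\delta)\le 2\sqrt{\delta(1-\delta)}$ and the associated verification that $\delta^\star\le 1/2$ under the hypothesis $\epsilon\le 3\log\dout$; everything else is algebra and invocation of previously established results.
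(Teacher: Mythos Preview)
Your proposal follows essentially the same approach as the paper: reduce to Prop.~\ref{prop:Lipshitz_like_maximization_analysis} with $\beta(x)=\beta_I(2x)$ on $[0,1]$, bound $\beta_I$ using $H_{\mathcal{N}}^{\max}\le\log\dout$ together with the Topsoe-type inequality $h(x)\le 2\sqrt{x(1-x)}$, and solve the resulting quadratic to extract~\eqref{eqn:max_iter_PS_sum_capacity_twosenderMAC_inputsize2}. Two small points: the paper works in bits (not nats) when invoking $h(x)\le 2\sqrt{x(1-x)}$, and its total-cost argument is more careful than yours about error propagation---since $I^*$ is only computed approximately (precision $\epsilon_I$) and roots only approximately (precision $\epsilon_r$), the paper explicitly sets $\epsilon_I=\epsilon_r=O(\epsilon/K_{\mathrm{PS}}^2)$ to keep the accumulated error below $\epsilon/2$, then uses an interior-point bound rather than Arimoto--Blahut for $I^*$; you should make this precision-tracking explicit rather than asserting each subroutine runs ``to precision~$\epsilon$''.
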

\begin{proof}
    To compute the total number of iterations required to compute the sum capacity to a precision $\epsilon > 0$, we break the analysis into the following steps.
    \begin{enumerate}
        \item Number of iterations required for the while loop to converge in Alg.~\ref{alg:compute_sum_capacity_twosenderMACs_inputsize2}.
        \item Cost of computing $I^*(s)$ in Alg.~\ref{alg:compute_sum_capacity_twosenderMACs_inputsize2} to a precision $\epsilon_I > 0$.
        \item Cost of sorting the points $s^{(0)}, \dotsc, s^{(k)}$ in Alg.~\ref{alg:compute_sum_capacity_twosenderMACs_inputsize2}.
        \item Cost for finding a root of $g_i(s)$ in Alg.~\ref{alg:compute_sum_capacity_twosenderMACs_inputsize2} to a precision $\epsilon_r > 0$.
    \end{enumerate}
    From these, we can compute the total cost of converging to a precision $\epsilon > 0$.

    1. We begin by computing the number of iterations required for Alg.~\ref{alg:compute_sum_capacity_twosenderMACs_inputsize2} to converge to a precision $\epsilon > 0$. To that end, note that for obtaining Alg.~\ref{alg:compute_sum_capacity_twosenderMACs_inputsize2} from Alg.~\ref{alg:maximize_Lipschitz_like_function_1D}, we used the fact that $\norm{q_s - q_{s'}}_1 = 2|s - s'|$, where given any $s \in [0, 1]$, we define $q_s = (s, 1 - s)$. Therefore, we have an additional factor of $2$ in $\beta_I(2|s - s'|)$ in Alg.~\ref{alg:compute_sum_capacity_twosenderMACs_inputsize2}. Therefore, to make Alg.~\ref{alg:compute_sum_capacity_twosenderMACs_inputsize2} the same as Alg.~\ref{alg:maximize_Lipschitz_like_function_1D}, we define $\beta(x) = \beta_I(2x)$. Then, from Prop.~\ref{prop:Lipshitz_like_maximization_analysis}, we know that the number of iterations required to converge to the maximum within an error of $\epsilon > 0$ is bounded above by $\lceil 1/\delta \rceil$, where we used the fact that $\mathcal{D} = [0, 1]$ for the optimization. The number $\delta > 0$ is chosen such that $\beta(x) \leq \epsilon/2$ whenever $0 \leq x \leq \delta$.

    In the following analysis, we measure entropy in bits. To proceed, we note that the binary entropy $h$ satisfies the inequality $h(x) \leq 2\sqrt{x(1 - x)}$~\cite{topsoe2001bounds}. Further, for any probability transition matrix $\mathcal{N}$, the quantity $H_{\mathcal{N}}^{\max}$ defined in Eq.~\eqref{eqn:HNmax_twosenderMAC} is bounded above by $\log(\dout)$, where $\dout$ is the size of the output alphabet. For $x \leq 1/2$, we have $h(x) = \overline{h}(x)$, where $\overline{h}$ is the modified binary entropy function defined in Eq.~\eqref{eqn:modified_binary_entropy}. Then, from Eq.~\eqref{eqn:beta_twosenderMAC} and the definition $\beta(x) = \beta_I(2x)$, it follows that
    \begin{equation*}
        \beta(x) < \mu x + 2 \sqrt{x (1 - x)}
    \end{equation*}
    for $x \leq 1/2$, where $\mu = 3\log(\dout)$. Then, solving the inequality $\mu x + 2 \sqrt{x (1 - x)} \leq \epsilon/2$, we obtain
    \begin{equation*}
        x \leq \frac{(\epsilon \mu + 4) - \sqrt{16 + 8 \epsilon \mu - 4\epsilon^2}}{2 (\mu^2 + 4)}
    \end{equation*}
    which holds whenever $\epsilon \leq \mu$.
    Therefore, we choose
    \begin{equation*}
        \delta = \frac{(\epsilon \mu + 4) - \sqrt{16 + 8 \epsilon \mu - 4\epsilon^2}}{2 (\mu^2 + 4)}
    \end{equation*}
    which is a positive number for $0 < \epsilon \leq \mu$. Then, from Prop.~\ref{prop:Lipshitz_like_maximization_analysis}, we know that the number of iterations to converge to a precision $\epsilon > 0$ is bounded above by
    \begin{equation*}
        K_{\text{PS}} = \left\lceil \frac{2 (\mu^2 + 4)}{(\epsilon \mu + 4) - \sqrt{16 + 8 \epsilon \mu - 4 \epsilon^2}} \right\rceil
    \end{equation*}
    where the subscript PS stands for Piyavskii-Shubert.

    2. Next, we calculate the cost of computing $I^*(s)$ to a precision $\epsilon_I > 0$. This is a non-trivial cost because is obtained by solving a convex optimization problem. Specifically, $I^*(s) = \max_{p \in \Delta_{d_1}} H(A_{q_s} p) - \ip{b_{q_s}, p}$, where $q_s = (s, 1 - s)$. This cost depends on the algorithm one uses to solve the optimization problem. For example, if one uses an interior point method based on the log-barrier, then one needs at most $O(d_1^3 \sqrt{d_1 + \dout} \log((d_1 + \dout) / \epsilon_I)$ flops to converge to the optimum within a precision of $\epsilon_I > 0$ (see Ch.~(11.5) in Ref.~\cite{boyd2004convex}). We denote the cost of computing $I^*$ as $K_I$.

    3. If we use quicksort to sort the numbers $s^{(0)}, \dotsc, s^{(k)}$, then in the worst case, one needs $O(k^2)$ operations to perform this sorting.

    4. If we compute the root to a precision $\epsilon_r > 0$ using bisection, then one needs at most $O(\log(1/\epsilon_r))$ iterations to find this root.

    The total cost of sorting (including all $K_{\text{PS}}$ iterations) is bounded above by $O(K_{\text{PS}}^3)$. The number of times the root needs to be computed is bounded above by $O(K_{\text{PS}}^2)$ and the number of times the function $I^*$ is calculated in the algorithm is $O(K_{\text{PS}}^2 \log(1/\epsilon_r))$. Since we can only compute $I^*$ to a precision $\epsilon_I$ and find the root to a precision $\epsilon_r$, we need to choose these small enough so that the total error is below $\epsilon$. In particular, if we choose $\epsilon_I = \epsilon_r$ such that $O(K_{\text{PS}}^2 \log(1/\epsilon_I)) \epsilon_I = \epsilon/2$, then using the fact that $-x\ln(x) \geq x - x^2$ for $x > 0$, we can infer that it is sufficient to choose $\epsilon_I = O(\epsilon/K_{\text{PS}}^2)$. Then running the while loop in Alg.~\ref{alg:compute_sum_capacity_twosenderMACs_inputsize2} to a precision $\epsilon/2$ so that the total error is within a tolerance of $\epsilon$, we can infer that the total number of iterations needed is bounded above by
    \begin{equation*}
        O(K_I K_{\text{PS}}^2 \log(K_{\text{PS}}^2/\epsilon))
    \end{equation*}
    where we used the fact that $K_I$ is at least as large as $K_{\text{PS}}$. Therefore, the total cost to converge to the optimum is at most polynomial in $d_1$, $\dout$, and $1/\epsilon$.
\end{proof}

Next, we obtain an upper bound on the number of iterations needed to compute the sum capacity of an arbitrary two-sender MAC to a fixed precision. The algorithm we use to analyze this is grid search explained in Sec.~\ref{secn:Lipschitzlike_optimization_simplex_grid_search}, and many of the calculations follow the ideas given in Prop.~\ref{prop:max_iter_sum_capacity_alg_twosenderMAC_inputsize2}.
\begin{proposition}
    \label{prop:max_iter_sum_capacity_alg_twosenderMAC_grid_search}
    Let $\mathcal{N}$ be any two-sender MAC with input alphabets of size $d_1$, $d_2$, and output alphabet of size $\dout$. Suppose that that $d_1, d_2, \dout \geq 2$ with $d_1 \geq d_2$. Suppose that all the entropies are measured in bits and denote $\mu = 3 \log(\dout)$. Then, for a fixed $0 < \epsilon \leq 3$, the number of iterations required by the grid search in Alg.~\ref{alg:compute_sum_capacity_twosenderMACs_inputsize2} to compute the sum capacity of the MAC $\mathcal{N}$ to a precision $\epsilon > 0$ is bounded above by
    \begin{equation}
        \left(\frac{e \epsilon^2}{\mu^2} d_2 + e\right)^{\frac{4 \mu^2}{\epsilon^2} + 1} \label{eqn:max_iter_sum_capacity_twosenderMAC_grid_search}
    \end{equation}
    when $\mu \geq \max\{(8/\epsilon) + 2\sqrt{(16/\epsilon^2) - 1}, \epsilon/2\}$.

    The total cost of computing the sum capacity to a precision $\epsilon > 0$, including the cost of computing $I^*$, is at most
    \begin{equation}
        \textnormal{poly}\left(d_1, \dout, \frac{1}{\epsilon}\right) \left(\frac{e \epsilon^2}{36} \frac{d_2}{(\log(\dout))^2} + e\right)^{\frac{96 (\log(\dout))^2}{\epsilon^2} + 2}
                                                                                                                                                                        \label{eqn:grid_search_complexity_sum_capacity_twosenderMAC}
    \end{equation}
    for $0 < \epsilon \leq 3$ and $\mu \geq \max\{(16/\epsilon) + 2\sqrt{(64/\epsilon^2) - 1}, \epsilon/4\}$.
\end{proposition}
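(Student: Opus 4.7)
The plan is to apply the general grid-search result Prop.~\ref{prop:simplex_grid_search_Lipschitzlike_optimization} to the $\beta_I$-Lipschitz-like function $I^*$ on the simplex $\Delta_{d_2}$, using Eq.~\eqref{eqn:sum_capacity_twosenderMAC_Lipschitz_like_optimization} to identify $S(\mathcal{N})$ with $\max_{q\in\Delta_{d_2}} I^*(q)$. Since Prop.~\ref{prop:mutual_information_maximum_Lipschitz_like} gives the Lipschitz-like property with respect to the $\ell_1$-norm, this is the correct setting for Prop.~\ref{prop:simplex_grid_search_Lipschitzlike_optimization}, which requires choosing $N = \lceil 1/\delta^2 \rceil$ with $\delta$ the largest number such that $\beta_I(\delta) \leq \epsilon/2$.

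The first substantive step is to convert the Lipschitz-like bound into an explicit value of $\delta$. Using $H_{\mathcal{N}}^{\max} \leq \log(\dout)$ and $\tfrac{1}{2}\log(\dout-1) \leq \tfrac{1}{2}\log(\dout)$, I would bound $\beta_I(x) \leq \tfrac{\mu}{2}x + \bar{h}(x/2)$ with $\mu = 3\log(\dout)$. I would then guess $\delta = \epsilon/(2\mu)$ and verify the inequality $\beta_I(\delta) \leq \epsilon/2$: the linear part evaluates to $\epsilon/4$, so the condition reduces to $\bar{h}(\epsilon/(4\mu)) \leq \epsilon/4$. Using $\bar{h}(y) \leq 2\sqrt{y(1-y)}$ for $y \leq 1/2$ (with $y = \epsilon/(4\mu)$, which requires $\mu \geq \epsilon/2$) and squaring yields the quadratic $\epsilon\mu^2 - 16\mu + 4\epsilon \geq 0$, whose larger positive root produces exactly the stated lower bound on $\mu$. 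With $\delta$ in hand, $N = \lceil 1/\delta^2 \rceil \leq 4\mu^2/\epsilon^2 + 1$.

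The second step is to bound the grid size $N_{\text{grid}} = \binom{N+d_2-1}{d_2-1} = \binom{N+d_2-1}{N}$. Applying $\binom{m}{k} \leq (em/k)^k$ with $m = N+d_2-1$, $k = N$ gives $N_{\text{grid}} \leq (e + e(d_2-1)/N)^N$. I would then use the weaker inequality $1/N \leq \epsilon^2/\mu^2$ inside the base and the upper bound $N \leq 4\mu^2/\epsilon^2 + 1$ in the exponent (the latter permissible because the base exceeds $1$), arriving at Eq.~\eqref{eqn:max_iter_sum_capacity_twosenderMAC_grid_search}.

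For the total cost bound I would allocate the precision budget as grid precision $\epsilon/2$ and per-evaluation precision $\epsilon/2$ for $I^*$. Running the grid analysis above with $\epsilon$ replaced by $\epsilon/2$ produces the shifted conditions $\mu \geq (16/\epsilon) + 2\sqrt{(32/\epsilon)-1}$ and $\mu \geq \epsilon/4$ and yields exactly the base and exponent in Eq.~\eqref{eqn:grid_search_complexity_sum_capacity_twosenderMAC}. Each evaluation of $I^*(q) = \max_{p\in\Delta_{d_1}} (H(A_q p) - \ip{b_q,p})$ is a concave maximization over a simplex of dimension $d_1$ and can be solved to precision $\epsilon/2$ in time polynomial in $d_1, \dout, 1/\epsilon$ by, e.g., an interior-point method (as invoked in the proof of Prop.~\ref{prop:max_iter_sum_capacity_alg_twosenderMAC_inputsize2}). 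Multiplying grid size by per-evaluation cost gives the claimed bound. The main technical obstacle is algebraic bookkeeping: verifying that $\delta = \epsilon/(2\mu)$ suffices via the quadratic in $\mu$, and managing the loose constants introduced when matching the binomial bound and the precision split to the exact form stated in the proposition.
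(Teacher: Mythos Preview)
Your approach matches the paper's: both invoke Prop.~\ref{prop:simplex_grid_search_Lipschitzlike_optimization} for the $\beta_I$-Lipschitz-like function $I^*$, upper-bound $\beta_I$ via $H_{\mathcal N}^{\max}\leq\log\dout$ and $\overline h(y)\leq 2\sqrt{y(1-y)}$, and then control $\binom{N+d_2-1}{N}$ by $(em/k)^k$. Two minor points of divergence are worth flagging. First, you guess $\delta=\epsilon/(2\mu)$ and verify, whereas the paper solves the quadratic $(\mu/2)x+\sqrt{x(2-x)}=\epsilon/2$ explicitly and then checks $\mu/\epsilon\leq 1/\delta\leq 2\mu/\epsilon$; your shortcut is sound, but the larger root of your quadratic $\epsilon\mu^2-16\mu+4\epsilon=0$ is $8/\epsilon+2\sqrt{16/\epsilon^{2}-1}$, not the proposition's $8/\epsilon+2\sqrt{16/\epsilon-1}$, so the claim that it reproduces the stated threshold ``exactly'' is not right and you will need the paper's two-sided bound on $1/\delta$ to match the statement verbatim. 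Second, for the total cost you take per-evaluation precision $\epsilon/2$, which is correct (the maximum of $\epsilon/2$-approximate values is within $\epsilon/2$ of the true maximum) and cleaner than the paper's more conservative choice $\epsilon_I=\epsilon/(2K_{\mathrm{GS}})$; the paper then absorbs the extra $\log K_{\mathrm{GS}}$ via $\log K_{\mathrm{GS}}\leq K_{\mathrm{GS}}$, which is what doubles the exponent in Eq.~\eqref{eqn:grid_search_complexity_sum_capacity_twosenderMAC}. Your simpler split would yield a smaller exponent than stated, so again you would not recover the exact constants of the proposition without following the paper's (looser) accounting.
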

\begin{proof}
    The sum capacity of a two sender MAC can be expressed as $S(\mathcal{N}) = \max_{q \in \Delta_{d_2}} I^*(q)$, where $I^*$ is defined in Eq.~\eqref{eqn:outer_optimization_objective_twosenderMAC}. Note that $I^*$ is $\beta_I$-Lipschitz-like, where $\beta_I$ is defined in Eq.~\eqref{eqn:beta_twosenderMAC}. By Prop.~\ref{prop:simplex_grid_search_Lipschitzlike_optimization}, we know that the number of iterations needed for the grid search algorithm to compute the sum capacity is equal to 
    \begin{equation*}
        K_{\text{GS}} = \binom{N + d_2 - 1}{d_2 - 1},
    \end{equation*}
    where $N = \lceil 1/\delta^2 \rceil$ and $\delta = \sup\{\delta' > 0 \mid \beta_I(\delta') \leq \epsilon/2\}$. In particular, the number $\delta > 0$ can be chosen such that $\beta_I(x) \leq \epsilon/2$ whenever $0 \leq x \leq \delta$ to obtain an upper bound on the number of iterations.

    As noted in the proof of Prop.~\ref{prop:max_iter_sum_capacity_alg_twosenderMAC_inputsize2}, the binary entropy $h$ (measured in bits) satisfies the inequality $h(x) \leq 2\sqrt{x(1 - x)}$~\cite{topsoe2001bounds}. Further, the quantity $H_{\mathcal{N}}^{\max}$ defined in Eq.~\eqref{eqn:HNmax_twosenderMAC} is bounded above by $\log(\dout)$. For $x \leq 1/2$, we have $h(x) = \overline{h}(x)$, where $\overline{h}$ is the modified binary entropy function defined in Eq.~\eqref{eqn:modified_binary_entropy}. Then, from Eq.~\eqref{eqn:beta_twosenderMAC}, it follows that
    \begin{equation*}
        \beta_I(x) < \frac{\mu}{2} x + \sqrt{x (2 - x)}
    \end{equation*}
    for $x \leq 1/2$, where $\mu = 3\log(\dout)$. Then, solving the inequality $(\mu/2) x + \sqrt{x (2 - x)} \leq \epsilon/2$, we obtain
    \begin{equation*}
        x \leq \frac{(\epsilon \mu + 4) - 2 \sqrt{4 + 2 \epsilon \mu - \epsilon^2}}{\mu^2 + 4}
    \end{equation*}
    which holds whenever $\epsilon \leq \mu$. Therefore, we choose
    \begin{equation*}
        \delta = \frac{(\epsilon \mu + 4) - 2 \sqrt{4 + 2 \epsilon \mu - \epsilon^2}}{\mu^2 + 4}
    \end{equation*}
    which is a positive number for $0 < \epsilon \leq \mu$.
    For a fixed $0 < \epsilon \leq \min\{16, \mu\}$ and $\mu \geq \max\{(8/\epsilon) + 2\sqrt{(16/\epsilon^2) - 1}, \epsilon/2\}$, it can be verified that
    \begin{equation*}
        \frac{\mu}{\epsilon} \leq \frac{1}{\delta} \leq \frac{2 \mu}{\epsilon}.
    \end{equation*}
    Further, since $\binom{n}{k} \leq (en/k)^k$, we can write $\binom{N + d_2 - 1}{d_2 - 1} = \binom{N + d_2 - 1}{N} \leq (e (N + d_2 - 1)/N)^N$. Then, noting that $1/\delta^2 \leq N \leq 1/\delta^2 + 1$, the number of iterations needed for grid search to converge to a precision $0 < \epsilon \leq \min\{\mu, 16\}$ is bounded above by
    \begin{equation*}
        K_{\text{GS}} \leq \left(\frac{e \epsilon^2}{\mu^2} d_2 + e\right)^{\frac{4 \mu^2}{\epsilon^2} + 1}
    \end{equation*}
    when $\mu \geq \max\{(8/\epsilon) + 2\sqrt{(16/\epsilon^2) - 1}, \epsilon/2\}$. In order to avoid specifying simultaneous conditions on both $\epsilon$ and $\mu$, we note that $\mu = 3 \log(\dout) \geq 3$ for $\dout \geq 2$. Therefore, we can simply assume $0 < \epsilon \leq 3$ in the above equations.

    Next, as noted in the proof of Prop.~\ref{prop:max_iter_sum_capacity_alg_twosenderMAC_inputsize2}, the cost of computing $I^*$ to a precision of $\epsilon_I > 0$ is at most $O(d_1^3 \sqrt{d_1 + \dout} \log((d_1 + \dout) / \epsilon_I)$. Choosing $\epsilon_I = \epsilon/2K_{\text{GS}}$, we can ensure that after $K_{\text{GS}}$ calls to $I^*$, the error is at most $\epsilon/2$. Then, solving the grid search to a precision of $\epsilon/2$, we can infer that the total cost of computing the sum capacity is at most
    \begin{align*}
        O\bigg(d_1^3 &\sqrt{d_1 + \dout} \log\left(\frac{2 (d_1 + \dout) K_{\text{GS}}}{\epsilon}\right)\bigg) \left(\frac{e \epsilon^2}{4 \mu^2} d_2 + e\right)^{\frac{16 \mu^2}{\epsilon^2} + 1} \\
                &\leq O\left(d_1^3 \sqrt{d_1 + \dout} \log\left(\frac{2 (d_1 + \dout)}{\epsilon}\right)\right) \left(\frac{e \epsilon^2}{4 \mu^2} d_2 + e\right)^{\frac{32 \mu^2}{\epsilon^2} + 2}
    \end{align*}
    when $0 < \epsilon \leq 6$ and $\mu \geq \max\{(16/\epsilon) + 2\sqrt{(64/\epsilon^2) - 1}, \epsilon/4\}$. To obtain the last inequality, we used the fact that $\log(K_{\text{GS}}) \leq K_{\text{GS}}$ for $K_{\text{GS}} > 0$.
\end{proof}

\section{Noise-free subspace MAC\label{app:noise_free_subspace_MAC}}
We compute the sum capacity and the relaxed sum capacity for the examples constructed using the noise-free subspace MAC defined in Eq.~\eqref{eqn:NF_MAC}. For both the examples, we consider the input alphabets $\mathcal{A} = \{a_1, a_2\}$, $\mathcal{B} = \{b_1, b_2\}$ and the output alphabet $\mathcal{Z} = \{z_1, z_2\}$.

\subsection{Example 1}
For the first example, the MAC $\mathcal{N}_F^{(0)}$ has the probability transition matrix
\begin{equation*}
    \mathcal{N}_F^{(0)} = \begin{pmatrix}
                              1 & 0.5 & 0.5 & 0.5 \\
                              0 & 0.5 & 0.5 & 0.5
                          \end{pmatrix}.
\end{equation*}
Let us denote the input probability distribution of sender $A$ as $(p, 1 - p)$ corresponding to the symbols $(a_1, a_2)$. Similarly, denote the input probability distribution of sender $B$ as $(q, 1 - q)$ corresponding to the symbols $(b_1, b_2)$. When the input to the MAC is a product distribution, the output probability distribution determined by the channel is
\begin{equation*}
    p^Z(z_1) = \frac{1 + pq}{2} \text{ and } p^Z(z_2) = \frac{1 - pq}{2}.
\end{equation*}
The mutual information $I(A, B; Z)$ between the senders and the receiver is given by
\begin{equation*}
    I(A, B; Z) = -\frac{1 + pq}{2} \ln\left(\frac{1 + pq}{2}\right) - \frac{1 - pq}{2} \ln\left(\frac{1 - pq}{2}\right) - (1 - pq) \log(2).
\end{equation*}
We wish to compute the sum capacity
\begin{equation*}
    S(\mathcal{N}_F^{(0)}) = \max_{0 \leq p, q \leq 1} I(A, B; Z).
\end{equation*}

For $(p, q) \in \{(0, 0), (0, 1), (1, 0), (1, 1)\}$, we have $I(A, B; Z) = 0$, and hence we can focus on the interior of the domain. For maximization over $p$ for a fixed $q$, by complementary slackness (Lagrangian not written here), we can simply set the derivative of $I(A, B; Z)$ with respect to $p$ equal to $0$. The derivative is given as
\begin{equation*}
    \frac{\partial I}{\partial p} = \frac{q}{2} \ln\left(4 \frac{(1 - pq)}{(1 + pq)}\right).
\end{equation*}
Setting this equal to zero gives $pq = 3/5$.
In this case, no maximization over $q$ is necessary, and subsequently, we find that $S(\mathcal{N}_F^{(0)}) = h(4/5) - (2/5) \ln(2)$, where $h$ is the binary entropy function.

For computing $C(\mathcal{N}_F^{(0)})$, we maximize over all probability distributions over the inputs. For an arbitrary input probability distribution $p(a, b)$, we can write the mutual information between the inputs and the output as
\begin{equation*}
    I(A, B; Z) = -\frac{1 + p(a_1, b_1)}{2} \ln\left(\frac{1 + p(a_1, b_1)}{2}\right) - \frac{1 - p(a_1, b_1)}{2} \ln\left(\frac{1 - p(a_1, b_1)}{2}\right) - (1 - p(a_1, b_1)) \log(2).
\end{equation*}
In this case, we again find that the maximum is attained at $p(a_1, b_1) = 3/5$, and subsequently, the maximum value of the mutual information is equal to $h(4/5) - (2/5) \ln(2)$ as before. Thus, the capacity corresponding to the relaxed sum region matches the actual sum capacity.

\subsection{Example 2}
For the second example, the probability transition matrix is given as
\begin{equation*}
    \mathcal{N}_F^{(1)} = \begin{pmatrix}
                              1 & 0.5 & 0.5 & 0 \\
                              0 & 0.5 & 0.5 & 1
                          \end{pmatrix}.
\end{equation*}
As before, denote the input probability distribution of sender $A$ as $(p, 1 - p)$ and that of sender $B$ as $(q, 1 - q)$. Then, given an input probability distribution to the MAC $\mathcal{N}_F^{(1)}$, the output probability distribution is given by
\begin{equation*}
    p(z_1) = \frac{p + q}{2} \text{ and } p(z_2) = 1 - \frac{(p + q)}{2}.
\end{equation*}
Then, the mutual information $I(A, B; Z)$ between the senders and the receiver can be written as
\begin{equation*}
    I(A, B; Z) = -\frac{p + q}{2} \ln\left(\frac{p + q}{2}\right) - \left(1 - \frac{(p + q)}{2}\right) \ln\left(1 - \frac{(p + q)}{2}\right) - (p + q - 2pq) \ln(2).
\end{equation*}
We wish to compute the sum capacity $S(\mathcal{N}_F^{(1)}) = \max_{0 \leq p, q \leq 1} I(A, B; Z)$. The edge cases ($p, q = 0, 1$) will be handled separately. First, we perform the maximization over $p$ for each fixed $q$,
\begin{equation*}
    I^*(q) = \max_{0 \leq p \leq 1} I(A, B; Z).
\end{equation*}
We look for maxima in the interior $(0, 1)$, which can be obtained through $\frac{\partial}{\partial p}I(A, B; Z) = 0$. Note that the output probability is $(1, 0)$ or $(0, 1)$ only when $(p, q) \in \{(0, 0), (1, 1)\}$, and therefore, the derivative of $I(A, B; Z)$ is well-defined in the interior. This derivative is given as
\begin{equation*}
    \frac{\partial I}{\partial p} = \frac{1}{2} \ln\left(\frac{2 - (p + q)}{p + q}\right) - (1 - 2q) \ln(2).
\end{equation*}
Setting the derivative to zero, we find that $p + q = 2/(\kappa_q + 1)$, where $\kappa_q = 2^{2 - 4q}$. Therefore, the function $I^*(q)$ can be written as
\begin{equation*}
    I^*(q) = h\left(\frac{1}{\kappa_q + 1}\right) - \left(\frac{2}{\kappa_q + 1} - 2\left(\frac{2}{\kappa_q + 1} - q\right) q\right) \ln(2),
\end{equation*}
where $h$ is the binary entropy. Now, to compute the sum capacity, we wish to perform the maximization $S(\mathcal{N}_F^{(1)}) = \max_{0 \leq q \leq 1} I^*(q)$. Since $I^*(q)$ is a continuous function of $q$, the maximum is either attained at the interior or at the boundaries. The maxima in the interior can be found via $\frac{\partial}{\partial q}I^*(q)=0$:
\begin{equation*}
    \ln\left(\frac{1}{\kappa_q}\right) = \ln(2) \left[(4q - 2) - \frac{1}{\ln(2)} \left(\frac{\kappa_q +1}{\kappa_q}\right) (q(\kappa_q + 1) - 1)\right]
    \implies q = \frac{1}{\kappa_q + 1}.
\end{equation*}
It can be verified that $q = 1/2$ is a solution to the above equation, corresponding to which we have $p = 1/2$. Furthermore, this solution is unique. Thus, in the interior of the domain, the maximum occurs at $p = 1/2$, $q = 1/2$. Correspondingly, the maximum value of $I(A, B; Z)$ in the interior is equal to $0.5 \ln(2)$.

Now, we check the edge cases. For $p = 0$, we need to maximize $g(q) = I(p = 0, q) = h(q/2) - q\ln(2)$ over $q$. At $q = 0, 1$, we have $g(0) = g(1) = 0$. Then, the maximum in the interior can be obtained by setting the derivative with respect to $q$ to zero. This gives $q = 2/5$, and since $g(2/5) \approx 0.223 < 0.5\log(2)$, this is not the global maximum. On the other hand, for $p = 1$, we need to maximize $g(q) = I(p = 1, q) = h((1 + q)/2) - (1 - q) \ln(2)$. At $q = 0, 1$, we have $g(0) = g(1) = 0$. The maximum in the interior can be obtained by setting the derivative of $g(q)$ with respect to $q$ equal to zero. This gives $q = 3/5$, and correspondingly, we have $g(3/5) \approx 0.223 < 0.5\ln(2)$. Since $I(p, q)$ is symmetric under interchange of $p$ and $q$, the same results will be obtained when beginning with $q = 0$ or $q = 1$ and then maximizing over $p$. Therefore, we find that the sum capacity is equal to $S(\mathcal{N}_F^{(1)}) = 0.5 \ln(2)$.

Now, we compute the relaxed sum capacity. For this, note that output probability distribution of the MAC $\mathcal{N}_F^{(1)}$ when using an arbitrary input probability distribution $p(a, b)$ is given as
\begin{equation*}
    p(z_1) = \frac{1}{2} + \frac{p(a_1, b_1) - p(a_2, b_2)}{2} \text{ and } p(z_2) = \frac{1}{2} - \frac{(p(a_1, b_1) - p(a_2, b_2))}{2}.
\end{equation*}
Correspondingly, the mutual information between the senders and the receivers is given by
\begin{equation*}
    I(A, B; Z) = h\left(\frac{1}{2} + \frac{p(a_1, b_1) - p(a_2, b_2)}{2}\right) - \left(1 - p(a_1, b_1) - p(a_2, b_2)\right) \ln(2).
\end{equation*}
It can be verified that for $p(a_1, b_1) = p(a_2, b_2) = 1/2$ we have $I(A, B; Z) = \ln(2)$, which is the maximum possible value for the mutual information. Therefore, $C(\mathcal{N}_F^{(1)}) = \ln(2)$.

\end{document}